\definecolor{ForestGreen}{rgb}{0.1333,0.5451,0.1333}
\definecolor{DarkRed}{rgb}{0.8,0,0}
\definecolor{Red}{rgb}{1,0,0}
\renewcommand*\backref[1]{\ifx#1\relax \else (cit. on p. #1) \fi} 
\newcommand{\N}{\mathbb{N}}
\newcommand{\Z}{\mathbb{Z}}
\newcommand{\I}{\mathbb{I}}
\newcommand{\F}{\mathbb{F}}
\newcommand{\poly}{\operatorname{poly}}
\newcommand{\polylog}{\operatorname{polylog}}
\newcommand{\dist}{\operatorname{dist}}
\newcommand{\diam}{\operatorname{diam}}
\newcommand{\radius}{\operatorname{radius}}
\newcommand{\matrixExponent}{{2.3729}} 
\newcommand{\sankowski}{1.5286}
\newcommand{\preUnwUndAPSP}{2.53}
\newcommand{\updateUnwUndAPSP}{2}
\newcommand{\queryUnwUndAPSP}{2}
\newcommand{\weightUnwUndAPSP}{1.844}
\newcommand{\preWeiDirAPSP}{2.873}
\newcommand{\updateWeiDirAPSP}{2.045}
\newcommand{\queryWeiDirAPSP}{2.045}
\newcommand{\preUnwUndAPSPquery}{2.621}
\newcommand{\updateUnwUndAPSPquery}{1.823}
\newcommand{\queryUnwUndAPSPquery}{0.45}
\newcommand{\sUnwUndAPSPquery}{0.248}
\newcommand{\muUnwUndAPSPquery}{0.202}
\newcommand{\preWeiDirAPSPquery}{2.708}
\newcommand{\updateWeiDirAPSPquery}{1.863}
\newcommand{\queryWeiDirAPSPquery}{0.666}
\newcommand{\sWeiDirAPSPquery}{0.334}
\newcommand{\muWeiDirAPSPquery}{0.316}
\newcommand{\preSubOptSSSP}{\preWeiDirAPSPquery{}}
\newcommand{\updateSubOptSSSP}{\updateWeiDirAPSPquery{}}
\newcommand{\querySubOptSSSP}{1.666} 
\newcommand{\preSubOptUnwUndSSSP}{\preUnwUndAPSPquery{}}
\newcommand{\updateSubOptUnwUndSSSP}{\updateUnwUndAPSPquery{}}
\newcommand{\querySubOptUnwUndSSSP}{1.45} 
\newcommand{\preSSSP}{2.621}
\newcommand{\updateSSSP}{1.823}
\newcommand{\sSSSP}{0.248}
\newcommand{\muSSSP}{0.202}
\newcommand{\preRadius}{2.624}
\newcommand{\updateRadius}{1.779}
\newcommand{\sRadius}{0.25}
\newcommand{\muRadius}{0.4}
\newcommand{\updateExactDiameter}{2.3452}
\newcommand{\sExactDiameter}{0.6548}
\newcommand{\muExactDiameter}{0.3097}
\declaretheorem[numberwithin=section,refname={Theorem,Theorems},Refname={Theorem,Theorems}]{theorem}
\declaretheorem[numberlike=theorem]{lemma}
\declaretheorem[numberlike=theorem]{corollary}
\declaretheorem[numberlike=theorem]{algorithm}
\declaretheorem[numberlike=theorem, refname={Question,Questions},Refname={Question,Questions},name={Question}]{question}
\declaretheorem[numberlike=theorem, refname={Problem,Problems},Refname={Problem,Problems},name={Problem}]{problem}
\title{Dynamic Approximate Shortest Paths and Beyond: Subquadratic and Worst-Case Update Time}
\author{Jan van den Brand}
\author{Danupon Nanongkai}
\affil{KTH Royal Institute of Technology, Sweden}
\date{} 
\begin{document}

\begin{titlepage}
	\maketitle
	\pagenumbering{roman}
	\ifdefined\ShowComment
	\begin{center}
		{\centering\huge\textcolor{red}{DEBUG VERSION}}
	\end{center}
	\fi

\begin{abstract}
	Consider the following distance query for an $n$-node graph $G$ undergoing edge insertions and deletions: given two sets of nodes $I$ and $J$, return the distances between every pair of nodes in $I\times J$.  
	This query is rather general and captures several versions of the dynamic shortest paths problem. 
	In this paper, we develop an efficient $(1+\epsilon)$-approximation algorithm for this query using fast matrix multiplication. Our algorithm leads to answers for some open problems for Single-Source and All-Pairs Shortest Paths (SSSP and APSP), as well as for Diameter, Radius, and Eccentricities. Below are some highlights. Note that all our algorithms guarantee {\em worst-case} update time and  are randomized (Monte Carlo), but do not need the oblivious adversary assumption.

	{\em Subquadratic update time for SSSP, Diameter, Centralities, ect.:} 
	When we want to maintain distances from a single node explicitly (without queries), 
	a fundamental question is to beat trivially calling Dijkstra's static algorithm after each update, 
	taking $\Theta(n^2)$ update time on dense graphs. 
	A better time complexity was not known even with amortization. 
	It was known to be improbable for {\em exact} algorithms 
	and for {\em combinatorial} any-approximation algorithms
	to polynomially beat the $\Omega(n^2)$ bound (under some conjectures) 
	[Roditty, Zwick, ESA'04; Abboud, V. Williams, FOCS'14].\footnote{%
	The conditional lower bounds of [Roditty, Zwick, ESA'04; Abboud, V. Williams, FOCS'14] 
	hold for algorithms with $O(n^{3-\delta})$ preprocessing time for some constant $\delta>0$. 
	Our preprocessing time is also in this form. 
	``Combinatorial algorithms'' is a vague term referring to algorithms that do not use fast matrix multiplication.}	
	%
	Our algorithm with $I=\{s\}$ and $J=V(G)$ implies a $(1+\epsilon)$-approximation algorithm for this, guaranteeing  $\tilde O(n^{\updateSSSP}/\epsilon^2)$ worst-case update time for directed graphs with positive real weights in $[1, W]$.\footnote{\label{foot:notation}{\bf Notations:} Throughout, $n$ and $m$ denote the number of nodes and edges respectively. Our focus is on dense graphs with $m=\Theta(n^2)$ edges. Let $V(G)$ denote the set of nodes in a graph $G$. Unless specified otherwise, ``weighted graphs'' refer to graphs with positive real edge weights. The $\tilde O$ notation hides $\poly\log n$, $\poly\log (1/\epsilon)$ (for $(1+\epsilon)$-approximation algorithms), and during the introduction a single $\log W$ factor.}
	With ideas from [Roditty, V. Williams, STOC'13], we also obtain the first subquadratic worst-case update time for $(5/3+\epsilon)$-approximating the eccentricities and $(1.5+\epsilon)$-approximating the diameter and radius  for unweighted graphs (with small additive errors). We also obtain the first  subquadratic worst-case update time  for $(1+\epsilon)$-approximating the closeness centralities for undirected unweighted graphs. 
	
	{\em Worst-case update time for APSP:} When we want to maintain distances between all-pairs of nodes explicitly, the $\tilde O(n^2)$ {\em amortized} update time by Demetrescu and Italiano [STOC'03] already matches the trivial  $\Omega(n^2)$ lower bound. A fundamental question is whether it can be made {\em worst-case}. The state-of-the-art algorithm takes  $\tilde O(n^{2+2/3})$ worst-case update time to maintain the distances exactly  [Abraham, Chechik, Krinninger, SODA'17; Thorup STOC'05]. When it comes to $(1+\epsilon)$ approximation, this bound is still higher than calling the  $\tilde O(n^\omega/\epsilon)$-time static algorithm of Zwick [FOCS'98], where $\omega\approx 2.373$. 
	Our algorithm with $I=J=V(G)$ implies nearly tight bounds for this, namely  $\tilde O(n^2/\epsilon^{1+\omega})$ for undirected unweighted graphs and $\tilde O(n^{2.045}/\epsilon^2)$ for directed graphs with positive real weights. 
	Besides this, we also obtain the first dynamic APSP algorithm with subquadratic update time and sublinear query time. 
\end{abstract}

	\newpage 
	
	\setcounter{tocdepth}{2}
	\tableofcontents
	
\end{titlepage}
%
%
%

\pagenumbering{arabic}

\newpage

\section{Introduction }\label{sec:intro}
Dynamic graph algorithms generally concern maintaining properties of a graph under a sequence of updates, typically in the form of an edge insertion, deletion or weight update. Among basic primitives extensively studied are various distance information; e.g., the all-pairs shortest paths (APSP), the single-source shortest paths (SSSP), and the $st$-shortest path (st-SP) concern the distances between all-pairs of nodes, from a single node to every node, and between a pair of nodes, respectively.\footnote{Note that some works also considered returning the shortest paths, not just the distances, and a {\em node update} where all edges incident to the same nodes are updated together. This paper does not consider this.}
These problems have been studied in settings where distances can be {\em queried} 
(e.g. \cite{HenzingerK95,HenzingerKN13,Bernstein09,RodittyZ12,Sankowski04,Sankowski05}) 
or are {\em explicitly} maintained 
(e.g. \cite{DemetrescuI04,Thorup05,AbrahamCK17}).
In this paper, we study the problem that captures many aforementioned problems as special cases. In this problem, one can query the distances between any two sets of nodes $I$ and $J$, as follows. 

\begin{problem}[Dynamic batch-query distances]
\label{prob:batch-query}
An algorithm for this problem supports the following operations.
	\begin{itemize}
		\item {\sc Preprocess}($G$): Process an input $n$-node graph $G$ with positive real edge weights from $[1,W]$.
		\item {\sc Update}($(u, v), w$):  Update the weight of edge $(u, v)$ to $w\in [1, W]\cup \{\infty\}$.\footnote{Setting edge weight to $\infty$ is equivalent to deleting an edge.}
		\item {\sc Query}($I, J$):  Given sets of nodes $I$ and $J$, return the distance from $i$ to $j$, denoted by $\dist(i,j)$, for each $(i,j)\in I\times J$.
	\end{itemize}
\end{problem}

For example, the explicit dynamic SSSP where we maintain the distances from a pre-specified node $s$ to every node after changing the weight of an edge 
is a special case of the above problem where we fix $I=\{s\}$ and $J=V(G)$ (where $V(G)$ is the set of all nodes in $G$), and the query is made after every update. 

Sometimes, we call the query in \Cref{prob:batch-query} the {\em batch query} to distinguish it from a typical query of the distance between two nodes. 
We refer to the case where $W=1$ as the {\em unweighted case} (setting an edge weight to $1$ and $\infty$ corresponds to inserting and deleting an edge). To keep things short, we use {\em weighted graphs} to refer to graphs with positive real edge weights in $[1, W]$ (like in \Cref{prob:batch-query}) throughout.

The performance of algorithms for  \Cref{prob:batch-query} is measured by {\em preprocessing time}, {\em update time} and {\em query time}.
The update time can be categorized into two types: A more desirable one is the {\em worst-case} update time  which holds for every single update. This is to contrast with an {\em amortized} update time which holds ``on average''.\footnote{More precisely, for any $t$, an algorithm is said to have an amortized update time of $t$ if, for any $k$, the total time it spends to process the first $k$ updates is at most $kt$.}
Our focus is on the {\em worst-case} update time and {\em dense} graphs with $m=\Theta(n^2)$ edges. 

In this paper, we present a fast {\em $(1+\epsilon)$-approximation} algorithm for  \Cref{prob:batch-query}. By  $(1+\epsilon)$-approximation, we mean that it answers {\sc Query}$(I, J)$ with $d'(i, j)\in [\dist(i,j), (1+\epsilon)\dist(i, j)]$ for every $(i, j)\in I\times J$. 
We state our algorithm's performance below for completeness but recommend the reader to skip on the first read. 
It might also be helpful to focus the bounds below for unweighted undirected graphs, where $\omega=2$, $\varepsilon=0.01$, and $s\leq 1/4$. In this case, the update and query time complexities in \Cref{thm:intro-main} become $\tilde{O}(n^{2-s})$ and $\tilde{O}(|I||J|+|I|n^{2s}+|J|n^{2s})$, which can be traded-off using the parameter $s$.

\begin{theorem}[Main General Result]
\label{thm:intro-main}
For any $0<s<1$, there exists a randomized (Monte Carlo) $(1+\epsilon)$-approximation algorithm for  \Cref{prob:batch-query} whose preprocessing time, {\em worst-case} update time, and query time on directed weighted graphs (respectively undirected unweighted graphs) are 
	\ifdefined\FOCSversion
	\begin{itemize}
		\item Preprocessing: $\tilde{O}(n^{\omega+s} / \varepsilon)$ (respectively  $\tilde{O}(n^{\omega+s})$),
		\item Update:\\
		$\tilde{O}((n^{\sankowski+s}  
		+ n^{\omega(1,1,1-s)-1+2s} 
		+ n^{\omega(1,1-s,1-s)})
		/\varepsilon^2)$
		(respectively  $\tilde{O}((n^{\sankowski+s} 
		+ n^{\omega(1,1,\mu+s)-\mu}
		+ n^{\omega(1,\mu+s,1-s)} 
		+n^{(1-s)\omega})/ \varepsilon^{\omega+1}$), and
		\item Query:\\
		$\tilde{O}(n^{\omega(\delta_1,1-s,\delta_2)} / \varepsilon^2)$ 
		(respectively  $\tilde{O}(n^{\omega(\delta_1,s+\mu,\delta_2)} / \varepsilon)$), 
		for $|I| = n^{\delta_1}$ and $|J| = n^{\delta_2}$).
	\end{itemize} 
	\else
\begin{itemize}
		\item Preprocessing: $\tilde{O}(n^{\omega+s} / \varepsilon)$ (respectively  $\tilde{O}(n^{\omega+s})$),
		\item Update: 
		$\tilde{O}(n^{\sankowski+s} / \varepsilon 
		+ n^{\omega(1,1,1-s)-1+2s} / \varepsilon^2 
		+ n^{\omega(1,1-s,1-s)}/\varepsilon^2)$\\
		(respectively  $\tilde{O}(n^{\sankowski+s} + n^{\omega(1,1,\mu+s)-\mu} / \varepsilon + n^{\omega(1,\mu+s,1-s)} / \varepsilon^2 +n^{(1-s)\omega}/ \varepsilon^{\omega+1}$), and
		\item Query: $\tilde{O}(n^{\omega(\delta_1,1-s,\delta_2)} / \varepsilon^2)$ (respectively  $\tilde{O}(n^{\omega(\delta_1,s+\mu,\delta_2)} / \varepsilon)$), for $|I| = n^{\delta_1}$ and $|J| = n^{\delta_2}$).
	\end{itemize} 
	\fi
\end{theorem}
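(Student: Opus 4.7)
The plan is to combine three classical ingredients: (i) Zwick's $(1+\varepsilon)$-approximation via rescaling min-plus products to standard (rectangular) matrix products, (ii) a hitting-set argument that replaces shortest paths with many edges by two-hop paths through a small hub set, and (iii) a Sankowski-style dynamic matrix data structure, deamortized so that the update bound is worst-case.

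First I would reduce to $O(\varepsilon^{-1}\log(nW))$ ``distance scales'' by rounding edge weights to powers of $(1+\varepsilon/\log n)$ and, for each scale $r$, rescaling weights by $n/r$ and rounding to integers so that the relevant distances land in $\{0,1,\dots,\Theta(n)\}$. Within a scale, Zwick's trick turns the approximate min-plus product of two $n\times n$ integer matrices into $O(\log n)$ standard matrix products, giving an $\tilde{O}(n^{\omega}/\varepsilon)$-time primitive; together with approximate repeated squaring along hop counts $1,2,4,\dots,n$ this is what produces the $\tilde{O}(n^{\omega+s}/\varepsilon)$ preprocessing term (with the extra $n^{s}$ coming from the number of hop levels we actually store explicitly).

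Second, for each scale and each hop count $h$ I would sample a hitting set $S_h$ of size $\tilde{O}(n/h)$ that, with high probability, meets every shortest path of at least $h$ edges. Then the approximate distance from $i\in I$ to $j\in J$ either uses $\leq h$ edges, and is read from a ``short-hop'' matrix maintained explicitly only up to $h=n^{1-s}$ (or $h=n^{s+\mu}$ in the unweighted case, exploiting Zwick's sharper bound for small integer weights), or it factors through some hub $u\in S_h$ as $\dist(i,u)+\dist(u,j)$. A batch query $(I,J)$ therefore reduces to a rectangular product $L_I\cdot M\cdot R_J$ with $L_I$ of size $|I|\times|S_h|$ and $R_J$ of size $|S_h|\times|J|$ (with $M$ representing the min-plus aggregation), yielding the $n^{\omega(\delta_1,1-s,\delta_2)}/\varepsilon^{2}$ (resp.\ $n^{\omega(\delta_1,s+\mu,\delta_2)}/\varepsilon$) query bound.

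Third, on the dynamic side I would maintain: (a) a dynamic transitive-closure / short-distance data structure in the style of Sankowski with update exponent $\sankowski$, refreshed on every update, which accounts for the $n^{\sankowski+s}/\varepsilon$ term (one instance per scale); (b) ``hub'' rows and columns $D[\cdot,S_h]$ and $D[S_h,\cdot]$, updated after every edge change by a matrix-vector/matrix-matrix operation whose cost is the rectangular product $\omega(1,1,1-s)$ (resp.\ $\omega(1,1,\mu+s)$), which explains the middle terms; and (c) the short-hop matrices themselves, rebuilt periodically in $\tilde{O}(n^{\omega(1,1-s,1-s)}/\varepsilon^{2})$ (resp.\ $n^{(1-s)\omega}/\varepsilon^{\omega+1}$) time and charged to $\Theta(n^{s})$ updates by running two staggered copies, which produces the last term while keeping the update time \emph{worst-case} rather than amortized.

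The main obstacle, and the place that needs the most care, is keeping everything worst-case while preserving a $(1+\varepsilon)$ multiplicative guarantee across $O(\log_{1+\varepsilon} n)$ independent scales: Zwick's construction is bulk/static, so each doubling level must be simulated inside the dynamic data structure rather than recomputed, and the hitting sets $S_h$ must be resampled periodically without breaking the non-oblivious-adversary guarantee (handled by rerandomizing inside each rebuild epoch, since the adversary sees only committed outputs). Once those details are arranged, the stated exponents come from balancing the three update terms above through the free parameters $s$ and, in the unweighted case, $\mu$.
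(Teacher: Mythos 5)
Your high-level ingredients---Zwick's scaling, hitting sets, a Sankowski-style dynamic matrix structure, and staggered rebuilds for worst-case bounds---are all present in the paper, so the outline is in the right neighborhood. But there is a genuine gap at the heart of the proposal: you never explain \emph{how} to maintain the short-hop distance information under a single edge update at sub-$n^2$ cost, and that is the paper's actual technical contribution. In the paper, one does not maintain a ``short-hop matrix'' explicitly; instead, short-hop reachability/distance information is encoded in the inverse of the polynomial matrix $M = \I - A\cdot X$ modulo $X^{h}$ (\Cref{lem:distanceInverseReduction}), and edge updates are handled via the incremental Sherman--Morrison identity (\Cref{lem:incrementalShermanMorrison}), so that $M_{(t)}^{-1} = M^{-1} - \hat{U}\hat{V}^\top$ for accumulated rank-one corrections. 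The decisive observation that breaks the $\Omega(n^{2})$ bottleneck is that one maintains only the coefficients $(M^{-1})^{[k]}$ for $k$ in a small set $S$ of $O(\varepsilon^{-1}\log n)$ geometrically spaced indices (\Cref{lem:overview:maintainInverse,lem:dynamicApproximateSlice}), not all $n^{s}$ coefficients; without this, even writing down the hub-to-everything distances each update already costs $\Omega(n^{2})$ field elements, as the paper explicitly notes when comparing to \cite{Sankowski05}. Your proposal's item (b), ``hub rows and columns updated after every edge change by a matrix-vector/matrix-matrix operation,'' papers over exactly this bottleneck and does not say why it would cost less than $n^{2}$.

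A second missing piece is the interplay between the approximate structure and Sankowski's exact one. Because only the $k\in S$ coefficients of $M^{-1}$ are stored, the new correction vectors $\hat u_{(t+1)},\hat v_{(t+1)}$ cannot be read off from what is maintained; the paper resolves this by running \Cref{lem:dynamicMatrixInverse} (Sankowski's exact inverse maintenance) in parallel, using it \emph{only} to supply one row and one column of $M_{(t)}^{-1}$ per update. This is the source of the $n^{\sankowski+s}$ term, and it is not simply ``a transitive-closure structure refreshed on every update'' as you describe. Finally, your claimed costs do not match the stated exponents: for instance the weighted update term is $n^{\omega(1,1,1-s)-1+2s}$, which arises as $n^{\omega(1,s+\mu,1)-\mu}$ with $\mu=1-2s$ from spreading the reset over $n^{\mu}$ updates, and is not the same as a plain $n^{\omega(1,1,1-s)}$ rectangular product. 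In short, the scaffolding is right but the load-bearing mechanism---polynomial matrix inverse, rank-one Sherman--Morrison accumulation, selective coefficient maintenance via \Cref{lem:computeSlice}, and the parallel exact helper---is absent, and those are precisely the parts that need proof.
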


The $\tilde O$ notation hides $\poly\log(n)$ and $\poly\log(1/\epsilon)$. To simplify our discussions, we hide the $\log W$ term in this section; we emphasize that our algorithms have only a single $\log W$ factor in their complexities.
Here $\omega$ is the matrix multiplication exponent, 
i.e. multiplying two $n \times n$ matrices requires $O(n^\omega)$ time 
(the current best bound is $\omega \le \matrixExponent$ \cite{Gall14a,Williams12}) 
and $\omega(a, b, c)$ is the exponent for multiplying 
an $n^a \times n^b$ matrix with an $n^b \times n^c$ matrix.
\ifdefined\FOCSversion
For bounds on $\omega(a,b,c)$ we refer to the appendix of the full version of this paper \cite{BrandN19}.
\else
For bounds on $\omega(a,b,c)$ see \Cref{app:omega}.
\fi
The claimed algorithm, as well as other algorithms that follow, 
guarantee worst-case update and query times and are randomized in that 
they return the results within the guaranteed approximation ratio with high probability%
\footnote{With high probability (w.h.p.) means with probability at least $1-1/n^c$ for any constant $c>1$.}. 
Note that unlike typical randomized dynamic algorithms, 
our algorithms do {\em not} need the oblivious adversary assumption; 
i.e. an edge update can depend on the algorithms' prior outputs.

\subsection{Consequences}
Our result leads to improved algorithms for several variants of dynamic distance maintenance. This includes maintaining the {\em diameter}, the {\em radius}, and  the  {\em eccentricities}.
It answers some key questions in the studies of dynamic shortest paths. Below are some of these questions (more explanations will follow). 

\begin{question}[Beating static algorithms]
\label{question:break_static}
Can we beat trivially calling static algorithms after every update? 
In particular, can we (i) explicitly maintain the SSSP in {\em subquadratic} update time
(even with amortization), and (ii) explicitly and $(1+\epsilon)$-approximately maintain the APSP faster than $\tilde O(n^\omega)$ {\em worst-case} update time? 
\end{question}

\begin{question}[De-amortization]
\label{question:worst-case}
Can we achieve a worst-case update time comparable to the best known amortized update time?
In particular, can we explicitly, and perhaps approximately, maintain the APSP in $\tilde O(n^2)$ worst-case update time?
\end{question}

\paragraph{SSSP}
Beating static algorithms (\Cref{question:break_static}) is the first step in tackling any dynamic problems. It has been achieved for a great number of problems, but unfortunately not for a basic problem like SSSP where we want to explicitly maintain distances between a pre-specified node $t$ and all other nodes:
There was no exact or $(1+\epsilon)$-approximation algorithm
that beats calling Dijkstra's $O(m+n\log n)$-time static algorithm after every update, causing $\Theta(n^2)$ update time on dense graphs.
This is in contrast to the {\em partially-dynamic} setting, where a $(1+\epsilon)$-approximation algorithm with $\tilde O(n)$ amortized update time was known essentially since 1981 \cite{EvenS81} (see, e.g.  \cite{HenzingerKN18-DecrementalSSSP,HenzingerKN14,HenzingerKN-SODA14,HenzingerKN-ICALP13,BernsteinC16,Bernstein17,BernsteinR11}
for some recent improvements).\footnote{Recall that the partially-dynamic setting is when edge weighs can be only increased or only decreased. \cite{EvenS81} originally presented an exact algorithm with $O(n)$ amortized update time for unweighted undirected graphs. It was observed later than this can be easily extended to $\tilde O(n)$ amortized update time for weighted directed graphs.}
The question for the fully-dynamic setting was raised in, e.g., \cite{DemetrescuI04}.

It was known that polynomially beating this bound is {\em improbable} for exact algorithms: there is no {\em exact} dynamic SSSP algorithm with $O(n^{3-\delta})$ preprocessing time and $O(n^{2-\delta})$ amortized update time for any constant $\delta>0$, assuming the so-called APSP conjecture \cite{RodittyZ11,AbboudW14}.\footnote{The result of  \cite{RodittyZ11,AbboudW14} does not rule out an exact algorithm with higher preprocessing time and $O(n^{2-\delta})$ update time. Finding such algorithm remains a major open problem.}
A natural question is whether this can be achieved with a {\em $(1+\epsilon)$-approximation} algorithm.
\Cref{thm:intro-main} shows that this is the case:
it implies $\tilde O(n^{\updateSubOptSSSP})$ worst-case update time for SSSP (after the $\tilde O(n^{\preSubOptSSSP})$-time preprocessing) for directed weighted graphs. 
This is simply because, for $|I|=1$, $|J|=n$, and an appropriate choice of $s$,
the preprocessing time, update time, and query time in \Cref{thm:intro-main} become
$\tilde O(n^{\preSubOptSSSP})$,
$\tilde O(n^{\updateSubOptSSSP})$,
$\tilde O(n^{\querySubOptSSSP})$,
respectively for directed weighted graphs.
For unweighted directed graphs, the bounds become
$O(n^{\preSubOptUnwUndSSSP})$, 
$\tilde O(n^{\updateSubOptUnwUndSSSP})$, 
and $\tilde O(n^{\querySubOptUnwUndSSSP})$, 
implying slightly lower time complexities.
By adjusting a small part of the proof of \Cref{thm:intro-main}, we get slightly better bounds:

\begin{theorem}\label{thm:intro-SSSP-subquadratic}
\emph{SSSP in subquadratic update time; Details in \Cref{thm:SSSP}):}
There is a $(1+\epsilon)$-approximation algorithm for maintaining SSSP explicitly for directed weighted graphs in  $\tilde O(n^{\updateSSSP}/ \varepsilon^{2})$ worst-case update time after the $\tilde O(n^{\preSSSP})$-time preprocessing. 
\end{theorem}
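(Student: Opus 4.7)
My plan is to derive Theorem 1.4 by specializing the general Theorem 1.1 to $I=\{s\}$ and $J=V(G)$ and then lightly adjusting the preprocessing. Since explicit SSSP maintenance requires a {\sc Query} after every {\sc Update}, the effective worst-case per-update cost is the sum of the Update and Query bounds of Theorem 1.1 evaluated at $\delta_1 = 0$ and $\delta_2 = 1$. For directed weighted graphs the query exponent becomes $\omega(0, 1-s, 1) = 2-s$, which is just the cost of multiplying a single row vector by an $n^{1-s} \times n$ matrix, and it will turn out to be dominated by the update.

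Next I would pick the trade-off parameter $s$ to minimize the Update bound $\tilde O(n^{\sankowski+s}/\varepsilon + n^{\omega(1,1,1-s)-1+2s}/\varepsilon^2 + n^{\omega(1,1-s,1-s)}/\varepsilon^2)$. Using the standard bounds on rectangular matrix multiplication, the binding term for the range of $s$ of interest is $n^{\omega(1,1,1-s)-1+2s}$, and balancing it against $n^{\sankowski+s}$ yields $s \approx \sSSSP$ and a combined Update bound of $\tilde O(n^{\updateSSSP}/\varepsilon^{2})$. At this choice the query cost $\tilde O(n^{2-s}/\varepsilon^{2}) = \tilde O(n^{1.752}/\varepsilon^{2})$ is strictly smaller, so it folds in for free, and the explicit-maintenance cost per update stays at $\tilde O(n^{\updateSSSP}/\varepsilon^{2})$.

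The promised small adjustment concerns the preprocessing, which Theorem 1.1 bounds by $\tilde O(n^{\omega + s}/\varepsilon) = \tilde O(n^{\preSSSP}/\varepsilon)$. The $1/\varepsilon$ factor there comes from the setup that enables batch queries over arbitrary $I$ and $J$ with $(1+\varepsilon)$ precision; since for SSSP the source is fixed to a single node, this setup can be replaced by a lighter source-specific precomputation that avoids the blow-up, giving $\tilde O(n^{\preSSSP})$ preprocessing. I expect the main obstacle to be verifying that this stripped preprocessing is still consistent with the Update routine of Theorem 1.1, which relies on several auxiliary matrices produced during preprocessing: one must check that restricting to maintaining only the single row $\dist(s,\cdot)$ is compatible with the low-rank update machinery and with the hitting-set / sampling argument that underlies the $(1+\varepsilon)$ guarantee.
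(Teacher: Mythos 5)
Your proposal does not actually reach the claimed bound, and the source of the discrepancy is exactly the adjustment the paper makes — which lives in the \emph{update} routine, not the preprocessing as you speculate.

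Concretely, in \Cref{thm:intro-main} for directed weighted graphs the per-update cost has three terms: $n^{\sankowski+s}/\varepsilon$, $n^{\omega(1,1,1-s)-1+2s}/\varepsilon^2$, and $n^{\omega(1,1-s,1-s)}/\varepsilon^2$. You claim the second term is binding near $s\approx\sSSSP$ and balance it against the first, but at $s\approx 0.248$ the \emph{third} term dominates: $\omega(1,0.752,0.752)\approx 2.03$, giving $\tilde O(n^{2.03}/\varepsilon^2)$, far above $n^{\updateSSSP}$. This is no accident; the paper itself states in the discussion after \Cref{thm:intro-SSSP-subquadratic} that a direct application of \Cref{thm:intro-main} with $|I|=1$, $|J|=n$ only yields $\tilde O(n^{\updateSubOptSSSP})$ update (with the larger preprocessing $\tilde O(n^{\preSubOptSSSP})$), and it is precisely the promised ``small adjustment'' that brings the exponent down to $\updateSSSP$. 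The $n^{\omega(1,1-s,1-s)}/\varepsilon^2$ term is the cost of computing the approximate $(\min,+)$-product $D_{V,H}\star D_{H,H}^{(\star|H|)}$ during every update, and this step is built into the batch-query machinery of \Cref{thm:intro-main}: there $\mu$ is hard-wired to $1-2s$ so that a query for $I\times J$ can be answered from a precomputed $n\times|H|$ product. For SSSP you do not need this at all.

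The paper's actual proof of \Cref{thm:SSSP} replaces this step. It keeps the short-hop machinery (\Cref{thm:dynamicSmallApproximateDistancesLargeWeightQuery}), but after each update it samples a fresh hitting set $H$ of size $\tilde O(n^{1-s})$, \emph{queries} the $n^s$-hop approximate distances $D_{H\cup\{v\},V}$ via a batch query of shape $n^{1-s}\times n$, constructs a sparse auxiliary graph $G_H$ on node set $V$ with $O(n\cdot|H|)=O(n^{2-s})$ edges weighted by those distances, and runs Dijkstra from the source $v$ on $G_H$. This yields the three terms $n^{\sankowski+s}$, $n^{\omega(1,1,s+\mu)-\mu}$, and $n^{\omega(1-s,\mu+s,1)}$ (plus the subsumed $n^{2-s}$ for Dijkstra), with $\mu$ now a \emph{free} parameter rather than tied to $1-2s$; balancing at $s\approx\sSSSP$, $\mu\approx\muSSSP$ gives $\tilde O(n^{\updateSSSP}/\varepsilon^2)$ as claimed. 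This algorithmic change — using batch queries plus a sparse overlay graph and Dijkstra rather than a dense $(\min,+)$-product — is the content you are missing, and no amount of parameter tuning inside \Cref{thm:intro-main} as stated will recover it.

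On preprocessing: your suggested ``lighter source-specific precomputation'' is not what happens and would not remove the $1/\varepsilon$ factor. That factor comes from the weight-rounding reduction (\Cref{thm:integerToRealWeights}), which requires maintaining integer distances up to $\Theta(n^s/\varepsilon)$; the source set plays no role. Indeed \Cref{thm:SSSP} still states $\tilde O(n^{s+\omega}/\varepsilon\log W)$ preprocessing — the $1/\varepsilon$ is simply suppressed in the introduction's informal statement. The improvement to $n^{\preSSSP}$ over the direct route's $n^{\preSubOptSSSP}$ is solely due to the smaller admissible $s$, which is in turn possible only because the update routine was changed.
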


In the best future scenario, when $\omega=2$, the update time in \Cref{thm:intro-SSSP-subquadratic} would become $\tilde O(n^{1.75}/\varepsilon^2)$. 
Prior to our work, the only method to beat the $\Theta(n^2)$ update time was to run a static algorithm (e.g. Dijkstra's) on top of a {\em dynamic sparse spanner}, giving approximation factors of three or more on undirected graphs.  For example, Bernstein, Forster, and Henzinger \cite{BernsteinFH19} can maintain a $(2k-1)$-spanner of $\tilde O(n^{1+1/k})$ edges in $O(1)^k\log^3(n)$ worst-case update time, for any constant $k\geq 1$ and for undirected graphs. This allows us to $3$-approximate SSSP on undirected graphs in $\tilde O(n^{1.5})$ worst-case update time. 
Since spanners work only for {\em undirected} graphs and cause the distances to increase by a multiplicative factor of at least three,  its use is fundamentally limited to such graphs and approximation guarantee. Our algorithm avoids spanners completely and works for directed graphs with much lower approximation ratio.

\paragraph{APSP}  

The algorithm with $\tilde O(n^{2.75})$ worst-case update time of Thorup \cite{Thorup05} was among the first that addressed the issue of worst-case update time (\Cref{question:worst-case}). 
Despite much recent effort and progress on this issue\footnote{\label{foot:worst-case literature}See, e.g., \cite{AbrahamCK17,Thorup05,NanongkaiSW17,NanongkaiS17,Wulff-Nilsen17,BhattacharyaHN17,KapronKM13,BernsteinFH19,CharikarS18,KopelowitzKPS14,ArarCCSW18,BodwinK16}.}, the only improvement over Thorup's bound was the $\tilde O(n^{2+2/3})$ worst-case update time by Abraham, Chechik, and Krinninger \cite{AbrahamCK17}. 
This bound holds for directed weighted graphs and can be improved to $\tilde O(n^{2.5})$ on directed {\em unweighted} graphs. (Update: Recently, after our publication, Wulff-Nilsen and Probst analyzed worst-case APSP in the deterministic and Las Vegas setting \cite{WNilsenP20}.)

In fact, the above results are all for maintaining distances {\em exactly}. When it comes to maintaining {\em $(1+\epsilon)$-approximate} distances, they are not better than trivially running a static algorithm after every update. On dense directed weighted graphs, this takes  $\tilde O(n^\omega/\epsilon)$ worst-case update time due to Zwick's algorithm \cite{Zwick02}. (Again, $\omega\approx \matrixExponent$.)  In other words, there was no $(1+\epsilon)$-approximation algorithm that beats static algorithms (\Cref{question:break_static}). 
\Cref{thm:intro-main} implies algorithms that do not only break the static $\tilde O(n^\omega/\epsilon)$ bound, but are also nearly tight:

\begin{theorem}[APSP in almost-quadratic worst-case update time; 
Details in \Cref{thm:weightedAPSP}]
\label{thm:intro-APSP-weighted}
There is a $(1+\epsilon)$-approximation algorithm 
for maintaining all-pairs-distances explicitly in
(i) $\tilde O(n^{\updateUnwUndAPSP}/ \varepsilon^{\omega+1})$
worst-case update time after the $\tilde O(n^{\preUnwUndAPSP})$-time preprocessing
for undirected unweighted graphs,
and (ii) $\tilde O(n^{\updateWeiDirAPSP}/ \varepsilon^{2})$,  
worst-case update time after the $O(n^{\preWeiDirAPSP})$-time preprocessing 
for directed weighted graphs.
\end{theorem}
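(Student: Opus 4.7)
The plan is to derive \Cref{thm:intro-APSP-weighted} as a direct corollary of \Cref{thm:intro-main} by setting $I = J = V(G)$ (so $\delta_1 = \delta_2 = 1$ in the query bound) and issuing one full query after every edge update. Under this reduction, the worst-case update time for maintaining APSP explicitly is simply the sum of the update and query bounds of \Cref{thm:intro-main}, and the entire proof amounts to choosing the tradeoff parameter $s$ (and also $\mu$ in the unweighted case) so as to balance all terms against the claimed exponent.

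For the directed weighted case, I would set $s = 1/2$. The three update terms then become $n^{\sankowski + 1/2}$, $n^{\omega(1,1,1/2) - 1 + 1} = n^{\omega(1,1,1/2)}$, and $n^{\omega(1,1/2,1/2)}$, while the query term becomes $n^{\omega(1,1/2,1)} = n^{\omega(1,1,1/2)}$. Using the best known bound $\omega(1,1,1/2) \le 2.045$ on this rectangular matrix multiplication exponent (and checking $\sankowski + 1/2 < 2.045$ and $\omega(1,1/2,1/2) \le 2.045$), all terms are dominated by $n^{2.045}$, giving the claimed worst-case update time $\tilde O(n^{2.045}/\varepsilon^{2})$ after $\tilde O(n^{\omega+1/2}/\varepsilon) = \tilde O(n^{2.873}/\varepsilon)$ preprocessing.

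For the undirected unweighted case, I would take $s \approx 0.157$ and $\mu \approx 0.15$. With these choices, one checks that each of the update terms $n^{\sankowski + s}$, $n^{\omega(1,1,s+\mu) - \mu}$, $n^{\omega(1,s+\mu,1-s)}$, and $n^{(1-s)\omega}$, together with the query term $n^{\omega(1,s+\mu,1)}$, evaluates to at most $n^{2 + o(1)}$ under the currently known bounds on $\omega(\cdot,\cdot,\cdot)$. The key observation is that $s + \mu = 0.307$ sits below the largest $\alpha$ for which $\omega(1,1,\alpha) = 2$, so the query exponent collapses to $2$; the update exponents are then matched against $2$ by the choice of $\mu$. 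The $\varepsilon^{\omega+1}$ factor is inherited from the $n^{(1-s)\omega}/\varepsilon^{\omega+1}$ term, yielding worst-case update time $\tilde O(n^{2}/\varepsilon^{\omega+1})$ after $\tilde O(n^{\omega+s}) = \tilde O(n^{2.53})$ preprocessing.

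Since the algorithmic heavy lifting is already carried out in \Cref{thm:intro-main}, the only real work here is numerical. The main obstacle is therefore to verify, at the chosen parameter values, that every individual exponent lies below the claimed target ($2.045$ or $2$). This reduces to plugging in the tabulated upper bounds on the rectangular matrix multiplication exponents $\omega(1,a,b)$ and performing a small optimization; the parameter values above are exactly the optimizers of that linear-programming-style problem. If future improvements to $\omega(1,1,1/2)$ or to the range of $\alpha$ with $\omega(1,1,\alpha) = 2$ appear, the same plan automatically yields stronger bounds.
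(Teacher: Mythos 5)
Your proposal is correct and follows exactly the paper's intended route: the paper derives this theorem as a direct corollary of \Cref{thm:intro-main} by setting $I=J=V(G)$, querying after every update, and balancing parameters (the paper uses $s=0.5$, $\mu=0$ in the detailed directed proof \Cref{thm:weightedAPSP} and $s=1-2/\omega\approx0.157$ in the undirected proof \Cref{thm:undirectedApproximateDistances}, matching your choices up to the fact that either $\mu=0$ or $\mu\approx 0.15$ lands all undirected exponents at $2$). Your numerical checks (using $\omega(1,1,1/2)\le 2.045$, $\omega(1,1,\alpha)=2$ for $\alpha\le 0.307$, and $(1-s)\omega=2$ at $s=1-2/\omega$) are exactly the verifications the paper performs.
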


This is simply because for $I=J=V(G)$ and an appropriate choice of $s$,  
the preprocessing time, worst-case update time, and query times in \Cref{thm:intro-main} 
become $O(n^{\preUnwUndAPSP})$,
$\tilde O(n^{\updateUnwUndAPSP})$, 
and $\tilde O(\queryUnwUndAPSP)$ 
for undirected unweighted graphs,
and $O(n^{\preWeiDirAPSP})$, 
$\tilde O(n^{\updateWeiDirAPSP})$ 
and $\tilde O(n^{\queryWeiDirAPSP})$ 
for directed weighted graphs.
Prior to our algorithms, the only way to beat Zwick's static algorithm was via dynamic spanners; e.g., the aforementioned algorithm of Bernstein~et~al.~\cite{BernsteinFH19} implies a $5$-approximation algorithm with $\tilde O(n^{2+1/3})$ worst-case update time and an $O(1/\epsilon)$-approximation algorithm with $\tilde O(n^{2+\epsilon})$ worst-case update time.
As mentioned earlier, this approach is fundamentally limited to undirected graphs and rather large approximation ratios.

Note that while previous algorithms \cite{Thorup05,AbrahamCK17}
can also return the shortest path connecting two nodes 
in time proportional to the length of the path, 
our algorithms only maintain the distances.
Also, previous algorithms can handle a more general update 
where the weights of all edges incident to the same node are updated at once. 
Our algorithms only handle the standard edge updates. 
Due to the so-called ``Johnson transformation''\cite{Johnson77,CormenLRS09},
previous algorithms can also handle negative edge weights 
when there are no negative cycles. 
Since this transformation applies only for exact distance computation, 
it does not apply to our algorithms.

For a version of APSP where we can make a query for a distance between two nodes, \Cref{thm:intro-main} implies a $(1+\epsilon)$-approximation algorithm with {\em subquadratic} update and {\em sublinear} query time complexities (both are in the worst case).

\begin{theorem}[APSP in subquadratic update and sublinear query time; 
Details in \Cref{thm:weightedAPSPQuery}]
\label{thm:intro-APPS-queries}
There is a $(1+\epsilon)$-approximation algorithm for maintaining APSP
with $\tilde O(n^{\updateWeiDirAPSPquery}/ \varepsilon^{2})$ worst-case update time 
and $\tilde O(n^{\queryWeiDirAPSPquery}/ \varepsilon^{2})$ worst-case query time 
after the $\tilde O(n^{\preWeiDirAPSPquery})$-time preprocessing 
for directed weighted graphs,
and (ii) $\tilde O(n^{\updateUnwUndAPSPquery} / \varepsilon^{\omega+1})$ worst-case update time 
and $\tilde O(n^{\queryUnwUndAPSPquery}/ \varepsilon^{\omega+1})$ worst-case query time 
after the $ O(n^{\preUnwUndAPSPquery})$-time preprocessing 
for undirected unweighted graphs.
\end{theorem}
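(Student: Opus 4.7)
The statement should fall out of Theorem~\ref{thm:intro-main} by instantiating the batch query with the smallest possible sets and optimizing the free parameters. The plan is to invoke the general algorithm with $|I|=|J|=1$, i.e.\ $\delta_1=\delta_2=0$, which corresponds exactly to the classical two-node distance query that APSP-with-queries is asked to answer. For a single pair, the query bound $\tilde O(n^{\omega(\delta_1,1-s,\delta_2)}/\varepsilon^2)$ in the directed weighted case specializes to $\tilde O(n^{1-s}/\varepsilon^2)$, since $\omega(0,b,0)=b$ is just the cost of an inner product of two length-$n^b$ vectors; similarly in the undirected unweighted case the query becomes $\tilde O(n^{s+\mu}/\varepsilon)$. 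The preprocessing and update times are taken directly from Theorem~\ref{thm:intro-main}, and what remains is to choose $s$ (and, for the unweighted case, $\mu$) so as to minimize the maximum of the update and query exponents.

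For directed weighted graphs the optimization is a small linear program in $s$: set the four quantities
\begin{equation*}
\sankowski+s,\quad \omega(1,1,1-s)-1+2s,\quad \omega(1,1-s,1-s),\quad 1-s
\end{equation*}
as small as possible simultaneously, using the tabulated bounds on the rectangular matrix multiplication exponent $\omega(a,b,c)$ from the referenced appendix. The balance point is $s=\sWeiDirAPSPquery{}$, at which the first term gives $n^{\sankowski+s}=n^{\updateWeiDirAPSPquery{}}$ and the query exponent is $1-s=\queryWeiDirAPSPquery{}$; I would then just verify that at this $s$ the remaining two update terms do not exceed $\updateWeiDirAPSPquery{}$. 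The preprocessing exponent is $\omega+s=\preWeiDirAPSPquery{}$.

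For undirected unweighted graphs, the same procedure is carried out but with two parameters $s,\mu$. The update time now has four terms (including the extra $n^{(1-s)\omega}/\varepsilon^{\omega+1}$ contribution) and the query is $\tilde O(n^{s+\mu}/\varepsilon^{\omega+1})$ (using $\delta_1=\delta_2=0$). Optimizing the resulting LP with the tabulated $\omega(a,b,c)$ values yields $s=\sUnwUndAPSPquery{}$, $\mu=\muUnwUndAPSPquery{}$, producing the advertised update exponent $\updateUnwUndAPSPquery{}$, query exponent $\queryUnwUndAPSPquery{}$, and preprocessing exponent $\omega+s=\preUnwUndAPSPquery{}$.

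The only genuinely delicate step is the numerical verification: one must confirm, using the current best bounds on $\omega(a,b,c)$, that at the chosen $(s,\mu)$ none of the update subexponents exceeds the claimed one. This is purely a calculation against the tabulated values; no new algorithmic idea is required beyond plugging the minimal query shape into Theorem~\ref{thm:intro-main}. Consequently the main work has already been done in the general result, and this theorem is essentially a corollary obtained by parameter tuning.
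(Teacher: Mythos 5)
Your proposal is correct and matches the paper's approach: the theorem is exactly the $\delta_1=\delta_2=0$ instantiation of the general batch-query machinery, together with parameter tuning. The paper realizes this by proving the more detailed Theorems~\ref{thm:weightedAPSPQuery} and~\ref{thm:undirectedApproximateDistancesQuery} and then reading off the single-pair query cost; you go via the equivalent Theorem~\ref{thm:intro-main}, which amounts to the same thing. One small clarification to your framing: you describe the parameter choice as ``minimize the maximum of the update and query exponents,'' but since the query exponent is always dominated by the update exponent here, the effective criterion is just to minimize the update exponent by balancing the update sub-terms (in the directed case, $\sankowski+s$ against $\omega(1,1-s,1-s)$, which drives $s\approx\sWeiDirAPSPquery{}$), and the query exponent $1-s$ (resp.\ $s+\mu$) then falls out as a consequence. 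Your plan to verify numerically that the remaining terms stay below the dominating one against the tabulated rectangular $\omega(a,b,c)$ bounds is exactly what is required; no new ideas are needed beyond what Theorem~\ref{thm:intro-main} already provides.
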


The only previously known algorithm with subquadratic update and query time complexities (but not with a sublinear query time) was by Sankowski~\cite{Sankowski05}. It outputs exact distances for directed unweighted graphs. Our algorithm works on weighted graphs with much lower query time, but only returns approximate distances.

\paragraph{Diameter, Radius, and Eccentricities} 
The eccentricities (the eccentricity of a node $v$ is the largest distance from $v$ to any another node), the diameter (the maximum  over all eccentricities) and the radius (the minimum over all eccentricities) can be easily maintained in $\tilde O(n^2)$ amortized update time using Demetrescu and Italiano's dynamic APSP algorithm \cite{DemetrescuI04}. 
An important challenge here is to {\em break the $\tilde O(n^2)$  bound}. This  captures a fundamental question of whether we really need APSP to maintain less informative measurements like the diameter.
It was known that algorithms with $n^{2-\delta}$ amortized update time might not exist for  $(1.5-\epsilon)$-approximating the diameter, $(1.5-\epsilon)$-approximating the radius, and $(5/3-\epsilon)$-approximating the eccentricities for any constants $\delta,\epsilon>0$ (assuming either the Strong Exponential Time Hypothesis (SETH) or a version of the Hitting Set Hypothesis) \cite{AnconaHRVW18}.\footnote{\cite{AnconaHRVW18} also ruled out algorithms with $m^{1-\delta}$ update time that is $(2-\epsilon)$-approximation on undirected unweighted graphs (under SETH),  and finite approximation on directed unweighted graphs (under the k-Cycle Hypothesis), strengthening \cite{HenzingerKNS15}.} In other words, likely we cannot break the $O(n^2)$ bound with such approximation guarantees.

In this paper, we show that with slightly higher approximation guarantees we can break the $O(n^2)$ bound:
by essentially simulating the static algorithm of Roditty and V. Williams \cite{RodittyW13} (with some small adjustments), using the algorithm in \Cref{thm:intro-main} to compute distances when needed, we obtain algorithms with a subquadratic worst-case update time that nearly $(1.5+\epsilon)$-approximate the diameter and radius, and nearly $(5/3+\epsilon)$-approximate the eccentricities for dynamic graphs.  By ``nearly'', we mean that there are some additive errors smaller than one. The results, when we slightly adjust the proof of \Cref{thm:intro-main} to get some slight improvements, are as follows.\footnote{
	Without the adjustment, our algorithms guarantee the following:
	nearly ($1.5+\epsilon$) approximation factor
	and $O(n^{\updateWeiDirAPSPquery} / \varepsilon^{\omega+1})$ time for Diameter,
	nearly ($1.5+\epsilon$) approximation factor 
	and $O(n^{\updateUnwUndAPSPquery} / \varepsilon^{\omega+1})$ time for Radius, 
	and nearly ($5/3+\epsilon$) approximation factor 
	and $O(n^{\updateUnwUndAPSPquery} / \varepsilon^{\omega+1})$ time for Eccentricities.
	Note that the adjustment does not improve the update time for Eccentricities.}

\begin{theorem}[Approximating Diameter, Radius, and Eccentricities; 
Details in \Cref{thm:1.5approxDiam}]
We write $\diam(G)$,  $\radius(G)$ for the diameter and radius of graph $G$ respectively, 
and let $ecc(v, G)$ be the eccentricity of node $v$ in $G$.  
	There exist algorithms that can maintain the following values with the following time complexities for a dynamic graph $G$.
	
	\begin{enumerate}
		\item $	\tilde{D} \in \left[\left(\frac{2}{3}-\varepsilon\right) \diam(G) - 1/3, (1+\varepsilon) \diam(G)\right]$
		in $\tilde O(n^{\updateRadius}/\varepsilon^{1+\omega})$ worst-case update time
		after the $O(n^{\preRadius})$-time preprocessing, 
		\item $\tilde R\in \left[\radius(G) / (1+\varepsilon), (1.5+\varepsilon) \radius(G) + 2/3\right]$ 	
		in $\tilde O(n^{\updateRadius}/\varepsilon^{1+\omega})$ worst-case update time 
		after the $O(n^{\preRadius})$-time preprocessing, and 
		\item	$ \widetilde{ecc}(v) \in \left[\left(\frac{3}{5}-\varepsilon\right) ecc(v,G) - 4/7 , ecc(v,G)\right]$ 
		 for all nodes $v$ in $\tilde O(n^{\updateUnwUndAPSPquery}/\varepsilon^{1+\omega})$ worst-case update time 
		 after the $O(n^{\preUnwUndAPSPquery})$-time preprocessing.
	\end{enumerate}
	The algorithm for Diameter works for directed unweighted graphs, while the others work for undirected unweighted graphs.
\end{theorem}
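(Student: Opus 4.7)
The plan is to dynamize the static $3/2$-approximations for diameter and radius and the $5/3$-approximation for all eccentricities due to Roditty and V.~Williams \cite{RodittyW13}, using the batch-query primitive of \Cref{thm:intro-main} in place of the $\tilde O(\sqrt n)$ static BFS calls those algorithms perform. Each of the RW'13 routines can be recast as: compute distances from a carefully chosen source set $I$ of size $\tilde O(\sqrt n)$ to $J = V(G)$, and then perform $O(n|I|)$ combinatorial bookkeeping on the returned distance matrix. In every case the set $I$ has the shape $S \cup N$, where $S \subseteq V(G)$ is a uniform random sample of $\tilde \Theta(\sqrt n)$ nodes acting as a hitting set, and $N$ is an $O(\sqrt n)$-sized set of neighbors around one or two pivots located by a constant number of preliminary single-source queries.

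For the diameter I would maintain the batch-query data structure of \Cref{thm:intro-main} with the parameter $s$ optimized for $|I|=\tilde O(\sqrt n)$ and $|J|=n$. After each edge update I would (1)~let the data structure absorb the update; (2)~pick an arbitrary node $u$ and query $(\{u\},V)$ to locate an approximately farthest node $v$; (3)~query $(\{v\},V)$ to obtain an approximate $\widetilde{ecc}(v)$ and read off the set $N$ of the $\lceil \sqrt n\, \rceil$ nodes nearest to $v$; (4)~draw a fresh uniform random set $S$ of $\tilde \Theta(\sqrt n)$ nodes; (5)~issue the single batch query $(S \cup N, V)$; and (6)~return $\tilde D$, the maximum of all returned estimates. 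The RW'13 analysis shows that, with exact distances, either $\diam(G) \le (3/2)\,ecc(v)$ (so step~(3) already suffices), or the shortest diameter path is long enough to be hit by $S$ with high probability, or it is short enough that its endpoint near $v$ lies in $N$. Propagating the $(1+\varepsilon)$ multiplicative error of the batch queries through each branch yields the claimed two-sided bound; the $-1/3$ additive term arises from rounding the half-integer $3/2$ threshold against integer-valued distances.

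The radius and eccentricity cases follow the same template. For radius I would symmetrically target an optimal center's out-ball instead of the diameter path and return the aggregate minimum of the computed row maxima; for eccentricities I would use the RW'13 two-level construction, in which $S$ serves simultaneously as sources and as pivots whose $\sqrt n$-neighborhoods comprise $N$, and $\widetilde{ecc}(v)$ is read off as the maximum over the row corresponding to $v$. In each case the number of batch queries per update is a constant and each has $|I|\le \tilde O(\sqrt n)$, $|J| = n$, so the dominant cost is the update cost of \Cref{thm:intro-main}. The additive terms $+2/3$ and $-4/7$ appear via the same rounding argument applied to the integer thresholds used in the corresponding RW'13 analyses.

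The main obstacle is that the RW'13 hitting-set argument requires $S$ to be independent of the path (respectively ball) it must hit; a naive implementation that fixed $S$ at preprocessing time would be vulnerable to an adaptive adversary tailoring future updates to the known $S$, which would contradict the adaptive-adversary guarantee inherited from \Cref{thm:intro-main}. Step~(4) above neutralizes this: because $S$ is resampled from scratch on every update, at the moment of sampling the adversary has seen only previous rounds' outputs, which depended on previously-discarded random sets, so the current graph is independent of the new $S$ and the hitting-set bound holds with high probability in each round. Plugging $|I| = \tilde O(\sqrt n)$ and $|J| = n$ into \Cref{thm:intro-main} and optimizing $s$ (and $\mu$ in the undirected case) then yields the stated preprocessing and worst-case update times; the slight improvement flagged in the theorem's footnote comes from re-tuning these parameters specifically for source-set size $\tilde O(\sqrt n)$, rather than reusing the trade-off optimized in \Cref{thm:intro-APPS-queries}.
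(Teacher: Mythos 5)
Your high-level plan — simulate Roditty--V.~Williams by replacing each static BFS sweep with a single batch distance query, resample the hitting set $S$ at every round to remain safe against an adaptive adversary, and read the desired estimate off the returned row maxima/minima — is exactly how the paper proceeds for the RW'13 part, and your adaptivity argument is essentially theirs. However, there is a concrete gap in how you claim to reach the \emph{stated} time bounds, and it is precisely the ``slight adjustment'' the theorem's own footnote flags.

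If you feed $|I| = \tilde O(\sqrt n)$, $|J| = n$ directly into \Cref{thm:intro-main} as you propose, you pay the full \emph{update} cost of that theorem every round, including the term $n^{\omega(1,\mu+s,1-s)}/\varepsilon^2$ coming from the hitting-set precomputation $D_{V,H}\star D_{H,H}^{(\star|H|)}$ with $|H|=\tilde O(n^{1-s})$. That term is independent of the query dimensions $(\delta_1,\delta_2)$, so no amount of re-tuning $s$ and $\mu$ ``for source-set size $\tilde O(\sqrt n)$'' will remove it — your route gives exactly the footnote bounds ($\tilde O(n^{\updateWeiDirAPSPquery}/\varepsilon^{\omega+1})$ for diameter, $\tilde O(n^{\updateUnwUndAPSPquery}/\varepsilon^{\omega+1})$ for radius), not the advertised $\tilde O(n^{\updateRadius}/\varepsilon^{1+\omega})$. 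The improvement the paper actually makes is a case split, not a retuning: it runs the short-hop data structure \Cref{thm:dynamicSmallApproximateDistancesQuery} directly (not the full batch-query machinery of \Cref{thm:intro-main}), and (i) if all distances encountered by the RW'13 simulation are $\le n^s$ it just returns that answer; (ii) if some queried distance exceeds $n^s$, it concludes $\diam(G)\ge n^s$, aborts RW'13, and instead invokes \Cref{thm:diameterApproximationViaSampling}, which only needs the $D_{H,H}$ block for a random $H$ of size $\tilde O(n^{1-s}/\varepsilon)$. This replaces the troublesome $n^{\omega(1,\mu+s,1-s)}$ cost with $n^{\omega(1-s,\mu+s,1-s)}$, which is the actual source of the speedup from $1.863$ to $1.779$. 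Your proposal never builds the large-diameter fallback and never observes that for large diameter you can restrict attention to $H\times H$, so it cannot reach the stated bounds for diameter and radius. (For eccentricities, where the paper explicitly says the adjustment does not help, your proposal would give the right exponent.)
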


Prior to our algorithms, one can guarantee similar approximation ratios by running the static algorithms after each update (e.g.  the nearly $1.5$-approximation $\tilde O(m^{3/2})$-time algorithms of \cite{RodittyW13,ChechikLRSTW14} for Diameter and Radius, and the nearly $(5/3)$-approximation $\tilde O(m^{3/2})$-time algorithms of \cite{RodittyW13,ChechikLRSTW14} for Eccentricities).
(See, e.g., \cite{AingworthCIM99,RodittyW13,ChechikLRSTW14,CairoGR16,BackursRSWW18} for results in the static setting.)
Obviously, even a static linear-time algorithm cannot  break the  $\tilde O(n^2)$ bound for dense graphs. 
The only prior method to break the $\tilde O(n^2)$ bound was to run static algorithms 
that are subquadratic-time on sparse graphs on top of a dynamic sparse spanner; 
e.g., the aforementioned spanner algorithm of Bernstein~et~al.~ \cite{BernsteinFH19} 
allows us to nearly-$7.5$-approximate the diameter on directed unweighted graphs
in $\tilde O(n^{1+1/2+1/3})$ worst-case update time 
(using the static $1.5$-approximation $\tilde O(m\sqrt{n})$-time algorithm of \cite{RodittyW13} for Diameter). 
As mentioned earlier, this method is limited to large approximation ratio and undirected graphs. 

Recently, Ancona~et~al. presented {\em partially-dynamic} algorithms 
with approximation guarantees similar to us, 
e.g. a nearly-$(1.5+\epsilon)$-approximation decremental algorithm 
with $m^{1+o(1/\epsilon)}\sqrt{n}/\epsilon^2$ expected total update time 
for unweighted undirected graphs~\cite[Cor.~1.1]{AnconaHRVW18}. 
(A slower algorithm was presented in \cite{ChoudharyG18}.)
Both Ancona~et~al.'s and our algorithms 
essentially simulate the algorithms of Roditty and V, Williams \cite{RodittyW13} 
(with small adjustments to the algorithms and analyses). 
The main difference is that our algorithms rely on our new result in \Cref{thm:intro-main}, 
while Ancona~et~al.'s algorithms rely on the recent developments 
on partially-dynamic shortest paths 
(e.g. \cite{HenzingerKN18-DecrementalSSSP,HenzingerKN13}).

In addition to the above, we can also maintain the diameter {\em exactly} for directed unweighted  graphs (or with small positive integer weights bounded by $W$). 
It is the first that improves trivially running the static exact algorithms by
Shoshan and Zwick\cite{ShoshanZ99} or Cygan, Gabow  and Sankowski \cite{CyganGS15}.
These algorithms take $\tilde O(Wn^\omega)\approx \tilde O(Wn^{2.3729})$ worst-case update time, 
and ours improves this to $\tilde O(Wn^{\updateExactDiameter})$.
\ifdefined\FOCSversion
The algorithm can be found in the full version.
\else 
The algorithm can be found in \Cref{app:exactDiameter}.
\fi

Further, we give the first subquadratic dynamic algorithm 
for $(1+\varepsilon)$-approximating all-closeness-centralities 
with $O(n^{\updateUnwUndAPSPquery} / \varepsilon^{\omega+1})$ update time 
\ifdefined\FOCSversion
the full version.
\else 
in \Cref{app:closenessCentrality}.
\fi
The closeness centrality of a node $v$ is the inverse of the average distance from $v$.
The technique can be used to obtain a static $(1+\varepsilon)$-approximate algorithm 
with $\tilde{O}(mn^{2/3} / \varepsilon^2)$ time algorithm, 
which improves upon the $\tilde{O}(m \cdot \diam(G)^2 / \varepsilon^2)$ algorithm 
from \cite{EppsteinW04} for large diameter graphs.

\paragraph{Techniques} At the heart of our result is a combination of the standard hitting argument and an algorithm that exploits fast matrix multiplication to quickly answer queries about approximate {\em bounded-hop} distances between two sets of nodes. An exact counterpart of this algorithm was already known due to Sankowski \cite{Sankowski05}, but the query time of Sankowski's algorithm is too slow for our purpose. Our approximation algorithm's query time is faster, but its update time is slower. Interestingly, we run Sankowski's algorithm in parallel with our algorithm, since our algorithm needs some information from it during the updates. Our algorithm needs to make a small number of queries to Sankowski's algorithm, thus does not suffer from its slow query time. We explain this more in \Cref{sec:overview}. 

We note that our algorithms heavily rely on fast matrix multiplication algorithms. 
This is known to be necessary even for $st$-Reachability 
and $(1+\epsilon)$-approximating $st$-distance on undirected unweighted graphs \cite{AbboudW14}.\footnote{%
Triangle detection can be reduced to $(6/5-\varepsilon)$-approximate $st$-distance and $(3/2-\varepsilon)$-approximate SSSP.
}
It is an intriguing question, however, whether fast matrix multiplication is necessary for other problems (see \Cref{sec:open}).
Also note that fast matrix multiplication has been used in many previous shortest paths data-structures (e.g. \cite{Sankowski04,Sankowski05,DemetrescuI05,Williams11,GrandoniW12,WeimannY13,BrandNS18,BrandS18}).

\section{Technical Overview}\label{sec:overview}

In this section we outline how to obtain \Cref{thm:intro-main}. 
For simplicity we will only consider the case of directed, unweighted graphs.
The algorithm outlined here can easily be extended to small integer weights,
which then allows an extension to real weights via weight rounding 
 -- a technique previously used in \cite{Bernstein16,Madry10,RaghavanT85,Zwick02,Nanongkai14}.

We will outline the proof of the following Theorem. 
It is equivalent to \Cref{thm:intro-main}, 
except that it only supports unweighted graphs.

\begin{theorem}[\Cref{thm:intro-main} restricted to directed, unweighted graphs]
\label{thm:overview:main}
For any $0<s<1$, there exists a randomized (Monte Carlo) 
$(1+\epsilon)$-approximation algorithm for \Cref{prob:batch-query} 
whose preprocessing time, 
{\em worst-case} update time, 
and query time 
on directed unweighted graphs are 
	\begin{itemize}
		\item Preprocessing: 
			$\tilde{O}(n^{\omega+s})$,
		\item Update: \\
			$\tilde{O}(n^{\sankowski+s} + n^{\omega(1,1,1-s)-1+2s} / \varepsilon +n^{\omega(1,1-s,1-s)}/\varepsilon)$, 
			and
		\item Query: 
			$\tilde{O}(n^{\omega(\delta_1,1-s,\delta_2)})$, 
			for $|I| = n^{\delta_1}$ and $|J| = n^{\delta_2}$).
	\end{itemize} 
\end{theorem}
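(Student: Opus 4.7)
The plan is to reduce the $(1+\varepsilon)$-approximate batch distance problem to two dynamic sub-problems via a standard hitting-set argument: (a) an exact bounded-hop distance problem for hops at most $n^s$, handled by Sankowski's dynamic inverse-based algorithm run over a polynomial ring modulo $x^{n^s}$, and (b) a new \emph{approximate} bounded-hop problem between a small random hitting set $H$ of size $\tilde O(n^{1-s})$ and all of $V$. A Chernoff argument shows that w.h.p.\ every shortest path of more than $n^s$ hops meets $H$, so each $\dist(i,j)$ is witnessed either by an at-most-$n^s$-hop path (computed exactly by Sankowski) or, after iteration, by a tropical concatenation $i \to H \to j$ of two short-hop distance matrices.

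Plugging hop-bound $n^s$ into Sankowski's algorithm inflates each polynomial ring operation by $\tilde O(n^s)$, yielding $\tilde O(n^{\sankowski+s})$ worst-case update time and explaining the first term of the bound. For the long-hop side, I maintain, for each of the $\tilde O(1/\varepsilon)$ geometric scales $h_k \in [n^s, n]$, an $h_k$-bounded $(1+\varepsilon)$-approximate distance matrix between $H$ and $V$: within a scale, edge weights are rescaled by $n^s/(\varepsilon h_k)$ and rounded to integers of size $\tilde O(n^s/\varepsilon)$, and $(\min,+)$-products then reduce to ordinary matrix products via the standard encoding $a \mapsto x^a$ in a truncated polynomial ring. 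A query $(I,J)$ returns the minimum over all scales of the tropical $I \times H \times J$ product, which is a rectangular matrix multiplication of dimensions $n^{\delta_1} \times n^{1-s} \times n^{\delta_2}$ costing $\tilde O(n^{\omega(\delta_1,1-s,\delta_2)})$, matching the query bound after absorbing $\varepsilon$ and scale factors in $\tilde O$.

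The main obstacle is maintaining the long-hop $H$-to-$V$ approximate matrices with \emph{worst-case} update time against an \emph{adaptive} adversary. My plan is a two-level rebuild scheme. I rebuild the full $n \times n^{1-s}$ matrices every $n^{1-2s}$ updates at cost $\tilde O(n^{\omega(1,1,1-s)})$ and stagger the rebuild work across the window, delivering $\tilde O(n^{\omega(1,1,1-s)-1+2s})$ per update; between rebuilds, I propagate each edge update via an auxiliary rectangular matrix product of dimensions $n \times n^{1-s} \times n^{1-s}$, costing $\tilde O(n^{\omega(1,1-s,1-s)})$. Together these give the remaining terms of the update bound. The coupling with Sankowski arises because the staggered rebuild needs a handful of fresh small-hop distance values per update, which it fetches via point queries into Sankowski's inverse—cheap because each touches a single cell, so Sankowski's slow full-matrix query is never triggered. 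The delicate point will be proving correctness of the staggered rebuild under an adaptive adversary: I must show that the random hitting-set property survives updates that depend on past outputs (by re-sampling $H$ fresh at each rebuild), and that the partial matrices in flight during a rebuild already answer queries within the $(1+\varepsilon)$ guarantee.
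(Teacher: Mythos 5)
Your high-level decomposition — hitting sets, $\tilde O(1/\varepsilon)$ geometric distance thresholds, Sankowski as an internal subroutine, a periodic reset with staggering for worst-case update time, and re-sampling $H$ against an adaptive adversary — mirrors the paper's, and your parameter choice $n^{1-2s}$ for the reset window matches the paper's $\mu = 1-2s$. But there is a gap at the core that makes the proposal not go through as written, and it is precisely the gap the paper's central technical lemma is designed to fill.

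Your query $(I,J)$ only returns the minimum over scales of the tropical $I \times H \times J$ concatenation. That concatenation is valid only for pairs whose shortest path has more than $n^s$ hops and therefore (w.h.p.) passes through $H$. For pairs connected by fewer than $n^s$ hops, the $I \to H \to J$ detour gives a strict overestimate, and you need to cap with the true short-hop distance $D_{I,J}$. You say short-hop distances are ``computed exactly by Sankowski'', but Sankowski's data structure has no cheap batch query: its native operation is a full row or column of $M^{-1}$ at cost $\tilde O(n^{1+s+\nu})$ each, so answering $D_{I,J}$ for $|I| = n^{\delta_1}$ rows costs $\tilde O(n^{\delta_1 + 1 + s + \nu})$, which already for $\delta_1 = \delta_2 = 1$ is $n^{2+s+\nu}$, far exceeding the target $n^{\omega(1,1-s,1)}$. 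The same issue sinks your rebuild: you cannot refresh an $n^{1-s}\times n$ block of distances (over $|S|=\tilde O(1/\varepsilon)$ scales) by ``point queries into Sankowski's inverse'', because Sankowski has no cheap single-cell query either — each entry still requires an inner product over the $n$-dimensional row/column, and the total would blow past $\tilde O(n^{\omega(1,1,1-s)})$.

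What the paper actually introduces to close this gap is a new approximate inverse-slice data structure (Lemma 2.5/Lemma 4.3) that maintains only $(M^{-1})^{[k]}$ for $k$ in a sparse set $S$ of $\tilde O(1/\varepsilon)$ coefficient indices, rather than all $n^s$ coefficients. It accumulates Sherman--Morrison correction vectors $\hat u_{(t)}, \hat v_{(t)}$ (these, and only these, are fetched from Sankowski as a fresh row/column per update, which is the only place Sankowski's query interface is touched), stores them as $\hat U, \hat V$, and answers any batch $(I,J)$ query by computing $(\hat U \hat V^\top)^{[k]}_{I,J}$ via the slice-product Lemma 2.6 in $\tilde O(|S|\,n^{\omega(\delta_1,\mu+s,\delta_2)})$ time — which is where the $n^{\omega(\delta_1,1-s,\delta_2)}$ query bound truly comes from. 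The reset computes $(M_{(t)}^{-1})^{[k]}$ for the full $[n]\times[n]$ (not just $H\times V$) because future queries can touch arbitrary $(I,J)$; that is what costs $\tilde O(|S|\,n^{\omega(1,1-s,1)})$ and, amortized over $n^{1-2s}$ updates and de-amortized by running two staggered copies, gives $n^{\omega(1,1,1-s)-1+2s}/\varepsilon$. The third term $n^{\omega(1,1-s,1-s)}/\varepsilon$ is not an ``incremental propagation'' but the cost of, after every update, querying this same structure for $D_{V,H}$ and computing the Zwick-approximate $(\min,+)$-product $D_{V,H} \star D_{H,H}^{(\star|H|)}$ from scratch. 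In short: you have the right ingredients, but you are missing the lemma that makes both the batch query and the reset affordable, and without it the complexity claims for the short-hop query and the rebuild do not hold.
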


The outline of this section is as follows:
We first give the high-level idea 
of how to reduce \Cref{thm:overview:main} 
to the algebraic problem of maintaining the inverse 
of some polynomial matrix modulo $X^h$ for some $h > 0$.
This reduction is outlined in the first subsection \ref{sec:overview:basics}
and uses common techniques used in many other algorithms 
(i.e. weight rounding, hitting sets, and inverse of polynomial matrices).
Readers familiar with dynamic algebraic algorithms 
for maintaining graph distances
can skip ahead to the next subsection \ref{sec:overview:new}.
In \cref{sec:overview:new} we highlight the difference and new techniques 
that allow us to maintain approximate distances quickly,
whereas previous algebraic data-structures could only maintain exact distances.
We also give a high-level description of our new data-structure 
and explain how we are able to break 
the long-standing $\Omega(n^2)$-bottleneck 
for problems such as single-source distances.

\subsection{Basic Tools}
\label{sec:overview:basics}

In this subsection we outline how to reduce \Cref{thm:overview:main} to some algebraic problem.
For that we first reduce the problem to maintaining only $h$-hops distances for some $h > 0$, 
and then reduce the problem further to the algebraic problem of maintaining the inverse 
of some polynomial matrix modulo $X^h$.

\paragraph{Restriction to short hops}

In order to create an algorithm as stated in \Cref{thm:overview:main}, 
it is enough to construct an algorithm that only maintains $h$-hops distances.
Specifically, it is enough to prove the following Lemma. 

\begin{lemma}[Proven as \Cref{thm:dynamicSmallApproximateDistancesQuery} in \Cref{sec:smallHop}]
\label{lem:overview:short hop}
Let $G$ be an unweighted graph with $n$ nodes.
Then for any $0 \le \mu$, $0 \le s \le 1$ and $\varepsilon > 0$
there exists a dynamic algorithm that maintains
$(1+\varepsilon)$-approximate $n^s$-hops all-pairs-distances of $G$.
Each edge update requires
$\tilde{O}(
n^{\omega(1,s+\mu,1)-\mu} / \varepsilon
+ n^{\sankowski+s})$ time.
For any $I,J \subset V$, we can query the approximate distances
for the pairs $I \times J$ in
$\tilde{O}(n^{\omega(\delta_1,\mu+s,\delta_2)} / \varepsilon)$ time, 
where $\delta_1,\delta_2$ are such that $|I| = n^{\delta_1}, |J| = n^{\delta_2}$.
\end{lemma}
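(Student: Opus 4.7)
The plan is to maintain $(1+\varepsilon)$-approximate $n^s$-hop distances by combining Sankowski's exact bounded-hop algebraic data structure with a landmark/hitting-set reduction and one rectangular matrix multiplication per batched query. Sankowski's structure already maintains the exact $n^s$-hop information implicitly (via the inverse of a polynomial matrix modulo $X^{n^s}$) in $\tilde O(n^{\sankowski+s})$ time per update; the difficulty, as discussed in \Cref{sec:overview:new}, is that its per-pair query time is too slow to answer batch queries on sets $I\times J$. To bypass this, I would build an explicit sketch recording bounded-hop distances to a small set of landmarks, refresh it cheaply, and convolve it with any batch of query pairs via a single rectangular $(\min,+)$-style multiplication.

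Concretely, I would sample a random landmark set $H\subseteq V$ together with a Zwick-style geometric sequence of $\tilde O(1/\varepsilon)$ scales, sized so that the total number of landmark--scale pairs is $n^{\mu+s}$. By the standard hitting-set argument, w.h.p.\ every sufficiently long shortest path contains a landmark, so any $n^s$-hop shortest $u\to v$ path decomposes into a short entry segment $u\to h_1$, a landmark-to-landmark middle, and a short exit segment $h_2\to v$. The ``entry-to-landmark'' distances form an $n\times n^{\mu+s}$ matrix $U$ and the ``exit-from-landmark'' distances form an $n^{\mu+s}\times n$ matrix $V$. A batch query on $I\times J$ is then answered by the restricted product $U_I\otimes V_J$, taking the minimum over the landmark-scale dimension and costing $\tilde O(n^{\omega(\delta_1,\mu+s,\delta_2)}/\varepsilon)$ as required. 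For updates, I would use the standard lazy rebuild: accumulate $n^\mu$ edge changes, then refresh the affected slab of $U$ and $V$ by pulling the relevant $\tilde O(n^{\mu+s})$ entries from Sankowski's implicit structure (so its slow per-pair query time is not a bottleneck) and performing one $n\times n^{\mu+s}\times n$ multiplication of cost $n^{\omega(1,s+\mu,1)}$; amortized over $n^\mu$ updates this yields $\tilde O(n^{\omega(1,s+\mu,1)-\mu}/\varepsilon)$ per update, de-amortized cyclically via two staggered copies.

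The main obstacle will be the joint accounting of update cost and correctness against a non-oblivious adversary. The landmark set and the Zwick scales must be fixed independently of query outcomes; the sketch entries must reflect the current graph before the next query, which forces a careful scheduling of the lazy rebuild across two staggered copies so that the amortized $n^{-\mu}$ savings become worst-case. Correctness then reduces to a Zwick-style scaling calculation: each true shortest path of length $\ell\le n^s$ lies in one of the geometric buckets, and because its entry and exit segments have length at most $\varepsilon\cdot\ell$, the minimum over the $n^{\mu+s}$ landmark-scale pairs matches the exact middle up to a $(1+\varepsilon)$ multiplicative factor.
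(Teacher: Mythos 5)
Your proposal takes a genuinely different route from the paper, but it has a structural gap that makes it not work as a proof of this lemma.

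The paper's proof of \Cref{lem:overview:short hop} (i.e.\ \Cref{thm:dynamicSmallApproximateDistancesQuery}) does \emph{not} use landmarks or hitting sets at all at this level. It is a purely algebraic construction. The data structure (\Cref{lem:dynamicApproximateSlice}) maintains the coefficient slices $(M^{-1})^{[d]}$ for $d\in S=\{\lfloor(1+\varepsilon)^k\rfloor\}$ of the polynomial matrix inverse $M^{-1}=(\I-XA(G))^{-1}$ via incremental Sherman--Morrison: after $t$ element updates, $M_{(t)}^{-1}=M^{-1}-\hat{U}\hat{V}^\top$ where $\hat U,\hat V$ are $n\times t$ polynomial matrices of degree $<n^s$. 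The middle dimension $n^{\mu+s}$ in the query cost $\tilde O(n^{\omega(\delta_1,\mu+s,\delta_2)})$ therefore comes from \emph{stacking the $n^s$ polynomial coefficients of each of the $n^\mu$ accumulated rank-one corrections} (\Cref{lem:computeSlice}), not from a set of landmark--scale pairs. Sankowski's exact structure (\Cref{lem:dynamicMatrixInverse}) runs in parallel only to supply one row and one column of $M_{(t-1)}^{-1}$ per update, needed to form $\hat u_{(t)},\hat v_{(t)}$; it is not queried for $n^{\mu+s}$ entries per rebuild. Hitting sets and Zwick-style scaling appear in the paper only one level up (\Cref{sec:apsp} and \Cref{sub:realWeights}), \emph{after} this bounded-hop primitive already exists.

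The concrete gap in your approach is that the hitting-set decomposition you invoke cannot handle short paths, yet this lemma is precisely about paths of \emph{at most} $n^s$ hops, including distances $1,2,3,\dots$. A uniformly random landmark set $H$ of any size $\ll n$ misses, with high probability, every path of fewer than roughly $\tilde\Theta(n/|H|)$ hops, so the decomposition ``$u\to h_1\to\cdots\to h_2\to v$'' simply does not exist for the pairs at distance $1$ to $\tilde\Theta(n/|H|)$ --- exactly the pairs the lemma must report. Additionally, your account of the ``landmark-to-landmark middle'' and the ``entry/exit segments of length at most $\varepsilon\ell$'' presupposes a subroutine for bounded-hop landmark distances (which is the very object the lemma is constructing, making the argument circular), and a hitting-set size $\tilde O(n/(\varepsilon\ell))$ that depends on the unknown path length $\ell$. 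The paper avoids all of this by never decomposing paths at this stage: it instead probes, for each geometric threshold $d\in S$, whether $(M^{-1})^{[d]}_{u,v}\neq 0$, which directly certifies ``$\dist(u,v)\le d$'' for any $d$, short or long, with no hitting-set assumption.
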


We now outline why \Cref{lem:overview:short hop} 
is enough to obtain \Cref{thm:overview:main}. 
The formal proof is given in \Cref{sec:apsp}.
A common technique for graph algorithms is a so-called ``hitting-sets".
More accurately, for some $h \in \mathbb{N}$, a uniformly at random chosen set 
$H \subset V$ of size $\tilde{O}(n/h)$ has w.h.p. the property 
that every shortest path with at least $h$ hops
can be decomposed into segments $s \to h_1 \to ... \to h_k \to t$,
where each $h_i \in H$ and each segment uses at most $h$ hops.
This technique goes back to Ullman and Yannakakis \cite{UllmanY91}.

Let $\hat{D}$ be an $h$-hop distance matrix, 
and denote for any $I,J \subset V$ with $\hat{D}_{I,J}$ 
the submatrix corresponding to the pairs $(u,v) \in I \times J$.
Then the length of shortest paths, using more than $h$ hops, 
can be computed via the $(\min, +)$-product
$\hat{D}_{V,H} \hat{D}_{H,H}^{|H|} \hat{D}_{H,V}$.
Using techniques from \cite{Zwick02}, 
we can compute a $(1+\varepsilon)$-approximation of
$\hat{D}_{V,H} \hat{D}_{H,H}^{|H|}$ for $h = n^s$ in
$\tilde{O}(n^{\omega(1,1-s,1-s)} / \varepsilon)$ time
during each update (this is the third term in the update time in \Cref{thm:overview:main}).

Whenever we must answer a query for the pairs $I \times J$, 
we then compute
$(\hat{D}_{V,H} \hat{D}_{H,H}^{|H|})_{I,H} \hat{D}_{H,J}$
in $\tilde{O}(n^{\omega(\delta_1,1-s,\delta_2)})$ time
and return the element-wise minimum with $\hat{D}_{I,J}$.
Thus for $\mu := 1-2s$ \Cref{lem:overview:short hop} 
implies \Cref{thm:overview:main}.

\paragraph{From Graphs to Polynomial Matrices}

The task of maintaining $h$-hop distances can be reduced to 
maintaining the {\em inverse} of a polynomial matrix 
$M \in (\F[X]/\langle X^{h+1} \rangle)^{n \times n}$ 
(i.e. a matrix whose entries are polynomials modulo $X^{h+1}$).
In general the inverse of $M$ might not exist, 
because $\F[X]/\langle X^{h+1} \rangle$ is a ring, not a field.
However, for the special case that $M$ is of the form $M = \I - A \cdot X$,
where $\I$ is the identity matrix,
the inverse is given by $M^{-1} = \sum_{i=0}^h A^i \cdot X^i$.\footnote{
This is because $X^{h+1} = 0$ in $\F[X]/\langle X^{h+1}\rangle$, so
$(\sum_{i=0}^h A^i \cdot X^i) \cdot (\I - A \cdot X) 
= \sum_{i=0}^h A^i \cdot X^i - \sum_{i=1}^{h+1} A^i \cdot X^i
= \I$.
}
A similar technique was previously used by Sankowski in \cite{Sankowski05}, 
where Sankowski used the adjoint of a polynomial matrix, instead of the inverse.
For the reduction of \Cref{lem:reduction:distanceToInverse} below, 
we need the following notation:
For a polynomial matrix $M \in (\F[X]/ \langle X^h \rangle)^{n \times n}$
we define $M^{[k]}$ to be the coefficients of $X^k$, 
so $M^{[k]} \in \F^{n \times n}$ for any $k < h$ and $M = \sum_{k=0}^{h-1} M^{[k]} X^k$.

\ifdefined\FOCSversion
\begin{lemma}[Proven in the full version]
\else
\begin{lemma}[Proven in \Cref{app:reduction:distanceToInverse}]
\fi
\label{lem:reduction:distanceToInverse}
\label{lem:distanceInverseReduction}

Let $G$ be a directed graph 
with positive integer edge weights $(c_{u,v})_{(u,v)\in E}$
and let $h$ be some positive integer.
We define $A(G) \in (\F[X]/ \langle X^h \rangle)^{n \times n}$,
such that $A_{u,v} = a_{u,v} X^{c_{u,v}}$ for each edge $(u,v)$,
$A_{v,v} = a_{v,v} X$ for every $v \in V$, and $A_{u,v} = 0$ otherwise.
Here each $a_{u,v}$ is an independent 
and uniformly at random chosen element from $\F$.
\footnote{
	Note that for $c_{u,v} \ge d$ we have $A_{u,v} = 0$,
	as if the edge would not exist.
}
	
Then, the matrix $M = \I - A(G)$ is invertible and with probability
at least $1 - hn^2/|\F|$ the following property holds:
For every $u,v \in V$ and $0 \le d < h$
the entry $(M^{-1})^{[d]}_{u,v} \neq 0$ 
if and only if 
$\dist(u,v) \le d$.

\end{lemma}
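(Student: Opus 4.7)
The plan is to expand $M^{-1}$ as a Neumann series, interpret the entries as a generating function over walks in $G$, and then invoke Schwartz--Zippel to conclude the polynomial is nonzero at the random instantiation with high probability.

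First, observe that every entry of $A = A(G)$ is divisible by $X$: edges contribute $a_{u,v}X^{c_{u,v}}$ with $c_{u,v}\ge 1$, and self-loops contribute $a_{v,v}X$. Hence $A^h \equiv 0 \pmod{X^h}$, so $M = \I - A$ is invertible in $(\F[X]/\langle X^h\rangle)^{n\times n}$ with $M^{-1} = \sum_{k=0}^{h-1} A^k$. Expanding matrix powers and extracting the $X^d$ coefficient gives
\[
(M^{-1})^{[d]}_{u,v} \;=\; \sum_{\substack{w\,:\,u\rightsquigarrow v \\ c(w)=d}} \;\prod_{e\in w} a_e,
\]
where the sum ranges over walks in the auxiliary digraph (original edges plus a unit-weight self-loop at every vertex), $c(w)$ is the total weight of $w$, and the product is over its edges counted with multiplicity. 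The ``only if'' direction is immediate: if $\dist(u,v)>d$, no $u$-$v$ walk of weight $d$ exists, so $(M^{-1})^{[d]}_{u,v}=0$ deterministically.

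For the ``if'' direction, fix a shortest $u$-$v$ path $p = (u = v_0, v_1, \dots, v_k = v)$ of weight $\ell \le d$; positivity of the weights forces $p$ to be simple. Consider the walk $w^\star$ that takes the self-loop at $u$ exactly $d-\ell$ times and then traverses $p$; it contributes the monomial $M^\star := a_{u,u}^{d-\ell}\prod_{i=1}^k a_{v_{i-1}, v_i}$. The main obstacle is to rule out cancellation, i.e.\ to show $M^\star$ has coefficient exactly $1$ in the polynomial $(M^{-1})^{[d]}_{u,v} \in \F[\{a_e\}]$. For this I would argue that $w^\star$ is the \emph{unique} walk realizing the prescribed edge multiset: any such walk must use the self-loop at $u$ exactly $d-\ell$ times, but simplicity of $p$ ensures $u \notin \{v_1, \dots, v_k\}$, so once a walk leaves $u$ via $(u, v_1)$ it can never return using only the remaining edges of the multiset. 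Hence all self-loops at $u$ precede $p$, and since a simple directed path is determined by its edge set, the ordering of the remaining edges is forced, giving exactly $w^\star$.

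Therefore $(M^{-1})^{[d]}_{u,v}$ is a \emph{nonzero} polynomial in the $\{a_e\}$ of total degree at most $d \le h-1$ (a walk of weight $d$ uses at most $d$ edges because each weight is $\ge 1$). By Schwartz--Zippel, substituting uniformly random $a_e \in \F$ yields a nonzero value with probability at least $1 - h/|\F|$ per triple $(u,v,d)$, and a union bound over the relevant triples gives the claimed bound. Invertibility of $M$ at the random instantiation is automatic, since $\det(M) = \det(\I - A)$ has constant term $1$ and is therefore a unit of $\F[X]/\langle X^h\rangle$ regardless of the $a_e$'s.
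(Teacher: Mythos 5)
Your argument is correct and follows the same overall strategy as the paper's proof: expand $M^{-1}=\sum_{k<h}A^k$, interpret the coefficient $(M^{-1})^{[d]}_{u,v}$ as a generating function over $u$--$v$ walks of weight $d$, and apply Schwartz--Zippel plus a union bound. Where you add value is in the non-cancellation step: the paper simply asserts that the coefficient polynomial is the zero polynomial ``if and only if no such walk exists,'' whereas you actually construct the canonical walk $w^\star$ (the $d-\ell$ self-loops at $u$ followed by a simple shortest path $p$) and verify that its edge multiset has a \emph{unique} realization as a walk, so the corresponding monomial appears with coefficient exactly $1$; this is precisely what rules out cancellation in small characteristic and is the real content of the ``if'' direction. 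Your explicit note that $\det(M)=\det(\I-A)$ has constant term $1$, hence $M$ is invertible unconditionally, likewise makes explicit something the paper leaves implicit. (Both you and the paper state the failure probability as $hn^2/|\F|$ after a union bound over what are really $\Theta(hn^2)$ triples $(u,v,d)$, each bounded by $h/|\F|$; strictly this gives $O(h^2n^2/|\F|)$, but this constant-factor-in-the-exponent slip is inherited from the paper and is immaterial since $|\F|$ can be chosen polynomially larger.)
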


Assume we have a data-structure that allows us to maintain
$(M^{-1})^{[k]}$ for $k \in S = \{ 
	\lfloor(1+\varepsilon)^i\rfloor 
\mid 
	0 \le i \le \lceil \log_{(1+\varepsilon)}n^s \rceil
\}$.
Then the smallest $k \in S$ with $(M^{-1}_{s,t})^{[k]} \neq 0$ 
is a $(1+\varepsilon)$-approximation of $\dist(s,t)$ 
according to \Cref{lem:distanceInverseReduction}.
Thus \Cref{lem:overview:short hop} can be obtained by
creating a data-structure that maintains
$(M^{-1})^{[k]}$ for $k \in S$.

For the field $\F$ we will use $\Z_p$ for a prime of bit-length $\Theta(\log(hn))$, 
then the result is correct with high probability. 
The variable $h$ will typically be polynomial in $n$, 
so each arithmetic operation in $\Z_p$ can be performed in $\tilde{O}(1)$ time.
In summary, we can obtain an algorithm as stated in \Cref{lem:overview:short hop} by proving the following lemma:
\begin{lemma}\label{lem:overview:maintainInverse}
Let $0 \le \mu$, $0\le s$, $h := n^{s+1}$ and
$M = (\I - A \cdot X) \in (\F[X]/\langle X^{h} \rangle)^{n \times n}$
be a polynomial matrix modulo $X^{h}$.
Let $S \subset \{1,...,h\}$ be any set.

Then there exists a dynamic algorithm that supports element updates to $A$,
that requires $O(n^{s+\omega})$ field operations for the pre-processing and
$\tilde{O}(
|S| n^{\omega(1,s+\mu,1)-\mu}
+ n^{\sankowski+s})$ operations per update.
The algorithm supports queries to $(M^{-1})_{I,J}$
for any $I,J \subset [n]$, where it returns $(M^{-1})_{I,J}^{[d]}$ for all $d \in S$ in
$\tilde{O}(|S| n^{\omega(\delta_1,\mu+s,\delta_2)})$ operations.
Here $\delta_1,\delta_2$ are such that
$|I| = n^{\delta_1}, |J| = n^{\delta_2}$.
\end{lemma}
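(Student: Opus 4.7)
My plan is to run two data structures in parallel: Sankowski's dynamic polynomial matrix inverse on $M = \I - AX$, which supports element updates and entry queries to $M^{-1}$ in $\tilde{O}(n^{\sankowski+s})$ time, and a Woodbury-based lazy structure dedicated to the submatrix queries we actually need. Fix a batch size $R := n^\mu$ and partition the update sequence into batches of length $R$. At the start of each batch I take a snapshot $A_{\text{check}}$ and let $P := M_{\text{check}}^{-1} = \sum_{i<h} A_{\text{check}}^i X^i$; within the batch, I represent the drift as $A_{\text{cur}} = A_{\text{check}} + UV^{\top}$, where $U,V \in \F^{n\times R}$ are built one column per update while $P$ stays frozen.

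For a query on $(I,J)$ I apply the Sherman--Morrison--Woodbury identity in the ring $\F[X]/\langle X^h\rangle$,
\[
M_{\text{cur}}^{-1} \;=\; P \;+\; (PXU)\bigl(\I - V^{\top} PXU\bigr)^{-1}(V^{\top} P) \pmod{X^h},
\]
and extract the $[X^d]$ coefficient on the $I\times J$ block for each $d \in S$. The three factors $PXU$, the $R\times R$ inner inverse, and $V^{\top} P$ are polynomial matrices whose coefficients can contribute to a target $d$ only when their degree indices partition $d$. A degree-by-degree accounting then shows that, for each $d$, this block is produced by one rectangular chain-product of shapes $n^{\delta_1}\times n^{s+\mu}$, $n^{s+\mu}\times n^{s+\mu}$, and $n^{s+\mu}\times n^{\delta_2}$, for a total of $\tilde{O}(|S|\, n^{\omega(\delta_1,\mu+s,\delta_2)})$.

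At the end of each batch I fold $UV^{\top}$ into $P$: for each $d \in S$ the correction $(A_{\text{check}} + UV^{\top})^d - A_{\text{check}}^d$ expands into a telescoping sum of factored products $(A^{i_0}U)(V^{\top}A^{i_1}U)\cdots(V^{\top}A^{i_k})$, which I pack into a single rectangular multiplication of an $n\times n^{s+\mu}$ matrix by an $n^{s+\mu}\times n$ matrix at cost $\tilde{O}(n^{\omega(1,s+\mu,1)})$ per $d$. Spreading this rebuild uniformly over the $R=n^\mu$ updates of the next batch via a standard two-shadow deamortization scheme gives a worst-case per-update cost of $\tilde{O}(|S|\, n^{\omega(1,s+\mu,1)-\mu})$; the schedule requires a handful of peek-ahead columns of the current $M^{-1}$, which I pull from the parallel Sankowski structure in $\tilde{O}(n^{\sankowski+s})$ per call, producing the second summand of the update bound.

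The hard part will be the middle-dimension bookkeeping: I must show that in both the query expansion and the rebuild expansion the relevant contracted dimension is $n^{s+\mu}$ rather than the naive $R\cdot h = n^{\mu+s+1}$ one gets from unrolling all polynomial coefficients. The collapse rests on two facts: every factor $A X$ in the Neumann series carries at least one unit of $X$-degree, and the target coefficients $d\in S$ are geometrically spaced with $|S| = \tilde{O}(1)$, so for each target slot only an $n^s$-length window of degree indices in each of $U$- and $V$-stacks can contribute. Getting this accounting tight, and in particular checking that it survives the two-shadow rebuild across batch boundaries without leaking extra polylog factors through the Sankowski peek-aheads, is the technical heart of the proof; once it is in place, the stated costs fall out of routine rectangular matrix multiplication arithmetic.
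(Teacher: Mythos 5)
Your high-level framework --- run Sankowski's data structure (\Cref{lem:dynamicMatrixInverse}) in parallel, keep a checkpoint, accumulate rank-one updates in matrices $U,V$, extract only the $[X^d]$ slices for $d\in S$ via the stacked rectangular product of \Cref{lem:computeSlice}, and deamortize the reset by a two-copy schedule --- is exactly the paper's. But the specific algebraic identity you chose creates a real gap that the paper's version is designed to avoid.

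You invoke the \emph{batch} Sherman--Morrison--Woodbury formula
\[
M_{\mathrm{cur}}^{-1} \;=\; P + (PXU)\bigl(\I - V^{\top}PXU\bigr)^{-1}(V^{\top}P),
\]
whereas the paper uses the \emph{incremental} version (\Cref{lem:incrementalShermanMorrison}): $M_{(t)}^{-1} = M^{-1} - \sum_{i\le t}\hat u_{(i)}\hat v_{(i)}^{\top}$, where $\hat u_{(t)} = M_{(t-1)}^{-1}u_{(t)}$ and $\hat v_{(t)}^{\top} = (1+v_{(t)}^{\top}\hat u_{(t)})^{-1}v_{(t)}^{\top}M_{(t-1)}^{-1}$. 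The two are algebraically equivalent, but algorithmically very different, and the difference is where your proposal breaks. First, your formula requires the columns of the \emph{checkpoint} inverse $P=M_{\mathrm{check}}^{-1}$ in order to form $PXU$ and $V^{\top}P$, yet the parallel Sankowski structure hands you rows/columns of the \emph{current} inverse $M_{(k-1)}^{-1}$; these differ once the batch has more than one update. You would need a frozen snapshot of Sankowski's structure per checkpoint (expensive) or to correct the returned columns back to checkpoint time (which amounts to re-deriving the incremental identity). Second, your formula contains the $n^{\mu}\times n^{\mu}$ polynomial matrix inverse $(\I - V^{\top}PXU)^{-1}$ modulo $X^{n^{s}}$ (or higher). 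You never say where this inverse comes from: computing it at query time costs about $\tilde O(n^{\mu\omega+s})$, which is not covered by the claimed $\tilde O(|S|\,n^{\omega(\delta_1,\mu+s,\delta_2)})$ query bound (e.g.\ for a single-pair query $\delta_1=\delta_2=0$ this bound is only $\tilde O(|S|\,n^{\mu+s})$), and maintaining it incrementally essentially reproduces the incremental Sherman--Morrison recursion you are trying to avoid. The paper's form has no inner matrix inverse at all --- only $n^{\mu}$ separate \emph{scalar} polynomial inverses $(1+v_{(t)}^{\top}\hat u_{(t)})^{-1}$, each invertible because the $X$-factor kills the constant term, and each costing just $\tilde O(n^{s})$. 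That is the step you should import. You also flag the degree-windowing as the hard part; it is not, since \Cref{lem:computeSlice} handles it directly once you have the low-rank form $\hat U\hat V^{\top}$. The genuine obstacle is obtaining that low-rank form without the inner matrix inverse, and that is precisely what \Cref{lem:incrementalShermanMorrison} delivers.

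One further remark on your reset: the telescoping expansion of $(A_{\mathrm{check}}+UV^{\top})^{d}-A_{\mathrm{check}}^{d}$ has exponentially many terms, and ``packing them into a single rectangular multiplication'' is not a routine step; the clean route, used in the paper, is to note that the reset is just a query with $I=J=[n]$ applied to the low-rank correction $\hat U\hat V^{\top}$, giving $\tilde O(|S|\,n^{\omega(1,s+\mu,1)})$ per reset and hence $\tilde O(|S|\,n^{\omega(1,s+\mu,1)-\mu})$ per update after spreading over the batch.
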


\subsection{Proof Sketch of \Cref{lem:overview:maintainInverse}}
\label{sec:overview:new}

As outlined near the end of the previous subsection,
we can obtain \Cref{lem:overview:short hop} by
proving \Cref{lem:overview:maintainInverse} via some chain of reductions.

\begin{figure*}
\centering
\begin{tabular}{llll}
Reference & Element update & Batch query & Remark\\
\hline
\cite{Sankowski05} & $\tilde{O}(n^{\omega(1,\mu,1)+s-\mu} + n^{\sankowski+s})$ & $\tilde{O}(n^{\omega(\delta_1,\mu,\delta_2)+s})$ & Maintains $(M^{-1})^{[k]}$ for $k \in [n^s]$.\\
\Cref{lem:overview:maintainInverse} & $\tilde{O}(|S| n^{\omega(1,s+\mu,1)-\mu} + n^{\sankowski+s})$ & $\tilde{O}(|S| n^{\omega(\delta_1,\mu+s,\delta_2)})$ & Maintains $(M^{-1})^{[k]}$ for $k \in S \subset [n^s]$. \\
\end{tabular}
\caption{\label{tbl:comparison}
Comparison of data-structures for maintaining $(M^{-1})^{[k]}$ for $k \in S \subset [n^s]$. 
The data-structure of \cite{Sankowski05} supports only the special case $S = [n^s]$.}
\end{figure*}

Before we will prove this result, 
we first want to compare it to other dynamic algebraic algorithms
by explaining how our algorithm manages to break 
a long-standing bottleneck of dynamic algebraic distance algorithms.
The main differences of \Cref{lem:overview:maintainInverse} 
compared to previous dynamic algebraic algorithms for distances (e.g. \cite{Sankowski05,BrandNS18})
is that our algorithm maintains $(M^{-1})^{[k]}$ for $k \in S$ for any set $S \subset [n^s]$,
whereas previous algebraic algorithms for distances
maintain $(M^{-1})^{[k]}$ for \emph{all} $k=1,2,...,n^s$, 
i.e. they were restricted to the special case $S = [n^s]$.

This difference allows us to circumvent 
a long-standing $\Omega(n^2)$ bottleneck for algebraic algorithms 
that maintain single-source distances.
Remember from the previous subsection, 
that these algebraic data-structures are used 
to maintain the $n^s$-hop distances 
of some $\tilde{O}(n^{1-s})$-sized hitting set $H$ to all other nodes $V$. 
Following the reduction of \Cref{lem:reduction:distanceToInverse},
this means maintaining the submatrix $M^{-1}_{H,V}$,
which unfortunately means that just the output-size is already $\Omega(n^2)$: 
We have a $|H| \times |V| = \Omega(n^{1-s})\times n$ sized submatrix, 
where each entry is a polynomial with $n^s$ coefficients, 
i.e. a total of $\Omega(n^2)$ field elements.
However, our algorithm does not maintain all coefficients of $M^{-1}_{H,V}$;
we just maintain the coefficients of monomials of degree $k \in S$.
So for $|S| \ll n^s$, 
the output size is smaller than $\Omega(n^s)$.
For example, when maintaining $(1+\varepsilon)$-approximate distances 
we have $|S| = O(\varepsilon^{-1} \log n)$ 
as outlined at the end of the previous subsection \ref{sec:overview:basics}.
Previous dynamic algebraic algorithms could not break this $\Omega(n^2)$ bottleneck 
as they were restricted to the special case $S = [n^s]$.

When comparing \Cref{lem:overview:maintainInverse} with \cite{Sankowski05} specifically (see \Cref{tbl:comparison}),
then our algorithm manages to replace the $n^s$ factor by having it as an argument to $\omega(\cdot, \cdot, \cdot)$,
i.e. the term $n^{\omega(1,\mu,1)+s-\mu}$ in the update time of \cite{Sankowski05} changes to $|S|n^{\omega(1,s+\mu,1)-\mu})$.
This allows us to greatly exploit fast matrix multiplication. 
Consider for example the case $\omega = 2$,
then $|S|n^{\omega(1,s+\mu,1)-\mu} = |S|n^{2-\mu}$, 
but $n^{\omega(1,\mu,1)+s-\mu} = n^{2-\mu+s}$.
So when the set $S \subset [n^s]$ has $|S| \ll n^s$, then our algorithm is a lot faster.

\paragraph{Proof idea of \Cref{lem:overview:maintainInverse}}

In order to complete our proof sketch of \Cref{thm:overview:main}, 
we are left with proving \Cref{lem:overview:maintainInverse}.
The algorithm is based on the following modified variant of the Sherman-Morrison-Woodbury identity \cite{ShermanM50,Woodbury50}:
\begin{lemma}[Paraphrased, Formal statement in \Cref{lem:incrementalShermanMorrison}]
Let $M$ be an invertible $n \times n$ matrix 
and let $M_{(t)}$ be the matrix $M$ after changing any $t$ entries, 
such that $M_{(t)}$ is invertible.

Then there exists $n$-dimensional vectors 
$u_{(1)},...,u_{(t)}$, $v_{(1)},...,v_{(t)}$ 
that are given by rescaled rows and columns of $M^{-1}$, 
such that
$$
M_{(t)}^{-1} = M^{-1} - \sum_{i=1}^t u_{(i)} v_{(i)}^\top.
$$
\end{lemma}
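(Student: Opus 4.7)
The plan is to reduce a multi-entry modification into a sequence of $t$ rank-one modifications and apply the classical Sherman-Morrison identity iteratively. Since changing $t$ entries of $M$ amounts to performing $t$ single-entry changes, I would write $M_{(k)} = M_{(k-1)} + \delta_k e_{i_k} e_{j_k}^\top$ for $k = 1, \dots, t$ and apply Sherman-Morrison at each step, obtaining
$$M_{(k)}^{-1} = M_{(k-1)}^{-1} - \frac{M_{(k-1)}^{-1} e_{i_k} \, e_{j_k}^\top M_{(k-1)}^{-1}}{\delta_k^{-1} + e_{j_k}^\top M_{(k-1)}^{-1} e_{i_k}}.$$
Let $u_{(k)}$ be the $i_k$-th column of $M_{(k-1)}^{-1}$ rescaled by the reciprocal of the (scalar) denominator and $v_{(k)}$ be the $j_k$-th row of $M_{(k-1)}^{-1}$ transposed. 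Then this step becomes $M_{(k)}^{-1} = M_{(k-1)}^{-1} - u_{(k)} v_{(k)}^\top$, and telescoping over $k = 1, \dots, t$ yields the desired decomposition $M_{(t)}^{-1} = M^{-1} - \sum_{k=1}^t u_{(k)} v_{(k)}^\top$.

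To match the phrasing that $u_{(k)}, v_{(k)}$ are obtained from rescaled rows and columns of $M^{-1}$ itself, I would then unfold the inductive hypothesis $M_{(k-1)}^{-1} = M^{-1} - \sum_{j<k} u_{(j)} v_{(j)}^\top$. This shows that the $i_k$-th column of $M_{(k-1)}^{-1}$ equals the $i_k$-th column of $M^{-1}$ minus the linear combination $\sum_{j<k} (v_{(j)})_{i_k}\, u_{(j)}$ of previously generated vectors, and symmetrically for the $j_k$-th row. Hence each $u_{(k)}$ (resp.\ $v_{(k)}$) is a scalar multiple of a column (resp.\ row) of $M^{-1}$ corrected by a known linear combination of earlier update vectors -- the form that the dynamic algorithm needs, since it only requires random access to entries of $M^{-1}$ together with the maintained rank-$t$ correction.

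The main obstacle I expect is invertibility of the intermediate matrices $M_{(1)}, \dots, M_{(t-1)}$, since the hypothesis only guarantees invertibility of $M_{(t)}$; without this, the denominator $\delta_k^{-1} + e_{j_k}^\top M_{(k-1)}^{-1} e_{i_k}$ could vanish and the step would be ill-defined. I would address this by (i) appealing to the randomness in \Cref{lem:distanceInverseReduction}, where the entries of $A$ are independent uniform field elements, so that the finitely many polynomial conditions defining singular intermediates fail with probability $O(\poly(n)/|\F|)$ by Schwartz--Zippel, or (ii) deriving the identity directly from the Woodbury formula $(M + U D V^\top)^{-1} = M^{-1} - M^{-1} U (D^{-1} + V^\top M^{-1} U)^{-1} V^\top M^{-1}$ with $U,V$ built from $t$ standard basis vectors, then $LU$-factoring the inner $t \times t$ matrix to split the correction into a telescoping sum of $t$ rank-one terms whose factors simplify, after algebra, to rescaled rows and columns of $M^{-1}$ adjusted by the earlier update vectors. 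Either route avoids requiring any intermediate $M_{(k)}$ to be invertible in isolation.
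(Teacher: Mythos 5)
Your core argument --- iterated Sherman--Morrison plus telescoping, with your $u_{(k)},v_{(k)}$ being exactly the paper's $\hat u_{(t)},\hat v_{(t)}$ from \Cref{lem:incrementalShermanMorrison} --- is the paper's proof, and your unfolding observation that $u_{(k)}$ equals a rescaled column of $M^{-1}$ corrected by a known linear combination of earlier vectors is the right way to read the deliberately loose paraphrase. One remark on the invertibility concern you flag: the paper handles it structurally rather than by either of your two suggestions. Since element updates go to $A$ in $M = \I - X\cdot A$, each $v_{(k)}$ carries a factor of $X$, so $1 + v_{(k)}^\top \hat u_{(k)}$ has constant term $1$ and is therefore always a unit of $\F[X]/\langle X^h\rangle$; this guarantee is deterministic and does not spend any randomness. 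Also note that your alternative (ii) via Woodbury plus $LU$-factoring of $D^{-1} + V^\top M^{-1} U$ does not in fact sidestep the condition: by the matrix-determinant lemma, the $k$-th leading principal minor of that inner matrix is non-zero precisely when $M_{(k)}$ is invertible, so an $LU$ factorization without pivoting exists if and only if every intermediate $M_{(k)}$ is invertible, which is the very hypothesis you were trying to avoid.
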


Assume we know $M^{-1}$ because we computed it during the pre-processing,
then one can create a dynamic algorithm by simply adding 
one new pair $u_{(t+1)},v_{(t+1)}$ for every update.
This new pair can be computed quickly, 
as they are just a row/column of the already computed $M^{-1}$.
For answering queries to $M_{(t)}^{-1}$, 
one must simply compute some entries 
of the sum $\sum_{i=1}^t u_{(i)} v_{(i)}^\top$ 
and subtract them from the corresponding entries of $M^{-1}$.
From time to time, when the sum grows too large and the queries too slow,
the data-structure will reset by computing $M_{(t)}^{-1}$ explicitly, 
i.e. we set $M \leftarrow M_{(t)}$.
(This is where the trade-off parameter $n^\mu$ for update and query time comes from.)

Performing the queries and reset operation naively 
results in the data-structure of \cite{Sankowski05} 
as presented in \Cref{tbl:comparison}.
We now outline how we are able to speed-up both of these operations.

\paragraph{Query operation}
Consider the sum $\sum_{i=1}^t u_{(i)} v_{(i)}^\top$,
then we can write it as a matrix product $UV^\top$ of $(n \times t)$-matrices $U,V$, 
where the $i$th columns are $u_{(i)}$ and $v_{(i)}$ respectively.
For answering the query of some submatrix $(M_{(t)}^{-1})_{I,J} = M^{-1}_{I,J} - (U V^\top)_{I,J}$,
we must multiply the rows $I$ of $U$ with the columns $J$ of $V^\top$.
Note however, that we only need to compute the coefficients of degree $k$ for $k \in S$,
i.e. $(U V^\top)_{I,J}^{[k]}$ instead of $(U V^\top)_{I,J}$.
This allows for some speed-up via the following lemma:

\begin{lemma}\label{lem:computeSlice}
Let $0 \le a,b,c,d$ and let
$U \in \F[X]^{n^a \times n^b}, V \in \F[X]^{n^c \times n^b}$ be polynomial matrices of degree at most $n^d$.
Then we can compute for any $k$ the $k$th coefficient $(UV^\top)^{[k]}$ in
$\tilde{O}(n^{\omega(a,b+d,c)})$ field operations.
\end{lemma}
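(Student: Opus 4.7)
The plan is to realize the single slice $(UV^\top)^{[k]}$ as one honest matrix product, rather than as a sum of many polynomial products, and then invoke the rectangular-multiplication bound. The starting observation is that, by the convolution formula for polynomial multiplication,
\[
(UV^\top)^{[k]}_{i,j} \;=\; \sum_{\ell=1}^{n^b}\;\sum_{\substack{p+q=k\\ 0\le p,q\le n^d}} U^{[p]}_{i,\ell}\, V^{[q]}_{j,\ell}.
\]
Let $K\subseteq\{0,\dots,n^d\}$ be the set of indices $p$ such that both $p$ and $k-p$ lie in $[0,n^d]$; then $|K|\le n^d+1$.

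Next I would package the $U^{[p]}$'s and $V^{[q]}$'s into two flat matrices. Define $A\in\F^{n^a\times (n^b|K|)}$ with rows indexed by $i$ and columns indexed by pairs $(\ell,p)\in[n^b]\times K$, and set $A_{i,(\ell,p)}:=U^{[p]}_{i,\ell}$. Similarly define $B\in\F^{(n^b|K|)\times n^c}$ by $B_{(\ell,p),j}:=V^{[k-p]}_{j,\ell}$. Both matrices are obtained for free by re-indexing the already-available coefficients of $U$ and $V$; no arithmetic is required to assemble them. A direct computation gives
\[
(AB)_{i,j}\;=\;\sum_{(\ell,p)} A_{i,(\ell,p)}\,B_{(\ell,p),j}\;=\;\sum_{\ell,\,p\in K} U^{[p]}_{i,\ell}\,V^{[k-p]}_{j,\ell}\;=\;(UV^\top)^{[k]}_{i,j},
\]
so the desired coefficient matrix is exactly the single matrix product $AB$.

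Finally, since $A$ has dimensions $n^a\times n^b|K|$ and $B$ has dimensions $n^b|K|\times n^c$, and $n^b|K|\le n^b(n^d+1)=O(n^{b+d})$, the product $AB$ can be computed using $O(n^{\omega(a,b+d,c)})$ field operations, which is within the claimed $\tilde O(n^{\omega(a,b+d,c)})$ bound (the $\tilde O$ easily absorbs the $O(1)$ blow-up from rounding $n^b|K|$ up to the nearest power-of-$n^{b+d}$ and from the logarithmic overhead of packing the indices). I expect no real obstacle here: the only thing to be careful about is that $\omega(\cdot,\cdot,\cdot)$ is defined for aligned rectangular shapes, so one should pad $A$ and $B$ with zero rows/columns up to dimensions exactly $n^a\times n^{b+d}$ and $n^{b+d}\times n^c$ before applying the rectangular-multiplication algorithm; this padding is free and preserves the value of every entry of $AB$.
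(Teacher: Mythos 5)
Your proof is correct and follows essentially the same route as the paper's: the paper also starts from the convolution identity $(UV^\top)^{[k]}=\sum_i U^{[i]}V^{[k-i]\top}$, packages the coefficient slices into block matrices $\bigl[\,U^{[0]}\mid\dots\mid U^{[k]}\,\bigr]$ and $\bigl[\,V^{[k]}\mid\dots\mid V^{[0]}\,\bigr]^\top$ of shapes $n^a\times O(n^{b+d})$ and $O(n^{b+d})\times n^c$, and invokes the rectangular multiplication bound. Your version just makes the re-indexing into $A,B$ explicit (and handles the valid index range a bit more carefully), but the decomposition and the key step are identical.
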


\begin{proof}
We have
$
(UV^\top)^{[k]}
= \sum_{i=0}^d U^{[i]} V^{[k-i]\top}
= \left[ U^{[0]} \mid U^{[1]} \mid ... \mid U^{[k]} \right]
\left[ V^{[k]} \mid V^{[k-1]} \mid ... \mid V^{[0]} \right]^\top
$.
This product can be computed in $\tilde{O}(n^{\omega(a,b+d,c)})$ field operations.

\end{proof}

This way we are able to compute 
$(M_{(t)}^{-1})_{I,J}^{[k]} = (M^{-1}_{I,J})^{[k]} - (U V^\top)_{I,J}^{[k]}$ 
for all $k \in S \subset [n^s]$ in 
$\tilde{O}(|S| n^{\omega(\delta_1,s+\mu,\delta_2)})$ operations,
when $|I| = n^{\delta_1}$, $|J| = n^{\delta_2}$, $t \le n^\mu$.
This is exactly the query complexity of \Cref{lem:overview:maintainInverse}.

\paragraph{Reset operation}
After the data-structure received $n^\mu$ updates, 
the matrices $U,V$ are too large to answer the queries quickly enough.
Thus we ``reset" the algorithm by assigning $M \leftarrow M_{(t)}$
and computing $M^{-1}$ explicitly.

Note that for answering queries to 
$$(M_{(t)}^{-1})^{[k]} = (M^{-1})^{[k]} - (U V^\top)^{[k]}$$ 
for $k \in S$,
we do not need to know the entire $M^{-1}$. It is enough to know only $(M^{-1})^{[k]}$ for all $k \in S$.
Thus we can speed-up the reset by computing only those coefficients.
The number of operations required for that is just $\tilde{O}(|S|n^{\omega(1,\mu+s,1)})$,
as this is equivalent to answering a query for $I = J = [n]$.
Since this reset happens only after every $n^\mu$ updates, 
this yields the $\tilde{O}(|S|n^{\omega(1,\mu+s,1)-\mu})$ term 
in the update time of \Cref{lem:overview:maintainInverse}.

This idea of the improved reset operations
leads to the following problem: 
If we do not know all coefficients of $M^{-1}$, 
then we can not obtain the new vectors $u_{(t+1)},v_{(t+1)}$.
Previously we said that those vectors can be trivially obtained 
by just reading rows/columns of $M^{-1}$.
As we do not have all coefficients of $M^{-1}$, 
we then do not have all coefficients of $u_{(t+1)},v_{(t+1)}$
or $U,V$ either. 
Thus we can not use \Cref{lem:computeSlice} to answer queries anymore.

The solution to this problem is to run Sankowski's algorithm in parallel.
This algorithm was used in \cite{Sankowski05} to maintain exact distances,
so in graph algorithms context, this means we are running the exact distance algorithm from \cite{Sankowski05}
inside our approximate distance algorithm.
At first, this might be a bit counter-intuitive, 
as one might wonder how an approximate algorithm can be fast, 
if it needs to run an exact algorithm anyway.
We explain in the next paragraph, 
why for many problems (APSP, SSSP, diameter etc.)
our approximate algorithm is faster than Sankowski's exact one,
even though we run it internally.

\paragraph{Running exact and approximate algorithms in parallel}

By running both an exact and an approximate algorithm in parallel,
we are able to exploit their benefits to fix the other algorithm's disadvantages.

Dynamic algorithms often allow for a trade-off between update and query time,
which is something we exploit in our dynamic approximate distance algorithm,
by running the two algorithms in parallel:
\begin{itemize}
\item Sankowski's algorithm from \cite{Sankowski05} is exact, 
	with improved update time at the cost of a slower query (i.e. large choice of $\mu$ in \Cref{tbl:comparison}), 
	but it only needs to answer very few queries (one row/column per update) to obtain $u_{(t+1)},v_{(t+1)}$.
\item The other algorithm is approximate, 
	with improved query time (so we can answer large hitting set queries), 
	at the cost of a slower update.
	However, because of the approximation, this ``slower" update time is still quite fast.
\end{itemize}
By combining the two algorithms, 
they are able to compensate each other's disadvantages:
As outlined before, our approximate algorithm can not run on its own,
so we run Sankowski's exact algorithm in parallel.
However, if one were to run only Sankowski's algorithm,
then it would be very slow for maintaining the distances 
of the hitting-set to all other nodes
(i.e. $O(n^2/h)$ entries of the inverse, when maintaining $h$-hop distances).
\footnote{
	One can apply \Cref{lem:computeSlice} to turn 
	Sankowski's algorithm into an approximate algorithm 
	and thus improving the query time a bit, 
	but this approach will not result in an algorithm as fast as our approximate one, 
	because the time required to reset Sankowski's algorithm is a lot larger than ours.
}
By running it internally inside our approximate algorithm,
the exact algorithm only needs to compute a single row and column of the inverse,
so only $O(n)$ entries per update as opposed to $O(n^2/h)$.

\section{Preliminaries}
\label{sec:preliminaries}

\paragraph{Complexity Measures}

Most of our algorithms work over any field $\F$ and their complexity
is measured in the number of arithmetic operations performed over $\F$,
i.e. the \emph{arithmetic complexity}. This does not necessarily equal
the \emph{time complexity} of the algorithm as one arithmetic operation
could require more than $O(1)$ time, e.g. very large rational numbers
could require many bits for their representation. This is why our
algebraic lemmas and theorems will always state
``in $O( \cdot )$ operations" instead of ``in $O( \cdot )$ time".
Further, $\tilde{O}(\cdot)$ hides $\polylog n$ and $\polylog \varepsilon$ factors,
so unlike the introduction, the formal proofs will no longer hide any $\log W$ factor.

For the graph applications however, when having an $n$ node graph,
we will typically use the field $\Z_p$ for some prime $p$ of order
$n^c$ for some $c$. This means each field element requires only
$O(c \log n)$ bits to be represented and all field operations can be
performed in $\tilde{O}(c)$ time in the standard model.
For our final results this $c$ will be constant, i.e. all arithmetic
operations can be performed in $\tilde{O}(1)$.

\paragraph{Notation: Identity and Submatrices}
The identity matrix is denoted by $\I$.
Let $I, J \subset [n] := \{1,...,n\}$ and $A$ be an $n \times n$ matrix,
then the term $A_{I,J}$ denotes the submatrix of $A$ consisting of the
rows with index in $I$ and columns with index in $J$.
For some $i \in [n]$ we may also just use the
index $i$ instead of the set $\{i\}$. 
For example the term $A_{[n],i}$ refers to the $i$th column of $A$.

\paragraph{Matrix Multiplication}

We denote with $O(n^\omega)$ the arithmetic complexity of multiplying
two $n \times n$ matrices.
Currently the best bound is $\omega < \matrixExponent$ \cite{Gall14a,Williams12}.

For rectangular matrices we denote the complexity of multiplying an
$n^a \times n^b$ matrix with an $n^b \times n^c$ matrix with
$O(n^{\omega(a,b,c)})$ for any $0 \le a,b,c$.
Note that $\omega( \cdot, \cdot, \cdot)$ is a symmetric function, 
so we can reorder the arguments.
Also by splitting a matrix product into several smaller products, 
we have $O(n^{\omega(a,b,c+d)}) = O(n^{\omega(a,b,c)+d})$.
The current best bounds for $\omega(1,1,c)$ can be found in \cite{GallU18}.
\ifdefined\FOCSversion
To see how to bound general $\omega(a,b,c)$, we refer to the full version \cite{BrandN19}.
\else
To see how to bound general $\omega(a,b,c)$, see \Cref{app:omega}.
\fi

\paragraph{Polynomials modulo $X^d$}

All our algebraic results use polynomials modulo $X^d$ for some positive integer $d$.
The ring of such polynomials is denoted by $\F[X] / \langle X^d \rangle$.
Given two polynomials $p, q \in \F[X] / \langle X^d \rangle$,
we can add and subtract the two polynomials in $O(d)$ operations in $\F$.
We can multiply the two polynomials in $O(d \log d)$ using fast-fourier-transformations.
If $q$ is of the form
$c - X \cdot h$, $c \in \F \setminus \{ 0 \}, h \in \F[X] / \langle X^d \rangle$,
then $q$ is invertible with
$q^{-1} = c^{-1} \sum_{k=0}^{d-1} (X h/c)^k = c^{-1} \prod_{k=0}^{\log d}  (1 + (Xh/c)^{2^k}$
so the inverse can be computed in $O(d (\ln d)^2)$ operations.
Since we typically hide polylog factors in the $\tilde{O}( \cdot )$ notation,
all arithmetic operations with polynomials from $\F[X] / \langle X^d \rangle$
can be performed in $\tilde{O}(d)$ operations in $\F$.

%

\paragraph{Polynomial Matrices}

We will work with polynomial matrices and vectors
$M \in (\F[X] / \langle X^d \rangle)^{n \times n}$,
$\vec{v} \in (\F[X] / \langle X^d \rangle)^{n}$,
so matrices and vectors whose entries are polynomials modulo $X^d$.
Products of such matrices/vectors can be performed as usual,
but since each arithmetic operations of two entries requires
$\tilde{O}(d)$ field operations, the complexity increases by
a factor of $\tilde{O}(d)$.
For example two $n \times n$ matrices can be multiplied in
$\tilde{O}(dn^\omega)$ field operations.

Note that not every matrix $M \in (\F[X] / \langle X^d \rangle)^{n \times n}$
has an inverse, (even if $\det(M) \neq 0$) 
as $\F[X] / \langle X^d \rangle$ is a ring.
However, we will only invert matrices of the form $M = \I - X \cdot A$,
where $A \in (\F[X] / \langle X^d \rangle)^{n \times n}$.
The inverse of these matrices is given via
$\sum_{i=0}^{d-1} X^k A^k = \prod_{k=0}^{\log d} (\I + A^{2^k})$,
which can be computed in $\tilde{O}(dn^\omega)$.


For a polynomial matrix $M \in (\F[X] / \langle X^d \rangle)^{n \times n}$
we define $M^{[k]}$ to be the matrix of coefficients of $X^k$.
So $M = \sum_{i=0}^{d-1} M^{[k]} X^k$ and $M^{[k]} \in \F^{n \times n}$.

\paragraph{$(1+\varepsilon)$-approximate $h$-hop distance matrix}

Given an $n$-node graph $G$ we call an $n \times n$ matrix $D$ a
$(1+\varepsilon)$-approximate $h$-hop distance matrix, if
\begin{itemize}
\item $\dist_G(u,v) \le D_{u,v} \le (1+\varepsilon) \dist_G(u,v)$, if the shortest $uv$-path uses at most $h$ hops.
\item $\dist_G(u,v) \le D_{u,v}$ for all other pairs $u,v \in V$.
\end{itemize}

\section{Algebraic Dynamic Short Hop Distances}
\label{sec:smallHop}

In this section we prove the main tool used for our new results.
This new tool allows us to maintain 
approximate bounded hop distances in a dynamic graph.
We will later extend this algorithm to work on paths of any hop length in \Cref{sec:apsp}.

The main result in this section will be the following theorem:

\begin{theorem}[$n^s$-hop distances, approximate, positive real weights]
\label{thm:dynamicSmallApproximateDistancesLargeWeightQuery}
Let $G$ be a graph with $n$ nodes and real edge weights from $[1,W]$.
Then for any $0 \le \mu, s \le 1$ and $\varepsilon > 0$
there exists a Monte Carlo dynamic algorithm that maintains
$(1+\varepsilon)$-approximate $n^s$-hops all-pairs-distances of $G$.

The preprocessing time is $\tilde{O}((n^{s+\omega} / \varepsilon) \log W)$.
Each edge update requires
$\tilde{O}(
(n^{\omega(1,s+\mu,1)-\mu} / \varepsilon^2
+ n^{\sankowski+s} / \varepsilon) \log W)$ time.

For any $I,J \subset V$, we can query the approximate distances
for the pairs $I \times J$ in
$\tilde{O}(n^{\omega(\delta_1,\mu+s,\delta_2)} / \varepsilon^2 \log W)$ time, 
where $\delta_1,\delta_2$ are such that $|I| = n^{\delta_1}, |J| = n^{\delta_2}$.
\end{theorem}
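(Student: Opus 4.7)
The plan is to reduce the real-weighted setting to $O(\log W)$ parallel instances of small-integer-weighted bounded-hop distance, each of which is handled by the algebraic machinery already developed in \Cref{lem:overview:maintainInverse} (equivalently, the integer-weight analogue of \Cref{thm:dynamicSmallApproximateDistancesQuery}). This is the standard weight-rounding idea of \cite{Zwick02,Bernstein16,Madry10,RaghavanT85,Nanongkai14} mentioned at the start of \Cref{sec:overview}.

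\textbf{Weight rounding.} First I would partition $[1,W]$ into $O(\log W)$ geometric scales $[2^i,2^{i+1})$. For each scale $i$, I build a graph $G_i$ by removing every edge heavier than $2^{i+1}$ and rounding each surviving weight up to a multiple of $\delta_i := \varepsilon \cdot 2^i / n^s$; after dividing by $\delta_i$, the edge weights of $G_i$ are positive integers bounded by $O(n^s / \varepsilon)$. For any $uv$-path in $G$ with at most $n^s$ hops whose heaviest edge lies in $[2^i,2^{i+1})$, the rescaled $G_i$-distance equals the true distance up to a multiplicative factor of $1+\varepsilon$, because the per-hop rounding error $\delta_i$ accumulates to at most $n^s \delta_i = \varepsilon \cdot 2^i \le \varepsilon \cdot \dist_G(u,v)$. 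Every $n^s$-hop path in $G$ is captured by the scale of its heaviest edge, so taking the element-wise minimum over all $O(\log W)$ outputs on each query yields a valid $(1+\varepsilon)$-approximate $n^s$-hop distance matrix for $G$.

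\textbf{Plugging into the algebraic machinery.} For each $G_i$ I instantiate the polynomial-matrix encoding of \Cref{lem:reduction:distanceToInverse}, extended in the obvious way to nonuniform integer weights via $A_{u,v} = a_{u,v} X^{c_{u,v}}$ for $c_{u,v}$ the integer weight of $(u,v)$ in $G_i$. An $n^s$-hop shortest path has total polynomial degree at most $h = O(n^{2s}/\varepsilon)$, so $M = \I - A\cdot X$ is taken modulo $X^{h}$. For $(1+\varepsilon)$-approximation it then suffices to maintain the coefficients of degrees in the geometric set
\[
S \;=\; \bigl\{\, \lfloor (1+\varepsilon)^j \rfloor : 0 \le j \le \lceil \log_{1+\varepsilon} h \rceil \,\bigr\}, \qquad |S| = \tilde O(1/\varepsilon),
\]
since the smallest $k \in S$ with $(M^{-1})^{[k]}_{u,v} \ne 0$ is a $(1+\varepsilon)$-approximation of the $G_i$-distance by \Cref{lem:distanceInverseReduction}. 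Feeding these values of $h$ and $|S|$ into \Cref{lem:overview:maintainInverse} and multiplying by $O(\log W)$ for the parallel scales matches the claimed preprocessing time $\tilde O((n^{s+\omega}/\varepsilon)\log W)$, update time $\tilde O((n^{\omega(1,s+\mu,1)-\mu}/\varepsilon^2 + n^{\sankowski+s}/\varepsilon)\log W)$, and query time $\tilde O((n^{\omega(\delta_1,\mu+s,\delta_2)}/\varepsilon^2)\log W)$: one $1/\varepsilon$ comes from $|S|$ and the other from the rounding precision $\delta_i$.

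The main obstacle I anticipate is controlling the cost of running Sankowski's exact algorithm \emph{internally}, as emphasized in the ``Running exact and approximate algorithms in parallel'' discussion of \Cref{sec:overview:new}. I must verify that the internal exact routine is only asked to produce $O(1)$ coefficients of $M^{-1}$ per update (namely the single row and column needed to build the new rank-one pair $(u_{(t+1)}, v_{(t+1)})$ in the Sherman--Morrison--Woodbury update), so that its contribution stays $\tilde O(n^{\sankowski+s})$ even at the inflated degree $h = O(n^{2s}/\varepsilon)$; this is the delicate point in the update-time accounting. A secondary point is choosing the field $\F = \mathbb{Z}_p$ with a prime $p$ of order $\mathrm{poly}(n h) = \mathrm{poly}(n/\varepsilon)$ so that \Cref{lem:distanceInverseReduction} fails on each scale instance with only polynomially small probability while each field operation still costs $\tilde O(1)$ time; this is standard but must be reaffirmed on the weight-rounded graphs.
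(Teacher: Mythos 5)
Your reduction has the right overall shape — round weights to $O(\log W)$ integer-weighted scales and feed each scale into the polynomial-matrix inverse machinery — and that is indeed how the paper proceeds (via \Cref{thm:dynamicSmallApproximateDistancesQuery} followed by \Cref{thm:integerToRealWeights}). But there is a genuine gap in your choice of bucketing, and it costs a full $n^s$ factor.

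You bucket by the \emph{heaviest edge} on the path: scale $i$ keeps edges below $2^{i+1}$ and rounds to multiples of $\delta_i = \varepsilon 2^i / n^s$, so rescaled edge weights are bounded by $O(n^s/\varepsilon)$. That part is fine. But the polynomial degree is governed by the \emph{total rescaled path cost}, and your own estimate $h = O(n^{2s}/\varepsilon)$ is correct: with $n^s$ hops and per-edge rescaled weight up to $O(n^s/\varepsilon)$, the path length in $G_i$ can reach $n^{2s}/\varepsilon$, and nothing in your heaviest-edge bucketing caps it below that. Plugging $h = n^{2s}/\varepsilon$ into the machinery of \Cref{lem:dynamicApproximateSlice} means the degree exponent in the $\omega(\cdot,\cdot,\cdot)$ argument is roughly $2s$, not $s$: the preprocessing becomes $\tilde O(n^{2s+\omega}/\varepsilon)$ rather than $\tilde O(n^{s+\omega}/\varepsilon)$, the query-oriented term becomes $\tilde O(|S|\, n^{\omega(1,2s+\mu,1)-\mu})$ rather than $\tilde O(|S|\, n^{\omega(1,s+\mu,1)-\mu})$, and the internal Sankowski term from \Cref{lem:dynamicMatrixInverse} becomes $\tilde O(n^{\sankowski+2s})$ rather than $\tilde O(n^{\sankowski+s})$. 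So your accounting does \emph{not} match the stated bounds; you have silently lost a factor of $n^s$.

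The paper avoids this by bucketing on the \emph{target distance} rather than on the heaviest edge. In \Cref{lem:scalingTechnique} and \Cref{lem:parallelScaledGraphs}, scale $i$ keeps edges with $c_{u,v} \le B_i = 2^i$ and rounds to $c'_{u,v} = \lceil A\, c_{u,v} / B_i \rceil$ with $A = \Theta(n^s/\varepsilon)$. The crucial consequence is that for a pair with $\dist_G(s,t) \le B_i$, the whole path cost is controlled by $B_i$, so the rescaled $n^s$-hop distance satisfies
\[
\dist_{G_i}(s,t) \;\le\; \frac{A}{B_i}\,\dist_G(s,t) + n^s \;\le\; A + n^s \;=\; O\!\left(\frac{n^s}{\varepsilon}\right),
\]
a full factor $n^s$ smaller than your bound. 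For each pair, the relevant scale is the one with $2^{i-1} \le \dist_G(s,t) \le 2^i$ (which also controls the multiplicative error), and the element-wise $\min$ over all $O(\log(nW))$ scales is safe because the rounding is always upward. With this bucketing the truncation modulus is $O(n^s/\varepsilon)$, and the $\varepsilon^{-1}$ loss is exactly a single extra $\varepsilon^{-1}$ factor, recovering the claimed $\varepsilon^{-2}$ in the update and query times and the $n^{s+\omega}/\varepsilon$ in the preprocessing. The rest of your write-up — the geometric set $S$ of size $\tilde O(1/\varepsilon)$, running Sankowski's exact routine internally only to pull a single row and column of $M^{-1}$ per update for the rank-one Sherman--Morrison pair, and taking $\F = \Z_p$ with $p$ of size $\poly(nh)$ — is consistent with the paper's argument once the degree bound is fixed.
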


The proof will be split into two parts:
First, we will prove an equivalent result 
for graphs with integer edge weights 
in \Cref{sub:integerWeights}.
Then we will extend the result to work on
graphs with real edge weights in \Cref{sub:realWeights}.

\subsection{Exact and Approximate Distances for Integer Weights}
\label{sub:integerWeights}

We first start with the case of integer weights,
we we will later extend the algorithm to real weights. 
The integer version of \Cref{thm:dynamicSmallApproximateDistancesLargeWeightQuery}
can be formulated as follows:

\begin{theorem}\label{thm:dynamicSmallApproximateDistancesQuery}
Let $G$ be a graph with $n$ nodes and positive integer edge weights.
Then for any $0 \le s,\mu$, there exists a Monte Carlo dynamic algorithm
that maintains $(1+\varepsilon)$-approximate all-pairs-distances of $G$ upto $n^s$.

The preprocessing time is $\tilde{O}(n^{s+\omega})$.
Each edge update requires
$\tilde{O}(
s^2 n^{\omega(1,s+\mu,1)-\mu} / \varepsilon
+ s n^{\sankowski+s})$ time.

For any $I,J \subset V$, we can query the approximate distances upto $n^s$
for the pairs $I \times J$ in
$\tilde{O}(s^2 n^{\omega(\delta_1,\mu+s,\delta_2)} / \varepsilon)$ time, 
where $\delta_1,\delta_2$ are such that $|I| = n^{\delta_1}, |J| = n^{\delta_2}$.
For pairs $P \subset I \times J$ with distance larger than $n^s$, 
the returned distance is $\infty$.

\end{theorem}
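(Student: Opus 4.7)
The plan is to instantiate the roadmap laid out in Section 2. We first invoke \Cref{lem:distanceInverseReduction} to reduce the task to maintaining the inverse of the polynomial matrix $M = \I - A(G) \cdot X$ in $(\F[X]/\langle X^{h+1}\rangle)^{n\times n}$ with $h = n^s$, where $\F = \Z_p$ for a prime $p$ of bit-length $\Theta(\log n)$ so that the failure bound $hn^2/|\F|$ is negligible and each ring operation costs $\tilde O(1)$ time. By that lemma, $(M^{-1})^{[k]}_{u,v}\neq 0$ iff $\dist_G(u,v)\le k$, so a $(1+\varepsilon)$-approximation of each $\dist_G(u,v)\le n^s$ can be read off from the coefficients indexed by the geometric grid $S = \{\lfloor(1+\varepsilon)^i\rfloor : 0\le i\le \lceil\log_{1+\varepsilon} n^s\rceil\}$, which has size $|S|=O((s/\varepsilon)\log n)$. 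It therefore suffices to establish \Cref{lem:overview:maintainInverse} for this particular $S$.

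\paragraph{Implementing \Cref{lem:overview:maintainInverse}.} I would build the data structure in four pieces. (i) \emph{Preprocessing:} compute $M^{-1}=\prod_{k=0}^{\lceil\log h\rceil}(\I+(AX)^{2^k})$ by repeated squaring of polynomial matrices modulo $X^{h+1}$, costing $\tilde O(h\cdot n^\omega)=\tilde O(n^{s+\omega})$ operations, and cache only the slices $(M^{-1})^{[k]}$ for $k\in S$. (ii) \emph{Element updates:} maintain the Sherman--Morrison--Woodbury correction $M_t^{-1}=M^{-1}-U V^\top$, where $U,V$ are thin polynomial matrices of width $t$ whose columns $u_{(i)},v_{(i)}$ are rescaled rows/columns of the current inverse. (iii) \emph{Queries:} for $I\times J$ with $|I|=n^{\delta_1}$, $|J|=n^{\delta_2}$, return $(M^{-1})^{[k]}_{I,J}-(UV^\top)^{[k]}_{I,J}$ for each $k\in S$, using \Cref{lem:computeSlice} so that each slice costs $\tilde O(n^{\omega(\delta_1,s+\mu,\delta_2)})$ when $t\le n^\mu$, for a total of $\tilde O(|S|\, n^{\omega(\delta_1,s+\mu,\delta_2)})$. (iv) \emph{Reset:} every $n^\mu$ updates, rebuild $(M^{-1})^{[k]}$ for $k\in S$ from scratch by executing the query procedure with $I=J=[n]$ on the current $M_t$, paying $\tilde O(|S|\, n^{\omega(1,s+\mu,1)})$ in total, i.e.\ $\tilde O(|S|\, n^{\omega(1,s+\mu,1)-\mu})$ amortized per update. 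Worst-case time is then obtained by the standard trick of spreading the reset work over the next $n^\mu$ updates while the current state continues to serve queries; this is compatible with Sherman--Morrison--Woodbury because update insertions are additive and commute with the rebuild.

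\paragraph{Main obstacle.}
The crux, and in my view the only conceptually nontrivial step, is producing the new pair $u_{(t+1)},v_{(t+1)}$ upon each update. These are a full row and a full column of the current inverse $M_t^{-1}$, yet in the slimmed-down data structure we only store a handful of coefficient slices of $M^{-1}$; we therefore cannot read the vectors off as one would in Sankowski's original scheme. My plan is to resolve this exactly as sketched in \Cref{sec:overview:new}: run Sankowski's exact dynamic inverse algorithm from \cite{Sankowski05} in parallel with ours, using it solely to produce the single required row/column per update. The relevant point is that Sankowski's algorithm, tuned to return only one row and one column per update, costs $\tilde O(n^{\sankowski+s})$ per update, which accounts for the second term of the update bound. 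I will need to verify that (a) Sankowski's reset cost is subsumed by ours at the chosen parameter setting (so the amortized $n^{\omega(1,s+\mu,1)-\mu}/\varepsilon$ term stays dominant), and (b) the de-amortization used for our reset can be synchronized with his reset without disturbing either structure; both are routine because the two data structures are driven by identical update streams and the only coupling is the per-update hand-off of one row and one column.

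\paragraph{Putting it together.}
Combining the three update contributions (Sankowski's $\tilde O(n^{\sankowski+s})$ for the hand-off, the Sherman--Morrison--Woodbury bookkeeping, and the amortized reset $\tilde O(|S|\, n^{\omega(1,s+\mu,1)-\mu})$), and the query cost $\tilde O(|S|\, n^{\omega(\delta_1,s+\mu,\delta_2)})$, and plugging in $|S|=\tilde O(s/\varepsilon)$, I obtain exactly the bounds stated in \Cref{thm:dynamicSmallApproximateDistancesQuery}; the extra factor of $s$ beyond $|S|$ itself is absorbed from the polynomial-degree overhead in \Cref{lem:computeSlice} and in the repeated-squaring formula, and all remaining log factors are absorbed into $\tilde O(\cdot)$. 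Correctness of the approximation follows by reading the smallest $k\in S$ with $(M_t^{-1})^{[k]}_{u,v}\neq 0$ as the returned distance and applying \Cref{lem:distanceInverseReduction}; pairs with true distance larger than $n^s$ yield zero at every $k\in S$ and the algorithm reports $\infty$ as required.
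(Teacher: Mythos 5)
Your proposal follows the paper's proof route step for step: \Cref{lem:distanceInverseReduction} to reduce to a polynomial-matrix inverse, the incremental Sherman--Morrison decomposition $M_t^{-1}=M^{-1}-\hat U\hat V^\top$, \Cref{lem:computeSlice} to compute only the needed coefficient slices at query and reset time, a reset every $n^\mu$ updates with the standard two-copy de-amortization, and---the key insight you correctly identify---running Sankowski's exact data structure (\Cref{lem:dynamicMatrixInverse}) alongside solely to supply the one row and one column of $M_{t}^{-1}$ needed to form $\hat u_{(t+1)},\hat v_{(t+1)}$. This is exactly the chain \Cref{lem:dynamicApproximateSlice} $\to$ \Cref{cor:maintainApproximateGraphSlices} $\to$ \Cref{thm:dynamicSmallApproximateDistancesQuery} in the paper.

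Two small inaccuracies, neither fatal. First, your accounting of the second $s$ factor in $s^2/\varepsilon$ is not the paper's: \Cref{lem:computeSlice} already folds the degree $n^s$ into the middle exponent $\omega(\cdot,\mu+s,\cdot)$ and contributes no extra multiplicative $s$, and repeated squaring is a preprocessing-only cost. The paper instead charges that factor to the conversion from field operations to running time (the field is taken to have bit-length $\Theta(s\log n)$ so that the $hn^2/|\F|$ failure bound in \Cref{lem:distanceInverseReduction} is negligible), as spelled out in the proof of \Cref{cor:maintainApproximateGraphSlices}. Second, you leave implicit the invertibility of the Sherman--Morrison pivot $1+v_{(t)}^\top\hat u_{(t)}$; the paper verifies this by observing that updates to $A$ carry a factor of $X$, so the pivot is $1+X\cdot p(X)$ and hence invertible in $\F[X]/\langle X^{n^s}\rangle$. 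This is easy to add but must be checked, since $\F[X]/\langle X^{n^s}\rangle$ is only a ring.
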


The high-level idea of the algorithm for
\Cref{thm:dynamicSmallApproximateDistancesQuery}
was already given in \Cref{sec:overview}.

As previously stated, 
\Cref{thm:dynamicSmallApproximateDistancesQuery} is the result 
of maintaining the inverse of a polynomial matrix 
and using the reduction of \Cref{lem:reduction:distanceToInverse}.
As such, our first task is to create a new algorithm 
for maintaining the inverse of a polynomial matrix.

\begin{lemma}\label{lem:dynamicApproximateSlice}
Let $0 \le \mu, s$ and
$M = (\I - A \cdot X) \in (\F[X]/\langle X^{n^s} \rangle)^{n \times n}$
be a polynomial matrix modulo $X^{n^s}$.
Let $S \subset [n^s]$ be any set and
let $u(d,n)$ be a bound on update and query time of \Cref{lem:dynamicMatrixInverse}
for an $n \times n$ polynomial matrix modulo $X^{d}$.
Then there exists a dynamic algorithm that performs
$\tilde{O}(n^{s+\omega})$ operations during the pre-processing and
$\tilde{O}(
|S| n^{\omega(1,s+\mu,1)-\mu}
+ u(n^s, n))$ operations per element update to $A$.

The algorithm supports queries to $(M^{-1})_{I,J}^{[d]}$
for any $I,J \subset [n]$ and $d \in S$ in
$\tilde{O}(n^{\omega(\delta_1,\mu+s,\delta_2)})$ operations,
where $\delta_1,\delta_2$ are such that
$|I| = n^{\delta_1}, |J| = n^{\delta_2}$.
\end{lemma}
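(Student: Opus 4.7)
Following the blueprint sketched in \Cref{sec:overview:new}, the algorithm maintains a reference matrix $M$ and stores only the slices $(M^{-1})^{[k]}$ for $k \in S$, together with a sequence of rank-one corrections collected from the updates applied since the last reset. Writing $M_{(t)}$ for the current matrix after $t \le n^{\mu}$ updates, the Sherman--Morrison--Woodbury identity paraphrased in \Cref{sec:overview:new} yields
\[
M_{(t)}^{-1} \;=\; M^{-1} \;-\; \sum_{i=1}^{t} u_{(i)} v_{(i)}^{\top},
\]
where $u_{(i)}, v_{(i)} \in (\F[X]/\langle X^{n^{s}}\rangle)^{n}$ are a rescaled column and row of the previous iterate $M_{(i-1)}^{-1}$. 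Since the partial representation does not retain enough coefficients to read off these vectors directly, the algorithm runs the dynamic inverse data structure of \Cref{lem:dynamicMatrixInverse} (Sankowski) in parallel on the same update stream; after each update it queries that structure for the one row and one column of $M_{(i-1)}^{-1}$ needed to build $u_{(t+1)}, v_{(t+1)}$ as full polynomials. These are appended to the matrices $U, V \in (\F[X]/\langle X^{n^{s}}\rangle)^{n \times t}$, contributing the $u(n^{s},n)$ term to the update cost.

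\textbf{Preprocessing, queries, and reset.} Preprocessing computes $M^{-1}$ fully via the standard telescoping product $\prod_{k=0}^{\lceil\log n^{s}\rceil} (\I + (AX)^{2^{k}})$ in $\tilde O(n^{s+\omega})$ operations, and initializes the auxiliary Sankowski instance. A query with inputs $I, J, d$ where $|I| = n^{\delta_{1}}$, $|J| = n^{\delta_{2}}$, $d \in S$ returns
\[
(M_{(t)}^{-1})_{I,J}^{[d]} \;=\; (M^{-1})_{I,J}^{[d]} \;-\; (UV^{\top})_{I,J}^{[d]};
\]
the first term is read directly from storage, and by \Cref{lem:computeSlice} applied to the $|I| \times t$ and $|J| \times t$ submatrices of $U$ and $V$ (of polynomial degree at most $n^{s}$ and inner dimension $t \le n^{\mu}$), the second term is obtained in $\tilde O(n^{\omega(\delta_{1}, \mu+s, \delta_{2})})$ operations, as required. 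After every $n^{\mu}$ updates the low-rank correction becomes too wide, and we rebuild by setting $M \leftarrow M_{(n^{\mu})}$ and recomputing only the slices $(M^{-1})^{[k]}$ for $k \in S$. This is precisely the special case $I = J = [n]$ of the query, so it costs $\tilde O(|S|\, n^{\omega(1,\mu+s,1)})$; standard de-amortization with two interleaved copies, each advancing an $n^{-\mu}$ fraction of its rebuild per update, spreads this cost to $\tilde O(|S|\, n^{\omega(1,\mu+s,1) - \mu})$ per update, matching the first term in the claimed bound.

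\textbf{Main obstacle.} The delicate point is ensuring that the Sherman--Morrison iteration remains well defined inside the ring $\F[X]/\langle X^{n^{s}}\rangle$: each step inverts a scalar polynomial of the form $1 + \delta \cdot (M_{(i-1)}^{-1})_{j,i}$, and this is legitimate because $M = \I - AX$ propagates to its inverse the property that diagonal entries have constant term $1$ and off-diagonal entries have constant term $0$, so the denominator is always a unit in the ring and can be inverted in $\tilde O(n^{s})$ operations per update. A second subtlety is the orchestration between the partial representation, the auxiliary Sankowski instance, and the two interleaved copies used for de-amortization: at every query all three components must agree on which prefix of the update sequence they reflect, which we handle by always keeping one copy active with a valid $(U,V)$ pair and a synchronized Sankowski state while the other is being rebuilt in the background, and swapping the roles at the end of each rebuild window.
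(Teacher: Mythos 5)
Your proof follows the same approach as the paper's: incremental Sherman--Morrison, a parallel Sankowski instance (\Cref{lem:dynamicMatrixInverse}) to recover the new row/column needed to form $\hat u_{(t)},\hat v_{(t)}$, the slice trick of \Cref{lem:computeSlice} for both queries and reset, and the standard two-copy de-amortization. The cost accounting is right.

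One point in your ``main obstacle'' paragraph is off, though. You justify the invertibility of the Sherman--Morrison denominator $1 + v_{(t)}^\top M_{(t-1)}^{-1} u_{(t)}$ by appealing to the constant terms of $M^{-1}$'s diagonal and off-diagonal entries. That is either superfluous or incorrect, depending on what you mean by $\delta$. If $\delta$ already carries the $X$ factor (i.e. it is the change to the entry of $M$, which is $-\delta' X$ for a scalar $\delta'$), then the constant term of the denominator is $1$ regardless of $(M^{-1})_{j,i}$, and the whole discussion of constant terms of $M^{-1}$ does no work. If, on the other hand, $\delta$ is the scalar change to $A_{i,j}$ and you have dropped the $X$, your argument fails for diagonal updates: $(M^{-1})_{i,i}$ has constant term $1$, so $1 + \delta (M^{-1})_{i,i}$ has constant term $1+\delta$, which can be $0$. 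The clean argument (the one in the paper) is structural about the update vectors, not about $M^{-1}$: since $M = \I - A\cdot X$, an update to $A$ corresponds to adding $u v^\top$ to $M$ where one of $u, v$ can be taken to be a multiple of $X$; hence $v^\top M^{-1} u$ has no constant term, so $1 + v^\top M^{-1} u = 1 + X\cdot p(X)$ is a unit for every choice of indices, diagonal or not.
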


As already stated in \Cref{lem:dynamicApproximateSlice} and \Cref{sec:overview},
we build our algorithm upon the following result by Sankowski \cite{Sankowski05}:

\begin{lemma}[{\cite[Theorem 3]{Sankowski05}}]
\label{lem:dynamicMatrixInverse}
\footnote{
	\cite[Theorem 3]{Sankowski05} maintains the adjoint modulo $X^{n+1}$,
	but as stated in the proof of \cite[Theorem 6]{Sankowski05}
	the algorithm can also be used modulo $X^{n^s}$
	in which case the complexity is as stated in \Cref{lem:dynamicMatrixInverse}.
	In \cite{Sankowski05} Sankowski considered maintaining the adjoint for the case,
	where the input matrix $M$ has degree 1,
	but the algorithm can also be used to maintain the inverse for matrices of any degree bounded by $n^s$,
	as proven in \cite[Appendix C1, C2]{BrandNS18}.}

Let $0 \le \nu, s$ and
$M = (\I - A \cdot X) \in (\F[X]/\langle X^{n^s} \rangle)^{n \times n}$
be a polynomial matrix modulo $X^{n^s}$.

Then there exists a dynamic algorithm that supports element updates to $A$ in
$\tilde{O}(n^{s+\omega(1,1,\nu)-\nu} +n^{1+s+\nu})$ operations
and both row and column queries to $M^{-1}$ in $\tilde{O}(n^{1+s+\nu})$ operations.

The pre-processing requires $\tilde{O}(n^{s+\omega})$ operations.

For current $\omega$ the update and query time are $O(n^{\sankowski+s})$ for $\nu \approx 0.5285$.

\end{lemma}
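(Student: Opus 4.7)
The plan is to maintain $M^{-1}$ implicitly by combining an explicitly precomputed ``base inverse'' with a low-rank correction built up through the Sherman--Morrison--Woodbury (SMW) identity, and to rebuild this representation periodically. The crucial observation enabling the approach in the ring $\F[X]/\langle X^{n^s}\rangle$ is that since $M = \I - A\cdot X$, we have $M^{-1} = \sum_{k=0}^{n^s-1}(AX)^k$, whose constant coefficient is $\I$. Consequently any scalar capacitance produced by SMW after an element update to $A$ (which perturbs $M$ by a matrix in $X\cdot\F[X]^{n\times n}$) has constant coefficient $1$ and is therefore a unit in the ring, so SMW is valid even though we are working over a ring rather than a field.

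For preprocessing I would compute $M^{-1}$ explicitly via the doubling identity $M^{-1} = \prod_{k=0}^{\lceil\log_2 n^s\rceil}\bigl(\I + (AX)^{2^k}\bigr)$ (truncated modulo $X^{n^s}$), which needs $O(\log n^s)$ multiplications of $n\times n$ polynomial matrices modulo $X^{n^s}$ and thus $\tilde{O}(n^{s+\omega})$ field operations. The data structure then keeps the base matrix $M^{-1}_{(0)}$ together with two polynomial matrices $U,V\in(\F[X]/\langle X^{n^s}\rangle)^{n\times t}$ satisfying the invariant $M_{(t)}^{-1}=M^{-1}_{(0)}-UV^\top$, where $t$ counts updates since the last rebuild. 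When an entry of $A$ is changed, I would extract the relevant row and column of the current $M_{(t)}^{-1}$---reading them from $M^{-1}_{(0)}$ in $\tilde{O}(n^{1+s})$ time and subtracting the corresponding row/column of $UV^\top$ in $\tilde{O}(t\cdot n\cdot n^s)$ time---apply SMW to obtain the new pair $(u_{t+1},v_{t+1})$ up to the capacitance factor, and append them as new columns of $U$ and $V$. The same row/column extraction answers a row or column query to $M^{-1}$ directly, yielding the $\tilde{O}(n^{1+s+\nu})$ bound as long as $t\le n^\nu$.

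Once $t$ reaches $n^\nu$, the structure is ``reset'' by forming $M_{(n^\nu)}^{-1}=M^{-1}_{(0)}-UV^\top$ explicitly and reinitialising $U,V$ to be empty. The product $UV^\top$ is a multiplication of an $n\times n^\nu$ with an $n^\nu\times n$ polynomial matrix modulo $X^{n^s}$, costing $\tilde{O}(n^s\cdot n^{\omega(1,1,\nu)})$ field operations. Amortised over $n^\nu$ updates this contributes $\tilde{O}(n^{s+\omega(1,1,\nu)-\nu})$ per update; to convert this into a worst-case bound I would run two copies of the structure phase-shifted by $n^\nu/2$ updates and spread the rebuild work evenly across each epoch, routing queries to whichever copy currently has the smaller correction while forwarding every incoming update to both copies.

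The main obstacle I expect is the de-amortisation bookkeeping: the copy that is mid-rebuild must correctly assimilate every update arriving during its rebuild, and the routing must guarantee that at every time step at least one copy has a correction of width at most $n^\nu$ so the stated bounds hold uniformly in the worst case. A secondary but subtle point is checking the SMW step carefully for every type of element update (both off-diagonal and diagonal entries of $A$), so that the capacitance $1-\delta X\cdot e_j^\top M_{(t)}^{-1}e_i$ is always a unit modulo $X^{n^s}$; as already noted this reduces to the fact that its constant coefficient equals $1$, which is precisely the structural feature of $M=\I-AX$ that allows the whole scheme to operate over the ring $\F[X]/\langle X^{n^s}\rangle$ rather than requiring a field.
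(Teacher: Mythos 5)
Your proposal is correct and follows essentially the standard construction: the paper itself does not prove this lemma but cites it as \cite[Theorem 3]{Sankowski05} (with the degree generalization deferred to \cite{BrandNS18}), and your reconstruction --- explicit base inverse via the doubling product, accumulated Sherman--Morrison--Woodbury corrections $M_{(t)}^{-1}=M_{(0)}^{-1}-UV^\top$, row/column extraction in $\tilde{O}(n^{1+s+\nu})$, a rectangular-product reset every $n^\nu$ updates, and de-amortization via two phase-shifted copies --- is exactly the cited data structure and mirrors the argument the paper gives for its own variant in \Cref{lem:dynamicApproximateSlice}. Your observation that the capacitance is a unit because every update to $M=\I-AX$ carries a factor of $X$ is also precisely the justification the paper uses for working over the ring $\F[X]/\langle X^{n^s}\rangle$.
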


The algorithm of \Cref{lem:dynamicMatrixInverse} 
could be modified to support batch queries to $M^{-1}_{I,J}$ 
with $|I| = n^{\delta_1}$, $|J| = n^{\delta_2}$ 
in $\tilde{O}(n^{s+\omega(\delta_1, \nu, \delta_2})$,
which is the result we stated in \Cref{sec:overview} \Cref{tbl:comparison},
when choosing $\nu = \min(\mu, 0.5285)$.

The high level idea of \Cref{lem:dynamicApproximateSlice} is to express the
changes to the matrix $M$ as rank-1 updates of the form $M + uv^\top$.
For such updates the new inverse of $(M+uv^\top)^{-1}$ is given via the
Sherman-Morrison identity. 

\begin{lemma}[Sherman-Morrison]\label{lem:shermanMorrison}
Let $M$ be an $n \times n$ matrix
and $u,v$ be $n$-dimensional vectors,
then 
$$
(M+uv^\top)^{-1} =
M^{-1} - M^{-1} u (1 + v^\top M^{-1}u)^{-1} v^\top M^{-1}.
$$
\end{lemma}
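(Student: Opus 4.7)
The identity is a classical one and admits a one-line verification: multiply the claimed inverse on the left by $M + uv^\top$ and collect terms. Since $M$ is invertible and the stated formula only makes sense when the scalar $1 + v^\top M^{-1} u$ is nonzero, I would first note (or assume) that condition; it is in fact equivalent to $M + uv^\top$ being invertible, and so is automatic given the hypothesis. With this in hand, a single algebraic manipulation establishes the result, and no induction, spectral theory, or determinant computation is needed.

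\textbf{The computation.} Set $\alpha := v^\top M^{-1} u \in \F$, a scalar, so that the claimed inverse reads $M^{-1} - (1+\alpha)^{-1} M^{-1} u v^\top M^{-1}$. I would then expand
\[
(M + uv^\top)\bigl(M^{-1} - (1+\alpha)^{-1} M^{-1} u v^\top M^{-1}\bigr)
= \I + u v^\top M^{-1} - (1+\alpha)^{-1} u v^\top M^{-1} - (1+\alpha)^{-1} u \alpha\, v^\top M^{-1},
\]
where I used $M M^{-1} = \I$ and $v^\top M^{-1} u = \alpha$ to collapse the inner scalar. Pulling the scalar factors together gives
\[
\I + u v^\top M^{-1} \Bigl(1 - (1+\alpha)^{-1} - \alpha(1+\alpha)^{-1}\Bigr)
= \I + u v^\top M^{-1}\Bigl(1 - (1+\alpha)^{-1}(1+\alpha)\Bigr) = \I,
\]
which is exactly what is required. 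The reverse product $\bigl(M^{-1} - (1+\alpha)^{-1} M^{-1} u v^\top M^{-1}\bigr)(M+uv^\top) = \I$ follows by the same manipulation, or simply by noting that a square matrix with a two-sided inverse is determined by either side.

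\textbf{Obstacles.} There is essentially no obstacle: the entire content of the proof is the two-line calculation above, and the only subtlety is keeping track that $\alpha$ is a scalar so that $(1+\alpha)^{-1}$ commutes through the matrix product. If the ambient ring (e.g.\ $\F[X]/\langle X^d \rangle$) is not a field, one must additionally observe that $1+\alpha$ is invertible in that ring precisely when $M + uv^\top$ is invertible; this is exactly the setting in which the paper later applies the identity, and the invertibility of the relevant $1+\alpha$ can be arranged (with high probability) by the randomization of the entries of $A$ already used in \Cref{lem:distanceInverseReduction}.
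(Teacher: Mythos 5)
Your verification of the identity is correct and is the standard one; the paper itself states Sherman--Morrison as a classical fact and gives no proof of it (it only proves the incremental variant, \Cref{lem:incrementalShermanMorrison}, by induction from this lemma), so there is nothing to compare the approach against. One small correction to your closing remark about the ring $\F[X]/\langle X^d\rangle$: in the paper the invertibility of $1+\alpha$ is \emph{not} obtained from the randomization of the entries, but structurally --- in the proof of \Cref{lem:dynamicApproximateSlice} the update vector $v_{(k)}$ is of the form $X\cdot v$ (because element updates go into $A$, which enters $M=\I - X\cdot A$ multiplied by $X$), so $1 + v_{(k)}^\top M^{-1} u_{(k)} = 1 + X\cdot p(X)$ has constant coefficient $1$ and is therefore always invertible in $\F[X]/\langle X^d\rangle$, with no probability involved.
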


A dynamic algorithm receives several updates in online sequence.
For this purpose one could use the Sherman-Morrison-Woodbury identity,
however, given the online nature of the updates
(i.e. the updates are given one-by-one)
the following incremental variant of Sherman-Morrison is more useful:

\begin{lemma}\label{lem:incrementalShermanMorrison}
Let $M$ be an $n \times n$ matrix and
$u_{(1)},...,u_{(k)}$, 
$v_{(1)},...,v_{(k)}$ 
be $n$-dimensional vectors.
Define $M_{(t)} := M + \sum_{i=1}^t u_{(i)} v_{(i)}^\top$ for $t=0,...,k$,
so $M_{(t)} = M_{(t-1)} + u_{(t)} v_{(t)}^\top$.
Further define $\hat{u}_{(t)} := M_{(t-1)}^{-1}u_{(t)}$
and $\hat{v}_{(t)}^\top := (1 + v_{(t)}^\top \hat{u}_{(t)})^{-1} v_{(t)}^\top M_{(t-1)}^{-1}$
for $t=1,...,k$.

Then for all for $t=0,...,k$
$$
M_{(t)}^{-1} = M^{-1} - \sum_{i=1}^t \hat{u}_{(i)} \hat{v}_{(i)}^\top.
$$
\end{lemma}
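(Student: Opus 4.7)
The plan is to prove the identity by induction on $t$, using the ordinary Sherman--Morrison identity (\Cref{lem:shermanMorrison}) as the single-step engine. The entire content of the lemma is that the telescoping sum of rank-$1$ corrections obtained by applying Sherman--Morrison once per update is consistent across all $t$, so once we view the definitions of $\hat u_{(t)}$ and $\hat v_{(t)}$ as ``the Sherman--Morrison vectors when updating $M_{(t-1)}$ by $u_{(t)} v_{(t)}^\top$,'' the result should drop out immediately.

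First I would handle the base case $t=0$: by definition $M_{(0)} = M$ and the sum on the right is empty, so the identity is trivial. For the inductive step, suppose the statement holds at index $t-1$, i.e.\ $M_{(t-1)}^{-1} = M^{-1} - \sum_{i=1}^{t-1} \hat u_{(i)} \hat v_{(i)}^\top$. Since $M_{(t)} = M_{(t-1)} + u_{(t)} v_{(t)}^\top$, \Cref{lem:shermanMorrison} applied with base matrix $M_{(t-1)}$ gives
\[
M_{(t)}^{-1} = M_{(t-1)}^{-1} - M_{(t-1)}^{-1} u_{(t)} \bigl(1 + v_{(t)}^\top M_{(t-1)}^{-1} u_{(t)}\bigr)^{-1} v_{(t)}^\top M_{(t-1)}^{-1}.
\]
Substituting the definitions $\hat u_{(t)} := M_{(t-1)}^{-1} u_{(t)}$ and $\hat v_{(t)}^\top := (1 + v_{(t)}^\top \hat u_{(t)})^{-1} v_{(t)}^\top M_{(t-1)}^{-1}$ collapses the right-hand side to $M_{(t-1)}^{-1} - \hat u_{(t)} \hat v_{(t)}^\top$, and combining with the inductive hypothesis finishes the step.

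One caveat I would be careful about is implicit invertibility: Sherman--Morrison requires $1 + v_{(t)}^\top M_{(t-1)}^{-1} u_{(t)} \neq 0$ and $M_{(t-1)}$ invertible at each step, so I would either assume these as implicit hypotheses (they are needed to even define the $\hat v_{(t)}$'s) or note that whenever both $M_{(t-1)}$ and $M_{(t)}$ are invertible, this scalar is automatically nonzero by the matrix determinant lemma. Beyond this bookkeeping point, there is no real obstacle; the lemma is genuinely a one-line inductive unfolding of Sherman--Morrison, and its purpose is less the novelty of the identity than the repackaging in a form suitable for the online updates used in the proof of \Cref{lem:dynamicApproximateSlice}, where the vectors $\hat u_{(t)}, \hat v_{(t)}$ are exactly what will be stored and then combined lazily via \Cref{lem:computeSlice}.
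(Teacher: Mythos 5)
Your proof is correct and follows essentially the same argument as the paper's: induction on $t$ with Sherman--Morrison applied at base matrix $M_{(t-1)}$ to produce the new rank-one correction $\hat u_{(t)}\hat v_{(t)}^\top$. The extra remark about implicit invertibility hypotheses is a reasonable observation the paper leaves tacit, but it does not change the argument.
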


\begin{proof}
Via \Cref{lem:shermanMorrison} we know
\ifdefined\FOCSversion
\begin{align*}
M_{(t)}^{-1}
&= (M_{(t-1)} + u_{(t)} v_{(t)}^\top)^{-1} \\
&= M_{(t-1)}^{-1} \\
&\indent\indent -  M_{(t-1)}^{-1} u_{(t)} (1 + v_{(t)}^\top M_{(t-1)}^{-1}u_{(t)})^{-1} v_{(t)}^\top M_{(t-1)}^{-1} \\
&= M_{(t-1)}^{-1} - \hat{u}_{(t)} (1 + v_{(t)}^\top \hat{u}_{(t)})^{-1} v_{(t)}^\top M_{(t-1)}^{-1} \\
&= M_{(t-1)}^{-1} - \hat{u}_{(t)} \hat{v}_{(t)}^\top
\end{align*}
\else
\begin{align*}
M_{(t)}^{-1}
&= (M_{(t-1)} + u_{(t)} v_{(t)}^\top)^{-1} \\
&= M_{(t-1)}^{-1} -  M_{(t-1)}^{-1} u_{(t)} (1 + v_{(t)}^\top M_{(t-1)}^{-1}u_{(t)})^{-1} v_{(t)}^\top M_{(t-1)}^{-1} \\
&= M_{(t-1)}^{-1} - \hat{u}_{(t)} (1 + v_{(t)}^\top \hat{u}_{(t)})^{-1} v_{(t)}^\top M_{(t-1)}^{-1} \\
&= M_{(t-1)}^{-1} - \hat{u}_{(t)} \hat{v}_{(t)}^\top
\end{align*}
\fi
so by induction $M_{(t)}^{-1} = M^{-1} - \sum_{i=1}^t \hat{u}_{(i)} \hat{v}_{(i)}^\top$, because $M_{(0)}^{-1} = M^{-1}$.
\end{proof}

We now have all tools available to prove \Cref{lem:dynamicApproximateSlice}.

\begin{proof}[Proof of \Cref{lem:dynamicApproximateSlice}]

We will first give the high-level idea:
Let $M$ be the input matrix during the pre-processing
and $M_{(k)}$ be the matrix after $k$ updates.
We will express the element updates to $M$ via rank-1 updates,
so for every update we receive a pair of vectors $u,v$,
i.e. adding $p \in \F[X]$ to entry $(i,j)$ of $M_{(k)}$
is the same as adding the outer-product
$uv^\top$ for $u = f \cdot e_i$, $v = g \cdot e_j$ 
and some $f,g \in \F[X]/\langle X^{n^s}\rangle$ with $f\cdot g = p$.

This means after $k$ updates,
we are tasked with maintaining the inverse of
$M_{(k)} = M+\sum_{i=1}^k u_{(i)} v_{(i)}^\top$,
where $u_{(i)},v_{(i)}$ is the pair of vectors that specify the $i$th update,
and each vector $u_{(i)},v_{(i)}$ has only one non-zero entry.

Thanks to \Cref{lem:incrementalShermanMorrison} we know
$$
(M+\sum_{i=1}^k u_{(i)} v_{(i)}^\top)^{-1} =
M^{-1} - \sum_{i=1}^k \hat{u}_{(i)} \hat{v}_{(i)}^\top.
$$

The high-level idea is to compute the vectors
$\hat{u}_{(i)}, \hat{v}_{(i)}$ after every update.
These vectors are useful for the following reason:
Let $\hat{U},\hat{V}$ be the $k \times n$ matrices,
where the $i$th column is $\hat{u}_{(i)}$ and $\hat{v}_{(i)}$ respectively.
Then
$\sum_{i=1}^k \hat{u}_{(i)} \hat{v}_{(i)}^\top = \hat{U}\hat{V}^\top$
and thus
$$((M+\sum_{i=1}^k u_{(i)} v_{(i)}^\top)^{-1})^{[d]} = (M^{-1})^{[d]} - (\hat{U}\hat{V}^\top)^{[d]}.$$
So by applying \Cref{lem:computeSlice} to the product $\hat{U}\hat{V}^\top$,
we can easily answer the queries.

To make sure the matrices $\hat{U},\hat{V}$ do not become too large,
we will reset the algorithm after $n^\mu$ updates.

\paragraph{Pre-processing}

We initialize \Cref{lem:dynamicMatrixInverse} for matrix $M$
and we compute $(M^{-1})^{[k]}$ for every $k \in S$.
This is done by computing $M^{-1}$ in $\tilde{O}(n^{s+\omega})$ operations.

\paragraph{Update}

Assume we handle the $k$th update, 
i.e. we receive the pair $u_{(k)},v_{(k)}$ 
and have $M_{(k)} = u{(k)}v_{(k)}^\top$.
The update routine consists of the following steps:
\begin{enumerate}
\item Compute $\hat{u}_{(k)} := M_{(k-1)}^{-1}u_{(k)}$
and $\hat{v}_{(k)} := (1 + v_{(k)}^\top \hat{u}_{(k)})^{-1} v_{(k)}^\top M_{(t-1)}^{-1}$ 
as defined in \Cref{lem:incrementalShermanMorrison}. \label{step:computeNewColumns}
\item Let $\hat{U},\hat{V}$ be the $n \times k$ matrices, where the $i$th columns is $\hat{u}_{(i)},\hat{v}_{(i)}$ respectively. \label{step:extendMatrices}
\item Update the algorithm of \Cref{lem:dynamicMatrixInverse}. \label{step:updateSankowski}
\end{enumerate}

Note that the data-structure of \Cref{lem:dynamicMatrixInverse} 
is always updated at the end in step \ref{step:updateSankowski}.
Thus at the start of the $k$th update, 
we can query rows and columns of $M_{(k-1)}^{-1}$ via the data-structure of \Cref{lem:dynamicMatrixInverse}.

This allows us to compute $\hat{u}_{(k)} := M_{(k-1)}^{-1}u_{(k)}$ 
in $\tilde{O}(n^{1+s} +u(n^s,n)) = \tilde{O}(u(n^s, n))$ operations as follows:
The vector $u_{(k)}$ has only one non-zero element, 
so $M_{(k-1)}^{-1}u_{(k)}$ is just one column of $M_{(k-1)}^{-1}$ scaled by the non-zero entry of $u_{(k)}$.
The column can be queried via \Cref{lem:dynamicMatrixInverse} in $O(u(n^s, n))$ operations, 
and multiplying each of the $n$ entries of that column 
by the non-zero entry of $u_{(k)}$ needs $\tilde{O}(n^s)$ operations.
Thus a total of $\tilde{O}(n^{1+s} +u(n^s,n))$ operations is required, which can be bounded by $\tilde{O}(u(n^s, n))$.

Likewise, $\hat{v}_{(k)} := (1 + v_{(k)}^\top \hat{u}_{(k)})^{-1} v_{(k)}^\top M_{(t-1)}^{-1}$
can be computed in $\tilde{O}(u(n^s, n))$ operations:
The vector $v_{(k)}$ has just one non-zero entry, so $v_{(k)}^\top M_{(t-1)}^{-1}$ is just one row of $M_{(t-1)}^{-1}$,
scaled by the non-zero entry of $v_{(k)}$.
This can be computed in the same way as $M_{(k-1)}^{-1}u_{(k)}$, 
except that, instead of a columns, we now query a row of $M_{(t-1)}^{-1}$ in $O(u(n^s, n))$ operations via \Cref{lem:dynamicMatrixInverse}.
The polynomial $(1 + v_{(k)}^\top \hat{u}_{(k)})^{-1}$ 
can be computed in $\tilde{O}(n^{1+s})$ as we have an inner product of two $n$ dimensional vectors of degree $n^s$,
and inverting the resulting degree $d$ polynomial needs only $\tilde{O}(n^s)$ operations.
Note that the inverse
$(1 + v_{(k)}^\top \hat{u}_{(k)})^{-1}$ exists,
because we can assume, without loss of generality,
that $v_{(k)}$ is of the form $X \cdot v$ for some $v \in (\F[X]/\langle X^{n^s} \rangle)^n$,
so $v_{(k)}$ has no constant terms.
This is because, by assumption of \Cref{lem:dynamicApproximateSlice},
element updates to $M = \I - X \cdot A$
are actually element updates to $A$,
which is multiplied with $X$.
Hence $(1 + v_{(k)}^\top \hat{u}_{(k)}) = 1 + X \cdot p(X)$,
for some polynomial $p(X)$, and it is thus invertible
(see preliminaries about inverting polynomials).

For step \ref{step:extendMatrices},
note that $\hat{u}_{(i)},\hat{v}_{(i)}$ for $i < k$ 
were already computed during the previous updates.
So for each update, the matrices $\hat{U},\hat{V}$ change 
by just adding one column, given by $\hat{u}_{(k)}$ and $\hat{v}_{(k)}$ respectively.
This means step \ref{step:extendMatrices} requires only $O(n^{1+s})$ operations.

In summary, an update requires $\tilde{O}(u(n^s,n))$ operations, because the $\tilde O(n^{1+s})$ term is subsumed.

\paragraph{Query}

When we want to compute the submatrix
$((M+\sum_{(i=1}^k u_{(i)} v_{(i)}^\top)^{-1})^{[d]}_{I,J}$
for some $d \in S$ and $I,J \subset [n]$,
then we need to compute $(\hat{U}_{I,[k]} \hat{V}^\top_{[k],J}))^{[d]}$
and subtract it from $(M^{-1})_{I,J}^{[d]}$.
The latter is known because of the pre-processing
and the former can be computed in
$\tilde{O}(n^{\omega(\delta_1,s+\mu,\delta_2)})$
operations via \Cref{lem:computeSlice},
where $\delta_1,\delta_2$ are such that
$|I| = n^{\delta_1}, |J| = n^{\delta_2}$.

\paragraph{Reset}

After upto $n^\mu$ updates, we reset the algorithm. 
For this we must compute
$((M+\sum_{i=1}^k u_{(i)} v_{(i)}^\top))^{-1})^{[d]}$
for all $d \in S$.
Afterward we set $M \leftarrow M+\sum_{i=1}^k u_{(i)} v_{(i)}^\top$.

The matrices $((M+\sum_{i=1}^k u_{(i)} v_{(i)}^\top)^{-1})^{[d]}$
can be computed the same way as in the query phase for $I=J=[n]$ in
$\tilde{O}(|S|n^{\omega(1,s+\mu,1)})$ operations.
This leads to an amortized update complexity of
$\tilde{O}(
	|S|n^{\omega(1,s+\mu,1)-\mu}
	+u(n^s, n))$
operations per update.
This can be made worst-case via the standard-technique
of running two copies of this algorithm in parallel
and spreading out the reset computation over several updates.
While one copy of the algorithm is performing the reset,
the other copy is able to answer the queries.
\ifdefined\FOCSversion
For a formal proof of this standard-technique see the full version of the paper.
\else
For a formal proof of this standard-technique see \Cref{app:worstCase}.
\fi

\end{proof}

Now that \Cref{lem:dynamicMatrixInverse} is proven, 
we can finally apply the reduction from \Cref{lem:reduction:distanceToInverse}
to obtain data-structures for distance problems.

\begin{corollary}\label{cor:maintainApproximateGraphSlices}
Let $0 \le \mu, s$ and
let $G$ be an $n$-node graph with positive integer weights.
Let $S \subset [n^s]$ be any subset.
Let $u(d,n)$ be a bound on update and query time of \Cref{lem:dynamicMatrixInverse}
for an $n \times n$ matrix modulo $X^{d}$.
Then there exists a Monte Carlo dynamic algorithm with
$\tilde{O}(s n^{s+\omega})$ pre-processing and
$\tilde{O}(
|S| s n^{\omega(1,s+\mu,1)-\mu}
+s u(n^s, n))$ worst-case update time for each edge update.

For any $I,J \subset V$ and $k \in S$,
we can query for the pairs $I \times J$ the boolean matrix,
which answers for all $s \in I$, $t \in J$,
if the distance from $s$ to $t$ is at most $k$.
Each such query requires $\tilde{O}(s n^{\omega(\delta_1,\mu+s,\delta_2)})$ time,
where $\delta_1,\delta_2$ are such that $|I| = n^{\delta_1}, |J| = n^{\delta_2}$.
\end{corollary}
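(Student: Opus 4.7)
The plan is to reduce the graph problem to the algebraic problem already solved by \Cref{lem:dynamicApproximateSlice}, using the reduction of \Cref{lem:reduction:distanceToInverse}.

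First I would build the polynomial matrix $A(G) \in (\F[X]/\langle X^{n^s}\rangle)^{n \times n}$ exactly as in \Cref{lem:reduction:distanceToInverse}, choosing $\F = \Z_p$ for a prime $p$ of magnitude $n^{\Theta(1)}$ so that all field operations cost $\tilde O(1)$ time and the failure probability of \Cref{lem:reduction:distanceToInverse} is at most $n^{-c}$ for a large enough constant $c$. Because edge weights are positive integers and each self-loop contributes $a_{v,v} X$, every nonzero entry of $A(G)$ is divisible by $X$, so I can write $A(G) = X \cdot A'$ with $A' \in (\F[X]/\langle X^{n^s}\rangle)^{n \times n}$; then $M := \I - A(G) = \I - X \cdot A'$ has precisely the shape assumed by \Cref{lem:dynamicApproximateSlice}. \Cref{lem:reduction:distanceToInverse} then guarantees that, with high probability, $(M^{-1})^{[d]}_{u,v} \ne 0$ if and only if $\dist_G(u,v) \le d$, for every $u, v \in V$ and every $0 \le d < n^s$ simultaneously.

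Next I would instantiate \Cref{lem:dynamicApproximateSlice} on $M$ with the given set $S \subseteq [n^s]$. Each graph update (edge insertion, deletion, or weight change) alters exactly one entry $A'_{u,v}$ -- and one can resample the associated random coefficient $a_{u,v}$ to guard against the adaptive adversary -- which is an element update of the form supported by \Cref{lem:dynamicApproximateSlice}. To answer a query $(I, J, k)$ with $k \in S$, I would invoke the query interface of \Cref{lem:dynamicApproximateSlice} to obtain the coefficient matrix $(M^{-1})^{[k]}_{I,J}$ and return its entrywise nonzero indicator as the desired Boolean matrix; correctness is immediate from the characterization above.

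The claimed time complexities then follow by directly plugging the parameters into \Cref{lem:dynamicApproximateSlice} and using that arithmetic operations over $\Z_p$ cost $\tilde O(1)$ time. The residual factor of $s$ in all three bounds reflects the combined contribution of polynomial arithmetic modulo $X^{n^s}$ and of the union bound needed to make every coefficient $(M^{-1})^{[d]}_{u,v}$ correct simultaneously. The only real subtlety, rather than an actual obstacle, is the verification of robustness against an adaptive adversary: since the Boolean output coincides with a purely graph-theoretic predicate $[\dist_G(u,v) \le k]$ under the correctness event of \Cref{lem:reduction:distanceToInverse}, the adversary learns nothing about the hidden random coefficients $a_{u,v}$ beyond what the graph already reveals, so the probability guarantee survives any adaptively chosen update sequence.
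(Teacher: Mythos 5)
Your proposal follows the paper's proof essentially step for step: build $M = \I - A(G)$ via \Cref{lem:reduction:distanceToInverse}, feed edge updates as element updates into \Cref{lem:dynamicApproximateSlice}, and answer $(I,J,k)$ queries by testing which entries of $(M^{-1})^{[k]}_{I,J}$ are nonzero (the paper's proof text cites \Cref{lem:dynamicMatrixInverse} but, as the stated complexities make clear, it is actually invoking \Cref{lem:dynamicApproximateSlice}, exactly as you do).

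One small inconsistency worth flagging: you first fix $\F = \Z_p$ with $|\F| = n^{\Theta(1)}$ and claim field operations cost $\tilde O(1)$, which would eliminate the extra factor of $s$; you then attribute that factor to ``polynomial arithmetic modulo $X^{n^s}$ and the union bound,'' but the degree-$n^s$ arithmetic is already charged inside the operation count of \Cref{lem:dynamicApproximateSlice}, and a union bound by itself adds no running-time factor. The actual source of the $s$ is the field size: the failure probability of \Cref{lem:reduction:distanceToInverse} is $hn^2/|\F|$ with $h = n^s$, so to drive it below $n^{-c}$ without assuming $s = O(1)$ one must take $p$ of bit-length $\Theta(s\log n)$, making each field operation cost $\tilde O(s)$ time; that is where the multiplicative $s$ in the time bounds comes from. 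For the regime $s \le 1$ that most applications use, $s$ is absorbed by $\tilde O(\cdot)$ and your $n^{\Theta(1)}$ choice is fine, but the corollary permits arbitrary $s \ge 0$.
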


\begin{proof}

Let $G$ be the given graph 
and $A(G)$ be the matrix as defined in 
\Cref{lem:reduction:distanceToInverse},
then for $M := \I - A(G)$ we have that
$(M^{-1})_{u,v}^{[k]} \neq 0$
if and only if $\dist(u,v) \ge k$.
Thus we simply maintain $M^{-1}$ via \Cref{lem:dynamicMatrixInverse}, 
where each edge update to $G$ corresponds to an element update to $A(G)$.

The extra factor $s$ in the complexity of \Cref{cor:maintainApproximateGraphSlices}
compared to \Cref{lem:dynamicMatrixInverse} comes from the fact
that we now measure the time instead of arithmetic operations.
For \Cref{lem:reduction:distanceToInverse} to hold with high probability,
we must use $\F = \Z_p$ where the prime $p$ has bit-length $\Theta(s \log n)$,
so one arithmetic operation requires $\tilde{O}(s)$ time in the standard model.

\end{proof}

\Cref{cor:maintainApproximateGraphSlices} works for some arbitrary set $S \subset [n^s]$.
To prove \Cref{thm:dynamicSmallApproximateDistancesQuery}, 
we are only left with specifying the correct set $S$ for 
\Cref{cor:maintainApproximateGraphSlices}.

\begin{proof}[Proof of \Cref{thm:dynamicSmallApproximateDistancesQuery}]
\Cref{thm:dynamicSmallApproximateDistancesQuery} is directly implied by
\Cref{cor:maintainApproximateGraphSlices} by letting
$S = \{
	\lfloor (1+\varepsilon)^k \rfloor
\mid
	0 \le k \le \lceil \log_{(1+\varepsilon)} n^s \rceil
\}$.

We simply run \Cref{cor:maintainApproximateGraphSlices} and whenever we ask for
the distances of some pairs $I \times J \subset V \times V$,
we query for every $k \in S$, if the distance is at most $k$.
This way we obtain $(1+\varepsilon)$-approximate distances, of distance upto $n^s$.

The complexity of \Cref{thm:dynamicSmallApproximateDistancesQuery}
is the same as \Cref{cor:maintainApproximateGraphSlices}.
We have $|S| = O(s/\varepsilon \log n)$, so the update time is
$\tilde{O}(
	s^2 n^{\omega(1,s+\mu,1)-\mu} / \varepsilon
	+s u(n^s, n)
)$
and the query time is
$\tilde{O}(
	s^2 n^{\omega(\delta_1,\mu+s,\delta_2)} / \varepsilon
)$.

\end{proof}

\subsection{Approximate Distances for Real Weights}
\label{sub:realWeights}

In the previous subsection we handled the case of graphs
with positive integer edge weights.
We now extend the results to the case of real edge weights.
The technique is based on the integer rounding trick used in
\cite[Lemma 8.1, Theorem 8.2]{Zwick02}.

\begin{theorem}\label{thm:integerToRealWeights}
Let $0 < \varepsilon, 0 \le \delta$.
Given a dynamic algorithm that maintains
$(1+\delta)$-approximate distances upto $3h/\varepsilon$
on graphs with integer weights,
then there exists a dynamic algorithm for
$(1+\delta)(1+\varepsilon)$-approximate $h$-hop distances
on graphs with real weights from $[1,W]$.
The update, query and pre-processing complexity all increase by a factor of $O(\log(Wn))$.
\end{theorem}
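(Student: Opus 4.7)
The plan is to apply the standard weight-rounding / scaling technique (as in Zwick~\cite{Zwick02}): maintain $O(\log(Wn))$ copies of the given integer-weight algorithm, one per dyadic scale, so that for every true distance $d$ there is some scale whose rescaled-and-rounded version approximates $d$ well.

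First I would set $L := \lceil \log_2(hW) \rceil$, since any $h$-hop distance in $G$ lies in $[1, hW]$. For each $i = 0,1,\dots,L$, pick a rounding granularity $M_i := \varepsilon \cdot 2^i / h$, and maintain an instance $\mathcal{A}_i$ of the integer-weight algorithm on the graph $G_i$ whose edge weights are $w_e^{(i)} := \lceil w_e / M_i \rceil$. These are positive integers (since $w_e \ge 1 > 0$), so $G_i$ is a valid input. Each update to an edge weight in $G$ is forwarded to every $\mathcal{A}_i$, incurring a factor of $L+1 = O(\log(Wn))$ in update time; preprocessing is handled identically in parallel for the $L+1$ scales.

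To answer an $h$-hop query for a pair $(s,t)$ (or a batch $I\times J$), I would query each $\mathcal{A}_i$ for its $(1+\delta)$-approximate distance $\tilde d_i$ on $G_i$, convert it back by multiplying by $M_i$, and return $\min_i M_i \tilde d_i$. For correctness, fix a pair $(s,t)$ with true $h$-hop distance $d$, and let $i^\star$ be the scale with $2^{i^\star} \le d < 2^{i^\star+1}$. Along the true shortest $h$-hop path, the rescaled length is at most $d/M_{i^\star} + h \le 2h/\varepsilon + h \le 3h/\varepsilon$, so it falls within the range the integer algorithm is guaranteed to handle. Hence $\mathcal{A}_{i^\star}$ returns some $\tilde d_{i^\star} \le (1+\delta)(d/M_{i^\star} + h)$, and multiplying by $M_{i^\star}$ gives
\[
M_{i^\star}\tilde d_{i^\star} \;\le\; (1+\delta)\bigl(d + h\cdot M_{i^\star}\bigr) \;=\; (1+\delta)\bigl(d + \varepsilon \cdot 2^{i^\star}\bigr) \;\le\; (1+\delta)(1+\varepsilon)\, d.
\]
Conversely, $\lceil w_e/M_i\rceil \ge w_e/M_i$ implies that the true integer distance in $G_i$ is at least $d/M_i$, so $M_i \tilde d_i \ge d$ at every scale; taking the minimum preserves this lower bound, so the output is a valid overestimate of $d$.

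The only delicate point is to balance the three sources of error -- the rounding slack $+h$, the multiplicative $(1+\delta)$ error of the integer routine, and the factor-$2$ gap between consecutive dyadic scales -- so that they compose to exactly $(1+\delta)(1+\varepsilon)$. This is precisely what dictates the choice $M_i = \varepsilon\cdot 2^i/h$: it forces the additive term $h M_{i^\star}$ to be at most $\varepsilon d$ at the right scale, while keeping the rescaled distances bounded by $3h/\varepsilon$, matching the hypothesis on the integer-weight oracle. Everything else is bookkeeping, and the preprocessing, update, and query complexities each blow up by exactly the number of scales $L+1 = O(\log(Wn))$.
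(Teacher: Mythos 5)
Your proposal is correct and follows essentially the same weight-rounding strategy as the paper's proof (Lemmas \ref{lem:scalingTechnique} and \ref{lem:parallelScaledGraphs}): maintain $O(\log(Wn))$ dyadically scaled copies of the integer-weight oracle and take the minimum over scales of the rescaled answers. The only cosmetic differences are a constant-factor change in the granularity $M_i$ and that you keep all edges in each $G_i$ rather than discarding those of weight exceeding the current scale, which is harmless since the integer-weight oracle only reports distances below the threshold $3h/\varepsilon$ anyway.
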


Before proving \Cref{thm:integerToRealWeights},
we observe that it immediately implies \Cref{thm:dynamicSmallApproximateDistancesLargeWeightQuery}.

\begin{proof}[Proof of \Cref{thm:dynamicSmallApproximateDistancesLargeWeightQuery}]
We can use \Cref{thm:dynamicSmallApproximateDistancesQuery}
to maintain $(1+\varepsilon)$-approximate distances
upto $3 n^s / \varepsilon$.
Via \Cref{thm:integerToRealWeights} we then obtain
$(1+\varepsilon)^2$-approximate $n^s$-hop distances for real weighted graphs.
By choosing a slightly smaller approximation factor,
we can also obtain $(1+\varepsilon)$-approximate $n^s$-hop distances.

Compared to \Cref{thm:dynamicSmallApproximateDistancesQuery} 
the complexities increase by a factor of 
$O(\varepsilon^{-1} \log \varepsilon^{-1} \log (nW))$.
Here the factor $O(\varepsilon^{-1} \log \varepsilon^{-1})$ 
comes from maintaining the distance upto $O(n^s / \varepsilon)$ 
and the factor $O(\log (nW))$ comes from \Cref{thm:integerToRealWeights}.
\end{proof}

\begin{lemma}\label{lem:scalingTechnique}
Let $G = (V,E)$ be a graph with $n$ nodes and real edge weights from $[1,W]$.
For any $0 < A,B$ define $G' = (V, E')$ to be the graph with
$
	E' = \{
		(u,v) \in E
	\mid
		c_{u,v} \le B
	\}
$
and integer edge weights $c'_{u,v} = \lceil A c_{u,v} / B \rceil$.

Then for any path from $s$ to $t$ in $G$ of length
$\dist_G(s,t) \le B$ let $h$ be the number of hops.
We have
$\dist_G(s,t) \le (B/A) \dist_{G'}(s,t) \le \dist_G(s,t) + (B/A) h$.
\end{lemma}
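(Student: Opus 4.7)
The statement consists of two inequalities, and I would prove them separately.

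For the lower bound $\dist_G(s,t) \le (B/A)\dist_{G'}(s,t)$, the plan is to show that any $st$-path in $G'$ lifts to an $st$-path in $G$ whose $G$-length is at most $B/A$ times its $G'$-length. Since $E' \subseteq E$, every $st$-path $P$ in $G'$ is also an $st$-path in $G$, and for each edge $(u,v) \in P$ the definition $c'_{u,v} = \lceil A c_{u,v}/B \rceil \ge A c_{u,v}/B$ gives $c_{u,v} \le (B/A) c'_{u,v}$. Summing over $P$ and taking the minimum over all paths in $G'$ yields the desired inequality. This part is essentially a one-line estimate.

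For the upper bound $(B/A)\dist_{G'}(s,t) \le \dist_G(s,t) + (B/A)h$, I would start by fixing a shortest $st$-path $P$ in $G$, of length $\dist_G(s,t) \le B$ and using $h$ hops. The key observation (and the only slightly delicate point) is that $P$ survives in $G'$: since every edge weight is at least $1$ and the total weight of $P$ is at most $B$, each edge on $P$ has weight $c_{u,v} \le B$, so each such edge belongs to $E'$. Hence $P$ is a valid $st$-path in $G'$ and $\dist_{G'}(s,t)$ is upper bounded by the $G'$-length of $P$.

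It remains to control this $G'$-length. Using $\lceil x \rceil \le x + 1$, I would estimate
\[
\dist_{G'}(s,t) \le \sum_{(u,v)\in P} \Bigl\lceil \tfrac{A c_{u,v}}{B} \Bigr\rceil \le \sum_{(u,v)\in P} \Bigl(\tfrac{A c_{u,v}}{B} + 1\Bigr) = \tfrac{A}{B}\dist_G(s,t) + h.
\]
Multiplying by $B/A$ gives the second inequality. The main (mild) obstacle is simply making sure that $P$ is in $E'$, which is exactly where the hypothesis $\dist_G(s,t) \le B$ together with the lower bound $1$ on edge weights is used; the rest is arithmetic with the ceiling function.
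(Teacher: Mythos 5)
Your proposal is correct and follows essentially the same argument as the paper: observe that the shortest $st$-path in $G$ survives in $G'$ because each of its edges has weight at most $\dist_G(s,t)\le B$, and then use $x\le\lceil x\rceil\le x+1$ per edge to get the two-sided bound. Your write-up just fills in the details (in particular spelling out why the path survives and how the lower bound follows from $E'\subseteq E$) that the paper leaves implicit.
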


\begin{proof}
Since $\dist_G(s,t) \le B$, all edges used by the path in $G$ also exist in $G'$.
Because of the rounding, we have
$\dist_G(s,t) \le (B/A) \dist_{G'}(s,t)$
and each used edge can cause an error of at most $B/A$,
so $(B/A) \dist_{G'}(s,t) \le \dist_G(s,t) + (B/A) h$.
\end{proof}

\begin{lemma}\label{lem:parallelScaledGraphs}
Let $0 \le s$, $\varepsilon > 0$ and
let $G = (V,E)$ be a graph with $n$ nodes and real edge-weights from $[1,W]$.
Define $\lceil \log_2 nW \rceil$ graphs $G_i$ as in \Cref{lem:scalingTechnique} for
$B_i = 2^i$,
$A = 2 n^s 1/\varepsilon$
for $i = 1,...,\lceil \log_2 nW \rceil$.

Then for any pair $s,t \in V$ we have 
$\dist_G(s,t) \le \min_{i} (B_i / A) \dist_{G_i}(s,t)$ 
and if the shortest $st$-path uses at most $n^s$ hops, 
then we also have
$$ \min_{i} (B_i / A) \dist_{G_i}(s,t) \le (1+\varepsilon) \dist_G(s,t).$$
\end{lemma}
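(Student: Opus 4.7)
The plan is to prove the two inequalities separately. The first inequality, $\dist_G(s,t) \le \min_i (B_i/A) \dist_{G_i}(s,t)$, is essentially a direct corollary of the first half of \Cref{lem:scalingTechnique}: that lemma asserts $\dist_G(s,t) \le (B_i/A)\dist_{G_i}(s,t)$ for each individual $i$ (regardless of hop count), and taking the minimum over $i$ preserves the inequality. I only need to be slightly careful when $\dist_{G_i}(s,t) = \infty$ because certain edges on the $G$-shortest path have been removed in $G_i$; in that case the inequality is vacuous.

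For the second inequality, the plan is to exhibit a single good index $i^\ast$ whose scaled $G_{i^\ast}$-distance is at most $(1+\varepsilon)\dist_G(s,t)$; the minimum over $i$ will then be at most this value. The natural choice is $i^\ast := \max\!\bigl(1,\lceil \log_2 \dist_G(s,t)\rceil\bigr)$, so that $\dist_G(s,t) \le B_{i^\ast} \le 2\,\dist_G(s,t)$ (using $\dist_G(s,t)\ge 1$ from the weight lower bound, and $\dist_G(s,t)\le nW$ so that $i^\ast \le \lceil\log_2 nW\rceil$ lies in the allowed range). Because $B_{i^\ast}\ge \dist_G(s,t)$, every edge of a shortest $st$-path in $G$ is retained in $G_{i^\ast}$, so the second half of \Cref{lem:scalingTechnique} applies with hop count $h \le n^s$ by hypothesis, giving
\[
(B_{i^\ast}/A)\,\dist_{G_{i^\ast}}(s,t) \;\le\; \dist_G(s,t) + (B_{i^\ast}/A)\,n^s.
\]
Substituting $A = 2n^s/\varepsilon$ and $B_{i^\ast}\le 2\dist_G(s,t)$ bounds the additive term by $\varepsilon\,\dist_G(s,t)$, which yields the desired $(1+\varepsilon)$-factor.

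I do not anticipate a genuine obstacle here; the proof is essentially an application of \Cref{lem:scalingTechnique} combined with a standard doubling/scaling argument. The only thing that needs some care is checking that the index $i^\ast$ truly lies in $\{1,\dots,\lceil\log_2 nW\rceil\}$: the lower side uses that edge weights are at least $1$ (so $\dist_G(s,t) \ge 1$ whenever a path exists), and the upper side uses that along an $n^s$-hop path, the $G$-distance is at most $n^s\cdot W \le nW$. Once this range check is in place, the remainder is a short calculation.
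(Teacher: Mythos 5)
Your proof is correct and follows essentially the same route as the paper: for the first inequality, apply \Cref{lem:scalingTechnique} to each $i$ and observe that the lower bound survives the minimum; for the second, choose the single index $i$ with $\dist_G(s,t)\le B_i\le 2\dist_G(s,t)$ (the paper writes this as $2^{i-1}\le\dist_G(s,t)\le 2^i$), apply the second half of \Cref{lem:scalingTechnique}, and use $A=2n^s/\varepsilon$ together with $h\le n^s$ to bound the additive term by $\varepsilon\,\dist_G(s,t)$.
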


\begin{proof}
The first inequality
$$\dist_G(s,t) \le \min_{i} (B_i / A) \dist_{G_i}(s,t)$$
follows directly from \Cref{lem:scalingTechnique}.
The second inequality follows from the following observation:
Let $i$ be such that $2^{i-1} \le \dist_G(s,t) \le 2^i$ and
let $h$ be the number of hops for the shortest $st$-path.
Then this path in $G$ also exists in $G_i$ and
$
	(B_i /A) \dist_{G_i}(s,t)
	\le \dist_G(s,t) + (B_i/A) h
	\le \dist_G(s,t) + \varepsilon \dist_G(s,t) n^{-s} h$.
So if the number of hops $h$ is at most $n^s$, then we obtain the promised $(1+\varepsilon)$-approximation.

\end{proof}

\begin{proof}[Proof of \Cref{thm:integerToRealWeights}]
Consider the graphs $G_i$ for $i = 1,...,\lceil \log nW \rceil$ from \Cref{lem:parallelScaledGraphs}.
The largest $n^s$-hop distance in any $G_i$ is bounded by $n^sA = 3n^s/\varepsilon$.
So when the dynamic algorithm for integer weights 
can maintain the distance upto $3n^s/\varepsilon$, 
then it can maintain $n^s$-hop distances for all graphs $G_i$ for $i = 1,...,\lceil \log nW \rceil$.

Thus we let the given algorithm run on all $G_i$
and for any distance query, 
we return $\min_i (B_i/A) \dist_{G_i}(s,t)$ as in \Cref{lem:parallelScaledGraphs}.
This yields $(1+\varepsilon)(1+\delta)$-approximate $n^s$-hop distances,
because the distances in each $G_i$ are only maintained $(1+\delta)$-approximately.
\end{proof}

\paragraph{Non-oblivious adversaries}

Note that all the graph algorithm of \Cref{sec:smallHop} work against non-oblivious adversaries, 
i.e. updates are allowed to depend on the query results.
The only random choices are the random field elements for the non-zero entries of the matrix
in \Cref{cor:maintainApproximateGraphSlices}.
The correct return values of \Cref{cor:maintainApproximateGraphSlices} are
uniquely determined by the input graph and the query-input $(I,J,k)$.
The actual values, returned by \Cref{cor:maintainApproximateGraphSlices} ,
are correct with high probability, so with high probability the returned values for any query 
do not leak any information about the random choices.

\section{Results for All-Pairs-Distances}
\label{sec:apsp}

In this section we will prove the result of \Cref{thm:intro-main},
as presented in the introduction.
The result is split into two theorems, 
one for directed, weighted graphs 
and one for undirected graphs with small integer weights.

For directed, weighted graphs we will prove the following:

\begin{theorem}[Approximate APSP, Real weights, Directed, Queries]
\label{thm:weightedAPSPQuery}
Let $G$ be a directed graph 
with $n$ nodes 
and real weights from $[1,W]$.
Then for any $0 \le s \le 1$ 
and $\varepsilon > 0$
there exists a Monte Carlo dynamic algorithm that maintains
$(1+\varepsilon)$-approximate all-pairs-distances of $G$ in
$\tilde O(	
	(n^{\sankowski+s} / \varepsilon +
		n^{\omega(1,1,1-s)+1-2s}/\varepsilon^2
		+n^{\omega(1-s,1-s,1)}/\varepsilon^2
	) \log W
)$ update time.
We can query for any $I,J \subset V$ the distances of the pairs $I \times J$ in
$\tilde O(n^{\omega(\delta_1,1-s,\delta_2)} / \varepsilon^2 \log W)$
time, where $\delta_1,\delta_2$ are such that $|I| = n^{\delta_1},|J| = n^{\delta_2}$.
The pre-processing requires $\tilde{O}(n^{s+\omega} / \varepsilon \log W)$ time.

For current $\omega$ and $s \approx \sWeiDirAPSPquery,\mu \approx \muWeiDirAPSPquery$
the update time is $\tilde O(n^{\updateWeiDirAPSPquery} / \varepsilon^2 \log W)$,
with query time for a single pair is $\tilde O(n^{\queryWeiDirAPSPquery}/\varepsilon^2  \log W)$.
The pre-processing time is $\tilde{O}(n^{\preWeiDirAPSPquery}  \log W)$.
\end{theorem}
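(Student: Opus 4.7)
The plan is to reduce this all-pairs query problem to the short-hop approximate distance maintenance of \Cref{thm:dynamicSmallApproximateDistancesLargeWeightQuery} via the classical hitting-set framework of Ullman and Yannakakis. Fix the hop threshold $h := n^s$ and sample uniformly at random a set $H \subset V$ of size $\tilde{O}(n^{1-s})$; with high probability, every shortest path using more than $h$ hops contains an internal vertex of $H$, hence decomposes into at most $|H|+1$ consecutive segments of at most $h$ hops each whose intermediate endpoints all lie in $H$. Instantiate \Cref{thm:dynamicSmallApproximateDistancesLargeWeightQuery} with parameters $s$ and $\mu := 1-2s$, using approximation parameter $\varepsilon' = \Theta(\varepsilon/\log n)$ chosen to absorb the compounding of approximation factors described below; let $\hat{D}$ denote the resulting $(1+\varepsilon')$-approximate $n^s$-hop distance matrix.

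During each edge update, I would first forward the update to the short-hop structure, contributing the first two terms of the claimed update bound. Next, using the batch-query interface, I would extract $\hat{D}_{H,H}$ and $\hat{D}_{V,H}$; under the choice $\mu + s = 1-s$, these queries cost $\tilde{O}(n^{\omega(1,1-s,1-s)}/\varepsilon^2 \log W)$ in total. I would then compute $\hat{D}_{H,H}^{|H|}$ in the $(\min,+)$ semiring by $\lceil \log_2 |H|\rceil$ repeated squarings, each realized by Zwick's scaling reduction \cite{Zwick02} from $(\min,+)$ to standard rectangular matrix multiplication, for a total cost of $\tilde{O}(n^{\omega(1-s,1-s,1-s)}/\varepsilon^2)$. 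Finally, I would form $B := \hat{D}_{V,H} \cdot \hat{D}_{H,H}^{|H|}$ in the $(\min,+)$ semiring via one more Zwick multiplication in $\tilde{O}(n^{\omega(1-s,1-s,1)}/\varepsilon^2 \log W)$ time, matching the third term of the claimed update bound. The resulting $n \times n^{1-s}$ matrix $B$ stores, for every pair $(u,h) \in V \times H$, a $(1+\varepsilon)$-approximation of the shortest $u \to h$ walk that begins with a $\le h$-hop leg from $u$ and then proceeds through any number of $H$-anchored $\le h$-hop segments.

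For a query on sets $I \times J$ with $|I| = n^{\delta_1}$, $|J| = n^{\delta_2}$, I would first query the short-hop structure for $\hat{D}_{I,J}$ and for the appropriate hitting-set submatrix ($\hat{D}_{H,J}$ in the regime $\delta_1 \ge 1-s$; in the opposite regime the mirror precomputation $B' := \hat{D}_{H,H}^{|H|} \cdot \hat{D}_{H,V}$ is maintained during updates and combined with $\hat{D}_{I,H}$). I would then select the relevant rows of $B$ and compute $B_{I,H} \cdot \hat{D}_{H,J}$ in $(\min,+)$ via Zwick, costing $\tilde{O}(n^{\omega(\delta_1,1-s,\delta_2)}/\varepsilon \log W)$, and return the entrywise minimum of this product with $\hat{D}_{I,J}$. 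Correctness follows from the hitting-set decomposition: the shortest $i \to j$ path for any $(i,j) \in I \times J$ either uses at most $h$ hops (captured by $\hat{D}_{i,j}$) or decomposes into an initial $\le h$-hop segment from $i$ to some $h_1 \in H$, a chain of $\le h$-hop segments among $H$ (captured by $\hat{D}_{H,H}^{|H|}$), and a final $\le h$-hop segment from some $h_2 \in H$ to $j$, which is exactly what $B_{I,H} \cdot \hat{D}_{H,J}$ computes. Pre-processing reduces to that of the short-hop structure, namely $\tilde{O}(n^{s+\omega}/\varepsilon \log W)$.

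The main obstacle I anticipate is the careful bookkeeping of approximation slack: the final estimate is a $(1+\varepsilon')^{O(\log n)}$-approximation due to $O(\log n)$ $(\min,+)$ squarings plus a constant number of further compositions, so $\varepsilon' = \Theta(\varepsilon/\log n)$ must be chosen in order to recover the claimed $(1+\varepsilon)$ guarantee; this only loses polylogarithmic factors hidden inside $\tilde{O}$. A secondary subtlety is the asymmetry inherent to directed graphs, which forces us to fold $\hat{D}_{V,H}$ into the update-time precomputation of $B$ while querying $\hat{D}_{H,J}$ on demand (or symmetrically for $B'$), balancing the update bound against the claimed query bound uniformly across all regimes of $(\delta_1,\delta_2)$. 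The extension from integer to real edge weights, together with the uniform $\log W$ factor, is already handled inside \Cref{thm:dynamicSmallApproximateDistancesLargeWeightQuery} via the scaling reduction of \Cref{thm:integerToRealWeights}, so no additional work is needed on that front.
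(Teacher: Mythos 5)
Your proposal follows essentially the same hitting-set plus $(\min,+)$-precomputation blueprint as the paper's proof, but the query-time bookkeeping has a genuine gap. The paper queries both $\hat D_{H,V}$ and $\hat D_{V,H}$ from the short-hop data-structure \emph{during the update} (both cost $\tilde O(n^{\omega(1,1-s,1-s)}/\varepsilon^2\log W)$, which is one of the claimed update terms), so that at query time $\hat D_{H,J}$ is available as a free submatrix of $\hat D_{H,V}$ and the only work is the short-hop query of $\hat D_{I,J}$ plus the single $(\min,+)$ product $B_{I,H}\star\hat D_{H,J}$, both of which are $\tilde O(n^{\omega(\delta_1,1-s,\delta_2)})$. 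You instead propose querying the short-hop structure for $\hat D_{H,J}$ (or, in the ``mirror'' regime, $\hat D_{I,H}$) at query time. That costs $\tilde O(n^{\omega(1-s,1-s,\delta_2)}/\varepsilon^2\log W)$ (resp. $\tilde O(n^{\omega(\delta_1,1-s,1-s)}/\varepsilon^2\log W)$), and your two regimes fail to cover the case $\delta_1,\delta_2<1-s$: there \emph{both} options exceed the claimed bound $\tilde O(n^{\omega(\delta_1,1-s,\delta_2)}/\varepsilon^2\log W)$ --- most starkly, for $\delta_1=\delta_2=0$ (a single pair) the claimed bound is $\tilde O(n^{1-s}/\varepsilon^2)$ while your query of $\hat D_{H,J}$ costs $\tilde O(n^{2(1-s)}/\varepsilon^2)$. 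Moreover, your mirror precomputation $B'=\hat D_{H,H}^{|H|}\star\hat D_{H,V}$ needs $\hat D_{H,V}$, which you never extract during the update phase, so the mirror as stated isn't well-founded either; and the regime you want is determined by $\delta_1$, which is only known at query time. The clean fix is exactly the paper's: extract $\hat D_{H,V}$ alongside $\hat D_{V,H}$ at update time (same asymptotic cost by symmetry of $\omega$), and then no regime split or mirror is needed.

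On the secondary point: your computation of $\hat D_{H,H}^{|H|}$ by $O(\log n)$ Zwick squarings with $\varepsilon'=\Theta(\varepsilon/\log n)$ works within $\tilde O(\cdot)$, but the paper avoids the rescaling entirely by invoking Zwick's Theorem 8.2 (quoted as \Cref{lem:allPairsApproximateDistances}), which returns a $(1+\varepsilon)$-approximate closure directly in $\tilde O(n^\omega/\varepsilon\log W)$; your version is a correct, slightly less clean substitute.
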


For undirected graphs we will show the following result:

\begin{theorem}[Approximate APSP, Unweighted (or with $W$), Undirected, Queries]
\label{thm:undirectedApproximateDistancesQuery}
Let $G$ be a directed graph with $n$ nodes 
and integer weights from $\{1,...,W\}$ where $W = n^\ell$.
Then for any $0 \le s \le 1$ and $\varepsilon > 0$
there exists a Monte Carlo dynamic algorithm that maintains
$(1+\varepsilon)$-approximate all-pairs-distances of $G$ in
$\tilde O(	
n^{\omega(1,1,s+\mu + \ell)+\mu} / \varepsilon
	+ u(n^{s+\ell}, n)
	+ n^{\omega(1-s,s+\mu+\ell,1)} / \varepsilon^2
	+ n^{(1-s)\omega} / \varepsilon^{1+\omega}))$ update time.
We can query for any $I,J \subset V$ the distances of the pairs $I \times J$ in
$\tilde{O}(n^{\omega(\delta_1,s+\mu+\ell,\delta_2)}/ \varepsilon)$
time, where $\delta_1,\delta_2$ are such that $|I| = n^{\delta_1},|J| = n^{\delta_2}$.
The pre-processing requires $\tilde{O}(Wn^{s+\omega} / \varepsilon)$ time.

For current $\omega$ the pre-processing and update time for an unweighted graph are
$\tilde{O}(n^{\preUnwUndAPSPquery} / \varepsilon)$ and
$\tilde{O}(n^{\updateUnwUndAPSPquery} / \varepsilon^{3.373})$ respectively
with $s \approx \sUnwUndAPSPquery$ and $\mu \approx \muUnwUndAPSPquery$.
The query time for a single pair is $O(n^{\queryUnwUndAPSPquery} / \varepsilon)$.
\end{theorem}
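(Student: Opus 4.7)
The plan is to instantiate the hitting-set reduction outlined in \Cref{sec:overview:basics} on top of the short-hop substrate of \Cref{thm:dynamicSmallApproximateDistancesQuery}. Fix hop parameter $h = n^s$. Because the edge weights lie in $\{1,\dots,n^\ell\}$, every $h$-hop path has weighted length at most $n^{s+\ell}$, so I would run the algorithm of \Cref{cor:maintainApproximateGraphSlices} on polynomial matrices modulo $X^{n^{s+\ell}}$ with the trade-off parameter $\mu$ of \Cref{lem:dynamicApproximateSlice}. This maintains a $(1+\varepsilon)$-approximate $n^s$-hop distance matrix $\hat D$, contributes the $u(n^{s+\ell},n)$ term to the update cost, and answers a block query on an $n^{\delta_1}\times n^{\delta_2}$ submatrix in $\tilde O(n^{\omega(\delta_1, s+\mu+\ell, \delta_2)}/\varepsilon)$ time --- exactly the stated query complexity.

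Next, sample a uniform random set $H\subset V$ of size $\tilde O(n^{1-s})$; the Ullman--Yannakakis argument gives, w.h.p., that every shortest path with more than $n^s$ hops decomposes into $n^s$-hop segments between consecutive vertices of $H$. Thus for every pair $u,v$ the correct (approximate) distance is the elementwise minimum of $\hat D_{u,v}$ and the $(\min,+)$-product $\hat D_{u,H}\cdot(\hat D_{H,H})^{|H|}\cdot \hat D_{H,v}$. After each update I would (i) query the $V\times H$ block $\hat D_{V,H}$ from the short-hop structure; (ii) compute an approximate $(\min,+)$-closure $(\hat D_{H,H})^{|H|}$ by $O(\log n)$ rounds of Zwick's approximate matrix product on $n^{1-s}\times n^{1-s}$ matrices; and (iii) compose these into a matrix $\hat M\in\mathbb{R}^{n\times n^{1-s}}$ of approximate $V$-to-$H$ distances. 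A query on $I\times J$ then needs only one more short-hop query for $\hat D_{H,J}$ and one rectangular $(\min,+)$-product between $\hat M_{I,H}$ and $\hat D_{H,J}$, taken elementwise with $\hat D_{I,J}$.

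The update bound is matched term-by-term: the $u(n^{s+\ell},n)$ summand is the substrate cost; the two rectangular terms $n^{\omega(1,1,s+\mu+\ell)+\mu}/\varepsilon$ and $n^{\omega(1-s,s+\mu+\ell,1)}/\varepsilon^2$ arise from the batch query of $\hat D_{V,H}$ and its composition with the closure into $\hat M$, where the $+\mu$ in the first exponent reflects charging one round of Sankowski-style batch reads over the $n^\mu$-length update epoch; and the $n^{(1-s)\omega}/\varepsilon^{1+\omega}$ term is the Zwick closure on the $|H|\times|H|$ block, the extra $\varepsilon$-factors coming from compounding $O(\log n)$ approximate squarings as in the end-of-\Cref{sub:integerWeights} discussion.

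The main obstacle will be the error accounting across three stacked $(1+\varepsilon)$-approximations (short-hop $\hat D$, Zwick closure on $H$, and the two rectangular $(\min,+)$-products) together with the de-amortization: I would rescale $\varepsilon\leftarrow\varepsilon/c$ for a small constant $c$ so that the compounded error still fits in a single $(1+\varepsilon)$ factor, and use the standard two-parallel-copies scheme already invoked in the proof of \Cref{lem:dynamicApproximateSlice} to convert the per-epoch recomputations of $\hat M$ and of the Zwick closure into worst-case bounds. Against a non-oblivious adversary only the hitting-set sampling needs extra care, since by the remark at the end of \Cref{sub:realWeights} the short-hop substrate already does not leak random bits; re-sampling $H$ on a slow schedule amortized into the stated update bound suffices.
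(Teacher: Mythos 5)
Your construction follows the \emph{directed} recipe of \Cref{thm:weightedAPSPQuery}: maintain $\hat M = \hat D_{V,H}\star(\hat D_{H,H})^{(\star|H|)}$ after each update and, at query time, multiply $\hat M_{I,H}$ against $\hat D_{H,J}$ with a rectangular $(\min,+)$-product. That does not give the stated bounds. The rectangular product you do at query time has inner dimension $|H|=\tilde O(n^{1-s})$, so your query cost would contain a term $\tilde O(n^{\omega(\delta_1,1-s,\delta_2)}/\varepsilon)$, whereas the theorem claims only $\tilde O(n^{\omega(\delta_1,s+\mu+\ell,\delta_2)}/\varepsilon)$ (for example, a single-pair query in the unweighted setting with $s\approx 0.248,\mu\approx 0.202$ would cost you about $n^{0.75}$ rather than the claimed $n^{0.45}$). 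Likewise your step (iii), computing the composition $\hat M$, is an $n\times n^{1-s}$ by $n^{1-s}\times n^{1-s}$ $(\min,+)$-product costing $\tilde O(n^{\omega(1,1-s,1-s)}/\varepsilon)$, but no such term appears in the stated update bound. You have mislabeled your accounting: the $n^{\omega(1-s,s+\mu+\ell,1)}/\varepsilon^2$ summand is the cost of \emph{reading} $\hat D_{H,V}$ out of the short-hop substrate, not of composing it into $\hat M$.

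What is missing is the undirected-specific mechanism of \Cref{thm:undirectedLongPaths} (following \cite{RodittyZ12}). In an undirected graph one picks, once per update, a nearest hitting-set representative $x_v\in H$ for each node $v$ with $D_{v,x_v}\le \tfrac14 Wh\varepsilon$, and precomputes only the $|H|\times|H|$ closure $\Delta = D_{H,H}^{(\star|H|)}$. Symmetry of the metric then gives, for every long pair $(u,v)$, the two-sided bound $\dist(u,v)\le \Delta_{x_u,x_v}+\tfrac12 Wh\varepsilon \le (1+\varepsilon)^{O(1)}\dist(u,v)$, so $\hat D_{u,v}:=\Delta_{x_u,x_v}+\tfrac12 Wh\varepsilon$ is a \emph{constant-time table lookup}. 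This is why the query time is just the cost of reading the short-hop block $\hat D_{I,J}$, with the long-distance part absorbed into $O(|I||J|)$, and why the update cost has only the small $n^{(1-s)\omega}/\varepsilon^{\omega+1}$ closure term rather than a $V\times H$ rectangular product. This construction has no analogue for directed graphs (the triangle-inequality argument needs both directions of each edge), which is exactly why the undirected theorem has a strictly smaller query exponent than the directed one. Your proof would need to be rewritten around this oracle; as written it would prove a weaker statement with the directed bounds.
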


The high-level idea of all our algorithms is to maintain distances for short hop paths using
\Cref{thm:dynamicSmallApproximateDistancesLargeWeightQuery}
(and \Cref{thm:dynamicSmallApproximateDistancesQuery} for integer edge weights),
and use hitting set arguments to compute distances for large hop paths.

Hitting set arguments, as introduced in \cite{UllmanY91}, 
allow us to decompose paths with many hops into segments with fewer hops, 
specifically the following lemma will be an important tool:
\begin{lemma}\label{lem:allPathsHittingSet}
Let $G$ be a graph with $n$ nodes and let $H \subset V$ be a random subset of size $c \frac{n}{d} \ln n$.

With probability at least $1-n^{2-c}$ we have that:
For every $s,t \in V$, where the shortest $st$-path uses at least $d$ hops,
this shortest $st$-path can be decomposed into segments
$s \to h_1 \to h_2 \to ... \to h_k \to t$,
where $h_i \in H$ for every $i=1,...,k$
and each segment uses at most $d$ hops.

\end{lemma}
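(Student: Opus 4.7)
The plan is to establish the lemma via a direct union bound, exploiting the key equivalence between the desired decomposition and a simple "no long gap" condition on each shortest path. For concreteness, I will fix once and for all a canonical shortest path $P_{s,t}$ for every pair $(s,t)\in V\times V$ whose shortest-path distance uses at least $d$ hops (the lemma only asserts the existence of some decomposition of this fixed path, so an arbitrary tie-breaking rule suffices).

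The first step is the structural observation. Write the internal vertices of $P_{s,t}$ as $v_1,v_2,\ldots,v_\ell$, so the path has $\ell+1 \ge d$ hops. I will argue that the desired decomposition $s\to h_1\to\cdots\to h_k\to t$ with $h_i\in H$ and segments of at most $d$ hops exists if and only if no window of $d$ consecutive internal vertices $v_{j+1},\ldots,v_{j+d}$ is entirely disjoint from $H$. One direction is immediate: such a disjoint window would force one of the three bad situations (first segment, a middle segment, or last segment exceeding $d$ hops). The other direction is constructive: greedily take $h_1$ to be the first $v_i$ with $i\le d$ lying in $H$, then $h_{i+1}$ to be the first element of $H$ among the next $d$ vertices after $h_i$; the window condition guarantees this procedure never gets stuck and always terminates within $d$ hops of $t$.

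The second step is a standard probabilistic estimate. For any fixed set $W$ of $d$ vertices, since $H$ is chosen uniformly among subsets of size $|H| = c(n/d)\ln n$, a direct computation gives
\[
\Pr[\,H\cap W=\emptyset\,]=\frac{\binom{n-d}{|H|}}{\binom{n}{|H|}}=\prod_{i=0}^{d-1}\frac{n-|H|-i}{n-i}\le\left(1-\tfrac{|H|}{n}\right)^{d}\le e^{-|H|d/n}=n^{-c}.
\]
The third step is the union bound. For each ordered pair $(s,t)$ with at least $d$ hops between them, the number of length-$d$ windows on $P_{s,t}$ is at most $n$, so there are at most $n^{2}\cdot n = n^{3}$ candidate "bad windows" across all fixed paths. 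The probability that any of them is disjoint from $H$ is at most $n^{3}\cdot n^{-c}=n^{3-c}$; rewriting the constant $c$ as $c+1$ in the sampling gives the claimed bound $1-n^{2-c}$.

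I do not foresee a real obstacle here: the only delicate part is setting up the equivalence in step one cleanly, together with picking a canonical $P_{s,t}$ so that the union bound is over a fixed family of $O(n^{3})$ events independent of $H$. Everything else is mechanical.
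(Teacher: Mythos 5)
Your proposal is correct in spirit and the structural equivalence in your first step (decomposition exists $\iff$ no length-$d$ window of consecutive internal vertices misses $H$) is sound, but it proves a quantitatively weaker statement than the lemma actually claims, and the discrepancy is not fixable by ``rewriting $c$'' without also changing $|H|$. Your union bound is over $O(n^{3})$ candidate windows (up to $n$ windows per pair, over $n^{2}$ pairs), giving failure probability at most $n^{3-c}$ rather than the stated $n^{2-c}$. Because $c$ appears \emph{both} in the size $|H| = c\tfrac{n}{d}\ln n$ \emph{and} in the exponent $2-c$, replacing $c$ by $c+1$ to absorb the extra factor of $n$ also changes the hitting-set size, i.e.\ you are silently proving a different lemma (one where $H$ is larger by a $(c+1)/c$ factor for the same guarantee). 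In the applications this is harmless, but as a proof of the lemma as stated it leaves a real gap.

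The paper closes this gap with a different and tighter argument: it union-bounds over the $n^{2}$ pairs $(u,v)$ only, using the single-window event ``one of the first $d$ vertices of the canonical shortest $uv$-path lies in $H$'', and then obtains the full decomposition \emph{deterministically} from the observation that every suffix of a shortest path is itself a shortest path. Concretely, once the first-$d$-vertices event holds for every pair, one can greedily pick $h_1$ among the first $d$ vertices of $P_{s,t}$, and then the suffix of $P_{s,t}$ starting at $h_1$ \emph{is} a shortest $h_1 t$-path, so the same event applied to the pair $(h_1,t)$ supplies $h_2$, and so on. (For this recursion to invoke the union bound correctly, the canonical shortest paths must be chosen consistently, e.g.\ via a shortest-path tree rooted at each sink $t$, so that suffixes of canonical paths are themselves canonical; the paper leaves this implicit.) This saves the factor of $n$ in the union bound. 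Your window-based approach avoids having to think about consistency of canonical paths, which is a genuine simplification, but it pays for it with the weaker exponent; if you want to match the stated bound you should either union-bound over pairs and invoke the suffix-is-shortest-path property as above, or observe that with sink-rooted canonical paths there are in fact only $O(n^{2})$ distinct windows across all paths (at most $n$ per sink tree).

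Two minor points. First, your hypergeometric estimate for $\Pr[H\cap W=\emptyset]$ is correct and in fact more careful than the paper's, which writes $(1-|H|/n)^d$ as if the draws were independent; your product form justifies that this is a valid upper bound. Second, the paper's appendix proof of the single-path lemma has a typo (``$=1-n^{-c}$'' where ``$=n^{-c}$'' is meant), which your version avoids.
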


This lemma implies, that to compute some shortest $st$-path with many hops,
we only need to know the distances of paths with few hops between 
the pairs $(\{s\} \cup H) \times (H \cup \{ t\})$.
The idea of these hitting-set arguments goes back to \cite{UllmanY91}.
\ifdefined\FOCSversion
The proof for \Cref{lem:allPathsHittingSet} can be found in the full version.
\else
The proof for \Cref{lem:allPathsHittingSet} can be found in \Cref{app:hittingSet}.
\fi

The other important ingredient for this section are the following two results by \cite{Zwick02},
which allows us to compute approximate $(\min, +)$-products and approximate distances.

\begin{lemma}[{\cite[Theorem 8.2]{Zwick02}}]
\label{lem:allPairsApproximateDistances}
Let $G$ be a directed graph with $n$ nodes and real edge weights in $[1,W]$,
then we can compute all-pairs-distances of $G$ in $\tilde{O}(n^\omega / \varepsilon \log W)$ time.

Or in other words, we can compute for any $n \times n$ matrix $D$ with entries in $[1,W]$,
a $(1+\varepsilon)$-approximation of the $n$th power of $D$ using the $(\min,+)$-product.
\end{lemma}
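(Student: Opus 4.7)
The plan is to use Zwick's two-stage reduction of approximate APSP to standard matrix multiplication. In the first stage I would reduce to iterated $(\min, +)$-squaring of the weighted adjacency matrix $D$ (with $D_{vv} = 0$ and $D_{uv} = \infty$ for non-edges). The true distance matrix equals $D^{(n-1)}$ in the $(\min, +)$ semiring, so I would compute $D^{(2^{\lceil \log n \rceil})}$ via $\lceil \log n \rceil$ iterated squarings. By entrywise monotonicity of $(\min, +)$-multiplication, any per-step multiplicative error $(1+\delta)$ compounds to a total error of at most $(1+\delta)^{\lceil \log n \rceil}$, which I would control by choosing $\delta = \Theta(\varepsilon / \log n)$ so that the final ratio is at most $1+\varepsilon$.

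In the second stage I would reduce each approximate $(\min, +)$-square to standard matrix multiplication by geometric bucketing. Given matrices $A, B$ with finite entries in $[1, R]$, I round each entry up to the nearest power of $(1+\delta)$, producing $L = O(\log R / \delta)$ scales. For each scale $k$ define the Boolean matrix $A^{(k)}$ with a $1$ exactly where $A_{i\ell}$ lies at scale $k$, and similarly $B^{(k)}$. For each target value $V = (1+\delta)^t$, a witness for $(A \star B)_{ij} \leq V \cdot (1+\delta)$ can be found by examining a constant number of scale-pairs $(k_1, k_2)$ with $(1+\delta)^{k_1} + (1+\delta)^{k_2} \leq V$ and checking whether the standard product $A^{(k_1)} B^{(k_2)}$ has a positive entry at $(i,j)$. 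Scanning target values from smallest to largest and recording the first witness for each pair $(i,j)$ then yields a $(1+\delta)$-approximate $(\min, +)$-product. The number of Boolean matrix products is $\tilde O(L)$, each of cost $O(n^\omega)$, so the stage runs in $\tilde O(n^\omega \log R / \delta)$ time.

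Combining the stages, intermediate matrices during squaring have entries in $[1, nW]$, so every squaring costs $\tilde O(n^\omega \log(nW)/\delta) = \tilde O(n^\omega \log W / \varepsilon)$. Summing over $\lceil \log n \rceil$ squarings gives the claimed total time $\tilde O(n^\omega \log W / \varepsilon)$, since the extra $\log n$ factor is absorbed in $\tilde O(\cdot)$.

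The main delicate point I expect is keeping the number of Boolean matrix products to $\tilde O(\log R / \delta)$ rather than the naive $\tilde O((\log R / \delta)^2)$: a single-scale bucketing yields only a $2$-approximation because both $A_{i\ell}$ and $B_{\ell j}$ can simultaneously sit at the same scale $k$ while contributing a sum up to $2(1+\delta)^k$. The aligned-pair scheme above fixes this by sweeping over target sums $V$ and using only the $O(1)$ scale-pairs whose sum falls within $(1+\delta)$ of $V$, so that the amortized cost across all targets is still $\tilde O(\log R / \delta)$ Boolean products. Once this subroutine is in place, the rest of the argument (preservation of the $(1+\delta)^k$-invariant under squaring and the final accounting of the $\log W$ factor) is straightforward bookkeeping.
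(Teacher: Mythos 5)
Your first stage—iterated $(\min,+)$-squaring with per-step error $\delta = \Theta(\varepsilon / \log n)$ so that the compounded ratio stays below $1+\varepsilon$—is a faithful reconstruction of Zwick's outer structure and is fine. The gap is in the second stage, where you reduce one approximate $(\min,+)$-product to fast matrix multiplication.

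Your geometric bucketing scheme does not achieve $\tilde O(L)$ Boolean products, and the brief justification you give does not rescue it. For a target $V=(1+\delta)^t$, the set of scale-pairs $(k_1,k_2)$ with $(1+\delta)^{k_1}+(1+\delta)^{k_2}$ in the window $(V/(1+\delta),\,V]$ is not $O(1)$: with $k_1\ge k_2$, the dominant scale $k_1$ already ranges over $\Theta(1/\delta)$ values (since $\log_{1+\delta}2 = \Theta(1/\delta)$), and for $k_1$ near $\log_{1+\delta}V$ the secondary scale $k_2$ can run over nearly the full range of $L$ scales while the sum still lands in the window. Sweeping $V$ from small to large and only checking ``new'' pairs does not help either: the pairs $(k_1,k_2)$ partition across the $L$ targets, so over the whole sweep you touch essentially all $\Theta(L^2)$ pairs. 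The issue is structural: a sum of two powers of $(1+\delta)$ carries no additive structure, so there is no analogue of the polynomial-encoding trick that lets you convolve scales cheaply. Even the standard remedy---for each $k_1$, bucketing $k_2$ by the difference $k_1-k_2$ up to $O(\log(1/\delta)/\delta)$ and lumping the negligible tail---gives $\tilde O(L/\delta)$ products, still a factor $1/\delta$ over budget. With $L = \Theta(\log(nW)/\delta)$ and $\delta = \Theta(\varepsilon/\log n)$, your scheme yields $\tilde O(n^\omega (\log W)^2 / \varepsilon^2)$ per squaring rather than $\tilde O(n^\omega \log W / \varepsilon)$, which overshoots the stated bound by roughly $\log W / \varepsilon$.

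Zwick avoids this by \emph{additive} scaling, not geometric bucketing: for each of $O(\log M)$ magnitude scales $r$ (powers of $2$), he maps entries $a$ to $\lceil aR/r\rceil$ (or $\infty$ if $a > r$), packing both matrices into small integers $\{0,\ldots,R\}$ with $R = O(1/\varepsilon)$, and then computes the \emph{exact} $(\min,+)$-product of the scaled matrices via the polynomial encoding $a \mapsto x^a$ in $\tilde O(R\,n^\omega)$ time. The additivity of the exponent is exactly what lets a single fast matrix product handle all $R$ ``levels'' at once; that is the step your geometric variant cannot replicate, and it is the source of the clean $\tilde O(n^\omega \log W / \varepsilon)$ bound. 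To close the gap you would need to replace your bucketing subroutine with Zwick's scale-and-round-to-small-integers reduction (or show an equally efficient convolution structure for geometric scales, which I do not see).
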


The original result \cite[Theorem 8.1]{Zwick02} for the approximate $(\min,+)$-products
is stated for square matrices, but it can also be used for rectangular matrices:

\begin{lemma}[{\cite[Theorem 8.1]{Zwick02}}]
\label{lem:approximateMinPlusProduct}
Let $A$ be an ${n^a \times n^b}$ and
$B$ be an ${n^b \times n^c}$ matrix,
each with real entries from $[1, W]$,
then we can compute a $(1+\varepsilon)$-approximate $(\min, +)$-product $A \star B$
in $\tilde{O}(n^{\omega(a,b,c)} /\varepsilon \log W)$ time.
\end{lemma}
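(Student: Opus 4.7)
The plan is to follow Zwick's proof of the square case in \cite{Zwick02}, with each invocation of $n\times n$ fast matrix multiplication replaced by the corresponding rectangular product of an $n^a\times n^b$ by $n^b\times n^c$ matrix. None of the scaling, rounding, or approximation arguments depends on the three matrix dimensions being equal, so the bottleneck simply becomes a rectangular product costing $\tilde O(n^{\omega(a,b,c)})$ per invocation instead of $\tilde O(n^\omega)$. Concretely, I would fix $r=\lceil 4/\varepsilon\rceil$ and enumerate the $L=O(\log W)$ dyadic scales $R\in\{2^i : 0\le i\le \lceil\log_2 (2W)\rceil\}$; for each $R$ define truncated, rescaled integer matrices $\tilde A^{(R)}\in\{0,\dots,r,\infty\}^{n^a\times n^b}$ by $\tilde A^{(R)}_{ik}=\lceil r A_{ik}/R\rceil$ when $A_{ik}\le R$ and $\tilde A^{(R)}_{ik}=\infty$ otherwise, with $\tilde B^{(R)}$ defined analogously. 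The algorithm computes the exact small-integer $(\min,+)$-products $\tilde C^{(R)}=\tilde A^{(R)}\star\tilde B^{(R)}$, lifts each back by $C^{(R)}_{ij}=(R/r)\,\tilde C^{(R)}_{ij}$, and returns $C_{ij}=\min_R C^{(R)}_{ij}$.

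The key step is computing each bounded-entry $(\min,+)$-product efficiently, which I would handle by the standard polynomial encoding trick, applied rectangularly. Working in the ring $\mathbb Z[x]/\langle x^{2r+1}\rangle$, set $\hat A_{ik}=x^{\tilde A^{(R)}_{ik}}$ (and $\hat A_{ik}=0$ at $\infty$-entries), and define $\hat B$ analogously. Then $(\hat A\hat B)_{ij}=\sum_\ell x^{\tilde A^{(R)}_{i\ell}+\tilde B^{(R)}_{\ell j}}$, so the minimum exponent with nonzero coefficient is exactly $\tilde C^{(R)}_{ij}$. Since each scalar ring operation costs $\tilde O(r)=\tilde O(1/\varepsilon)$ and all intermediate integer coefficients stay bounded by $n^b$, one rectangular product over this ring runs in $\tilde O(r\cdot n^{\omega(a,b,c)})=\tilde O(n^{\omega(a,b,c)}/\varepsilon)$ time by invoking the standard rectangular fast matrix multiplication algorithm treating polynomials as scalars.

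For correctness, $C_{ij}\ge(A\star B)_{ij}$ is immediate since every value contributing to $C_{ij}$ comes from rounding up two real entries of $A$ and $B$. For the reverse, I would pick the unique power of two $R$ with $(A\star B)_{ij}\le R<2(A\star B)_{ij}$; the optimal witness $k^*$ then satisfies $A_{ik^*},B_{k^*j}\le (A\star B)_{ij}\le R$ (because all weights are positive), so both entries survive truncation, and the additive rounding error is at most $2R/r \le 4(A\star B)_{ij}/r \le \varepsilon(A\star B)_{ij}$, yielding $C^{(R)}_{ij}\le(1+\varepsilon)(A\star B)_{ij}$. Summing the per-scale cost across the $L=O(\log W)$ scales gives total running time $\tilde O(L\cdot n^{\omega(a,b,c)}/\varepsilon)=\tilde O(n^{\omega(a,b,c)}\log W/\varepsilon)$, matching the claimed bound once $\poly\log n$ and $\poly\log(1/\varepsilon)$ factors are absorbed into $\tilde O(\cdot)$.

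The main subtlety I expect is verifying that the optimal witness $k^*$ simultaneously survives truncation in both $\tilde A^{(R)}$ and $\tilde B^{(R)}$ at the chosen scale $R$ without losing an extra factor; positivity of the weights and the choice $R\ge(A\star B)_{ij}$ together force both $A_{ik^*}$ and $B_{k^*j}$ to lie in $[1,R]$, but one must keep the dyadic rounding of $R$ and the ceiling in $\lceil rA_{ik}/R\rceil$ compatible so that no off-by-one pushes a legitimate witness into the $\infty$ pile. All remaining pieces (polynomial encoding, rectangular fast matrix multiplication over a polynomial ring, and the dyadic covering of $[1,2W]$) are direct rectangular analogues of Zwick's square-case machinery and introduce no new ideas beyond replacing each $n^\omega$ by $n^{\omega(a,b,c)}$.
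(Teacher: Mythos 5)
Your proof is correct and follows exactly the approach of the cited source [Zwick02, Theorem 8.1] (dyadic scaling and ceiling-rounding to small integers, then a small-integer $(\min,+)$-product via polynomial encoding and fast matrix multiplication over the polynomial ring); the paper does not reproduce this argument but simply observes, as you do, that nothing in Zwick's scaling/rounding/encoding machinery depends on the three dimensions being equal, so each $n^\omega$ becomes $n^{\omega(a,b,c)}$. Your handling of the one real subtlety — that the optimal witness $k^*$ survives truncation at the dyadic scale $R$ with $(A\star B)_{ij}\le R<2(A\star B)_{ij}$ because positivity of the entries forces $A_{ik^*},B_{k^*j}\le (A\star B)_{ij}\le R$ — is correct.
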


\subsection{Weighted Approximate Distances (Proof of \Cref{thm:weightedAPSPQuery})}

We will start with the results for weighted all-pairs-distances.
The following lemma allows us to extend the short hop distances of
\Cref{thm:dynamicSmallApproximateDistancesLargeWeightQuery}
to large hop distances.

\begin{lemma}\label{lem:smallHopToLargeHop}
Let $G$ be a directed graph with $n$ nodes and real weights from $[1,W]$.
Assume we are already given an
$(1+\delta)$-approximate $h$-hop distance matrix $D$.
Let $H \subset V$ be a random subset of size $c (n/h) \ln n$.
Define 
\begin{align}
\hat{D} := D_{V,H} \star D_{H,H}^{(\star |H|)} \star D_{H,V}, \label{eq:minPlusProduct}
\end{align}
where $\star$ is a $(\min,+)$-product and $(\star |H|)$ is the $|H|$th power using $(\min,+)$-products.

Then for any $u,v \in V$ we have $\dist_G(u,v) \le \hat{D}_{u,v}$.
Further, with probability at least $1-n^{2-c}$, we have
for any $u,v \in V$, where the shortest $uv$-path has at least $h$ hops,
that $\hat{D}_{u,v} \le (1+\delta) \dist_G(u,v)$.
Consequently, we can obtain approximate all-pairs-distances for any number of hops,
by taking the entry-wise minimum of $D$ and $\hat{D}$.

\end{lemma}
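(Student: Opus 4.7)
The plan is to establish the two inequalities separately: the lower bound $\dist_G(u,v) \le \hat{D}_{u,v}$ is a pure monotonicity argument, while the upper bound $\hat{D}_{u,v} \le (1+\delta)\dist_G(u,v)$ is where the hitting-set \Cref{lem:allPathsHittingSet} does the real work.

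For the lower bound, I would unfold the definition of \eqref{eq:minPlusProduct}: the entry $\hat{D}_{u,v}$ is the minimum, over sequences $h_1,\dots,h_{|H|+1} \in H$, of the sum $D_{u,h_1} + \sum_{i=1}^{|H|} D_{h_i,h_{i+1}} + D_{h_{|H|+1},v}$. By the definition of an $h$-hop distance matrix, every entry of $D$ is at least the corresponding true distance in $G$, so each such sum is at least the sum of true distances along the same sequence, which by repeated triangle inequality is at least $\dist_G(u,v)$. Taking the minimum preserves this bound.

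For the upper bound, I would fix a pair $u,v$ whose shortest path uses at least $h$ hops and apply \Cref{lem:allPathsHittingSet} to the randomly sampled $H$ of size $c(n/h)\ln n$. With probability at least $1-n^{2-c}$, the shortest $uv$-path admits a decomposition $u = x_0 \to x_1 \to \cdots \to x_k = v$ with $x_1,\dots,x_{k-1} \in H$ and each segment using at most $h$ hops. Applying the $(1+\delta)$-approximate $h$-hop guarantee segment-by-segment yields $\sum_{i=1}^k D_{x_{i-1},x_i} \le (1+\delta) \sum_{i=1}^k \dist_G(x_{i-1},x_i) = (1+\delta)\dist_G(u,v)$, where the last equality holds because the decomposed path is itself a shortest $uv$-path and hence additive.

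The main technical obstacle is to identify this sum with an actual candidate in the minimum defining $\hat{D}$, because \eqref{eq:minPlusProduct} commits to a fixed number of intermediate $H$-nodes (roughly $|H|$), while the hitting-set decomposition may use fewer. Since the shortest path is simple (all weights are positive), the interior nodes $x_1,\dots,x_{k-1}$ are distinct, so $k-1 \le |H|$ and padding is always feasible in the right direction. I would close the gap using that $D_{h,h} = 0$ for every $h \in H$, forced by the approximation sandwich $0 = \dist_G(h,h) \le D_{h,h} \le (1+\delta)\dist_G(h,h) = 0$: this lets me pad the decomposition with zero-cost self-loops at some $x_i \in H$ until its length matches the structure in \eqref{eq:minPlusProduct}. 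The padded sequence is then an explicit candidate whose value is at most $(1+\delta)\dist_G(u,v)$, giving the second inequality. The concluding remark about taking the entrywise minimum of $D$ and $\hat{D}$ is then immediate, since $D$ already handles pairs whose shortest path uses fewer than $h$ hops.
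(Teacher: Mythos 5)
Your proof is correct and follows essentially the same approach as the paper: both rely on \Cref{lem:allPathsHittingSet} to decompose the long shortest path into $\le h$-hop segments between $H$-nodes, and both observe that the $(\min,+)$-product structure realizes this decomposition. You are somewhat more explicit than the paper about the lower bound (monotonicity of the $(\min,+)$-product under $D \ge \dist_G$) and about the padding-with-zero-cost-self-loops needed to match the fixed power $|H|$; the paper instead argues this by noting that $D_{H,H}^{(\star |V|)}$ reduces to $D_{H,H}^{(\star |H|)}$ since shortest paths in the auxiliary graph on $H$ use at most $|H|$ hops, which is the same padding observation stated differently.
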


\begin{proof}
According to \Cref{lem:allPathsHittingSet} we have that (w.h.p)
every shortest path using at least $h$ hops
can be decomposed into segments
$s \to h_1 \to h_2 \to ... \to t$
where each $h_i \in H$ and each segment uses at most $n^s$ hops.
Thus we obtain the distances of all pairs $(u,v)$
where the shortest $uv$-path uses at least $h$ hops,
by computing the $(\min, +)$-product
\begin{align*}
D_{V,H} \star D_{H,H}^{(\star |V|)} \star D_{H, V}
\end{align*}
where $(\star |V|)$ refers to the $|V|$th power using the $(\min, +)$-product.
Note that the power $D_{H,H}^{(\star |V|)}$ can be reduced to $D_{H,H}^{(\star |H|)}$,
because the matrix $D_{H,H}$ can be considered
an edge weight matrix of a matrix on node set $H$
and then $D_{H,H}^{(\star |V|)}$ is the all-pairs-distance matrix.
Since every shortest path in that graph can use at most $|H|$ hops,
the $|H|$th power is enough.

\end{proof}

Note, for constant $c$ and $h = n^s$
(so $|H| = \tilde{O}(n^{1-s})$)
we can compute $D_{H,H}^{(\star |H|)}$ approximately
via \Cref{lem:allPairsApproximateDistances}
in $\tilde{O}(n^{(1-s)\omega} / \varepsilon \log W)$ time.
The remaining products of \eqref{eq:minPlusProduct}
can be computed approximately in
$\tilde{O}(n^{\omega(1,1-s,1)}/\varepsilon \log W)$ time
via \Cref{lem:approximateMinPlusProduct}.

\begin{proof}[Proof of \Cref{thm:weightedAPSPQuery}]
We maintain the $(1+\varepsilon)$-approximate $n^s$-hop distances via
\Cref{thm:dynamicSmallApproximateDistancesLargeWeightQuery}.
We want to compute \eqref{eq:minPlusProduct} of \Cref{lem:smallHopToLargeHop},
but split the computation of the $(\min,+)$-product into two parts.
After every update we sample a random set of nodes
$H \subset V$ of size $\tilde{O}(n^{1-s})$
as in \Cref{lem:smallHopToLargeHop},
and query the distances for the pairs $H \times V$ and $V \times H$.
Let $D_{H,V}, D_{V,H}$ be the obtained distance matrices,
then we compute a $(1+\varepsilon)$-approximation of the $(\min, +)$-product
$D_{V,H} \star D_{H,H}^{(\star |H|)}$ via \Cref{lem:approximateMinPlusProduct}.
The update time is thus
\begin{align*}
\tilde{O}(
	&\underbrace{
		  n^{\omega(1,s+\mu,1)-\mu} / \varepsilon^2
		+ u(n^s / \varepsilon, n))
		\log W
	}_{\text{\Cref{thm:dynamicSmallApproximateDistancesLargeWeightQuery}}} \\
	+& \underbrace{
		n^{\omega(1,\mu+s,1-s)} / \varepsilon^2
	\log W}_{\text{query }D_{V,H},D_{H,V}} 
	+ \underbrace{
		n^{\omega(1-s,1-s,1)} / \varepsilon \log W
	}_{\text{compute } D_{H,H}^{(\star |H|)} \star D_{H,V}}
).
\end{align*}

\paragraph{Queries}

When there is some query for the distances of the pairs $I \times J$,
then we compute $\hat{D} := D_{I,H} \star D_{H,H}^{(\star |H|)} \star D_{H,J}
= (D_{V,H} \star D_{H,H}^{(\star |H|)})_{I,H} \star D_{H,J}$,
where $D_{V,H} \star D_{H,H}^{(\star |H|)}$
was already computed during the last update.
Each entry yields the distance of the pair $(u,v) \in I \times J$,
if the shortest $uv$-path uses at least $n^s$ nodes
(see \Cref{lem:smallHopToLargeHop}).
Thus we also query the $n^s$-hop distances for the pairs $I \times J$
via \Cref{thm:dynamicSmallApproximateDistancesLargeWeightQuery}
and return for each pair $(u,v) \in I \times J$
the minimum of the two distances $\hat{D}_{u,v}$ and $D_{u,v}$.
The query time is thus
$$\tilde{O}(
	\underbrace{
		n^{\omega(\delta_1, s + \mu, \delta2)}/\varepsilon^2 \log W
	}_{
		\text{query } D_{I,J}
	} + \underbrace{
		n^{\omega(\delta_1,1-s,\delta_2)} / \varepsilon \log W
	}_{
		\text{compute } \hat{D}
	}),$$
where $\delta_1,\delta_2$ are such that $n^{\delta_1} = |I|$ and $n^{\delta_2} = |J|$.

Technically this maintains a $(1+\varepsilon)^2$ approximation,
but we can simply choose a slightly smaller approximation for
\Cref{thm:dynamicSmallApproximateDistancesLargeWeightQuery}
and \Cref{lem:approximateMinPlusProduct}
to obtain a $(1+\varepsilon)$ approximation.
Further, for $1-s = \mu + s$ we obtain the update/query complexties as stated in \Cref{thm:weightedAPSPQuery}.

Note that \Cref{thm:dynamicSmallApproximateDistancesLargeWeightQuery}
works against non-oblivious adversaries,
so \Cref{thm:weightedAPSPQuery} works against
non-oblivious adversaries as well,
because we sample a new random hitting-set $H$ after every update.

\end{proof}

We can maintain all-pairs-distances explicitly, 
by querying all distance after every single update.
Thus we obtain the following result:

\begin{theorem}[Approximate APSP, Real weights, Directed, Almost-$n^2$]
\label{thm:weightedAPSP}
Let $G$ be a directed graph with $n$ nodes and real weights from $[1,W]$.
Then for any $\varepsilon > 0$ there exists a Monte Carlo dynamic algorithm
that maintains $(1+\varepsilon)$-approximate all-pairs-distances of $G$ in
$\tilde O(n^{\omega(1,1,0.5)}/\varepsilon^2 \log W)$ update time.
The pre-processing requires $\tilde{O}(n^{1/2 + \omega} / \varepsilon \log W)$ time.

For current $\omega$ the update time is $\tilde O(n^{2.045} / \varepsilon^2 \log W)$\cite{GallU18}.
\end{theorem}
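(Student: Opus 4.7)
The plan is to obtain \Cref{thm:weightedAPSP} as a direct corollary of the batch-query algorithm \Cref{thm:weightedAPSPQuery} by simply invoking a full all-pairs query after every single edge update, with the parameter $s$ tuned to balance the update and query costs.

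First I would instantiate the algorithm of \Cref{thm:weightedAPSPQuery} with $s = 1/2$, which makes the exponent $n^{\omega(1,1,1-s)+1-2s}$ in the update cost collapse to $n^{\omega(1,1,1/2)}$ (since the $1-2s$ additive correction vanishes), and makes the third update term $n^{\omega(1-s,1-s,1)} = n^{\omega(1/2,1/2,1)}$, which by subadditivity of $\omega(\cdot,\cdot,\cdot)$ is bounded by $n^{\omega(1,1,1/2)}$. The Sankowski-style term $n^{\sankowski+1/2}$ is also dominated by $n^{\omega(1,1,1/2)}$ for current values of $\omega$, and the preprocessing $\tilde{O}(n^{s+\omega}/\varepsilon \log W)$ matches the claimed $\tilde{O}(n^{1/2+\omega}/\varepsilon \log W)$.

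Next, after processing each edge update, I would issue a single batch query with $I = J = V$, i.e.\ $\delta_1 = \delta_2 = 1$. By \Cref{thm:weightedAPSPQuery} this costs $\tilde{O}(n^{\omega(1,1-s,1)}/\varepsilon^2 \log W)$ time, which for $s = 1/2$ equals $\tilde{O}(n^{\omega(1,1/2,1)}/\varepsilon^2 \log W) = \tilde{O}(n^{\omega(1,1,1/2)}/\varepsilon^2 \log W)$ by symmetry of $\omega(\cdot,\cdot,\cdot)$. Summing the update and the full query yields the claimed worst-case update time. Correctness (both the $(1+\varepsilon)$ approximation guarantee and robustness against non-oblivious adversaries) is inherited directly from \Cref{thm:weightedAPSPQuery}, since we only re-use it as a black box and the freshly sampled hitting set inside the query routine never depends on the adversary's prior behaviour.

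The main thing to verify carefully is that $s = 1/2$ really is the optimal balance: the update terms are increasing in $s$ (because of the $n^{\sankowski+s}$ and $n^{\omega(1,1,1-s)+1-2s}$ factors at small $s$) while the full-query term $n^{\omega(1,1-s,1)}$ is decreasing in $s$, and the crossover happens exactly at $s = 1/2$ under mild properties of $\omega$, so no finer trade-off improves the bound. The only minor subtlety is bookkeeping the $\varepsilon$ factors: the $\log(1/\varepsilon)$ overhead from rescaling and the extra $1/\varepsilon$ factors from \Cref{lem:approximateMinPlusProduct} inside the query routine are absorbed into the $\tilde{O}$ and $\varepsilon^2$ respectively, matching the stated $\tilde{O}(n^{\omega(1,1,0.5)}/\varepsilon^2 \log W)$ bound.
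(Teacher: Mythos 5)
Your proposal is correct and is essentially the same as the paper's proof: invoke \Cref{thm:weightedAPSPQuery} with $s=1/2$ (which, via the theorem's internal balancing $1-s=\mu+s$, sets $\mu=0$) and query $I=J=V$ after each update, then observe that all remaining terms are subsumed by $n^{\omega(1,1,0.5)}$. Your additional remarks on term-by-term domination and the $\varepsilon$-bookkeeping merely flesh out what the paper leaves implicit.
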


\begin{proof}
\Cref{thm:weightedAPSP} is implied by \Cref{thm:weightedAPSPQuery},
by performing a query to $I = J = V$ after every update.
In that case the update and query time are balanced
for $\mu = 0, s = 0.5$, which yields
$\tilde{O}(n^{\omega(1,0.5,1)} / \varepsilon^2 \log W) = \tilde{O}(n^{2.044183} / \varepsilon^2 \log W)$
update time.
\end{proof}

To obtain an algorithm for dynamic single-source distances,
we could just use \Cref{thm:weightedAPSPQuery} 
to query the distances from the source-node.
However, the update time can be slightly improved as follows:

\begin{theorem}[Approximate SSSP, Real weights, Directed]
\label{thm:SSSP}
Let $G$ be a directed graph with $n$ nodes and real weights from $[1,W]$.
Then for any $0 \le s, \mu \le 1$ and $\varepsilon > 0$
there exists a Monte Carlo dynamic algorithm that maintains
$(1+\varepsilon)$-approximate all-pairs-distances of $G$ in
$\tilde O(
	(
		 n^{\sankowski+s}
		+n^{\omega(1,1,s+\mu)-\mu}
		+n^{\omega(1-s,\mu+s,1)}
	)/\varepsilon^2 \log W
)$ update time.
The pre-processing requires $\tilde{O}(n^{s+\omega} / \varepsilon \log W)$ time.

For current $\omega$ and $s \approx \sSSSP, \mu \approx \muSSSP$
the update time is $\tilde O(n^{\updateSSSP} / \varepsilon^2 \log W)$.
\end{theorem}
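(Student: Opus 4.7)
\textbf{Proof proposal for \Cref{thm:SSSP}.} The plan is to reuse the architecture of \Cref{thm:weightedAPSPQuery} but prune every computation that is not strictly required when the left factor of the hitting-set min-plus product is a single row. Concretely, I would instantiate \Cref{thm:dynamicSmallApproximateDistancesLargeWeightQuery} with parameters $s$ and $\mu$ as given, so that $(1+\varepsilon)$-approximate $n^s$-hop distances can be queried between any $I \times J$ in time $\tilde O(n^{\omega(\delta_1, s+\mu, \delta_2)}/\varepsilon^2 \log W)$ with the short-hop update cost $\tilde O((n^{\sankowski+s}/\varepsilon + n^{\omega(1,s+\mu,1)-\mu}/\varepsilon^2)\log W)$. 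Throughout, $t$ denotes the fixed source.

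After every edge update I would resample, independently of any prior randomness, a hitting set $H \subset V$ of size $\Theta(n^{1-s}\log n)$, so that \Cref{lem:allPathsHittingSet} applies with high probability for the current update even against non-oblivious adversaries (the resampling is the same trick used in the proof of \Cref{thm:weightedAPSPQuery}). I then perform exactly three short-hop batch queries against the maintained data structure: $D_{t,V}$, $D_{t,H}$, and $D_{H,V}$. Of these, $D_{H,V}$ costs $\tilde O(n^{\omega(1-s,s+\mu,1)}/\varepsilon^2 \log W)$, while the two queries starting from the single row $\{t\}$ cost only $\tilde O(n^{1+s+\mu}/\varepsilon^2 \log W)$ and $\tilde O(n^{1+\mu}/\varepsilon^2 \log W)$, both subsumed by the other terms. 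Next I compute $D_{H,H}^{(\star|H|)}$ in $\tilde O(n^{(1-s)\omega}/\varepsilon \log W)$ time by feeding $D_{H,H}$ into Zwick's static algorithm (\Cref{lem:allPairsApproximateDistances}); this is again subsumed for the parameters of interest. Finally I fold the pieces together as a row vector via two approximate min-plus products (\Cref{lem:approximateMinPlusProduct}): first $D_{t,H} \star D_{H,H}^{(\star|H|)}$ (a $1 \times n^{1-s}$ by $n^{1-s}\times n^{1-s}$ product, cost $\tilde O(n^{2(1-s)}/\varepsilon \log W)$), then the result $\star\, D_{H,V}$ (a $1 \times n^{1-s}$ by $n^{1-s}\times n$ product, cost $\tilde O(n^{2-s}/\varepsilon \log W)$). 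The reported SSSP vector is the entrywise minimum of this composed row with $D_{t,V}$.

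For correctness, I would invoke \Cref{lem:allPathsHittingSet}: a shortest $t \to v$ path using more than $n^s$ hops decomposes into at most $|H|$ segments between $t$, hitting-set nodes, and $v$, each of at most $n^s$ hops. The composed row therefore approximates the distance of every such pair to within $(1+\varepsilon)$ (the product of three approximate min-plus steps inflates the error by a constant, absorbed by rescaling $\varepsilon$ at the outset). Paths of at most $n^s$ hops are already approximated by $D_{t,V}$. Taking the entrywise minimum gives a valid $(1+\varepsilon)$-approximate distance from $t$ to every node, and the high-probability bound comes from a union of the $\tilde O(1)$-error events of the short-hop data structure and the hitting-set event.

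The bookkeeping step I expect to be the only mildly delicate one is identifying \emph{which} terms of \Cref{thm:weightedAPSPQuery} collapse in the single-source setting and which remain. The bottleneck query in \Cref{thm:weightedAPSPQuery} was $D_{V,H}$, followed by the composition $D_{V,H}\star D_{H,H}^{(\star|H|)}$ of dimensions $n \times n^{1-s}$ times $n^{1-s}\times n^{1-s}$, costing $\tilde O(n^{\omega(1,1-s,1-s)}/\varepsilon^2\log W)$; both disappear here because the left factor has a single row $\{t\}$. What survives is the $D_{H,V}$ query, giving the term $n^{\omega(1-s,s+\mu,1)}$, together with the intrinsic update cost of the short-hop data structure, namely $n^{\sankowski+s}$ and $n^{\omega(1,1,s+\mu)-\mu}$. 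These are exactly the three terms in the theorem statement. The hard part is really just the parameter balancing: jointly optimizing $s$ and $\mu$ under current rectangular matrix multiplication bounds to land on $s \approx \sSSSP$, $\mu \approx \muSSSP$, and the claimed $\tilde O(n^{\updateSSSP}/\varepsilon^2\log W)$ update time; the $n^{s+\omega}/\varepsilon$ preprocessing is inherited verbatim from \Cref{thm:dynamicSmallApproximateDistancesLargeWeightQuery}.
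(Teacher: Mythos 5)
You follow the APSP architecture of \Cref{thm:weightedAPSPQuery} almost verbatim: query hitting-set distances from the short-hop structure and then assemble the long-hop distances via the $(\min,+)$-chain $D_{t,H} \star D_{H,H}^{(\star|H|)} \star D_{H,V}$ using Zwick's static routine (\Cref{lem:allPairsApproximateDistances}, \Cref{lem:approximateMinPlusProduct}), finishing with an entrywise minimum against $D_{t,V}$. The paper's proof of \Cref{thm:SSSP} takes a genuinely different and leaner finish that is specific to the single-source setting. It queries $D_{H\cup\{t\},V}$ once, builds an auxiliary graph $G_H$ on node set $V$ whose only edges go from $H\cup\{t\}$ to $V$ with weight $D_{u,w}$, and runs plain Dijkstra from $t$ on $G_H$. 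Dijkstra naturally propagates through arbitrary-length chains of hitting-set nodes, so the $|H|$th $(\min,+)$-power of $D_{H,H}$ never has to be formed; and since $G_H$ has $O(n^{2-s})$ edges, the Dijkstra step costs $\tilde O(n^{2-s})$, which is always dominated by $n^{\omega(1-s,s+\mu,1)}$. Your route, by contrast, carries the extra $\tilde O(n^{(1-s)\omega}/\varepsilon\log W)$ cost from Zwick on $D_{H,H}$. For the optimal $s \approx \sSSSP$, $\mu \approx \muSSSP$ this sits below $n^{\updateSSSP}$, so your headline bound $\tilde O(n^{\updateSSSP}/\varepsilon^2\log W)$ does come out; but the theorem's update formula is asserted for all $0 \le s,\mu \le 1$, and there are settings where your extra term strictly dominates every term the theorem lists (e.g.\ $s=0.1$, $\mu=0.05$ gives $(1-s)\omega \approx 2.14$ whereas $\omega(1-s,s+\mu,1) \approx 1.9$, $\omega(1,1,s+\mu)-\mu \approx 1.95$, and $\sankowski+s \approx 1.63$). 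So your proposal recovers the numerical conclusion but not the full parametrized bound as stated. The Dijkstra finish is also cleaner on the approximation side, composing a single approximate layer with an exact shortest-path computation rather than stacking several approximate $(\min,+)$-steps that each require rescaling $\varepsilon$. The moral is that the inner APSP on $H$ is an artifact of the all-pairs formulation; for one source it is avoidable, and avoiding it is exactly what buys the paper its tighter general formula.
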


\begin{proof}[Proof of \Cref{thm:SSSP}]
Let $v \in V$ be the source node
for which we want to maintain single source distances.
We maintain distances upto $n^s$ hops via
\Cref{thm:dynamicSmallApproximateDistancesLargeWeightQuery}.

After every update we sample a hitting-set $H$
as in \Cref{lem:smallHopToLargeHop}.
We query the approximate $h$-hop distances $D_{H\cup\{v\},V}$
via \Cref{thm:dynamicSmallApproximateDistancesLargeWeightQuery}
and construct a graph $G_H$ with node set $V$
and edges between every $(u,w) \in (H \cup \{v\}) \times V$
with cost $D_{u,w}$.

Via \Cref{lem:allPathsHittingSet} we know
that w.h.p. the single-source distances rooted at $v$ in $G_H$
are the same as in $G$,
so we can compute them in
$O(|V||H|) = \tilde{O}(n^{2-s})$ time using Dijkstra.
The time for constructing $G_H$ and running Dijkstra
is subsumed by querying $D_{H \cup \{v\}, V}$,
so the update time for the single-source algorithm is
$
\tilde{O}(
	(n^{\omega(1,s+\mu,1)-\mu} / \varepsilon^2
	+ u(n^s / \varepsilon, n)
	+ n^{\omega(1-s,s+\mu,1)} / \varepsilon^2) \log W
).
$

As \Cref{thm:dynamicSmallApproximateDistancesLargeWeightQuery}
works against non-oblivious adversaries,
and we sample a new hitting-set $H$ after every update,
\Cref{thm:SSSP} works against non-oblivious adversaries as well.
\end{proof}

\subsection{Unweighted Undirected Approximate Distances (Proof of \Cref{thm:undirectedApproximateDistancesQuery})}

For undirected graphs we exploit the idea from \cite{RodittyZ12}. The high-level idea is as follows:
We have a hitting set $H$ and want to compute the $st$-distance, while assuming the shortest path uses at least $n^s$ hops.
The path can be split into segments $s \to h_1 \to ... \to h_k \to t$
but to find this path, we would usually need to figure out which element of $H$ is $h_1$.
In the case of unweighted, undirected graphs,
this can be solved approximately by choosing any $x,y \in H$ with distance
$\dist(s,x), \dist(y,t) \le n^s/2 \varepsilon$.
Then we know $\dist(x,y) \le \dist(s,t) + n^s \varepsilon$,
because we can find a $xy$-path via the segments $x \to s \to t \to y$,
and also have $\dist(s,t) \le \dist(x,y) + n^s \varepsilon$,
because we can fine a $st$-path via the segments $s \to x \to y \to t$.
Thus we have
$(1-\varepsilon) \dist(x,y) \le \dist(s,t) \le (1+\varepsilon) \dist(x,y)$
for unweighted undirected graphs where the shortest $st$-path uses at least $n^s$ hops.

\begin{lemma}\label{thm:undirectedLongPaths}
Let $G$ be an undirected graph with $n$ nodes and integer edge weights in $\{1,2,...,W\}$.
Let $H \subset V$ be a random subset of size
$2cn/(h \varepsilon) \ln h$ for some constant $c > 0$.
Assume we are given $(1+\varepsilon)$-approximate distances $D_{H \times V}$, 
where all pairs $u,v$ with $\dist_G(u,v) > Wh$ 
are allowed to have $D_{u,v} > (1+\varepsilon) Wh$.

Then we can construct a distance oracle $\hat{D}$ in
$\tilde{O}((n/h)^{\omega} / \varepsilon^{1+\omega})$ time,
such that each query $\hat{D}(u,v)$ for any $u,v \in V$ requires only $O(1)$ time,
and with probability at least $n^{2-c}$:
\begin{itemize}
\item $\dist_G(u,v) \le \hat{D}_{u,v}$
\item $\hat{D}_{u,v} \le (1+\varepsilon)^3 \dist_G(u,v)$ if $\dist_G(u,v) \ge Wh$.
\end{itemize}

\end{lemma}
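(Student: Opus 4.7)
The plan is to build a Thorup--Zwick style distance oracle on top of the given $D_{H,V}$. First, I would compute an approximate APSP matrix $\bar D \in \R^{H \times H}$ by applying \Cref{lem:allPairsApproximateDistances} to the weighted complete graph on $H$ with edge weights $D_{H,H}$; this costs $\tilde O(|H|^\omega/\varepsilon) = \tilde O((n/h)^\omega/\varepsilon^{1+\omega})$. Since every $D_{x,y}$ overestimates $\dist_G(x,y)$ and is a $(1+\varepsilon)$-estimate whenever $\dist_G(x,y)\le Wh$, a standard decomposition of any shortest $G$-path into pieces of fewer than $h\varepsilon/2$ hops via \Cref{lem:allPathsHittingSet} will show that $\bar D_{x,y}$ is a $(1+\varepsilon)^2$-approximation of $\dist_G(x,y)$ for all $x,y\in H$ (with high probability). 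Next, for each $v\in V$ I precompute a pivot $p(v) := \arg\min_{x\in H} D_{x,v}$. The oracle is then
\[
\hat D(u,v) \;:=\; D_{p(u),u} + \bar D_{p(u),p(v)} + D_{p(v),v},
\]
which is $O(1)$ per query once $\bar D$ and the pivot table are stored.

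The never-underestimate property $\dist_G(u,v)\le\hat D(u,v)$ follows immediately from the triangle inequality (critically using undirectedness, so that $\dist_G(u,p(u)) = \dist_G(p(u),u)$) together with the fact that each of the three terms upper-bounds the corresponding $G$-distance. The hard part is the approximation bound when $\dist_G(u,v)\ge Wh$. Because edge weights are at most $W$, the shortest $uv$-path uses at least $h$ hops, so its first $\lceil h\varepsilon/2\rceil$ hops and its last $\lceil h\varepsilon/2\rceil$ hops form two disjoint subpaths. The hitting-set size $|H|=2cn\ln h/(h\varepsilon)$ is calibrated exactly so that \Cref{lem:allPathsHittingSet} applies with $d = h\varepsilon/2$, placing a vertex $x^\star\in H$ on the first subpath and $y^\star\in H$ on the last, with $\dist_G(u,x^\star),\dist_G(v,y^\star)\le Wh\varepsilon/2$. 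By the defining property of $p(u)$ I then obtain
\[
\dist_G(u,p(u)) \;\le\; D_{p(u),u} \;\le\; D_{x^\star,u} \;\le\; (1+\varepsilon)\dist_G(u,x^\star) \;\le\; (1+\varepsilon)Wh\varepsilon/2,
\]
and symmetrically for $p(v)$. Substituting these into $\hat D(u,v)$, and using $\bar D_{p(u),p(v)}\le(1+\varepsilon)^2\dist_G(p(u),p(v))$ expanded through $u$ and $v$ by the triangle inequality, the additive overhead is $O(Wh\varepsilon)$, which converts to a multiplicative $(1+O(\varepsilon))$ factor exactly because $\dist_G(u,v)\ge Wh$; rescaling the internal $\varepsilon$ by a fixed constant then yields the stated $(1+\varepsilon)^3$ bound.

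For the complexity, the APSP computation dominates at $\tilde O((n/h)^\omega/\varepsilon^{1+\omega})$, matching the claim; the pivot table is obtained by a single linear scan over $D_{H,V}$ and introduces no asymptotic overhead in the parameter regimes in which this lemma is invoked. The $O(1)$ query time is immediate from the closed-form definition of $\hat D$.
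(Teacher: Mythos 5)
Your proof follows essentially the same scheme as the paper's: compute an approximate all-pairs distance matrix on $H$ via Zwick's approximate $(\min,+)$-powering in $\tilde O(|H|^\omega/\varepsilon)$ time, assign each vertex a nearby pivot in $H$, and apply the triangle inequality in both directions together with the hitting-set decomposition of long shortest paths. The single substantive deviation is the form of the oracle: the paper stores, for each $v$, an \emph{arbitrary} $x_v\in H$ with $D_{v,x_v}\le 0.25\,Wh\varepsilon$ (returning $\infty$ if none exists) and outputs $\Delta_{x_u,x_v}+0.5\,Wh\varepsilon$ with a \emph{fixed} additive correction, whereas you take $p(v)=\arg\min_{x\in H}D_{x,v}$ and output the Thorup--Zwick style sum $D_{p(u),u}+\bar D_{p(u),p(v)}+D_{p(v),v}$. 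Your variant is slightly cleaner in one respect: because $p(v)$ always exists and your upper-bound argument only engages the ``$p(u)$ is close'' claim when $\dist_G(u,v)\ge Wh$ guarantees a witness $x^\star$, you sidestep the separate case that the paper needs for nodes with no nearby hitter. Both formulas lead to the same $(1+O(\varepsilon))$ multiplicative bound once the $O(Wh\varepsilon)$ additive overhead is absorbed using $\dist_G(u,v)\ge Wh$, and both implicitly rely on the application context to absorb the $O(|H|\,n)$ cost of the pivot assignment, which the lemma statement's time bound does not formally cover (the paper has the same gap, and it is indeed dominated where the lemma is invoked). Your rescaling remark at the end is the same device the paper uses to convert $(1+\varepsilon)^{O(1)}$ into $(1+\varepsilon)^3$.
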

 
\begin{proof}
According to \Cref{lem:allPathsHittingSet},
with probability at least $n^{2-c}$,
every shortest path using at least $0.25 h\varepsilon$ hops
can be decomposed into segments
$s \to h_1 \to h_2 \to ... \to t$ where each $h_i \in H$
and each segment uses at most $0.25 h\varepsilon$ hops

Further, any $h$-hop path can have cost at most $Wh$,
thus the given $(1+\varepsilon)$-approximate distances upto $Wh$
are also $(1+\varepsilon)$-approximate $h$-hop distances.
This means can compute $(1+\varepsilon)^2$-approximate distances
for the pairs $H \times H$ at extra cost
$\tilde{O}((n/h)^\omega / \varepsilon^{\omega+1})$
by computing $\Delta := D_{H,H}^{(\star |H|)}$.

Next, we assign each node $v \in V$ some node $x_v \in H$
with $D_{v,x_v} \le 0.25 Wh\varepsilon$.
This requires only $O(|H|n) = O(n^2/(h\varepsilon))$ time for all $v \in V$ together.
(We will later explain what happens to nodes, where no such $x_v \in H$ exists.)

Note that this also implies $\dist(u,x_u),\dist(v,x_v) \le 0.25 Wh\varepsilon$
and since the graph is undirected, this leads to
\begin{align*}
\dist(x_u,x_v)
&\le \dist(x_u,u) + \dist(u,v) + \dist(v,x_v) \\
&\le \dist(u,v) + 0.5 Wh\varepsilon,\\
\dist(u,v)
&\le \dist(u,x_u) + \dist(x_u,x_v) + \dist(x_v, v) \\
&\le \dist(x_u,x_v) + 0.5 Wh\varepsilon.
\end{align*}
Thus for any pair $u,v$ with $\dist(u,v) \ge Wh$ we have:
$$\dist(u,v) \le \Delta_{x_u,x_v} + 0.5Wh\varepsilon \le \dist(u,v) (1+\varepsilon)^3.$$

If some node $u \in V$ has no $x_u \in H$,
then there also is no $v \in V$ with $Wh \le \dist(u,v) < \infty$,
as otherwise there should be some $h \in H$ with
$\dist(u,h) \le Wh\varepsilon$ along the path from $u$ to $v$.
Thus for some distance query for a pair $u,v$,
we either return $\Delta_{x_u,x_v} + 0.5Wh\varepsilon$
or $\infty$ if there is no $x_u$ or $x_v$.
\end{proof}

\begin{proof}[Proof of \Cref{thm:undirectedApproximateDistancesQuery}]
Let $0 \le s \le 1$ be some parameter.
We maintain the $(1+\varepsilon)$-approximate distances $D$ upto $Wn^s$ via
\Cref{thm:dynamicSmallApproximateDistancesQuery}.
After every update we also construct the distance oracle $\hat{D}$
from \Cref{thm:undirectedLongPaths} for a new random hitting-set $H$
of size $\tilde{O}(n^{1-s} / \varepsilon)$.
To construct this oracle, we need to query the distances $D_{H,V}$ for the pairs $V \times H$,
so for $W = n^\ell$ the update time becomes $\tilde{O}(
	n^{\omega(1,1,s+\mu + \ell)+\mu} / \varepsilon
	+ u(n^{s+\ell}, n)
	+ n^{\omega(1-s,s+\mu+\ell,1)} / \varepsilon^2
	+ n^{(1-s)\omega} / \varepsilon^{1+\omega})$,

When answering some query for all pairs in some set $I \times J$,
we compute $D_{I,J}$ and take the entry-wise minimum with $\hat{D}_{I,J}$.
This way we obtain $(1+\varepsilon)^3$ approximate distances in
$\tilde{O}(n^{\omega(\delta_1,s+\mu+\ell,\delta_2)})$ time.
By choosing a slightly smaller approximation factor in
\Cref{thm:dynamicSmallApproximateDistancesQuery,thm:undirectedLongPaths}
this can be made $(1+\varepsilon)$-approximate.

Note that \Cref{thm:dynamicSmallApproximateDistancesQuery}
works against non-oblivious adversaries,
so \Cref{thm:undirectedApproximateDistancesQuery} works against
non-oblivious adversaries as well,
because we sample a new random hitting-set $H$ after every update.
\end{proof}

\begin{theorem}[Approx APSP, Unweighted (or with $W$), Undirected, $n^2$]
\label{thm:undirectedApproximateDistances}
Let $G$ be an undirected graph with $n$ nodes and integer weights from $\{1,...,W\}$.
Then for any $0 < \varepsilon$ 
there exists a Monte Carlo dynamic algorithm that maintains 
$(1+\varepsilon)$-approximate all-pairs-distances of $G$ in
$
\tilde{O}( Wn^{\weightUnwUndAPSP} / \varepsilon + n^2 / \varepsilon^{1+\varepsilon})
$ update time.
The pre-processing requires $\tilde{O}(Wn^{\preUnwUndAPSP} / \varepsilon)$ time.

\end{theorem}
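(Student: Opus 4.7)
The plan is to reduce the explicit-APSP problem to the query version already established in \Cref{thm:undirectedApproximateDistancesQuery} and then choose the trade-off parameters $s$ and $\mu$ to balance the resulting expression. Concretely, I would run the data-structure of \Cref{thm:undirectedApproximateDistancesQuery} and, after each edge update, issue a single batch-query with $I=J=V$ so that the entire $n\times n$ distance matrix is recomputed. The total cost per update is then the update time of \Cref{thm:undirectedApproximateDistancesQuery} plus the cost of this one query with $\delta_1=\delta_2=1$.

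Using the exponent of \Cref{lem:dynamicMatrixInverse}, i.e.\ $u(n^{s+\ell},n)=\tilde O(n^{\sankowski+s+\ell})$ where $\ell=\log_n W$, the per-update cost becomes
\[
\tilde O\Bigl(
n^{\omega(1,1,s+\mu+\ell)+\mu}/\varepsilon
+n^{\sankowski+s+\ell}
+n^{\omega(1-s,s+\mu+\ell,1)}/\varepsilon^{2}
+n^{(1-s)\omega}/\varepsilon^{1+\omega}
+n^{\omega(1,s+\mu+\ell,1)}/\varepsilon
\Bigr).
\]
Observe that the query contribution $n^{\omega(1,s+\mu+\ell,1)}/\varepsilon$ is dominated by the first update term because $\omega(1,1,c)+\mu\ge \omega(1,c,1)$ for any $c\ge 0$ (by symmetry of $\omega$ and $\mu\ge 0$), so we only need to control the remaining four terms.

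The next step is parameter selection. I would set $s:=1-2/\omega\approx\sUnwUndAPSP$ so that $(1-s)\omega=2$, which turns the last term into exactly $n^{2}/\varepsilon^{1+\omega}$, matching the second summand in the target bound. With this choice, $n^{\sankowski+s+\ell}=\tilde O(Wn^{\sankowski+1-2/\omega})=\tilde O(Wn^{1.686})$, which is already below $Wn^{1.844}/\varepsilon$. To handle the two remaining rectangular-multiplication terms I would choose $\mu\approx\muUnwUndAPSP$ so that $s+\mu+\ell$ stays in the regime where $\omega(1,1,s+\mu+\ell)\le 2$ (currently this holds whenever $s+\mu+\ell\lesssim 0.313$, which is true for small enough $W$; for larger $W$ the same calculation goes through with the general bounds from \Cref{app:omega} and absorbs the extra $W$ into the $Wn^{1.844}$ term). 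A direct computation using the current best bounds on $\omega(a,b,c)$ then verifies
\[
n^{\omega(1,1,s+\mu+\ell)+\mu}
+n^{\omega(1-s,s+\mu+\ell,1)}
\;\le\; Wn^{\weightUnwUndAPSP},
\]
giving the desired $\tilde O(Wn^{1.844}/\varepsilon+n^{2}/\varepsilon^{1+\omega})$ update time. The pre-processing bound follows verbatim from the pre-processing bound of \Cref{thm:undirectedApproximateDistancesQuery} with the same choice of $s$.

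The main obstacle is the last routine step: verifying numerically that with $s=1-2/\omega$ and a suitably small $\mu$ both rectangular-multiplication terms $n^{\omega(1,1,s+\mu+\ell)+\mu}/\varepsilon$ and $n^{\omega(1-s,s+\mu+\ell,1)}/\varepsilon^{2}$ stay below $Wn^{\weightUnwUndAPSP}/\varepsilon$ uniformly in $\ell=\log_n W$. This is not conceptually difficult but requires carefully invoking the piecewise-linear upper bounds on $\omega(a,b,c)$ from \cite{GallU18} (or \Cref{app:omega}); the $W$-factor that appears naturally from $\ell$ in the first argument of $\omega$ then matches the $W$-factor in the stated update time, which is the reason the bound takes the additive form $Wn^{1.844}+n^{2}/\varepsilon^{1+\omega}$ rather than purely $n^{2}$.
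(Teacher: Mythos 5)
Your high-level approach is the same as the paper's: run \Cref{thm:undirectedApproximateDistancesQuery} and issue a full batch query with $I=J=V$ after every update, then balance parameters. You also correctly observe that the query term $n^{\omega(1,s+\mu+\ell,1)}/\varepsilon$ is dominated by the first update term $n^{\omega(1,1,s+\mu+\ell)+\mu}/\varepsilon$ since $\omega$ is symmetric and $\mu\ge 0$, and your choice $s=1-2/\omega$ to make $n^{(1-s)\omega}=n^2$ matches the paper exactly.

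However, there is a concrete gap in the parameter selection: you choose $\mu\approx 0.15>0$, but this choice breaks the claimed bound. With $s=1-2/\omega\approx 0.157$ and $\mu\approx 0.15$, the first update term of \Cref{thm:undirectedApproximateDistancesQuery} is $n^{\omega(1,1,s+\mu+\ell)+\mu}/\varepsilon$. Even in the most favorable regime where $\omega(1,1,s+\mu+\ell)=2$, this is $n^{2+\mu}/\varepsilon\approx n^{2.15}/\varepsilon$. For unweighted graphs ($\ell=0$, $W=1$) the target bound is $n^{1.844}/\varepsilon+n^2/\varepsilon^{1+\omega}$, and $n^{2.15}$ is not bounded by either summand, so the inequality
\[
n^{\omega(1,1,s+\mu+\ell)+\mu}+n^{\omega(1-s,s+\mu+\ell,1)}\le Wn^{\weightUnwUndAPSP}
\]
you assert in the final ``routine verification'' step is in fact false for small $W$. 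The issue is that since the query already has size $n\times n$, there is no benefit to a positive $\mu$ here: the additive $+\mu$ in the exponent is pure loss. The correct choice, and the one the paper makes, is $\mu=0$, under which the first term becomes $n^{\omega(1,1,s+\ell)}/\varepsilon$; this equals $n^2/\varepsilon$ for $\ell\lesssim 0.156$ (subsumed by $n^2/\varepsilon^{1+\omega}$) and is $\tilde O(Wn^{1.844}/\varepsilon)$ for larger $\ell$ via $\omega(1,1,s+\ell)\le 2+(\ell-0.156)$, and the remaining term $n^{\omega(1-s,s+\ell,1)}/\varepsilon^2\le n^{\omega(1,1,s+\ell)}/\varepsilon^2$ is handled similarly. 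Replace $\mu\approx\muUnwUndAPSP$ by $\mu=0$ and the argument goes through.
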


\begin{proof}
\Cref{thm:undirectedApproximateDistances} follows from
\Cref{thm:undirectedApproximateDistancesQuery}
by querying all distances for $I = J = V$ after every update.
For $\mu = 0$ and maximum edge weight $W = n^\ell$ the update time is
$$\tilde{O}(
	n^{\omega(1,1,s+\ell)} / \varepsilon
	+ n^{(1-s)\omega} / \varepsilon^{1+\omega}).$$

By setting $s = 1-2/\omega \approx 0.157$ we have $n^{(1-s)\omega} = n^2$.
For small $W = O(n^{0.156})$, the update time can be bounded by
$\tilde{O}(n^2 / \varepsilon^{1+\omega})$, because $\omega(1,1,s+\ell) = 2$.
For $W = \Omega(n^{0.156})$ the $n^{\omega(1,1,0.157+\ell)} / \varepsilon$ term can be bounded by $
\tilde{O}(
	n^{\omega(1,1,0.157+\ell)} / \varepsilon
)
=
\tilde{O}(
	n^{2 + \ell - 0.156} / \varepsilon
)
=
\tilde{O}(
	Wn^{\weightUnwUndAPSP} / \varepsilon)$.
Thus for any $W$, we have update time $\tilde{O}( Wn^{\weightUnwUndAPSP} / \varepsilon + n^2 / \varepsilon^{1+\varepsilon})$.

The pre-processing requires $O(Wn^{\omega+s}) = O(Wn^{\preUnwUndAPSP})$ time.
\end{proof}

\section{Results for Diameter, Radius and Eccentricities}
\label{sec:diameter}

In this section we will prove several results for dynamic diameter.
The main result is the following nearly $(1.5+\varepsilon)$-approximate algorithm for diameter and radius.

\begin{theorem}[Nearly $(1.5+\epsilon)$-Approx Diameter/Radius, Unweighted (only), Directed, Sub-$n^2$]
\label{thm:1.5approxDiam}
Let $G$ be an unweighted directed graph with $n$ nodes.
Then for any $0 \le s,\mu \le 1$ and $\varepsilon > 0$
there exists a Monte Carlo dynamic algorithm that maintains a nearly $(1.5+\varepsilon)$-approximation $\tilde{D}$ of the diameter of $G$, such that
$$
\left(\frac{2}{3}-\varepsilon\right) \diam(G) - 1/3
\le \tilde{D}
\le (1+\varepsilon) \diam(G).
$$
The update time of the algorithm can be bounded by
\begin{align*}
\tilde{O}(&
(
	n^{\omega(1,1,s+\mu)-\mu}
	+ n^{\omega(0.5,s+\mu,1)}
)/\varepsilon\\
&+ n^{\sankowski+s}
+ n^{\omega(1-s,\mu+s,1-s)}/\varepsilon^2
+ n^{(1-s)\omega}/\varepsilon^{1+\omega}
).
\end{align*}
The pre-processing requires $\tilde{O}(n^{s+\omega} / \varepsilon)$ time.
If the graph is undirected, then we can also maintain a nearly $(1.5+\varepsilon)$-approximate radius $\tilde{R}$, such that
$$\radius(G) / (1+\varepsilon)
\le \tilde{R}
\le (1.5+\varepsilon) \radius(G) + 2/3,
$$
For current $\omega$ and $\mu \approx \muRadius s \approx \sRadius$ 
the update time is $O(n^{\updateRadius}/\varepsilon^{1+\omega})$.
\end{theorem}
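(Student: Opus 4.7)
I would adapt the static near-$\tfrac{3}{2}$-approximation of Roditty and V.~Williams~\cite{RodittyW13} to the dynamic setting using the $(1+\varepsilon)$-approximate distance infrastructure developed in \Cref{sec:smallHop,sec:apsp}. Recall the static routine for diameter: (i) pick an arbitrary $u \in V$ and set $w := \arg\max_v \dist(u,v)$; (ii) let $S$ be the $\lceil \sqrt n \rceil$ nodes closest to $w$; (iii) sample a set $H' \subseteq V$ of size $\tilde O(\sqrt n)$ uniformly at random; (iv) output the maximum of $\dist(u,w)$ and of the eccentricities of the vertices in $S \cup H' \cup \{w\}$. Their analysis shows this value lies in $[\tfrac{2}{3}\diam(G) - O(1),\,\diam(G)]$ with high probability.

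First I would maintain $(1+\varepsilon)$-approximate $n^s$-hop distances using \Cref{thm:dynamicSmallApproximateDistancesLargeWeightQuery}. After every edge update, resample a fresh long-path hitting set $H \subseteq V$ of size $\tilde O(n^{1-s})$, query the hop-bounded matrices $D_{V,H}$ and $D_{H,V}$ in time $\tilde O(n^{\omega(1,s+\mu,1-s)}/\varepsilon^2)$, and compute the approximate $(\min,+)$-product $\widehat D := D_{V,H} \star D_{H,H}^{(\star |H|)}$ via \Cref{lem:allPairsApproximateDistances,lem:approximateMinPlusProduct} in $\tilde O(n^{(1-s)\omega}/\varepsilon^{1+\omega})$ time; these terms match the corresponding parts of the claimed update time. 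Composed with a query for $D_{H,J}$, this yields a $(1+\varepsilon)$-approximate distance oracle for any specified subset of pairs, exactly as in the proof of \Cref{thm:weightedAPSPQuery}. I would then simulate the static Roditty--Williams algorithm against this oracle: (a) query single-source distances from $u$ to find $w$; (b) query single-source distances from $w$ to identify $S$; (c) sample a fresh $H'$ of size $\tilde O(\sqrt n)$; (d) batch-query the oracle on the pairs $(S \cup H' \cup \{w\}) \times V$, which has $\tilde O(n^{1/2})$ rows and therefore costs $\tilde O(n^{\omega(0.5,s+\mu,1)}/\varepsilon)$, the remaining dominant term; (e) return the maximum observed. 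For the radius on undirected $G$, I would feed the same oracle into the analogous Roditty--Williams radius routine; the undirected hypothesis is precisely what enables the symmetric triangle inequalities used there.

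\textbf{Main obstacles.} The first is propagating the $(1+\varepsilon)$ error through the constantly-many cascaded triangle inequalities inside the Roditty--Williams analysis so that the output still satisfies $\tilde D \in [(\tfrac{2}{3}-\varepsilon)\diam(G) - 1/3,\,(1+\varepsilon)\diam(G)]$ and $\tilde R \in [\radius(G)/(1+\varepsilon),\,(1.5+\varepsilon)\radius(G) + 2/3]$; this is handled by shrinking the internal approximation parameter by a constant factor and absorbing the overhead into the $\polylog(1/\varepsilon)$ factor hidden in $\tilde O$. The second is correctness against non-oblivious adversaries: because both $H$ and $H'$ are resampled after every update and \Cref{thm:dynamicSmallApproximateDistancesLargeWeightQuery} is itself non-oblivious, it suffices to observe that the values returned to the adversary depend only on true distances with high probability, not on any internal random field element, exactly as argued at the end of \Cref{sec:smallHop}.
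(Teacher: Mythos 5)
Your proposal correctly identifies the two main ingredients — the $n^s$-hop approximate-distance data structure and a simulation of the Roditty–Williams static algorithm — and your handling of the error propagation and of non-oblivious adversaries (resampling $H$ and $H'$ every update) matches the paper. However, there is a genuine gap in your update-time accounting: you propose to maintain, after every update, the full APSP-style oracle $\widehat D = D_{V,H} \star D_{H,H}^{(\star |H|)}$ from the proof of \Cref{thm:weightedAPSPQuery}, and you claim that querying $D_{V,H}$ costs $\tilde O(n^{\omega(1,s+\mu,1-s)}/\varepsilon^2)$ and "matches the corresponding parts of the claimed update time." It does not. The theorem's bound contains $n^{\omega(1-s,\,\mu+s,\,1-s)}/\varepsilon^2$, whereas your construction forces the strictly larger term $n^{\omega(1,\,\mu+s,\,1-s)}/\varepsilon^2$ (one dimension is $n$ rather than $n^{1-s}$), and additionally the product $D_{V,H} \star D_{H,H}^{(\star |H|)}$ alone costs $\tilde O(n^{\omega(1,1-s,1-s)}/\varepsilon)$ — neither of which is subsumed by the stated bound at the paper's chosen parameters $s\approx 0.25$, $\mu\approx 0.4$.

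The paper avoids this cost with a case split that your proposal does not take. If the diameter is at most $n^s$, every distance the Roditty–Williams routine needs is already available from the short-hop data structure (\Cref{thm:dynamicSmallApproximateDistancesQuery}), so no $V \times H$ product is ever formed; the only extra cost is the batch query on $\tilde O(\sqrt n) \times n$ pairs, giving the $n^{\omega(0.5,s+\mu,1)}$ term. If the routine encounters a distance exceeding $n^s$, the algorithm concludes $\diam(G) \ge n^s$ and falls back to \Cref{thm:diameterApproximationViaSampling}, which samples $H$ of size $\tilde O(n^{1-s}/\varepsilon)$, queries only the $H \times H$ distances (cost $n^{\omega(1-s,\mu+s,1-s)}/\varepsilon^2$), and runs static approximate APSP on the small graph $G_H$ (cost $n^{(1-s)\omega}/\varepsilon^{1+\omega}$). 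No $V\times H$ distance matrix is ever materialized in either branch. Your proposal would instead be proving a weaker version of the theorem with a larger update bound, roughly that of \Cref{thm:weightedAPSPQuery} with a batch query of width $\sqrt n$ added, rather than the tighter bound actually claimed. The radius case in the paper has the same two-branch structure (plus maintaining distances up to $2n^s$ and running \Cref{lem:strongConnectivity} to detect $\radius(G)=\infty$), which your sketch also does not reflect.
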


We will prove this result in \Cref{sub:approximateDiameter}. 
There we will also present results for dynamic eccentricities.
Before proving \Cref{thm:1.5approxDiam} in \Cref{sub:approximateDiameter},
we will prove some results for large diameter graphs in \Cref{sub:largeDiameter},
which can also be used to maintain a $(1+\varepsilon)$-approximation.

Many of the results we prove and use in this section hold only for strongly connected graphs,
which is why we require the following lemma:

\begin{lemma}[{\cite[Corollary C.17]{BrandNS18},\cite[Theorem 3]{Sankowski04}}]
\label{lem:strongConnectivity}
For every $\mu > 0$ there exists a Monte Carlo dynamic algorithm
that can detect if a graph is strongly connected.
The update time per edge update is
$O(n^{\omega(1,1,\mu)-\mu} + n^{1+\mu}) = O(n^{\sankowski})$
and the pre-processing requires $O(n^\omega)$ time.\footnote{
	Note that the update complexity is subsumed by \Cref{lem:dynamicMatrixInverse}.
	All algorithms in \Cref{sec:smallHop} use \Cref{lem:dynamicMatrixInverse},
	so running \Cref{lem:strongConnectivity} in parallel
	does not further affect their complexity.
}
\end{lemma}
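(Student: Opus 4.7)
The plan is to reduce strong connectivity detection to maintaining one row and one column of a symbolic matrix inverse, and to handle the maintenance with the field-valued dynamic matrix inverse data structure that underlies \Cref{lem:dynamicMatrixInverse}. Fix $\F = \Z_p$ for a prime $p$ of bit-length $\Theta(\log n)$. Assign independent uniformly random nonzero elements $a_{u,v}\in\F$ to every potential edge slot during preprocessing, and let $A(G)\in\F^{n\times n}$ be the matrix with $A_{u,v}=a_{u,v}$ if $(u,v)\in E(G)$ and $A_{u,v}=0$ otherwise. Set $M:=\I-A(G)$. Then $M$ is invertible w.h.p., and since $M^{-1}=\sum_{k\ge 0}A(G)^k$, a Schwartz--Zippel argument (exactly the $h=n$ specialization of \Cref{lem:distanceInverseReduction}) shows that, with high probability, $(M^{-1})_{u,v}\neq 0$ iff there is a directed path from $u$ to $v$ in $G$.

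Next, fix an arbitrary anchor vertex $r\in V$ once and for all, and use the standard observation that $G$ is strongly connected iff $r$ reaches every other vertex and every other vertex reaches $r$. Hence, after each update, it suffices to test whether the row $(M^{-1})_{r,[n]}$ and the column $(M^{-1})_{[n],r}$ both have all entries nonzero. Each edge insertion or deletion modifies a single entry of $A(G)$, i.e.\ of $M$, so the maintenance is exactly the element-update/row-or-column-query regime of dynamic matrix inverse. Plug in the dynamic matrix inverse of \cite{Sankowski04} (the field-valued specialization of \Cref{lem:dynamicMatrixInverse} with $s=0$): after $O(n^{\omega})$ preprocessing it supports element updates in $\tilde{O}(n^{\omega(1,1,\mu)-\mu})$ operations and a row or column query in $\tilde{O}(n^{1+\mu})$ operations. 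Per edge update we perform one element update, one row query, one column query, and an $O(n)$ scan, giving the claimed $O(n^{\omega(1,1,\mu)-\mu}+n^{1+\mu})$ update time; balancing yields $O(n^{\sankowski})$ for the current value of $\omega$.

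The two subtleties to address are probabilistic soundness and tolerating a non-oblivious adversary. For soundness, over $\F$ with $|\F|=n^{\Theta(1)}$ the Schwartz--Zippel bound makes each of the $O(n)$ scanned entries equal zero spuriously with probability $n^{-c}$, so a union bound over the polynomially many queries and the $\poly(n)$ entries involved keeps the total failure probability polynomially small after enlarging the constant in $p$. For adaptivity, note that the only information released to the adversary is a single bit (``is $G$ strongly connected?''), which is a deterministic function of the current graph topology; conditioning on the algorithm always answering correctly (which holds w.h.p.) means no information about the random $a_{u,v}$ leaks, so the Schwartz--Zippel analysis remains valid across the update sequence, analogous to the non-oblivious argument at the end of \Cref{sec:smallHop}.

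The main obstacle is verifying that the dynamic matrix inverse of \cite{Sankowski04} achieves the stated update/query trade-off in the field-valued setting exactly as claimed for our $M=\I-A(G)$. This reduces to expressing each element change in $A(G)$ as a rank-one update $uv^\top$ to $M$ and applying Sherman--Morrison with a lazy-rebuild schedule controlled by $\mu$; this is precisely the mechanism already used in the proof of \Cref{lem:dynamicApproximateSlice}, specialized to the degree-$0$ case, so no new algebraic ideas are required and the bound follows.
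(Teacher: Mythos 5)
Your proposal is correct and follows exactly the route of the works the paper cites for this lemma (the paper itself gives no proof beyond the citation to \cite{Sankowski04} and \cite{BrandNS18}): encode reachability in the non-vanishing of entries of $(\I-A(G))^{-1}$ over a random evaluation, reduce strong connectivity to one row and one column at a fixed anchor $r$, and maintain these with the element-update/row-column-query dynamic inverse, giving $O(n^{\omega(1,1,\mu)-\mu}+n^{1+\mu})$ per update. One small repair: over $\Z_p$ the identity $M^{-1}=\sum_{k\ge 0}A(G)^k$ does not literally hold (the series need not converge in a finite field); the correct justification is to argue symbolically that $(\I-\tilde{A})^{-1}_{u,v}=\adj(\I-\tilde{A})_{v,u}/\det(\I-\tilde{A})$ is a nonzero rational function iff $u$ reaches $v$ (via the power-series identity over $\F[[x_{u,v}]]$, or equivalently the truncation modulo $X^{n+1}$ as in \Cref{lem:distanceInverseReduction}), and only then apply Schwartz--Zippel to the degree-$\le n$ numerator and denominator polynomials at the random evaluation point.
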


By running this dynamic algorithm in parallel, 
we can detect if the graph is no longer strongly connected,
in which case our dynamic diameter result will simply return $\infty$.

\subsection{Unweighted Approximate Diameter (Proof of \Cref{thm:approxDiam_unweighted,thm:1.5approxDiam})}
\label{sub:largeDiameter}

The high-level idea for unweighted, $(1+\varepsilon)$-approximate diameters is very simple:
For diameter less than $n^s$, we can find the diameter by computing all-pairs-distances via
\Cref{thm:dynamicSmallApproximateDistancesQuery}.
If the diameter is larger than $n^s$,
then for a random hitting set the longest shortest path can be decomposed
into segments $s \to h_1 \to ... \to h_k \to t$,
where $h_i \in H$ are hitting-set nodes
and each segment has length at most $n^s \varepsilon$.
Thus we can find an approximate diameter
by looking only for the longest path between the nodes $H$.

\begin{lemma}\label{thm:diameterApproximationViaSampling}
Let $G$ be an $n$-node graph with integer edge weights from $\{1,2,...,W\}$.

Let $H = \{h_1,h_2,...\} \subset V$ be a random subset of size
$c 2n/(d\varepsilon^2) \ln n$. 
Define $G_H$ to be the graph on node set $H$ and edge set $\{ (u,v) \mid \dist_G(u,v) \le Wd \}$ with cost $c_{u,v} = \dist_G(u,v)$.

Then with probability at least $1-n^{2-c}$ we have:
\begin{align*}
\diam(G) 
&\le \diam(G_H) + Wd\varepsilon \\
&\le (1+\varepsilon) \diam(G)
\text{ if }\diam(G) \ge Wd.
\end{align*}
If the graph is undirected, we additionally also have
\begin{align*}
\radius(G) &\le \radius(G_H) + 0.5 Wd \varepsilon \\
&\le (1+\varepsilon) \radius(G)
\text{ if }\radius(G) \ge Wd
\end{align*}
\end{lemma}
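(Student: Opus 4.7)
The plan is to invoke the hitting-set argument of \Cref{lem:allPathsHittingSet} with hop parameter $h := d\varepsilon/2$; since $|H| = 2cn/(d\varepsilon^2)\ln n$ comfortably exceeds the required $cn/h\ln n$, that lemma gives, with probability at least $1-n^{2-c}$, two properties I would use throughout: (P1) every shortest $uv$-path in $G$ with at least $h$ hops decomposes into segments whose inner vertices lie in $H$ and which each use at most $h$ hops, and (P2) every vertex $v \in V$ has a hitting-set node within $h$ hops (hence within weight $Wh = Wd\varepsilon/2$). I would derive (P2) from (P1) plus connectivity: either $v$ has some vertex at $\ge h$ hops away, in which case (P1) applied to the corresponding shortest path produces an $H$-node within the first $h$ vertices of the path, or $v$'s eccentricity is $<h$, in which case any hitting-set node in $v$'s component lies within $h$ hops of $v$.

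For the diameter inequalities, I would first establish the identity $\dist_{G_H}(u,v) = \dist_G(u,v)$ for all $u,v \in H$: the bound $\dist_{G_H}(u,v) \ge \dist_G(u,v)$ is triangle inequality applied to $G_H$-paths, whose edges are weighted by the corresponding $G$-distances; the matching upper bound comes from the direct $G_H$-edge of weight $\dist_G(u,v)$ when $\dist_G(u,v) \le Wd$, and otherwise from applying (P1) to the shortest $uv$-path to obtain $u = w_0, w_1, \ldots, w_m = v$ in $H$ with consecutive pairs at weight $\le Wh \le Wd$ (so present as $G_H$-edges) and total weight $\sum_i \dist_G(w_i,w_{i+1}) = \dist_G(u,v)$. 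This yields $\diam(G_H) \le \diam(G)$, and combined with the hypothesis $\diam(G) \ge Wd$ produces $\diam(G_H) + Wd\varepsilon \le \diam(G) + Wd\varepsilon \le (1+\varepsilon)\diam(G)$. For the opposite direction $\diam(G) \le \diam(G_H) + Wd\varepsilon$, I take $s,t$ realizing $\diam(G)$ together with a shortest $st$-path; applying (P1) at the two ends of that path gives landmarks $u,v \in H$ lying on it within $h$ hops of $s$ and $t$ respectively, so $\dist_G(s,u),\dist_G(v,t) \le Wh$, which gives $\dist_G(u,v) = \dist_G(s,t) - \dist_G(s,u) - \dist_G(v,t) \ge \diam(G) - 2Wh = \diam(G) - Wd\varepsilon$, and $\diam(G_H) \ge \dist_{G_H}(u,v) \ge \dist_G(u,v)$ closes the bound.

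For the radius statement in the undirected case, I would run the analogous argument exploiting symmetry. For the upper bound, I pick a center $c^*$ of $G$ and, by (P2), a nearby $c \in H$ with $\dist_G(c,c^*) \le Wh$; then for any $u \in H$, $\dist_{G_H}(c,u) = \dist_G(c,u) \le \dist_G(c,c^*) + \dist_G(c^*,u) \le 0.5Wd\varepsilon + \radius(G)$, and the hypothesis $\radius(G) \ge Wd$ turns $\radius(G_H) \le \radius(G) + 0.5Wd\varepsilon$ into $\radius(G_H) + 0.5Wd\varepsilon \le (1+\varepsilon)\radius(G)$. For the lower bound, I let $c' \in H$ realize $\radius(G_H)$; for any $u \in V$, (P2) furnishes $u' \in H$ with $\dist_G(u,u') \le 0.5Wd\varepsilon$, whence $\dist_G(c',u) \le \dist_G(c',u') + \dist_G(u',u) = \dist_{G_H}(c',u') + 0.5Wd\varepsilon \le \radius(G_H) + 0.5Wd\varepsilon$, giving $\radius(G) \le \radius(G_H) + 0.5Wd\varepsilon$. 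I expect the key subtlety to be (P2) itself: the shortest-path hitting-set lemma only speaks about paths joining a pair of fixed endpoints, but the radius lower bound needs a landmark $u' \in H$ close to \emph{every} $u \in V$; the derivation of (P2) bridges this gap, with the small-eccentricity edge case relying on connectivity of the undirected graph.
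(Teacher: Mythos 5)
Your proof is correct and follows the same hitting-set approach as the paper's own argument: apply \Cref{lem:allPathsHittingSet} with hop parameter $h = d\varepsilon/2$, observe that $\dist_{G_H}(u,v) = \dist_G(u,v)$ for $u,v \in H$, and transfer the endpoints of extremal paths to nearby hitting-set vertices. You spell out the ``nearby landmark'' step (your P2) and the $G \leftrightarrow G_H$ distance identity in more explicit detail than the paper does, but the substance of the argument is the same.
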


We will prove \Cref{thm:diameterApproximationViaSampling} later in this section.
For now we use it to show how to maintain a $(1+\varepsilon)$-approximation of the diameter.

\begin{theorem}[Approximate Diameter, Unweighted (or with $W$), Directed, $n^2$]
\label{thm:approxDiam_unweighted}
Let $G$ be a directed graph with $n$ nodes and integer weights from $\{1,...,W\}$
and let $\ell$ be such that $W = n^\ell$.
Then for any $\varepsilon > 0$ there exists a Monte Carlo dynamic algorithm
that maintains $(1+\varepsilon)$-approximate diameter of $G$ in
$ \tilde{O}(
Wn^{\weightUnwUndAPSP}/\varepsilon
+ n^2 / \varepsilon^{1+\omega}
)$ update time.
The pre-processing requires $\tilde{O}(Wn^{\preUnwUndAPSP} / \varepsilon)$ time.
\end{theorem}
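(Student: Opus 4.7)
The approach mirrors that of \Cref{thm:undirectedApproximateDistances}, but since we only need to output a single number (the diameter) we can replace the undirected-specific ``nearest hitting-set neighbour'' argument of \Cref{thm:undirectedLongPaths} by the directed hitting-set-on-diameter argument of \Cref{thm:diameterApproximationViaSampling}. Accordingly, the algorithm runs the short-hop data structure of \Cref{thm:dynamicSmallApproximateDistancesQuery} to handle the regime where $\diam(G)$ is small, and uses a compressed graph on a sampled hitting set together with Zwick's approximate APSP to handle the regime where $\diam(G)$ is large.

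\textbf{Algorithm.} During preprocessing, initialize \Cref{thm:dynamicSmallApproximateDistancesQuery} with parameter $s':=s+\ell$, where we take $s:=1-2/\omega$ and $\ell$ is such that $W=n^\ell$. This maintains a $(1+\varepsilon')$-approximate distance matrix $D$ for all pairs of distance at most $n^{s'} = W n^s$, where $\varepsilon'=\Theta(\varepsilon)$ is chosen so that the compounded approximation is $(1+\varepsilon)$. At each edge update we (i) propagate the update to this data structure; (ii) independently resample a random hitting set $H\subset V$ of size $|H|=\tilde{O}(n^{1-s}/\varepsilon)$ as required by \Cref{thm:diameterApproximationViaSampling} with parameter $d:=n^s$; (iii) query the full approximate distance matrix $D_{V\times V}$ via \Cref{thm:dynamicSmallApproximateDistancesQuery} (entries corresponding to pairs of true distance greater than $Wn^s$ are reported as $\infty$); (iv) extract the submatrix $D_{H\times H}$ and build the graph $G_H$ on node set $H$ with edge costs $(D_{H\times H})_{u,v}$, retaining only edges of weight at most $Wn^s$; and (v) compute a $(1+\varepsilon')$-approximate $\diam(G_H)$ by running Zwick's algorithm (\Cref{lem:allPairsApproximateDistances}) on $G_H$ and taking the maximum entry of the resulting all-pairs-distances matrix. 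The returned estimate is
\[
\widetilde{D} \;:=\; \min\!\left(\,\max_{u,v\in V} D_{u,v},\ \diam(G_H)+Wn^{s}\varepsilon'\right).
\]

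\textbf{Correctness.} Both quantities inside the $\min$ are, w.h.p., upper bounds on $\diam(G)$: the first because $D_{u,v}\ge \dist_G(u,v)$ for every pair, and the second directly from \Cref{thm:diameterApproximationViaSampling}. Moreover, at least one of them is a $(1+\varepsilon')$-approximation. Indeed, if $\diam(G) < Wn^s$ then every pair has distance at most $Wn^s$ and hence $D_{u,v}\le (1+\varepsilon')\dist_G(u,v)$, so $\max_{u,v}D_{u,v}\le (1+\varepsilon')\diam(G)$; and if $\diam(G)\ge Wn^s$ then \Cref{thm:diameterApproximationViaSampling} yields $\diam(G_H)+Wn^s\varepsilon'\le (1+\varepsilon')\diam(G)$. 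Taking the minimum therefore produces a $(1+\varepsilon')$-approximation in either regime, and choosing $\varepsilon'=\Theta(\varepsilon)$ gives the desired $(1+\varepsilon)$-approximation. Freshly resampling $H$ at every update makes the algorithm correct against non-oblivious adversaries, as in the other results of this paper.

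\textbf{Complexity and main obstacle.} The per-update cost decomposes into (a) updating \Cref{thm:dynamicSmallApproximateDistancesQuery} and querying $D_{V\times V}$, which contributes $\tilde{O}(n^{\omega(1,s+\ell,1)}/\varepsilon + n^{\sankowski+s+\ell})$, and (b) running Zwick on $G_H$, which contributes $\tilde{O}(|H|^\omega/\varepsilon)=\tilde{O}(n^{(1-s)\omega}/\varepsilon^{1+\omega})$. With $s=1-2/\omega$ we have $(1-s)\omega=2$, so (b) becomes $\tilde{O}(n^2/\varepsilon^{1+\omega})$, and (a) is bounded by $\tilde{O}(Wn^{1.844}/\varepsilon)$ by the same rectangular-matrix-multiplication calculation used in the proof of \Cref{thm:undirectedApproximateDistances}. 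Preprocessing is dominated by the initial call to \Cref{thm:dynamicSmallApproximateDistancesQuery}, giving $\tilde{O}(Wn^{s+\omega}/\varepsilon)=\tilde{O}(Wn^{2.53}/\varepsilon)$. The main subtlety is the necessity of combining the two estimates: because we lack the undirected nearest-hitting-set trick, $\diam(G_H)$ alone is only accurate in the large-diameter regime, and we must fall back on the explicit $V\times V$ query for the small-diameter regime; fortunately, both contributions already fit inside the target bound, so no additional term is introduced.
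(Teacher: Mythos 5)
Your algorithm and complexity accounting match the paper's proof quite closely — the paper also queries $D_{V\times V}$ from \Cref{thm:dynamicSmallApproximateDistancesQuery} (with $\mu=0$, distances maintained up to $Wn^s$), samples a hitting set $H$ of size $\tilde O(n^{1-s}/\varepsilon)$, and approximates $\diam(G_H)$ via Zwick. The one substantive difference is your decision rule, and there is a genuine gap there.

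You return $\min\bigl(\max_{u,v}D_{u,v},\ \diam(G_H)+Wn^s\varepsilon'\bigr)$ and justify this by claiming both arguments of the $\min$ are w.h.p.\ upper bounds on $\diam(G)$, citing \Cref{thm:diameterApproximationViaSampling} for the second. But the first inequality $\diam(G)\le\diam(G_H)+Wd\varepsilon$ in that lemma is \emph{not} unconditional: its proof begins by picking the diameter-realizing pair $(s,t)$ and assuming the shortest $st$-path has at least $d$ hops, which is what guarantees that w.h.p.\ $H$ hits it and supplies the nodes $s_H,t_H$ on the path. That assumption is implied by $\diam(G)\ge Wd$ (each edge has weight $\le W$), but when $\diam(G)<Wd$ the diameter path may have fewer than $d$ (or fewer than $\Theta(\varepsilon d)$) hops and so may contain no node of $H$ at all. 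In that case $\diam(G_H)=\max_{u,v\in H}\dist_G(u,v)$ can be arbitrarily smaller than $\diam(G)$ (the long path is simply invisible to $H$), and $\diam(G_H)+Wn^s\varepsilon'$ is no longer an upper bound. Since this is exactly the regime in which $\max_{u,v}D_{u,v}$ is your good estimate, taking the $\min$ can cause $\widetilde D<\diam(G)$, breaking the one-sided approximation guarantee.

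The paper avoids this by a case distinction rather than a $\min$: because \Cref{thm:dynamicSmallApproximateDistancesQuery} returns $\infty$ exactly for pairs at true distance $>Wn^s$, one can test whether the set $P=\{(u,v):D_{u,v}=\infty\}$ is empty. If $P=\emptyset$, then w.h.p.\ $\diam(G)\le Wn^s$ and one returns $\max_{u,v}D_{u,v}$; if $P\ne\emptyset$, then $\diam(G)>Wn^s$, the hypothesis of \Cref{thm:diameterApproximationViaSampling} is met, and one returns the hitting-set estimate (after also checking strong connectivity via \Cref{lem:strongConnectivity} to rule out $\diam(G)=\infty$). Replacing your $\min$ with this test fixes the argument without changing the complexity. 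Everything else in your write-up — parameter choice $s=1-2/\omega$, the $n^{(1-s)\omega}=n^2$ balancing, the resampling of $H$ for non-oblivious adversaries, and the appeal to the \Cref{thm:undirectedApproximateDistances} rectangular-$\omega$ bound for the $Wn^{1.844}/\varepsilon$ term — agrees with the paper.
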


\begin{proof}
Let $0 \le s \le 1$ be some parameter.
We maintain the $(1+\varepsilon)$-approximate $n^s$-hops distances via
\Cref{thm:dynamicSmallApproximateDistancesQuery}
(by maintaining the distances upto $Wn^s = n^{s+\ell}$ for $n^\ell = W$).
After every update we query all distances via $I = J = V$.
Thus every update requires $\tilde{O}(n^{\omega(1,1,s+\ell)}/\varepsilon)$ time per update by choosing $\mu = 0$.

We also obtain a pair $P \subset V \times V$ of pairs with distance greater
than $Wn^s$ (i.e. all pairs where the returned distance is $\infty$). 
If this set is empty, then we can get the diameter by looking at 
the longest distance we computed. If the set is non-empty, then we know the
diameter must be larger than $Wn^s$.

We also run \Cref{lem:strongConnectivity} in parallel to check if the graph
is strongly connected. If it is not strongly connected, then we return $\infty$ as diameter.
Otherwise, if the graph is strongly connected, then we can use
\Cref{thm:diameterApproximationViaSampling} to compute an approximation of the diameter.
For this we sample a random set $H \subset V$ of size $\tilde{O}(n^{1-s} / \varepsilon)$
and construct the graph $G_H$ as in \Cref{thm:diameterApproximationViaSampling}
in $\tilde{O}(n^{2-2s}/\varepsilon^2)$ time.

We can compute the approximate diameter of $G_H$
by computing $(1+\varepsilon)$-approximate all-pairs-distances
in $\tilde{O}(n^{(1-s)\omega} / \varepsilon^{1+\omega})$ time using \Cref{lem:allPathsHittingSet}.
(Note that we only have approximate distances of $G$ when constructing $G_H$,
so technically we obtain a $(1+\varepsilon)^3$-approximation of $\diam(G)$,
but we can just choose $\varepsilon$ to be small enough.)

The update time is thus $\tilde{O}(
n^{\omega(1,1,s+\ell)}/\varepsilon
+ n^{(1-s)\omega} / \varepsilon^{1+\omega})$,
which can be bounded by $\tilde O(Wn^{\weightUnwUndAPSP} / \varepsilon + n^2 / \varepsilon^{\omega+1})$.
(For details, see the proof of \Cref{thm:undirectedApproximateDistances}, where the same term was bounded.)

Note that the algorithm works against non-oblivious adversaries,
because (w.h.p.) the result from \Cref{thm:dynamicSmallApproximateDistancesQuery}
does not leak any information about the random choices
and we can sample a new hitting set in \Cref{thm:diameterApproximationViaSampling}
for every update.
\end{proof}

We are left with proving \Cref{thm:diameterApproximationViaSampling}.

\begin{proof}[Proof of \Cref{thm:diameterApproximationViaSampling}]

The main idea is to decompose paths in $G$ via \Cref{lem:allPathsHittingSet} and then map them to paths in $G_H$ and bound their length.

Let $H$ be the random set of size $c2n/(d\varepsilon^2)$,
then with probability at least $1-n^{2-c}$, every shortest path (using at least $d$ hops)
can be decomposed into segments
$s \to h_1 \to h_2 \to ... \to h_k \to t$
where each $h_i \in H$ and the segments 
have at most $\varepsilon d/2$ hops.

Thus for $u,v \in H$ we have $\dist_G(s,t) = \dist_{G_H}(u,v)$ and for every $s,t \in V$ with $\dist_G(s,t) \ge Wd$, there exist $u,v \in H$ such that $\dist_{G_H}(u,v) \le \dist_G(s,t) \le \dist_G(u,v) + Wd\varepsilon = \dist_{G_H}(u,v) + Wd\varepsilon$.

\paragraph{Approximating the Diameter}
Let $s,t \in V$ be the pairs such that $\dist_G(s,t) = \diam(G)$ 
and the shortest $st$-path uses at least $d$ hops, 
and let $s_H,t_H \in H$ be the first and last node from $H$ along the shortest $st$-path. 
Then $\dist_G(s_H,t_H) \le \dist_G(s,t) \le \dist_G(s_H,t_H) + Wd\varepsilon$ and
\begin{align*}
\diam(G) \le& \dist_G(s_H,t_H) + Wd\varepsilon \\
=& \dist_{G_H}(s_H,t_H) + Wd\varepsilon \le \diam(G_H),\\
\diam(G_H) 
=& \max_{u,v \in H} \dist_{G_H}(u,v) 
= \max_{u,v \in H} \dist_G(u,v) \\
\le& \max_{u,v \in V} \dist_G(u,v) 
= \diam(G).
\end{align*}
So if $\diam(G) \ge Wd$, then $\diam(G_H) + Wd \varepsilon$ is a $(1+\varepsilon)$-approximation of $\diam(G)$.

\paragraph{Approximating the Radius}

For the radius we can use the same arguments as for the diameter:
\begin{align*}
\radius(G)
&= \min_{s \in V} \max_{t \in V} \dist_G(s,t)
\le \min_{s \in H} \max_{t \in V} \dist_G(s,t) \\
&\le \min_{s \in H} \max_{t \in H} \dist_G(s,t) + 0.5 Wd \varepsilon \\
&= \min_{s \in H} \max_{t \in H} \dist_{G_H}(s,t) + 0.5 Wd \varepsilon \\
&= \radius(G_H) + 0.5 W d \varepsilon
\end{align*}
If the graph is undirected we can also get a bound from the other direction as follows:
Let $s \in V$ be the node such that $\radius(G) = \max_{v \in V} \dist_G(s,v)$ and assume $\radius(G) \ge Wd$, then w.h.p. there is a $s_H \in H$ with $\dist_G(s_h, s) \le 0.5Wd\varepsilon$ and thus $\max_{v\in V} \dist_G(s_H, v) \le \max_{v \in V} \dist_G(s,v) + 0.5 Wd\varepsilon$.
This leads to the following bound:
\begin{align*}
\radius(G_H)
&= \min_{s \in H} \max_{t \in H} \dist_{G_H}(s,t) \\
&= \min_{s \in H} \max_{t \in H} \dist_G(s,t) \\
&\le \min_{s \in H} \max_{t \in V} \dist_G(s,t) \\
&\le \min_{s \in V} \max_{t \in V} \dist_G(s,t) + 0.5 Wd \varepsilon \\
&= \radius(G) + 0.5 Wd \varepsilon
\end{align*}
Thus for undirected graphs with $\radius(G) \ge Wd$ we have
$\radius(G) \le \radius(G_H) + 0.5 Wd \varepsilon \le \radius(G) + Wd \varepsilon \le (1+\varepsilon) \radius(G)$,
so $\radius(G_H) + 0.5 Wd \varepsilon$ is a $(1+\varepsilon)$-approximation of the radius.
\end{proof}

\subsection{Unweighted 1.5 Approximate Diameter}
\label{sub:approximateDiameter}

The following result is from \cite[Lemma 4]{RodittyW13}, 
where Roditty and V. Williams
present a static algorithm to compute a nearly $1.5$-approximation of the diameter in unweighted graphs 
in $\tilde{O}(m \sqrt{n})$ time.
The high level idea is to compute BFS trees 
originating at $\tilde{O}(\sqrt{n})$ uniformly at random chosen nodes $S \subset V$, 
then find the node $w$ furthest away from the nodes $S$ 
and compute another BFS tree for this distant node $v$ 
and its $\sqrt{n}$ nearest neighbors. 
The largest computed distance of all BFS trees 
then yields a nearly $1.5$-approximation of the diameter.

We simply simulate their algorithm by querying 
the single-source (and single-sink) distances 
via \Cref{thm:dynamicSmallApproximateDistancesQuery} 
instead of performing the breadth-first-searches.

The original proof of \cite[Lemma 4]{RodittyW13} assumes exact distances
(i.e. the distances obtained by running BFS),
but the result also holds when we are given $(1+\varepsilon)$-approximate distances instead,
in which case we obtain a nearly $(1.5+\varepsilon)$-approximate diameter.
\ifdefined\FOCSversion
The proof of the following \Cref{thm:approxDiameter1.5} 
can be found in the full version of the paper.
\else
The proof of the following \Cref{thm:approxDiameter1.5} is in \Cref{app:approximateDiameter},
because it is identical to \cite[Lemma 4]{RodittyW13},
except that we need to slightly adapt some inequalities 
to verify that the algorithm does indeed work 
when using $(1+\varepsilon)$-approximate distances.
\fi

\begin{theorem}[{Based on \cite[Lemma 4]{RodittyW13}}]
\label{thm:approxDiameter1.5}
Let $G$ be a directed unweighted graph with diameter $D = 3h+z$, where $h \ge 0$ and $z \in \{0,1,2\}$.
Let $0 < \varepsilon \le 1$ and
let $\tilde{D}$ be the approximate diameter returned by \Cref{alg:approximateDiameter}.

Then we have w.h.p. $\min\{(2-3\varepsilon)h+z, 2h+1\} \le \tilde{D} \le (1+\varepsilon) D$, which can also be written as
$(\frac{2}{3}-\varepsilon) D - \frac{1}{3} \le \tilde{D} \le (1+\varepsilon) D$.
\end{theorem}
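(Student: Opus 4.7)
The plan is to mimic the Roditty--Williams STOC'13 algorithm (their Lemma~4), replacing each breadth-first search by a batch query to the $(1+\varepsilon)$-approximate short-hop data structure of Theorem~\ref{thm:dynamicSmallApproximateDistancesQuery}, used with the hop parameter large enough that every pair of interest is reachable within the hop budget. Concretely, after every update the algorithm samples a random set $S\subseteq V$ of size $\tilde{O}(\sqrt{n})$; queries $\dist(s,\cdot)$ and (in the directed case) $\dist(\cdot,s)$ for every $s\in S$; selects $w\in V$ maximizing $\min_{s\in S}\dist(s,w)$; queries distances from and to $w$; identifies the set $N(w)$ of the $\sqrt{n}$ vertices nearest to $w$; queries distances from every $v\in N(w)\cup\{w\}$; and returns the largest approximate distance seen.

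The upper bound $\tilde{D}\le(1+\varepsilon)D$ is immediate: every value the algorithm reports is a $(1+\varepsilon)$-approximation of $\dist(a,b)$ for some pair $a,b\in V$, and $\dist(a,b)\le D$.

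For the lower bound, fix a diametrical pair $u,v$ with $\dist(u,v)=D=3h+z$ and a shortest $uv$-path $P$; let $x,y\in P$ be the vertices with $\dist(u,x)=h$ and $\dist(y,v)=h$, so $\dist(x,y)\ge h+z$. The analysis is a case split on $r:=\min_{s\in S}\dist(s,w)$. If $r\le h$, then by the choice of $w$ every vertex --- in particular $x$ (or $y$) --- is within distance $h$ of some $s\in S$; combining with $\dist(u,v)\le \dist(u,s)+\dist(s,v)$ forces $\max\bigl(\dist(u,s),\dist(s,v)\bigr)\ge 2h+z$, so one of the round-one queries returns a value $\ge (2h+z)/(1+\varepsilon)\ge (2-3\varepsilon)h+z$. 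If instead $r\ge h+1$, a hitting-set argument (as in Lemma~\ref{lem:allPathsHittingSet}, using $|S|=\tilde{O}(\sqrt{n})$) shows that w.h.p.\ $N(w)$ consists entirely of vertices within $h$ hops of $w$ and, moreover, that at least one of $u,v$ is queried as a source/sink from some vertex in $\{w\}\cup N(w)$ at actual distance $\ge 2h+1$, so the reported value is $\ge (2h+1)/(1+\varepsilon)\ge 2h+1-O(\varepsilon h)$, which after absorbing the loss into the first branch of the minimum gives $\tilde{D}\ge 2h+1$. Taking the worse of the two cases yields $\tilde{D}\ge \min\{(2-3\varepsilon)h+z,\,2h+1\}$, and the algebraic identity $\min\{(2-3\varepsilon)h+z,\,2h+1\}\ge(2/3-\varepsilon)D-1/3$ finishes the bound.

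The main obstacle is bookkeeping: in the original RW analysis distances are integers and the arguments rely on crisp $+1$ inequalities (e.g.\ $\dist(s,w)\ge h+1$ propagates through triangle inequalities without loss). Here every such inequality is degraded by a multiplicative $(1+\varepsilon)$ factor, which is precisely the source of the $-3\varepsilon h$ slack in the $(2-3\varepsilon)h+z$ term; conversely, the $2h+1$ term is a pure hop-counting statement and is robust. Verifying that the RW case split still covers all parameter regimes after these losses --- and in particular that no gap opens between $r\le h$ and $r\ge h+1$ once one replaces exact distances with approximations --- is the only subtle point, and is handled by choosing the approximation factor slightly smaller so that the two branches of the minimum overlap.
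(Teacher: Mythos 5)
Your Case~1 argument is essentially right (though the chain actually gives ``every vertex is within $(1+\varepsilon)h$ of $S$'', not within $h$, since $w$ is chosen by maximizing $\min_{s\in S}\tilde d(w,s)$ and the approximation factor appears when translating back to true distances). The genuine problem is in Case~2, and it is in exactly the place your last paragraph flags but then waves away.

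You case-split on the \emph{true} quantity $r=\min_{s\in S}\dist(s,w)$ at the integer threshold $r\le h$ versus $r\ge h+1$. In Case~2 you need the key vertex $w'$ (at true distance $h$ from $w$ on the path toward the far diametral endpoint) to be captured in $W=N(w)$. With $r\ge h+1$, the hitting-set argument (which should cite \Cref{lem:neighborSampling}, not \Cref{lem:allPathsHittingSet}) gives some $s\in S\cap W$, hence $\max_{v\in W}\tilde d(w,v)\ge\tilde d(w,s)\ge\dist(w,s)\ge h+1$. But $\tilde d(w,w')$ can be as large as $(1+\varepsilon)h$, and to conclude $w'\in W$ you need $\tilde d(w,w')<\max_{v\in W}\tilde d(w,v)$, i.e.\ $(1+\varepsilon)h<h+1$, i.e.\ $\varepsilon<1/h$. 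That fails for constant $\varepsilon$ and large $h$, which is the regime the theorem is about. ``Choosing the approximation factor slightly smaller'' does not repair this because the obstruction scales with $h$, not with a constant. The paper's fix is to put the threshold at $h(1+\varepsilon)$ rather than at $h$: in Case~2 one then has $\dist(w,S)>h(1+\varepsilon)$, the hitting-set argument gives $\max_{v\in W}\tilde d(w,v)>h(1+\varepsilon)$, and since $\tilde d(w,w')\le(1+\varepsilon)h$ this strictly forces $w'\in W$; in Case~1 the slightly larger radius $h(1+\varepsilon)^2$ is still cheap enough to yield $(2-3\varepsilon)h+z$. Your Case~1 still covers $r\le h$, but nothing you wrote covers the band $h<r\le h(1+\varepsilon)$.

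Two secondary issues in your Case~2: you write ``the reported value is $\ge(2h+1)/(1+\varepsilon)$,'' but the maintained distances are \emph{over}estimates ($\dist\le\tilde d\le(1+\varepsilon)\dist$), so once you have a true distance $\ge 2h+1$ the reported value is already $\ge 2h+1$ with no $(1+\varepsilon)$ loss and no need to ``absorb the loss into the first branch''; and the claim that ``$N(w)$ consists entirely of vertices within $h$ hops of $w$'' is the reverse of what is needed and used --- the argument shows $W$ \emph{contains} the specific vertex $w'$ at distance $h$, not that $W\subseteq\{v:\dist(w,v)\le h\}$.
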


\begin{algorithm}\label{alg:approximateDiameter}

Assume we are given a $(1+\varepsilon)$-approximate distance matrix $\tilde{d}$.
This matrix does not need to be given explicitly,
it is enough if we can query rows/columns,
so $\tilde{d}$ could be some distance oracle.\footnote{\label{foot:obliviousOracle}
The result of the queries must be fixed, 
i.e. it is not allowed to depend on the order in which we query the results.
More formally, we require that 
for every $k_1,k_2$ and 
$u_1,v_1,...,u_{k_1},v_{k_1},$ $u'_1,v'_1,...,u'_{k_2},v'_{k_2},$ $s, t \in V$, 
the two sequences $\tilde{d}(u_1,v_1),...,\tilde{d}(u_{k_1},v_{k_1}),\tilde{d}(s,t)$ 
and $\tilde{d}(u'_1,v'_1),...,\tilde{d}(u'_{k_2},v'_{k_2}),\tilde{d}(s,t)$ 
both return the same result for $\tilde{d}(s,t)$.
}
\begin{enumerate}
\item Let $S \subset V$ be a random set of vertices of size $\tilde{\Theta}(\sqrt{n})$.

\item Query $(1+\varepsilon)$-approximate distances from/to the nodes in $S$,
so $\tilde{d}(s,u)$ for all $s \in S, u \in V$.

\item Let $w \in V$ be the node with largest (approximate) distance to $S$,
i.e. for all $u \in V$ we have
$\min_{s \in S} \tilde{d}(w,s) \ge \min_{s \in S} \tilde{d}(u,s)$.

\item Query the distance from/to $w$,
i.e. we query $\tilde{d}(u,w)$ and $\tilde{d}(w,u)$ for every $u \in V$.

\item Let $W \subset V$ be a set of size $\sqrt{n}$ such that
$\tilde{d}(w,u) \le \tilde{d}(w,v)$ for all $u \in W$ and $v \in V \setminus W$.\\
Ties are broken by node index.
(The set $W$ are the $\sqrt{n}$ approximately closest nodes to $w$.)

\item $\tilde{D} := \max_{v \in V, u\in S \cup W \cup \{ w\}}
\{ \tilde{d}(u,v), \tilde{d}(v,u)\}$,
i.e. the largest of all so far computed distances.

\end{enumerate}

\end{algorithm}

\begin{proof}[Proof of \Cref{thm:1.5approxDiam}]
Let $0 \le s$ be some parameter.
We run \Cref{thm:dynamicSmallApproximateDistancesQuery}, which allows us to query $(1+\varepsilon)$-approximate distances upto $n^s$, and we also run \Cref{lem:strongConnectivity} in parallel to check, if the graph is strongly connected.

If it is not strongly connected, then we know the diameter is $\infty$. Otherwise we proceed by running \Cref{thm:approxDiameter1.5} as described in the next paragraph.

\paragraph{Case 1, Simulating \cite{RodittyW13}}

We use the approximate distances of 
\Cref{thm:dynamicSmallApproximateDistancesQuery} 
to run \Cref{thm:approxDiameter1.5} (\Cref{alg:approximateDiameter}).
Note that \Cref{thm:approxDiameter1.5} performs the following queries:
(i) $\tilde{O}(\sqrt{n})$ sources/sinks (set $S$)
(ii) one source/sink (node $w$)
(iii) $\sqrt{n}$ sources/sinks (set $W$).
The time required for all queries (i), (ii), (iii) together is
$\tilde{O}(n^{\omega(0.5,\mu+s,1)})$.

Note that \Cref{thm:dynamicSmallApproximateDistancesQuery} 
only maintains distances upto $n^s$.
If some distance of a queried pair is larger than $n^s$,
then we can detect that, but we do not receive the actual distance.
In such a case we know that the diameter must be larger than $n^s$,
so we abort \Cref{alg:approximateDiameter} and instead compute the diameter differently.

\paragraph{Case 2, $\diam(G) \ge n^s$}

If \Cref{alg:approximateDiameter} fails, 
because some computed distance is larger than $n^s$, 
then we also know that $\diam(G) \ge n^s$.
This means we can now use \Cref{thm:diameterApproximationViaSampling} instead.
Let $H$ be the random set of nodes that \Cref{thm:diameterApproximationViaSampling} uses,
then querying the distances of the pairs $H \times H$ requires
$\tilde{O}(	n^{\omega(1-s,\mu+s,1-s)} / \varepsilon^2)$ time, as $|H| = \tilde{O}(n^{1-s}/\varepsilon)$.

Hence the total time for running 
\Cref{thm:diameterApproximationViaSampling,lem:allPairsApproximateDistances} 
to approximate the diameter becomes
$\tilde{O}(
	n^{\omega(1-s,\mu+s,1-s)}/\varepsilon^2
	+n^{(1-s)\omega} / \varepsilon^{1+\omega})$ time.
	
\paragraph{Update time}

The update time complexity for approximating the diameter is:
\begin{align*}
\tilde{O}(&
	\underbrace{
		n^{\omega(1,s+\mu,1)-\mu}/\varepsilon
		+ u(n^s,n)
	}_{\text{\Cref{thm:dynamicSmallApproximateDistancesQuery}}}
	+ \underbrace{
		n^{\omega(0.5,\mu+s,1)}/\varepsilon)
	}_{\text{\Cref{alg:approximateDiameter}}}\\
	+& \underbrace{
		n^{\omega(1-s,\mu+s,1-s)}/\varepsilon^2
	}_{\text{pairs } H \times H}
	+ \underbrace{
		n^{(1-s)\omega} / \varepsilon^{1+\omega}
	}_{\text{\Cref{thm:diameterApproximationViaSampling,lem:allPairsApproximateDistances}}}
)
\end{align*}

\paragraph{Radius}

The algorithm works almost identically, if we want to maintain the diameter.
If during the last step of \Cref{alg:approximateDiameter} 
we set $\tilde{R} := 
\min_{v\in S \cup W \cup \{ w\}} 
\max_{u \in V} \{ \tilde{d}(u,v)\}$, 
then $\tilde{R}$ is a nearly $(1.5+\varepsilon)$-approximation 
\ifdefined\FOCSversion
of the radius (see the full version for a proof of this claim).
\else
of the radius (see \Cref{lem:approximateRadius}).
\fi

We again have to consider the two cases $\radius(G) > n^s$ 
and $\radius(G) \le n^s$.
\begin{itemize}
\item 
	By running \Cref{lem:strongConnectivity}, 
	we can detect if $\radius(G) = \infty$,
	because then the graph is not connected.
	If the graph is connected, we proceed to the next case:
\item 
	Let $r$ be the node with the property $\radius(G) = \max_{v \in V} \dist(r,v)$, 
	then for any $u,v \in V$ we have 
	$\dist(u,v) \le \dist(u,r) + \dist(r,v) \le 2 \radius(G)$. 
	Hence it is enough to run \Cref{thm:dynamicSmallApproximateDistancesQuery} 
	for distances upto $2n^s$ 
	and compute $\tilde{R}$ 
	\ifdefined\FOCSversion
	via the adaption to \Cref{alg:approximateDiameter}.
	\else
	via the adaption to \Cref{alg:approximateDiameter} 
	(\Cref{lem:approximateRadius}).
	\fi
	If during some distance query to 
	\Cref{thm:dynamicSmallApproximateDistancesQuery} 
	we realize that the distance is larger than $2n^s$, 
	then we know the radius is larger than $n^s$. 
	We then cancel \Cref{alg:approximateDiameter} 
	and go to the next case.
\item $n^s \le \radius(G) < \infty$: 
	This case is identical to the case 
	where the diameter is more than $n^s$. 
	We simply use 
	\Cref{thm:diameterApproximationViaSampling,lem:allPairsApproximateDistances} 
	to obtain an approximation of the radius.
\end{itemize}

Note that \Cref{thm:dynamicSmallApproximateDistancesQuery}
works against non-oblivious adversaries,
so \Cref{thm:1.5approxDiam} works against
non-oblivious adversaries as well,
because we sample a new random hitting-set $S$ in \Cref{alg:approximateDiameter} after every update.
\end{proof}

We can also maintain nearly $(5/3+\varepsilon)$-approximate eccentricities,
by dynamically maintaining $(1+\varepsilon)$-approximate distances
and simulating a variant of the static algorithm for approximating eccentricities of \cite{ChechikLRSTW14}.
The static algorithm would usually perform BFS searches,
but instead we query the distances via our dynamic algorithm.

\begin{theorem}\label{thm:approximateEccentricities}
Let $G$ be an unweighted, undirected $n$-node graph.
Then for any $0 < \varepsilon$, $0 \le \mu \le 1$, $0 \le s < 0.5$,
there exists a Monte Carlo dynamic algorithm that maintains nearly $(3/5+\varepsilon)$-approximate eccentricities $\tilde{ecc}( \cdot )$, such that for every $v$:
$$
\left(\frac{3}{5}-\varepsilon\right) ecc(v) - 4/7 \le \tilde{ecc}(v) \le ecc(v)
$$
The update time is
\begin{align*}
\tilde{O}(&
	n^{\omega(1,s+\mu,1)-\mu} / \varepsilon
	+ u(n^s, n)) / \varepsilon \\
	&+ n^{\omega(1,\mu+s,1-s)} / \varepsilon^2
	+ n^{(1-s)\cdot\omega} / \varepsilon^{\omega+1}
),
\end{align*}
and the pre-processing requires $\tilde{O}(n^{\omega+s})$ time
(both complexities are the same as in \Cref{thm:undirectedApproximateDistancesQuery}).
For current $\omega$ the pre-processing and update time are
$\tilde{O}(n^{\preUnwUndAPSPquery})$ and
$\tilde{O}(n^{\updateUnwUndAPSPquery} / \varepsilon^{3.373})$ respectively
with $s \approx \sUnwUndAPSPquery$ and $\mu \approx \muUnwUndAPSPquery$.
\end{theorem}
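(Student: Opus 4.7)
The plan is to adapt the static nearly $5/3$-approximate eccentricity algorithm of \cite{ChechikLRSTW14} to the dynamic setting in exactly the same spirit as Theorem \ref{thm:1.5approxDiam} adapted the diameter algorithm of \cite{RodittyW13}: maintain the $(1+\varepsilon)$-approximate dynamic distance oracle of Theorem \ref{thm:undirectedApproximateDistancesQuery} (with $\ell = 0$, so that distances up to $n^s$ are answered), run Lemma \ref{lem:strongConnectivity} in parallel to detect disconnection, and replace every BFS in the static algorithm by a batch query to the oracle.

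Concretely, after each edge update I would sample a fresh random set $S \subseteq V$ of size $\tilde\Theta(\sqrt{n}/\varepsilon)$, batch-query $\tilde d(s, v)$ for all $(s,v) \in S \times V$, identify the node $w$ whose approximate distance to its closest sample in $S$ is maximized, let $N(w)$ be the $\sqrt{n}$ nodes approximately closest to $w$, and batch-query $\tilde d$ from $\{w\} \cup N(w)$. These approximate distances are then plugged into the Chechik--Larkin--Roditty--Shapira--Tamir--V.~Williams eccentricity estimator, rescaled by $1/(1+\varepsilon)$ so that it remains a lower bound on $ecc(v)$. If the oracle ever signals that a queried distance exceeds $n^s$, we fall back to the large-distance routine of Theorem \ref{thm:1.5approxDiam}: compute a $(1+\varepsilon)$-approximate all-pairs distance matrix on the hitting set $H$ of size $\tilde O(n^{1-s}/\varepsilon)$ in time $\tilde O(n^{(1-s)\omega}/\varepsilon^{\omega+1})$ via Lemma \ref{lem:allPairsApproximateDistances}, and read off the eccentricity estimates from it. Since $|S|, |N(w)| = \tilde O(\sqrt{n}) \le \tilde O(n^{1-s})$ for $s < 1/2$, the additional batch queries are dominated by the $H \times V$ batch that is already absorbed into the update time of Theorem \ref{thm:undirectedApproximateDistancesQuery}, so the stated update and preprocessing bounds are inherited verbatim. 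Re-sampling $S$ after every update keeps the algorithm robust against non-oblivious adversaries, since the oracle's outputs do not leak its internal randomness.

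The main obstacle is the approximation analysis. The static argument of \cite{ChechikLRSTW14} establishes $\tilde{ecc}(v) \in [(3/5)\, ecc(v) - O(1),\; ecc(v)]$ through a three-case analysis driven by the triangle inequality and the ratio of $ecc(v)$ to $\max_{s \in S} d(v,s)$. The task is to re-run this case analysis with every exact distance replaced by a $(1+\varepsilon)$-approximation (and divided by $1+\varepsilon$ where an upper bound on $ecc(v)$ is needed), showing that each triangle inequality loses only an $\varepsilon$-multiplicative factor and that, after optimizing the case split as a function of the integer rounding in the three branches, the multiplicative loss collects to $\varepsilon$ and the additive loss to exactly $4/7$. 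This bookkeeping, rather than anything algorithmic, is the delicate part.
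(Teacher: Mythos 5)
Your proposal is correct and takes essentially the same approach as the paper: both simulate the Chechik--Larkin--Roditty--Shapira--Tamir--V.~Williams eccentricity estimator using batch queries to the dynamic distance oracle, re-sample the random set $S$ after every update for non-oblivious correctness, and observe that the $\tilde O(\sqrt n)$-source batch query (of cost $\tilde O(n^{\omega(0.5,s+\mu,1)})$) is absorbed into the existing $n^{\omega(1-s,s+\mu,1)}/\varepsilon^2$ term of \Cref{thm:undirectedApproximateDistancesQuery}'s update time whenever $s<1/2$. One small point of confusion worth flagging: you cite \Cref{thm:undirectedApproximateDistancesQuery} ``with $\ell=0$, so that distances up to $n^s$ are answered'' and then describe a fall-back large-distance routine --- but $\ell=0$ just means $W=1$, not a bounded distance range, and \Cref{thm:undirectedApproximateDistancesQuery} already returns $(1+\varepsilon)$-approximate distances for \emph{all} pairs (it internally combines the $n^s$-hop oracle of \Cref{thm:dynamicSmallApproximateDistancesQuery} with the hitting-set oracle of \Cref{thm:undirectedLongPaths}). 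The paper therefore needs no separate fall-back here; you have conflated the eccentricities proof with the structure of \Cref{thm:1.5approxDiam}, which builds on the short-hop-only oracle and genuinely needs a large-diameter branch. Your variant can be made to work (for a connected undirected graph, $\diam(G)>n^s$ forces every $ecc(v)\geq n^s/2$, so the hitting-set estimates don't overshoot beyond a controllable factor), but it is a slightly more roundabout route than the paper's, which simply queries the already-complete distance oracle.
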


\begin{proof}
We simulate a variant of the static algorithm for approximating eccentricities 
\ifdefined\FOCSversion
of \cite{ChechikLRSTW14} as defined in the full version of our paper.
\else
of \cite{ChechikLRSTW14} as defined in \Cref{lem:approximateEccentricities}.
\fi
For that we need to query the distances of $\tilde{O}(\sqrt{n})$ source nodes to every other node.
As the graph is undirected this can be done using the dynamic algorithm of \Cref{thm:undirectedApproximateDistancesQuery}.
The query time for this many sources is
$
\tilde{O}(n^{\omega(0.5,s+\mu,1)})
$,
which together with the update time of \Cref{thm:undirectedApproximateDistancesQuery} results in our dynamic eccentricities algorithm having the update time
\begin{align*}
\tilde{O}(&
	n^{\omega(1,s+\mu,1)-\mu} / \varepsilon
	+ u(n^s, n)) / \varepsilon \\
	+& n^{\omega(1,\mu+s,1-s)} / \varepsilon^2
	+ n^{(1-s)\cdot\omega} / \varepsilon^{\omega+1}
	+ n^{\omega(0.5,s+\mu,1)}
).
\end{align*}
Note that $u(n^s, n)) > n^{1.5+s}$, even if $\omega = 2$,
hence we can only obtain a subquadratic algorithm for $s < 0.5$.
With this observation the $n^{\omega(0.5,s+\mu,1)}$ term
is subsumed by $n^{\omega(1,\mu+s,1-s)} / \varepsilon^2$.
So the update time of \Cref{thm:approximateEccentricities} is the same as \Cref{thm:undirectedApproximateDistancesQuery}.

Note that \Cref{thm:undirectedApproximateDistancesQuery}
works against non-oblivious adversaries,
so \Cref{thm:approximateEccentricities} works against
\ifdefined\FOCSversion
non-oblivious adversaries as well.
\else
non-oblivious adversaries as well,
because we sample a new random hitting-set $S$ in \Cref{lem:approximateEccentricities} after every update.
\fi
\end{proof}

\section{Open Problems}\label{sec:open}

An obvious open problem is to improve our bounds and proving matching conditional lower bounds. 
Improving our bounds with amortization should already be interesting (except for APSP).
%
Another intriguing question is 
whether fast matrix multiplication is really needed 
to get the bounds we achieve here. 
Note that a conditional lower bound of Abboud and V. Williams \cite{AbboudW14} 
suggests that this is the case for SSSP on directed graphs\footnote{%
In particular, under the Boolean Matrix Multiplication (BMM) conjecture, 
there is no ``combinatorial'' algorithm with $n^{3-\epsilon}$ preprocessing time 
and $n^{2-\epsilon}$ update time 
even for $st$-Reachability (maintaining the reachability between two nodes)}. 
We are not aware of similar lower bounds for other problems. 

Whether the bounds similar to ours hold without the $\epsilon$ term remains open (e.g. {\em exact} SSSP and APSP). 
A subquadratic update time for exact unweighted SSSP 
and exact weighted $st$-shortest paths will be very interesting. 
A subquadratic update time for exact weighted SSSP will be surprising, 
since such bound has already been ruled out for algorithms 
with $n^{3-\epsilon}$ proprocessing time \cite{AbboudW14}. 
Of course, showing $O(n^2)$ worst-case update time for maintaining APSP exactly remains a major open problem.

Finally, note that our algorithms can only maintain distances 
but cannot report the corresponding paths. 
Supporting such operation is interesting, 
especially doing for APSP in the same time complexity 
as Demetrescu and Italiano's algorithm \cite{DemetrescuI04}.

\section*{Acknowledgment}

The authors would like to thank André Nusser for detecting some errors in the previous version.

This project has received funding from the European Research Council (ERC) under the European
Unions Horizon 2020 research and innovation programme under grant agreement No 715672. Danupon
Nanongkai was also partially supported by the Swedish Research Council (Reg. No. 2015-04659).

\bibliographystyle{alpha}
\bibliography{bibliography}

\newpage

\appendix

\section{Worst-Case Standard Technique}
\label{app:worstCase}

\begin{theorem}\label{thm:standardTechnique}
Let $\mathcal{A}$ be a dynamic algorithm with reset time $O(r)$, update time $O(u(t))$ and query time $O(q(t))$, where $t$ is the number of past updates, since the last reset or initialization.

For every $\mu \in \N$ there is an algorithm $\mathcal{W}$ with worst-case update complexity $O(u(\mu) + r / \mu)$ and query complexity $O(q(\mu))$.
\end{theorem}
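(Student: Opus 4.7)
The plan is to de-amortise the reset step via the classical two-copy trick: maintain two independent copies $\mathcal{A}^{(0)}$ and $\mathcal{A}^{(1)}$ of $\mathcal{A}$ and schedule them so that at every moment in time at least one is \emph{ready} — that is, it has $t = O(\mu)$ updates since its last reset and reflects the current graph — and answer every query from the ready copy. I would partition the update stream into phases of length $\mu$ and alternate which copy is ready from phase to phase. Throughout a phase, the ready copy absorbs each update through $\mathcal{A}$ at cost $O(u(\mu))$, while the other copy is being rebuilt in the background so that it will be ready for the following phase.

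The background rebuild, carried out concurrently with the phase's live updates, consists of two ingredients: (a) invoking $\mathcal{A}$'s $O(r)$ reset on a snapshot of the graph taken when the rebuild starts, split into $\mu$ equal slices of $O(r/\mu)$ work performed one per update; and (b) replaying into the freshly-reset copy the updates that arrive during the rebuild, so that by the time the copy is promoted it actually reflects the current graph. Since the reset must finish before the replay can begin on the same copy, I would spread the background preparation of a single copy over two consecutive phases: the first phase drains the $O(r)$ reset while buffering the $\mu$ incoming updates, and the second phase drains the buffer at a rate of two-per-step so that the copy also keeps up with the updates newly arriving during that phase. This inflates the background copy's $t$ at promotion time by a constant factor of $\mu$, which is absorbed by the $O(\cdot)$ in the target bounds.

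The key steps in writing up the proof are then: first, give a precise description of the two copies and the overlapping phase schedule; second, verify the invariant that at every single update some copy is ready with $t = O(\mu)$ and reflects the current graph, so that queries can be answered in $O(q(\mu))$ time on every step (not merely on average); third, bound the per-update cost by summing the update applied to the ready copy ($O(u(\mu))$), the reset slice performed on the background copy ($O(r/\mu)$), and the constantly many replay updates on the background copy ($O(u(\mu))$ each). The main obstacle I expect is the second step: because a copy cannot accept updates while its reset is mid-flight, a naive single-phase rebuild leaves the background copy exactly $\mu$ updates behind at promotion time, so the scheduling has to be chosen carefully to ensure readiness and a bounded $t$ hold simultaneously at every single step. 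Once the scheduling is pinned down, the advertised update bound $O(u(\mu) + r/\mu)$ and query bound $O(q(\mu))$ follow by direct arithmetic.
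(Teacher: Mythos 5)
Your proposal is essentially the paper's proof: two phase-shifted copies, background reset spread over updates, and a catch-up phase at double rate to absorb updates that arrived during the reset. The paper resolves the scheduling tension you flag by dividing each copy's cycle of length $\mu$ into three sub-phases of lengths $\mu/4$ (reset, $O(r/\mu)$ work per step), $\mu/4$ (replay queued updates two-per-step), and $\mu/2$ (live, servicing queries), with the two copies offset by $\mu/2$ so one is always live; this is a consistent instantiation of the same idea, and once you replace "alternate from phase to phase" / "ready for the following phase" with a concrete layout of this kind, your argument goes through.
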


If we were interested in amortized complexity, \Cref{thm:standardTechnique} would be trivial by simply reset the algorithm after every $\mu$ updates.

\begin{proof}[Proof of \Cref{thm:standardTechnique}]

For simplicity assume $\mu$ is a multiple of 4.
We maintain two copies of $\mathcal{A}$ in parallel,
where each copy will have the following life-cycle:

\begin{enumerate}

\item For the next $\mu / 4$ updates, the copy performs its reset operation.
This means for every of the next $\mu / 4$ updates the algorithm will perform
$O(r) / (\mu / 4)) = O(r / \mu)$ operations of the reset routine.

\item The reset routine was started $\mu / 4$ updates in the past,
so the last $\mu / 4$ updates were not yet applied to the copy.
To fix this we will always perform two queued updates for each of the next $\mu / 4$ updates.
This means we have $2 \cdot O(u(\mu/2)) = O(u(\mu))$ cost per update
and after $\mu / 4$ updates the copy has caught up with all queued updates.

\item For the next $\mu / 2$ rounds the copy can perform updates normally in $O(q(\mu))$ time and is able to answer queries in $O(q(\mu))$ time.

\item The copy now has received a total of $\mu$ updates and needs to reset again, so jump back to step 1.

\end{enumerate}

Note that during this cycle, each copy is alternating between being unavailable
(resetting + catching up)
and being available
(performing current updates/answering current queries)
for a sequence of $\mu / 2$ rounds each.
Thus we simply need two copies of the algorithm that are phase shifted,
then one copy is always available for the queries.

Both copies are initialized at the same time,
and the phase-shift can easily be obtained by simply resetting the first copy
directly after the initialization.
So copy 1 starts with phase 1 while copy 2 starts with phase 3.

\end{proof}

\section{Reduction from Distances to Polynomial Matrix Inverse}
\label{app:reduction:distanceToInverse}

Obtaining the distances in graphs via 
computing the inverse (or adjoint) of a polynomial matrix 
is a commonly used technique 
\cite{Sankowski05,Sankowski05ESA,CyganGS15,BrandNS18,BrandS18}.
Usually the reduction yields the exact distances. 
Here we prove a variant of that reduction 
that allows us to obtain upper bounds on the distance, 
which can then be used to obtain approximate distances.

We first repeat the reduction
\Cref{lem:distanceInverseReduction} 
as given in \Cref{sec:overview}.

\begin{lemma}

Let $G$ be a directed graph 
with positive integer edge weights $(c_{u,v})_{(u,v)\in E}$
and let $d$ be some positive integer.
We define $A(G) \in (\F[X]/ \langle X^h \rangle)^{n \times n}$,
such that $A(G)_{u,v} = a_{u,v} X^{c_{u,v}}$ for each edge $(u,v)$,
$A(G)_{v,v} = a_{v,v} X$ for every $v \in V$, and $A_{u,v} = 0$ otherwise.
Here each $a_{u,v}$ is an independent 
and uniformly at random chosen element from $\F$.
\footnote{
	Note that for $c_{u,v} \ge d$ we have $A_{u,v} = 0$,
	as if the edge would not exist.
}
	
Then the matrix $M = \I - A(G)$ is invertible and with probability
at least $1 - dn^2/|F|$ the following property holds:

For every $u,v \in V$ and $0 \le d < h$
the entry $(M^{-1})^{[d]}_{u,v} \neq 0$ 
if and only if 
$\dist(u,v) \le d$.

\end{lemma}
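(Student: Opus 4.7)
The plan is to reduce the lemma to a Schwartz-Zippel argument on a multivariate polynomial that counts weighted walks. Since every entry of $A(G)$ is divisible by $X$ (edge weights are at least $1$ and the diagonal entries are of the form $a_{v,v} X$), one has $A(G)^h \equiv 0 \pmod{X^h}$, so $M = \I - A(G)$ is invertible with
\[
M^{-1} \;=\; \sum_{k=0}^{h-1} A(G)^k.
\]
Expanding the matrix product combinatorially,
\[
(A(G)^k)_{u,v} \;=\; \sum_{W} \Bigl(\prod_{i=1}^k a_{W_i}\Bigr)\, X^{c(W)},
\]
where $W = (W_1,\dots,W_k)$ ranges over $k$-edge walks from $u$ to $v$ in $G$ (augmented with a weight-$1$ self-loop at every vertex), and $c(W)$ denotes the total weight of $W$. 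Collecting the coefficient of $X^d$ yields a polynomial $P_{u,v,d}$ in the variables $\{a_e\}$ of total degree at most $d$ (each edge contributes weight at least $1$), whose monomials correspond to walks from $u$ to $v$ of weight exactly $d$.

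The next step is to show $P_{u,v,d} \not\equiv 0$ as a polynomial if and only if $\dist_G(u,v) \le d$. One direction is immediate: adding weight-$1$ self-loops cannot shorten any path, so when $\dist_G(u,v) > d$ no walk of weight $d$ from $u$ to $v$ exists and $P_{u,v,d}$ is the empty sum. For the converse, fix a shortest $uv$-path $\pi$, necessarily simple, and let $W^{\ast}$ be the walk that first takes the self-loop at $u$ exactly $d - c(\pi)$ times and then traverses $\pi$. Then $W^{\ast}$ has weight exactly $d$ and produces the monomial $a_{u,u}^{d - c(\pi)} \prod_{e\in\pi} a_e$. Any other walk producing the same monomial uses the same edge multiset, but because $\pi$ is simple the walker visits $u$ only before traversing the first edge of $\pi$, forcing all self-loops to appear at the start. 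Hence this monomial has coefficient exactly $1$ in $P_{u,v,d}$, so $P_{u,v,d}$ is nonzero over every field $\F$, regardless of characteristic.

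It remains to invoke Schwartz-Zippel and union-bound. For each triple $(u,v,d)$ with $P_{u,v,d} \not\equiv 0$, a uniformly random choice of the $a_e \in \F$ makes $P_{u,v,d}$ evaluate to zero with probability at most $\deg P_{u,v,d} / |\F| \le h/|\F|$; summing over the at most $n^2 h$ relevant triples and noting that the other $(u,v,d)$ already evaluate to $0$ deterministically yields the claimed failure probability. The main obstacle is the combinatorial argument of the previous paragraph: since the coefficients of $P_{u,v,d}$ are walk counts, they could a priori vanish modulo $\operatorname{char}(\F)$ and defeat Schwartz-Zippel. Padding the shortest path with self-loops at the source $u$, rather than at an interior vertex of $\pi$, is the key trick, as it forces the witnessing walk to be unique and the coefficient of its monomial to be exactly $1$ over any field.
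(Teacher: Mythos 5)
Your proof follows the same overall strategy as the paper's: expand $M^{-1}=\sum_{k<h}A(G)^k$ combinatorially as a sum over walks, recognize the coefficient of $X^d$ as a polynomial $P_{u,v,d}$ in the edge variables, invoke Schwartz--Zippel, and union-bound over all triples. The self-loops at every vertex play the same role in both proofs, turning the condition ``a walk of weight exactly $d$ exists'' into ``$\dist(u,v)\le d$''. The one place you go further than the paper is in justifying that $P_{u,v,d}$ is a \emph{nonzero} polynomial over the field $\F$ when $\dist(u,v)\le d$. The paper simply asserts that $P$ is the zero polynomial iff no suitable walk exists, but over a finite field $\F=\Z_p$ this is not immediate: two distinct walks can traverse the same multiset of edges (e.g.\ two ways of interleaving cycles through a common vertex), producing identical monomials whose integer count may vanish modulo $p$. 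Your unique-witness argument — pad a simple shortest path with all its self-loops at the source $u$, then observe that $\pi$ being simple forces any walk with that edge multiset to place the self-loops at the beginning and traverse $\pi$ in order — pins down a monomial with coefficient exactly $1$, so $P_{u,v,d}$ is nonzero over every field. This is a real improvement in rigor: it identifies and closes a gap that the paper's proof leaves implicit, while keeping the same degree bound $\le h$ (you say $\le d$, which is a slightly tighter and equally correct bound since $d<h$) and the same $1-hn^2/|\F|$ union bound. The edge case $u=v$ (empty path, all self-loops) is not spelled out in your write-up but is easily covered by the same uniqueness reasoning.
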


\begin{proof}

For now assume $A(G)_{v,v} = 0$, which is a slightly simpler case.
Let $W_{s,t}^k$ be the set of walks from $s$ to $t$ in $G$ using exactly $k$ steps,
and given a walk $w = (v_0,v_1,...,v_k)$ define $c_w$ to be the distance of the walk, 
so $c_w := \sum_{i=1}^{k} c_{v_{i-1},v_i}$.
Further define $a_w := \prod_{i=1}^{k} a_{v_{i-1}, v_i}$,
then
\ifdefined\FOCSversion
\begin{align*}
A(G)^k_{s,t} 
&= \sum_{(v_0,v_1,...,v_k) \in W^k_{s,t}} \prod_{i=1}^{k} a_{v_{i-1},v_i} X^{c_{v_{i-1},v_i}} \\
&= \sum_{w \in W^k_{s,t}} a_w X^{c_w}
.
\end{align*}
\else
$$A(G)^k_{s,t} 
= \sum_{(v_0,v_1,...,v_k) \in W^k_{s,t}} \prod_{i=1}^{k} a_{v_{i-1},v_i} X^{c_{v_{i-1},v_i}}
= \sum_{w \in W^k_{s,t}} a_w X^{c_w}
.$$
\fi

Thus the coefficient of $X^d$ of $\sum_{i=0}^{h-1} (A(G)^i)_{s,t}$ is the sum of all $a_w$, 
where $w$ are $st$-walks of distance $d$, using less than $h$ steps.
This coefficient can be considered a polynomial $P$ in $(a_{u,v})_{(u,v)\in E}$ of degree at most $h$.
This polynomial $p$ is the zero-polynomial, if and only if no such $st$-walk exists.
We can now apply the Schwartz-Zippel Lemma \cite{Schwartz80,Zippel79}:
When evaluating the polynomial $P$ of degree at most $h$, 
where the input $(a_{u,v})_{u,v\in E}$ 
are independent and uniformly at random chosen elements from $\F$,
then with probability at most $h/|\F|$ the polynomial evaluates to zero, 
even though it is not the zero-polynomial.

Thus via union bound, with probability at least $1-hn^{2}/|\F|$, 
we have that for every $s,t \in V$ and $0 \le d < h$ the entry
$(\I - A(G))^{[d]}_{s,t} = \sum_{i=0}^{h-1} (A(G)^i)^{[d]}_{s,t}$ is non-zero, 
if and only if there exists a walk from $u$ to $v$ of distance exactly $d$.

Now note that having extra entries $A(G)_{v,v} = a_{v,v} X$ 
is equivalent to adding self-loops of cost 1 to every node.
By adding these self-loops there always exists an $st$-walk of distance $d$ for any $d \ge \dist(s,t)$.

\end{proof}

Note that our reduction uses walks, i.e. nodes and edges are allowed to be used more than once. 
In \cite{Koutis08,Williams09} a similar reduction was used to obtain simple paths of length $k$ (where nodes/edges are used only once per path). 
There the matrix $A(G) \in \F[(X_v)_{v \in V}]$ was multivariate 
and $A(G)_{u,v} = X_u$ for every edge $(u,v)$. 
The matrix $A(G)^k$ can then be used to find $k$-paths.

\section{Bounding $\omega( \cdot, \cdot, \cdot)$}
\label{app:omega}

We can use the upper bounds from \cite{GallU18} for $\omega(1,1,k)$ 
to get upper bounds on any $\omega(a,b,c)$ via the following routine. 
Without loss of generality assume $a \ge b \ge c$, 
otherwise rename/reorder the variables.
\begin{algorithmic}[1]
\IF{$a = b =c$}
\RETURN $a \cdot \omega$
\ENDIF
\IF{$a=b$}
\RETURN $a \cdot \omega(1,1,c/a)$
\ENDIF
\IF{$b=c$}
\RETURN $b \cdot \omega(1,1,a/b)$ (Note that \cite{GallU18} also proved bounds for $\omega(1,1,k)$ where $k > 1$.)
\ENDIF
\RETURN $\min\{(a-b) + b \cdot \omega(1,1,c/b), (b-c) + c \cdot \omega(1,1,a/c)\}$
\end{algorithmic}

Using this bound on $\omega(a,b,c)$, one can use an optimization program 
to balance the terms of the complexities of our algorithms.%
\footnote{
	An online version for bounds on $\omega(a,b,c)$ is available at \url{https://people.kth.se/~janvdb/matrix.html}.
	The optimization program we used was implemented in python and is available at \url{https://people.kth.se/~janvdb/shortestpaths_complexity.zip}.
}
For instance for the complexity $O(n^{1+\nu}+n^{\omega(1,1,\nu)-\nu})$ 
of \Cref{lem:dynamicMatrixInverse} for $s=0$,
we have to compute $\min_{0 \le \varepsilon \le 1} \max \{ 1+\nu, \omega(1,1,\nu)-\nu\}$ 
in order to get the smallest possible update complexity.

\section{Hitting-Set Arguments}
\label{app:hittingSet}

\begin{lemma}\label{lem:singlePathHittingSet}
Let $G$ be a graph with $n$ nodes and let $H \subset V$ be a uniformly chosen random subset of size $c \frac{n}{d} \ln n$.

Then for any path $P = (v_1,...,v_l)$ with $l \ge d$
there exists $1 \le i \le d$ such that $v_i \in H$ with probability at least $1-n^{-c}$.
\end{lemma}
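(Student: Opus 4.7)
The plan is a direct application of the standard hitting set argument. Since the path $P=(v_1,\ldots,v_l)$ has length $l\ge d$, it suffices to show that with the required probability at least one of the first $d$ vertices $v_1,\ldots,v_d$ lands in $H$. In fact the statement about the existence of some $1\le i\le d$ with $v_i\in H$ reduces to bounding, from above, the probability that the fixed $d$-element set $T=\{v_1,\ldots,v_d\}$ (which, being part of a path, consists of distinct vertices) is disjoint from $H$.

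Next I would compute this probability exactly. Since $H$ is a uniformly random subset of $V$ of size $k:=c(n/d)\ln n$, the probability that $T\cap H=\emptyset$ equals
\[
\Pr[T\cap H=\emptyset] \;=\; \frac{\binom{n-d}{k}}{\binom{n}{k}} \;=\; \prod_{i=0}^{k-1}\frac{n-d-i}{n-i}.
\]
Each factor satisfies $(n-d-i)/(n-i) = 1 - d/(n-i) \le 1 - d/n$, so the whole product is at most $(1-d/n)^k$. Applying the standard estimate $1-x\le e^{-x}$ gives
\[
\Pr[T\cap H=\emptyset] \;\le\; e^{-dk/n} \;=\; e^{-c\ln n} \;=\; n^{-c},
\]
and taking the complement yields the claimed $1-n^{-c}$ probability bound.

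There is essentially no obstacle: the argument is a textbook Ullman--Yannakakis style hitting set calculation. The only point that requires a line of care is bounding the ratio of binomial coefficients, where one must notice that each factor $(n-d-i)/(n-i)$ is actually $\le 1-d/n$ (because $d/(n-i)\ge d/n$ for $i\ge 0$), which makes the product bound $(1-d/n)^k$ valid and the plugged-in value of $k=c(n/d)\ln n$ immediately cancels the $d/n$ to give $n^{-c}$.
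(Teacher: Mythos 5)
Your proof is correct and follows essentially the same argument as the paper: bound the probability that the first $d$ vertices of the path all miss $H$, and then conclude via $(1-x)^m \le e^{-mx}$. The only cosmetic difference is that you expand the ratio of binomial coefficients as a product over the $k = |H|$ selections (getting $(1-d/n)^k$), whereas the paper writes the symmetric bound $(1-|H|/n)^d$ over the $d$ path vertices; both give $e^{-d|H|/n} = n^{-c}$. Incidentally, the paper's displayed chain contains a typo, writing $e^{-|H|d/n} = 1 - n^{-c}$ where it should read $= n^{-c}$; your version is the correct one.
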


\begin{proof}
The probability of there being no $1 \le i \le d$ with $v_i \in H$ is at most
$(1-|H|/n)^{d} \le e^{-|H|d/n} = 1-n^{-c}$.
\end{proof}

\begin{proof}[Proof of \Cref{lem:allPathsHittingSet}]
From \Cref{lem:singlePathHittingSet} via union bound over all pairs $(u,v) \in V^2$, we obtain the following statement:

With probability at least $1-n^{2-c}$, we have that for every shortest path
$P = (v_1,...,v_l)$ with $l \ge d$
there exists $1 \le i \le d$ such that $v_i \in H$.

Since every segment of a shortest path is itself a shortest path,
we also have that, given a shortest path $P = (s,...,t)$ with $l \ge d$,
the path be decomposed into segments
$s \to h_1 \to h_2 \to ... \to h_k \to t$,
where $h_i \in H$ for every $i=1,...,k$
and each segment uses at most $d$ hops.

\end{proof}

\paragraph{Non-oblivious adversaries}

All algorithms that build on top of the results of \Cref{sec:smallHop}, use hitting-set arguments.
The answer of queries (e.g. the quality of the approximation) 
could leak information about which nodes were sampled as hitting-sets.
However, we can simply sample a new hitting set after every edge update. 
This way adversaries can not break the algorithm 
by choosing their updates according to the output of queries.

\section{Approximate Diameter, Radius and Eccentricities}
\label{app:approximateDiameter}

In this section we prove
that we can use the static diameter algorithm from \cite{RodittyW13}
even if the given distances are only $(1+\varepsilon)$-approximate.
The proof for diameter is identical to \cite{RodittyW13},
we just verify that it does not break when given approximate distances.
We also verify that our modification to the eccentricities algorithm from \cite{ChechikLRSTW14}
works with approximate distances.

For a subset $U \subset V$ we write
$\dist(v, U) := \min_{u \in U} \dist(v,u)$
and we call this the distance of node $v$ to set $S$.

\begin{lemma}\label{lem:neighborSampling}

Let $\tilde{d}(u,v)$ be $(1+\varepsilon)$-approximate
distances between any $u$ an $v$.
We write $N(v)$ for the set of size $\sqrt{n}$, such that
$\tilde{d}(v, u) \le \tilde{d}(v, w)$
for all $u \in N(v), w \in V \setminus N(v)$.
We will break ties by choosing nodes with smallest index first.

If choose set $S \subset V$ uniformly at random with size
$c \sqrt{n} \log n = \tilde{O}(\sqrt{n})$ for some constant $c$,
then with probability at least $n^{1-c}$ we have $N(v) \cap S \neq \emptyset$ for all $v \in V$.

\end{lemma}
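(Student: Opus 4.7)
The plan is to fix a single vertex $v$, bound the probability that the random set $S$ fails to hit $N(v)$, and then union-bound over all $v \in V$. The key observation is that, although $N(v)$ is defined via the approximate distance oracle $\tilde{d}$, this oracle is an external object whose values do not depend on the random set $S$; hence $N(v)$ is a \emph{deterministic} subset of $V$ of size exactly $\sqrt{n}$, fixed before $S$ is drawn.

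First, I would fix $v \in V$ and upper bound $\Pr[S \cap N(v) = \emptyset]$. Since $S$ is a uniformly random subset of $V$ of size $|S| = c\sqrt{n}\log n$ and $|N(v)| = \sqrt{n}$, this is a hypergeometric tail probability,
\[
\Pr[S \cap N(v) = \emptyset]
= \frac{\binom{n - \sqrt{n}}{|S|}}{\binom{n}{|S|}}
= \prod_{i=0}^{|S|-1} \frac{n - \sqrt{n} - i}{n - i}
\le \left(1 - \tfrac{1}{\sqrt{n}}\right)^{|S|}.
\]
Using $1 - x \le e^{-x}$, the right-hand side is at most $e^{-|S|/\sqrt{n}} = e^{-c \log n} = n^{-c}$.

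Next, I would apply a union bound over all $n$ choices of $v$:
\[
\Pr\!\left[\exists v \in V : S \cap N(v) = \emptyset\right]
\le n \cdot n^{-c} = n^{1-c},
\]
so with probability at least $1 - n^{1-c}$ we have $N(v) \cap S \neq \emptyset$ for every $v$, as claimed (the statement of the lemma should presumably read ``at least $1 - n^{1-c}$'').

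There is essentially no main obstacle here; this is a textbook hitting-set calculation. The only conceptual point to check is the independence of $N(v)$ from $S$, which is immediate from the fact that $\tilde{d}$ is treated as an oracle whose outputs are fixed (see footnote~\ref{foot:obliviousOracle}), so the tie-breaking-by-index rule yields a set $N(v)$ that is determined before $S$ is sampled.
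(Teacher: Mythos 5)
Your proof is correct and matches the paper's argument essentially line for line: fix $v$, bound $\Pr[S\cap N(v)=\emptyset]$ by $(1-1/\sqrt{n})^{|S|}\le n^{-c}$ (via the same without-replacement product), and union-bound over the $n$ vertices. You also correctly flag the typo in the lemma statement (it should read "with probability at least $1-n^{1-c}$"), and your remark that $N(v)$ is determined by $\tilde d$ independently of $S$ is exactly the point the paper addresses in the discussion following the lemma.
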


\begin{proof}

Fix some node $v \in V$, then the probability of not a single node of $S$ hitting $N(v)$ is
$(1-|N(v)|/n)(1-|N(v)|/(n-1))...(1-|N(v)|/(n-|S|))
\le (1-1/\sqrt{n})^{c\sqrt{n}\log n}
\le n^{-c}$

Via union bound the probability that there exists at least one $v \in V$ such that $N(v) \cap S = \emptyset$ is at most $n^{1-c}$.
Hence with probability at least $1-n^{1-c}$ we have that for all $v \in V$ $S \cap N(v) \neq \emptyset$.

\end{proof}

Note that in \Cref{lem:neighborSampling} we are first given
the approximate distances $\tilde{d}(v,u)$ for all $u,v \in V$ and thus (implicitly) the sets $N(v)$,
and then afterward we sample the set $S$ independently of $\tilde{d}(v,u)$ and $N(v)$.

In \Cref{alg:approximateDiameter} the order is flipped:
We first sample $S$ and then compute/query $\tilde{d}(w,u)$ for every $u \in V$
and construct $W = N(w)$.

We can still use \Cref{lem:neighborSampling}
to prove the correctness of \Cref{alg:approximateDiameter},
because the sets $N(v)$ are already fixed before sampling $S$, 
when satisfying the requirement of footnote \ref{foot:obliviousOracle}.

\begin{proof}[Proof of \Cref{thm:approxDiameter1.5}]
Let $D = 3h+z$ be the diameter of $G$ where $z \in \{0,1,2\}$.
The claim of \Cref{thm:approxDiameter1.5} is that the output $\tilde{D}$ of \Cref{alg:approximateDiameter} satisfies
$\min\{ (2-3\varepsilon)h+z, 2h+z\} \le \tilde{D} \le (1+\varepsilon) D$.

Let $a,b \in V$ be such that $\dist(a,b) = D$.
First notice that the algorithm always
returns the approximate depth of some shortest paths tree and hence
$\tilde{D} \le (1+\varepsilon) D$.

Consider the cases $\dist(w,S) \le h$ and $\dist(w,S) > h$:
\paragraph{Case $\dist(w,S) \le h(1+\varepsilon)$:}
If $\dist(w,S) \le h(1+\varepsilon)$ then
$\dist(a, S) \le \tilde{d}(s,S) \le
\le \tilde{d}(w,S)
\le (1+\varepsilon)\dist(w,S)
\le (1+\varepsilon)^2 h$.

As $ 3h+z = \dist(a,b)
\le \min_{s \in S} \dist(a,s)+\dist(s,b)
\le \dist(a,S) + \tilde{D}
\le (1+\varepsilon)^2 h + \tilde{D}$,
we have $\tilde{D} \ge (3-(1+\varepsilon)^2)h+z \ge (2-3\varepsilon)h+z$.
The last inequality only holds when assuming $\varepsilon \le 1$.

\paragraph{Case $\dist(w,S) > h(1+\varepsilon)$:}

We will first argue that $\max_{v \in W} \tilde{d}(w,v) > h(1+\varepsilon)$:
Assume $\max_{v \in W} \tilde{d}(w,v) \le h(1+\varepsilon)$.
W.h.p. there is at least one $s \in S \cap W$,
because of \Cref{lem:neighborSampling} and $W = N(w)$, so
$h(1+\varepsilon) < 
\dist(w,S) 
\le \tilde{d}(w,S) 
\le \tilde{d}(w,s) 
\le \max_{v \in W} \tilde{d}(w,v) 
\le h(1+\varepsilon)$,
which is a contradiction.
Thus $\max_{v \in W} \tilde{d}(w,v) > h(1+\varepsilon)$.

This implies that all nodes $v$ of distance at most $h$ from $w$ must be included in the set $W$,
because $\dist(w,v) \le h$ implies $\tilde{d}(w,v) \le h(1+\varepsilon)$ 
and because $W$ are the (approximately) closest neighbors of $w$.
This also implies that there exists $w' \in W$ on the shortest path from $w$ to $b$,
such that $\dist(w,w') = h$.

Note that we can also assume $ecc(w) < 2h+z$,
because otherwise we already obtain $\tilde{D} \ge 2h+z$ via the maximum distance from $w$.
It follows that $\dist(w,b) < 2h + z$, which means
$\dist(w',b)
= \dist(w,b) - \dist(w,w')
< 2h + z - h
= h + z$.
Since $\dist(a,b) = 3h + z$ this implies $2h + 1 \le \dist(a,w') \le \tilde{d}(a,w') \le \tilde{D}$.
\end{proof}

\begin{lemma}\label{lem:approximateRadius}
Consider \Cref{alg:approximateDiameter} when run on an undirected graph, but instead of the maximum depth, we return the minimum depth: $\min_{s \in S \cup W \cup \{w\}} \max_{v \in V} \tilde{d}(s,v)$.

Let $\tilde{R}$ be the returned value and $R$ be the radius of the graph, then w.h.p.
$$
R \le \tilde{R} \le ((1.5+\varepsilon) R + 2/3)(1+\varepsilon)
$$
\end{lemma}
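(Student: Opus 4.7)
The lower bound $R\leq \tilde R$ is immediate: for every $u\in V$, $\max_v\tilde d(u,v)\geq \max_v\dist(u,v)=\operatorname{ecc}(u)\geq R$, and $\tilde R$ is a minimum of such quantities. For the upper bound, let $r$ be a radius center with $\operatorname{ecc}(r)=R$, and write $\alpha:=\tilde d(w,S)$, which by choice of $w$ satisfies $\tilde d(v,S)\leq \alpha$ for every $v$. Fix a threshold $\beta$ close to $R/2$ and set the corresponding integer true-distance threshold $\gamma:=\lfloor \beta/(1+\varepsilon)\rfloor$, so that $\dist(w,u)\leq \gamma$ implies $\tilde d(w,u)\leq \beta$.

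\textbf{Case A ($\alpha\leq \beta$).} Applied to $r$, this gives $\tilde d(r,S)\leq \alpha\leq \beta$, so some $s^*\in S$ has $\dist(r,s^*)\leq \tilde d(r,s^*)\leq \beta$. The undirected triangle inequality yields $\dist(s^*,v)\leq \dist(s^*,r)+\dist(r,v)\leq \beta+R$ for every $v$, hence $\max_v\tilde d(s^*,v)\leq (1+\varepsilon)(R+\beta)$. Since $s^*\in S\subseteq S\cup W\cup\{w\}$, this bounds $\tilde R$.

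\textbf{Case B ($\alpha>\beta$).} By Lemma~\ref{lem:neighborSampling} applied with the sampled set $S$ and the neighbor set $W=N(w)$, we have $W\cap S\neq\emptyset$ w.h.p., so some $s\in W\cap S$ satisfies $\tilde d(w,s)\geq \alpha>\beta$. Consequently $x:=\max_{u\in W}\tilde d(w,u)\geq \tilde d(w,s)>\beta$, and therefore every $u$ with $\tilde d(w,u)\leq \beta<x$ must belong to $W$ (otherwise $W$ could be improved by swapping). Now split on $\dist(w,r)$. If $\dist(w,r)\leq \gamma$, then $\operatorname{ecc}(w)\leq \dist(w,r)+R\leq R+\gamma$, giving $\tilde R\leq (1+\varepsilon)(R+\gamma)$ via the candidate $w$. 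Otherwise $\dist(w,r)>\gamma$, so along a shortest $w\to r$ path (the graph is unweighted) there is a node $v^*$ with $\dist(w,v^*)=\gamma$; by the choice of $\gamma$, $\tilde d(w,v^*)\leq (1+\varepsilon)\gamma\leq \beta$, so $v^*\in W$. Since $\dist(v^*,r)=\dist(w,r)-\gamma\leq R-\gamma$, we get $\operatorname{ecc}(v^*)\leq 2R-\gamma$ and thus $\tilde R\leq (1+\varepsilon)(2R-\gamma)$.

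\textbf{Balancing and bookkeeping.} Choosing $\beta$ so that $R+\beta\approx 2R-\gamma$ with $\gamma\approx \beta/(1+\varepsilon)$ (explicitly $\beta=R(1+\varepsilon)/(2+\varepsilon)$) gives a common bound of $(1+\varepsilon)R(3+2\varepsilon)/(2+\varepsilon)=(1+\varepsilon)(1.5+O(\varepsilon))R$ from Cases~A, B1 and~B2; the gap between $\beta/(1+\varepsilon)$ and its integer floor $\gamma$ contributes the additive constant absorbed into the stated $+2/3$ term. Combining all three cases yields $\tilde R\leq((1.5+\varepsilon)R+2/3)(1+\varepsilon)$.

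\textbf{Main obstacle.} The delicate step is Case~B2: we must certify that the intermediate node $v^*$ on the true shortest $w$-$r$ path actually lies in $W$, which is defined using the \emph{approximate} distance oracle $\tilde d$. The key is to expand the Case-A/Case-B threshold $\beta$ slightly above the true-distance threshold $\gamma$, so that $(1+\varepsilon)\gamma\leq \beta$; the resulting mismatch between $\gamma$ and $\lfloor\beta/(1+\varepsilon)\rfloor$, together with the integrality of $\dist(w,v^*)$ along the shortest path, is precisely what produces the small additive slack $2/3$ in the final bound.
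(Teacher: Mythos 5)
Your proof follows essentially the same route as the paper's: pick a radius center $r$, show (via $S\cap W\neq\emptyset$) that $W$ absorbs all nodes with small $\tilde d(w,\cdot)$, and then locate an intermediate node on a true shortest path from $w$ whose eccentricity is controlled. The case split (your A / B1 / B2 versus the paper's ``some candidate is close to $r$'' / ``none is'') is organized a bit differently, but the mechanism is the same. The lower bound, Case A, Case B1, and the existence and membership of $v^*$ in $W$ in Case B2 are all correct.

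The one genuine gap is in the balancing and bookkeeping. Your Case B2 bound is $(1+\varepsilon)(2R-\gamma)$ with $\gamma=\lfloor\beta/(1+\varepsilon)\rfloor\ge\beta/(1+\varepsilon)-1$, so it is bounded only by $(1+\varepsilon)\bigl(2R-\beta/(1+\varepsilon)+1\bigr)$; the floor costs a full $+1$. Plugging in your choice $\beta=R(1+\varepsilon)/(2+\varepsilon)$, which balances Case A against the \emph{ideal} (un-floored) B2 bound, gives $2R-\gamma\le R(3+2\varepsilon)/(2+\varepsilon)+1\le(1.5+\varepsilon)R+1$, i.e.\ a final bound of $(1+\varepsilon)\bigl((1.5+\varepsilon)R+1\bigr)$. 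That is weaker than the claimed $(1+\varepsilon)\bigl((1.5+\varepsilon)R+2/3\bigr)$, so the ``absorbed into the $+2/3$ term'' step does not actually close. The fix is to embed an additive constant into the threshold itself: take $\beta=0.5R+2/3$ (the paper's choice), or equivalently balance $R+\beta$ against $2R-\beta/(1+\varepsilon)+1$ (yielding $\beta=(R+1)(1+\varepsilon)/(2+\varepsilon)$). Then Case A gives $(1+\varepsilon)(1.5R+2/3)$, and in Case B2 the residual floor loss becomes $1-(2/3)/(1+\varepsilon)=(1/3+\varepsilon)/(1+\varepsilon)\le 2/3$ for $\varepsilon\le1$, recovering exactly the stated $(1+\varepsilon)\bigl((1.5+\varepsilon)R+2/3\bigr)$. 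So your structure and obstacle analysis are on target; only the explicit choice of $\beta$ and the constant accounting need to be adjusted.
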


\begin{proof}
We have $\tilde{R} \ge R$, because
\ifdefined\FOCSversion
\begin{align*}
R &= \min_{u \in V} \max_{v \in V} \dist(u,v)
\le \min_{u \in V} \max_{v, \in V} \tilde{d}(u,v) \\
&\le \max_{v \in V} \tilde{d}(s,v) \text{ for all } s \in V.
\end{align*}
\else
$$R
= \min_{u \in V} \max_{v \in V} \dist(u,v)
\le \min_{u \in V} \max_{v, \in V} \tilde{d}(u,v)
\le \max_{v \in V} \tilde{d}(s,v) \text{ for all } s \in V.$$
\fi
So we are left with proving $\tilde{R} \le (1.5 R + 2/3)(1+\varepsilon)$.

Let $x$ be the node such that $R = ecc(x)$,
and for any $v \in S \cup W \cup \{ w\}$ let $t_v \in V$ be such that $\dist(v,t_v) = ecc(v)$.
Note that for any $v \in S \cup W \cup \{ w\}$ we have that
\ifdefined\FOCSversion
\begin{align*}
\tilde{R}/(1+\varepsilon) &\le \max_{u \in V} \tilde{d}(v,u)/(1+\varepsilon)
\le ecc(v) \\
&\le \dist(v,x) + \dist(x,v_t) \\
&\le \dist(v,x) + ecc(x)
= \dist(v,x) + R,
\end{align*}
\else
$$\tilde{R}/(1+\varepsilon) \le \max_{u \in V} \tilde{d}(v,u)/(1+\varepsilon)
\le ecc(v) 
\le \dist(v,x) + \dist(x,v_t) 
\le \dist(v,x) + ecc(x)
= \dist(v,x) + R,$$
\fi
so whenever some $v \in S \cup W \cup \{ w\}$ has the property $\dist(v,x) \le 0.5 R + 2/3$, then $\tilde{R} \le (1.5 R + c)(1+\varepsilon)$ and we are done.
Thus we can assume $\dist(x, S) > 0.5 R + 2/3$ and by definition of $w$ we have
$\tilde{d}(w, S) \ge \tilde{d}(x, S) \ge \dist(x,S) > 0.5 R + 2/3$.

Since (w.h.p.) $S \cap W \neq \emptyset$, we have some node $s \in W$ with $\tilde{d}(w,s) > 0.5 R + 2/3$.
Thus $u \in V$ with $\tilde{d}(w, u) \le 0.5 R + c$ must be in $W$ as well by definition of $W$.
This then also implies that nodes $u \in V$ with $\dist(w,u) \le (0.5 R + 2/3) / (1+\varepsilon)$ must be in $W$.

Hence we can choose some $w' \in W$ on the shortest $wx$-path with $\dist(w, w') = \lfloor (0.5 R + 2/3) / (1+\varepsilon) \rfloor$.
We now know
\ifdefined\FOCSversion
\begin{align*}
R
&= ecc(x)
> \dist(w,x) 
= \dist(w,w') + \dist(w', x) \\
&= \lfloor (0.5 R + 2/3)/(1+\varepsilon) \rfloor + \dist(w', x),
\end{align*}
so
\begin{align*}
\dist(w', x) 
&< R - \lfloor (0.5 R + 2/3)/(1+\varepsilon) \rfloor) \\
&\le R - 0.5 R / (1+\varepsilon) + (1-2/3/(1+\varepsilon)) \\
&\le (0.5+\varepsilon) R + 2/3)
\end{align*}
\else
$$
R
= ecc(x)
> \dist(w,x) 
= \dist(w,w') + \dist(w', x)
= \lfloor (0.5 R + 2/3)/(1+\varepsilon) \rfloor + \dist(w', x),$$
so
$$\dist(w', x) 
< R - \lfloor (0.5 R + 2/3)/(1+\varepsilon) \rfloor)
\le R - 0.5 R / (1+\varepsilon) + (1-2/3/(1+\varepsilon))
\le (0.5+\varepsilon) R + 2/3)
$$
\fi
Here the last inequality uses $\varepsilon \le 1$.
As $w' \in W$, we have
\ifdefined\FOCSversion
\begin{align*}
\tilde{R} 
&\le \tilde{ecc}(w')
\le ecc(w') (1+\varepsilon) \\
&\le (\dist(w',x) + ecc(x))(1+\varepsilon) \\
&= ((1.5+\varepsilon) R + 2/3) (1+\varepsilon).
\end{align*}
\else
$$\tilde{R} 
\le \tilde{ecc}(w')
\le ecc(w') (1+\varepsilon)
\le (\dist(w',x) + ecc(x))(1+\varepsilon)
= ((1.5+\varepsilon) R + 2/3) (1+\varepsilon).$$
\fi
\end{proof}

The following approximate eccentricities algorithm is based on \cite{ChechikLRSTW14}.
The algorithm in \cite{ChechikLRSTW14} yields a true $3/5$-approximation, 
but needs to iterate over all edges.
As we are interested in subquadratic results, 
we verify here that skipping this iteration over all edges does not break the algorithm,
however, we do incur an additive error.

\begin{lemma}\label{lem:approximateEccentricities}
Consider \Cref{alg:approximateDiameter} when run on an undirected graph, but instead of the maximum depth, we return the following $\tilde{ecc}$:
\ifdefined\FOCSversion
\begin{align*}
\tilde{ecc}(s) &:= \max_{v \in V} \tilde{d}(s,v)
\text{ for } s \in S \cup W \cup \{w\} \\
\tilde{ecc}(v) &:= \max_{s \in S \cup W \cup \{w \}} \max \{ \tilde{d}(s,v), \tilde{ecc}(s) - \tilde{d}(s,v) \} \\
&\text{ for all other } v \in V
\end{align*}
\else
\begin{align*}
\tilde{ecc}(s) &:= \max_{v \in V} \tilde{d}(s,v)
&\text{ for } s \in S \cup W \cup \{w\} \\
\tilde{ecc}(v) &:= \max_{s \in S \cup W \cup \{w \}} \max \{ \tilde{d}(s,v), \tilde{ecc}(s) - \tilde{d}(s,v) \} 
&\text{ for all other } v \in V
\end{align*}
\fi

Then w.h.p. for all $v \in V$ we have
$$
\frac{3-6\varepsilon}{5} ecc(v) - 4/7 \le \tilde{ecc}(v) \le ecc(v)
$$
\end{lemma}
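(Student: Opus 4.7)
The plan is to translate the static $5/3$-approximation analysis of Chechik et al.\ for eccentricities to our setting, replacing exact BFS distances with the $(1+\varepsilon)$-approximate distances $\tilde{d}$ returned by the dynamic oracle. The upper bound is easy: for any $s \in S \cup W \cup \{w\}$, $\tilde{d}(s,v) \le (1+\varepsilon)\dist(s,v) \le (1+\varepsilon)\,ecc(v)$, and using $ecc(s)\le\dist(s,v)+ecc(v)$ one gets $\tilde{ecc}(s)-\tilde{d}(s,v) \le (1+\varepsilon)\,ecc(v)+\varepsilon\,\dist(s,v) \le (1+2\varepsilon)\,ecc(v)$. After absorbing the slack by rescaling $\varepsilon$, both terms of the max are bounded by $ecc(v)$.

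For the lower bound, fix $v$ and let $u \in V$ realize $\dist(v,u)=ecc(v)$. If $v \in S \cup W \cup \{w\}$ there is nothing to prove, since $\tilde{ecc}(v)\ge\tilde{d}(v,u)\ge\dist(v,u)=ecc(v)$. Otherwise I would use, for any $s \in S \cup W \cup \{w\}$, the two bounds
\[
\tilde{d}(s,v)\ \ge\ \dist(s,v), \qquad \tilde{ecc}(s)-\tilde{d}(s,v)\ \ge\ \dist(s,u)-(1+\varepsilon)\dist(s,v)\ \ge\ ecc(v)-(2+\varepsilon)\dist(s,v),
\]
where the first step in the second bound uses $\tilde{ecc}(s)\ge\tilde{d}(s,u)\ge\dist(s,u)$ and the last step is the triangle inequality $\dist(s,u)\ge ecc(v)-\dist(s,v)$. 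A single $s$ already gives $\max(\tilde{d}(s,v),\tilde{ecc}(s)-\tilde{d}(s,v))\ge ecc(v)/3$, which is only a $1/3$-approximation; to reach $3/5$ the choice of $s$ must be tailored.

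Following the CLRSTW scheme, I would split into cases on $\dist(w,v)$. If $\dist(w,v)\ge\tfrac{2}{5}ecc(v)$, taking $s=w$ and balancing the two bounds above yields roughly $(3/5-O(\varepsilon))\,ecc(v)$. Otherwise $\dist(w,v)<\tfrac{2}{5}ecc(v)$, and since $w$ maximizes the (approximate) distance to $S$, every node in $V$ -- in particular $u$ -- is within comparable distance of $S$. Here I would invoke the hitting-set lemma (\Cref{lem:singlePathHittingSet}) applied to a prefix of the shortest $v$-$u$ path of length $\lceil \tfrac{2}{5}ecc(v)\rceil$: w.h.p.\ some $s^\star \in S$ sits on that prefix, giving $\dist(s^\star,v)\in[\tfrac{2}{5}ecc(v)-O(1),\tfrac{2}{5}ecc(v)]$, and substituting into the two bounds above yields the claimed $\tfrac{3-6\varepsilon}{5}ecc(v)-\tfrac{4}{7}$. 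The role of the close-neighbor set $W$ in Algorithm~\ref{alg:approximateDiameter} is then used only to cover the residual regime in which $v$ lies near $w$ but far from every sampled node, where a vertex $w' \in W$ close to $w$ can be substituted for $w$ by essentially the same argument as in \Cref{thm:approxDiameter1.5}.

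The main obstacle will be the bookkeeping of the two sources of error in the final case: the multiplicative $(1+\varepsilon)$ slack from using $\tilde{d}$ rather than $\dist$, and the $\pm 1$ integer rounding when the length $ecc(v)$ is not divisible by $5$. Verifying that these combine to the explicit additive constant $4/7$ -- rather than, say, an asymptotic $O(1)$ -- will require carefully choosing the length of the path prefix probed by the hitting set and the exact case thresholds, in direct analogy with how the additive $-1/3$ appears in \Cref{thm:approxDiameter1.5} for the $3/2$-approximation of the diameter.
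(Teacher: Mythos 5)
Your upper-bound argument matches the paper's essentially verbatim, but the lower-bound plan has two real gaps, one in each of your cases.

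First, the Case~1 threshold is wrong: if $\dist(w,v) \ge \tfrac{2}{5}\,ecc(v)$, the first term only gives $\tilde{d}(w,v) \ge \tfrac{2}{5}\,ecc(v)$, and at that same threshold the second term $ecc(v) - (2+\varepsilon)\dist(w,v)$ has already dropped to roughly $\tfrac{1}{5}\,ecc(v)$ and only decreases as $\dist(w,v)$ grows. Taking $s=w$ and ``balancing'' therefore yields $\tfrac{2}{5}$, not $\tfrac{3}{5}$. The paper instead opens with $\tilde{d}(w,v) \ge \tfrac{3}{5}\,ecc(v) - 4/7$, at which point the first term alone already finishes. Second, your Case~2 applies \Cref{lem:singlePathHittingSet} to the length-$\lceil\tfrac{2}{5}\,ecc(v)\rceil$ prefix of the shortest $v$-$u$ path, but $|S| = \tilde{\Theta}(\sqrt{n})$ only guarantees hitting paths of length $\Omega(\sqrt{n})$. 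When $ecc(v) \ll \sqrt{n}$ that prefix is too short, the hitting claim fails, and this is precisely the regime $W$ must handle --- yet you present your two cases as exhaustive and relegate $W$ to an unspecified ``residual regime.''

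The paper's proof avoids hitting the $v$-$u$ path entirely. It lets $t$ be the node realizing $ecc(v)$ and conditions on whether $\dist(t, s) \le \tfrac{2}{5}\,ecc(v) + 4/7$ for some $s \in S \cup W \cup \{w\}$; if so, the triangle inequality gives $\dist(s,v) \ge \tfrac{3}{5}\,ecc(v) - 4/7$ and the first term wins. Otherwise $\dist(t,S)$ is large, and --- this is the step your sketch misses --- since $w$ maximizes $\tilde{d}(\cdot, S)$ we get $\tilde{d}(w,S) \ge \tilde{d}(t,S) > \tfrac{2}{5}\,ecc(v) + 4/7$; combined with $W \cap S \neq \emptyset$ (\Cref{lem:neighborSampling}) this forces every node within $(\tfrac{2}{5}\,ecc(v) + 4/7)/(1+\varepsilon)$ of $w$ to lie in $W$. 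Picking $w'$ on the shortest $w$-$v$ path at exactly that distance from $w$, one gets $w' \in W$ and $\dist(w',v) \approx \tfrac{1}{5}\,ecc(v)$, and plugging this into your second bound yields $\tfrac{3-6\varepsilon}{5}\,ecc(v) - 4/7$. So the deterministic structure (``$w$ is the farthest node from $S$'' plus ``$W$ is $w$'s closest $\sqrt{n}$ nodes'') replaces the probabilistic hitting argument you proposed.
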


\begin{proof}
We start by proving $ecc(v) \le \tilde{ecc}(v)$:
Let $t_s$ be the node such that $\dist(s,t_s) = ecc(s)$,
then $ecc(s) = \dist(s,t_s) \le \dist(s,v) + \dist(v, t_s) \le \dist(s,v) + ecc(v)$.
Thus we have $ecc(v) \ge ecc(s) - \dist(s,v)$.
For our approximation this means:
\ifdefined\FOCSversion
\begin{align*}
\tilde{ecc}(s) - \tilde{d}(s,v) 
&\le (1+\varepsilon) ecc(s) - \dist(s,v) \\
&\le ecc(v) + \varepsilon ecc(s)  \\
&\le ecc(v) + \varepsilon (\dist(s,v) + ecc(v)) \\
&\le (1+2\varepsilon) ecc(v)
\end{align*}
\else
\begin{align*}
\tilde{ecc}(s) - \tilde{d}(s,v) 
&\le (1+\varepsilon) ecc(s) - \dist(s,v) 
\le ecc(v) + \varepsilon ecc(s)  \\
&\le ecc(v) + \varepsilon (\dist(s,v) + ecc(v)) 
\le (1+2\varepsilon) ecc(v)
\end{align*}
\fi

Thus we obtain the following upper bound for $\tilde{ecc}(v)$: 
\ifdefined\FOCSversion
\begin{align*}
\tilde{ecc}(v)
&= \max_{s \in S \cup W \cup \{w \}} \max \{ \tilde{d}(s,v), \tilde{ecc}(s) - \tilde{d}(s,v) \}\\
&\le (1+2\varepsilon) \max_{s \in S \cup W \cup \{w \}} \max \{ \dist(s,v), \\
&\hspace{60pt} ecc(s) - \dist(s,v) \}\\
&\le (1+2\varepsilon) ecc(v)
\end{align*}
\else
\begin{align*}
\tilde{ecc}(v)
&= \max_{s \in S \cup W \cup \{w \}} \max \{ \tilde{d}(s,v), \tilde{ecc}(s) - \tilde{d}(s,v) \}\\
&\le (1+2\varepsilon) \max_{s \in S \cup W \cup \{w \}} \max \{ \dist(s,v), ecc(s) - \dist(s,v) \}\\
&\le (1+2\varepsilon) ecc(v)
\end{align*}
\fi

\paragraph{Lower Bound}
For the lower bound on $\tilde{ecc}(v)$, note that for any $s \in S \cup W \cup \{w\}$ we have
\ifdefined\FOCSversion
\begin{align*}
\tilde{ecc}(v)
&\ge \tilde{ecc}(s) - \tilde{d}(s,v)
\ge ecc(s) - (1+\varepsilon) \dist(s,v) \\
&\ge ecc(v) - (2+\varepsilon) \dist(s,v) \\
&\ge ecc(v) - 2(1+\varepsilon) \dist(s,v).
\end{align*}
\else
\begin{align*}
\tilde{ecc}(v)
&\ge \tilde{ecc}(s) - \tilde{d}(s,v)
\ge ecc(s) - (1+\varepsilon) \dist(s,v)\\
&\ge ecc(v) - (2+\varepsilon) \dist(s,v)
\ge ecc(v) - 2(1+\varepsilon) \dist(s,v).
\end{align*}
\fi
So in order to obtain $\tilde{ecc}(v) \ge (3-\varepsilon)/5 ecc(v) - 4/7$,
we want to prove that there is a $s \in S \cup W \cup \{w\}$ for which $\dist(s,v) \approx 1/5 ecc(v)$.

If $\tilde{d}(w,v) \ge 3/5 ecc(v) - 4/7$, then we are done,
so assume $\tilde{d}(w,v) > 3/5 ecc(v) - 4/7$.
Let $t$ be the node such that $\dist(v,t) = ecc(v)$,
then if $\dist(s,t) \le 2/5 ecc(v) + 4/7$ for some $s \in S \cup W \cup \{w\}$,
we would have $\dist(v,s) \ge 3/5 - 4/7$ and we would be done,
so assume $\dist(t,S) > 2/5 ecc(v) + 4/7$.

Thus we have $\tilde{d}(w,S) \ge \tilde{d}(t,S) \ge \dist(t,S) > 2/5 ecc(v) + 4/7$.
Note that $W \cap S \neq \emptyset$, so all nodes $u \in V$ with
$\tilde{d}(w,u) \le 2/5 ecc(v) + 4/7$ must also be in the set $W$.
This also means that all nodes $u \in V$ with
$\dist(w,u) \le (2/5 ecc(v) + 4/7) / (1+\varepsilon)$ must be in $W$.

Let $w'$ be the node in $W$ along the shortest $vw$-path with
$\dist(w',w) = \lfloor (2/5 ecc(v) + 4/7) / (1+\varepsilon) \rfloor$.
Then we have
\ifdefined\FOCSversion
\begin{align*}
&~3/5 ecc(v) - 4/7
> \dist(w,v) \\
=&~ \dist(w,w') + \dist(w',v)  \\
=&~ \lfloor (2/5 ecc(v) + 4/7) / (1+\varepsilon) \rfloor + \dist(w',v).
\end{align*}
This allows us to bound $\dist(w',v)$ as follows:
\begin{align*}
\dist(w',v) 
&< 3/5 ecc(v) - 4/7 \\
&\hspace{20pt}- \lfloor (2/5 ecc(v) + 4/7) / (1+\varepsilon) \rfloor \\
&\le 3/5 ecc(v) - 4/7 - (2/5 ecc(v))/(1+\varepsilon) \\
&\hspace{20pt}+ \left( 1-\frac{4}{7} \cdot \frac{1}{1+\varepsilon}\right) \\
&=  3/5 ecc(v) + 3/7 - (2/5 ecc(v))/(1+\varepsilon) \\
&\hspace{20pt}- \left( \frac{4}{7} \cdot \frac{1}{1+\varepsilon}\right) \\
\end{align*}
So we now have some $w' \in W$ which satisfies:
\begin{align*}
\tilde{ecc}(v)
&\ge ecc(w) - 2(1+\varepsilon)\dist(w',v) \\
&\ge ecc(w) - 2(1+\varepsilon)\left(
\frac{3}{5} ecc(v) + 3/7 \right.\\
&\hspace{20pt}\left.- \left(\frac{2}{5} ecc(v)\right)/(1+\varepsilon) + (4/7)/(1+\varepsilon)
\right)\\
&\ge ecc(w) + (1+\varepsilon)\left(
-\frac{6}{5} ecc(v) - 6/7\right) \\
&\hspace{20pt}+ 4/5 ecc(v) + 8/7\\
&\ge \frac{3-6\varepsilon}{5} ecc(v) - (1+\varepsilon)(6/7) + 8/7\\
&\ge \frac{3-6\varepsilon}{5} ecc(v) - 4/7
\end{align*}
\else
\begin{align*}
3/5 ecc(v) - 4/7
&> \dist(w,v) 
= \dist(w,w') + \dist(w',v)  \\
&= \lfloor (2/5 ecc(v) + 4/7) / (1+\varepsilon) \rfloor + \dist(w',v).
\end{align*}
This allows us to bound $\dist(w',v)$ as follows:
\begin{align*}
\dist(w',v) 
&< 3/5 ecc(v) - 4/7 - \lfloor (2/5 ecc(v) + 4/7) / (1+\varepsilon) \rfloor \\
&\le 3/5 ecc(v) - 4/7 - (2/5 ecc(v))/(1+\varepsilon) + \left( 1-\frac{4}{7} \cdot \frac{1}{1+\varepsilon}\right) \\
&=  3/5 ecc(v) + 3/7 - (2/5 ecc(v))/(1+\varepsilon) - \left( \frac{4}{7} \cdot \frac{1}{1+\varepsilon}\right) \\
\end{align*}
So we now have some $w' \in W$ which satisfies:
\begin{align*}
\tilde{ecc}(v)
&\ge ecc(w) - 2(1+\varepsilon)\dist(w',v) \\
&\ge ecc(w) - 2(1+\varepsilon)\left(
\frac{3}{5} ecc(v) + 3/7 - \left(\frac{2}{5} ecc(v)\right)/(1+\varepsilon) + (4/7)/(1+\varepsilon)
\right)\\
&\ge ecc(w) + (1+\varepsilon)\left(
-\frac{6}{5} ecc(v) - 6/7\right) + 4/5 ecc(v) + 8/7\\
&\ge \frac{3-6\varepsilon}{5} ecc(v) - (1+\varepsilon)(6/7) + 8/7\\
&\ge \frac{3-6\varepsilon}{5} ecc(v) - 4/7
\end{align*}
\fi
For the last inequality we assume $\varepsilon \le 1$.

\end{proof}

\section{Dynamic Exact Diameter}
\label{app:exactDiameter}

Here we will prove the following result for maintaining the diameter exactly.

\begin{theorem}[Exact Diameter, Directed, non-trivial ($\gg n^2$)]
\label{thm:exactDiam_unweighted}
Let $G$ be a directed graph with $n$ nodes and integer weights from $\{1,2,...,W\}$.
Then for any $0 \le s, \mu \le 1$ there exists a Monte Carlo dynamic algorithm
that maintains the exact diameter of $G$ in
$\tilde{O}(
Wn^{s+\omega(1,1,\mu)-\mu}
+ Wn^{\omega(1,\mu+s,1)}
+ n^{3-s} )$ update time.
The pre-processing requires $\tilde{O}(Wn^3)$ time.

The update time is $O(n^{\updateExactDiameter})$ for current $\omega$
and $\mu \approx \muExactDiameter, s \approx \sExactDiameter$.
\footnote{
	This means for current values of $\omega$,
	this is the first non-trivial exact dynamic diameter algorithm.
	However, if fast-matrix-multiplication is further improved,
	this algorithm will be subsumed by the trivial option of just
	re-computing the diameter after each update.
}
\end{theorem}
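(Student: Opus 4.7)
The plan is to combine Sankowski's exact polynomial-matrix-inverse data-structure (\Cref{lem:dynamicMatrixInverse}) with a standard hitting-set argument and ordinary Dijkstra, mirroring the pattern of the approximate diameter algorithm of \Cref{thm:1.5approxDiam} but replacing every approximate ingredient by an exact one.

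First I would run \Cref{lem:dynamicMatrixInverse} on the polynomial matrix $M = \I - A(G)\cdot X$, where $A(G)$ is defined as in \Cref{lem:distanceInverseReduction}. Because edge weights lie in $\{1,\ldots,W\}$, an $n^s$-hop path can have total length up to $Wn^s$, so I would work modulo $X^{Wn^s+1}$ rather than $X^{n^s+1}$; each polynomial arithmetic operation then carries an extra factor of $W$, giving update time $\tilde O(Wn^{s+\omega(1,1,\mu)-\mu} + Wn^{1+s+\mu})$ and batch-query time $\tilde O(Wn^{\omega(1,\mu+s,1)})$ for $M^{-1}_{V,V}$ (using the identity $\omega(a,b,c)+d = \omega(a,b,c+d)$ to absorb the polynomial-degree factor into the middle argument of $\omega$). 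By \Cref{lem:distanceInverseReduction}, the smallest exponent $d$ with $(M^{-1})^{[d]}_{u,v} \neq 0$ equals $\dist_G(u,v)$ whenever this distance is at most $Wn^s$, so a single batch query yields the exact $n^s$-hop distance matrix $D^{n^s}$. In parallel I would run the strong-connectivity tester of \Cref{lem:strongConnectivity} so that we can immediately return $\infty$ the moment $G$ ceases to be strongly connected.

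After every edge update the algorithm would (i) batch-query $M^{-1}_{V,V}$ and extract $D^{n^s}$, (ii) sample a fresh random set $H \subset V$ of size $\Theta(n^{1-s}\log n)$ so that by \Cref{lem:allPathsHittingSet}, with high probability, every shortest path of more than $n^s$ hops passes through $H$, (iii) run ordinary Dijkstra from each $h \in H$ in $G$ in $O(n^2)$ time per source, for a total of $O(n^{3-s})$, and (iv) compute
\[
\dist_G(u,v) \;=\; \min\!\Bigl( D^{n^s}(u,v),\; \min_{h \in H}\, D^{n^s}(u,h) + \dist_G(h,v) \Bigr)
\]
for every pair $(u,v)$, in $O(|H|\cdot n^2) = O(n^{3-s})$ time, and return the maximum entry as the diameter. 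Correctness of step (iv) is immediate: if the shortest $uv$-path uses more than $n^s$ hops, then its first hitting-set vertex $h$ lies within $n^s$ hops of $u$, hence $D^{n^s}(u,h) = \dist_G(u,h)$ and $\dist_G(u,v) = \dist_G(u,h) + \dist_G(h,v)$. Summing the per-update costs yields the claimed bound $\tilde O(Wn^{s+\omega(1,1,\mu)-\mu} + Wn^{\omega(1,\mu+s,1)} + n^{3-s})$. For preprocessing I would invert $M$ explicitly by repeated squaring on polynomial matrices in $\tilde O(Wn^{s+\omega}) \le \tilde O(Wn^3)$ time; worst-case (rather than amortized) update time follows from the two-copy de-amortization of \Cref{thm:standardTechnique} applied to the Sankowski component.

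The main obstacle will be bookkeeping the $W$ factor that pervades the Sankowski bounds once the polynomial modulus grows from $X^{n^s}$ to $X^{Wn^s+1}$, together with verifying that step (iv) really produces an \emph{exact} value rather than merely an approximation: this hinges on the fact that the \emph{first} hitting-set vertex along the shortest path is always reached within $n^s$ hops of $u$, so that $D^{n^s}(u,h)$ agrees with $\dist_G(u,h)$ even though we have no way of accessing the full matrix $\dist_G(V,H)$ in subcubic time. Finally, the algorithm handles non-oblivious adversaries for the same reason as the algorithms of \Cref{sec:smallHop,sec:diameter}: the only random choices are the field entries in $A(G)$, whose correct outputs are a deterministic function of $G$ and hold with high probability (so queries leak no information about them), and the hitting set $H$, which is resampled after every update.
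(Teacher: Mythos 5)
Your high-level plan is correct in spirit — decompose into $n^s$-hop distances via the polynomial inverse, plus Dijkstra from a hitting set for the far pairs — and your step-(iv) min-formula is correct. But the way you obtain the $n^s$-hop distance matrix does not fit within the claimed time budget, and this is not a bookkeeping issue; it is a structural one.

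Your step (i) requires extracting the full matrix $D^{n^s}$, i.e. for each of the $n^2$ pairs finding the smallest $d$ with $(M^{-1})^{[d]}_{u,v}\neq 0$. The polynomial submatrix $M^{-1}_{V,V}$ modulo $X^{Wn^s}$ has $\Theta(Wn^{2+s})$ field elements, so even reading it costs $\tilde{\Omega}(Wn^{2+s})$. For the parameters that optimize the theorem ($s\approx 0.65$, $\mu\approx 0.31$) this is $Wn^{2.65}$, whereas the claimed middle term is $Wn^{\omega(1,\mu+s,1)}\approx Wn^{2.35}$. Your justification for the cheaper bound — ``using the identity $\omega(a,b,c)+d=\omega(a,b,c+d)$'' — only holds as an inequality $\omega(a,b,c+d)\le\omega(a,b,c)+d$; it cannot be used to push a standalone degree factor $n^s$ \emph{into} the middle argument. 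What Sankowski's batch query actually gives you is $\tilde{O}(Wn^{s+\omega(1,\mu,1)})$, and the improvement to $Wn^{\omega(1,\mu+s,1)}$ comes from computing a \emph{single} slice $(\hat U \hat V^\top)^{[k]}$ via \Cref{lem:computeSlice}, which requires exposing the $\hat U,\hat V$ matrices as in \Cref{lem:dynamicApproximateSlice} and only returns $n^2$ field elements per $k$. (Concretely, with $s=0.5$ to rebalance against $n^{3-s}$, your variant gives $\tilde O(Wn^{2.5})$ update time, which is not even non-trivial for small $W$ since $Wn^\omega$ could be $n^{2.37}$.)

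The paper's actual algorithm (\Cref{thm:dynamicSmallDiameterIntegerWeight} and \Cref{thm:exactDistanceLargeHop}) never materializes $D^{n^s}$. It queries the single slice $(M^{-1})^{[Wn^s]}$ to identify the set $P$ of pairs with distance $>Wn^s$; if $P=\emptyset$ it binary-searches over $O(\log(Wn))$ slices for the diameter; if $P\neq\emptyset$, since those shortest paths have $>n^s$ hops, it computes Dijkstra from $H$ in \emph{both} directions to get $D_{V,H}$ and $D_{H,V}$ exactly, and evaluates $D_{u,H}\star D_{H,v}$ only for $(u,v)\in P$ in $O(n^3/n^s)$ time — again never touching $D^{n^s}(u,h)$ for all $u$. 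Two further small issues: your preprocessing by repeated squaring is $\tilde O(Wn^{s+\omega})$, which exceeds $Wn^3$ once $s>3-\omega\approx 0.627$, whereas the paper invokes the $\tilde O(Wn^3)$ polynomial-matrix inversion of \cite{ZhouLS15}; and you run Dijkstra only from each $h$, but the paper needs it in both directions to form $D_{V,H}$.
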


Similar to our other results, the idea is to maintain large and small diameter separately.
We start with the case of small diameters.

\begin{theorem}[$n^s$-length distances, exact, positive integer weights]
\label{thm:dynamicSmallDiameterIntegerWeight}
Let $G$ be a graph with $n$ nodes and positive integer edge weights from $\{1,...,W\}$.
Then for any $0 \le s, \mu \le 1$ there exists a dynamic algorithm
that maintains the diameter of $G$.
If the diameter is at most $n^s$,
then the diameter is returned after each edge update,
otherwise the dynamic algorithm returns a set the pairs
$P \subset V \times V$ with $\dist(u,v) > n^s$ for $(u,v) \in P$.

The update time is
$\tilde{O}(
	n^{s+\omega(1,1,\mu)-\mu}
	+n^{\omega(1,\mu+s,1)}
)$
and the pre-processing is $\tilde{O}(\min\{Wn^{\omega+s}, Wn^3\})$.
\end{theorem}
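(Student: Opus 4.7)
The plan is to reduce the dynamic small-diameter problem to maintaining the inverse of a polynomial matrix modulo $X^{n^s+1}$, and then to extract the diameter after each update via a small (logarithmic) number of polynomial-slice computations. Apply \Cref{lem:distanceInverseReduction} with $h = n^s+1$: draw the random matrix $A(G) \in (\F[X]/\langle X^{n^s+1}\rangle)^{n \times n}$ prescribed there, and set $M := \I - A(G)$. Then w.h.p.\ $(M^{-1})^{[d]}_{u,v} \neq 0$ iff $\dist_G(u,v) \le d$, for every $u,v$ and every $d \le n^s$. Hence the diameter of $G$ is at most $n^s$ iff the scalar matrix $(M^{-1})^{[n^s]}$ has no zero entry, in which case the diameter equals the smallest $d^\ast$ such that $(M^{-1})^{[d^\ast]}$ is entry-wise nonzero; otherwise the pairs with $\dist_G(u,v) > n^s$ are exactly the zero entries of $(M^{-1})^{[n^s]}$.

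Maintain $M$ via \Cref{lem:dynamicMatrixInverse} with trade-off parameter $\nu := \mu$; each edge update translates to an element update on $A(G)$ and costs $\tilde{O}(n^{s+\omega(1,1,\mu)-\mu} + n^{1+s+\mu})$ operations. Internally, after $t \le n^\mu$ updates since the last reset, the data structure represents the inverse implicitly as $M^{-1}_{(t)} = M^{-1}_{(0)} - UV^\top$, where $M^{-1}_{(0)}$ is the preprocessed inverse and $U,V$ are $n \times t$ polynomial matrices whose entries have degree at most $n^s$. For any fixed $d$, $(M^{-1}_{(t)})^{[d]} = (M^{-1}_{(0)})^{[d]} - (UV^\top)^{[d]}$: the first term is stored from preprocessing, and by \Cref{lem:computeSlice} (applied with $a = c = 1$, $b = \mu$, and polynomial-degree bound $n^s$) the second term is computed in $\tilde{O}(n^{\omega(1,\mu+s,1)})$ operations; scanning the resulting $n \times n$ scalar slice for a zero entry adds only $O(n^2)$ time.

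After each update, binary-search over $d \in \{1,\ldots,n^s\}$ for the smallest $d^\ast$ such that $(M^{-1}_{(t)})^{[d^\ast]}$ has no zero entry. The predicate ``$(M^{-1}_{(t)})^{[d]}$ contains a zero entry'' is monotone non-increasing in $d$ by the reduction, so $O(\log n^s)$ slice probes suffice; if such $d^\ast$ is found, return it as the diameter, otherwise return the zero-entry pairs of $(M^{-1}_{(t)})^{[n^s]}$ as $P$. The per-update cost is thus $\tilde{O}(n^{s+\omega(1,1,\mu)-\mu} + n^{\omega(1,\mu+s,1)})$, with the $n^{1+s+\mu}$ term dominated because $\omega(1,\mu+s,1) \ge 1 + \mu + s$. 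Preprocessing explicitly computes $M^{-1}$ modulo $X^{n^s+1}$ via \Cref{lem:dynamicMatrixInverse} in $\tilde{O}(n^{s+\omega})$ arithmetic operations, and the stated bit-complexity $\tilde{O}(\min\{Wn^{s+\omega}, Wn^3\})$ is obtained by accounting for field-element size (ensuring \Cref{lem:distanceInverseReduction} applies with $|\F|$ large enough for edge weights up to $W$), or by seeding the data structure with a static weighted APSP computation.

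The main obstacle is avoiding the naive extraction cost of $\tilde{O}(n^s \cdot n^{\omega(1,\mu+s,1)})$ that would arise from reading off all $n^s$ slices; the monotone binary search sidesteps this by probing only $O(\log n)$ slices, matching the allotted budget. A secondary point is correctness against non-oblivious adversaries: this follows exactly as in \Cref{sec:smallHop} by re-sampling the random scalars in $A(G)$ at appropriate intervals, so that the outputs of the slice queries do not leak information about the randomness underlying \Cref{lem:distanceInverseReduction}.
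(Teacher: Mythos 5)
Your proof is correct and follows essentially the same route as the paper. You reduce to maintaining $(M^{-1})^{[d]}$ for $M=\I-A(G)$ via \Cref{lem:distanceInverseReduction}, maintain the inverse via a rank-$t$ correction $M^{-1}_{(t)}=M^{-1}_{(0)}-UV^{\top}$, extract a single coefficient slice per probe with \Cref{lem:computeSlice} in $\tilde O(n^{\omega(1,\mu+s,1)})$ operations, and binary-search over $d$ using the monotonicity of ``$(M^{-1})^{[d]}$ has a zero entry'' -- exactly the ideas in the paper's proof. The one cosmetic difference: the paper phrases this as running a modified \Cref{cor:maintainApproximateGraphSlices} with $S=\{1,\ldots,n^s\}$ and an improved reset that materializes the whole polynomial product $\hat U\hat V^{\top}$ at cost $\tilde O(n^{\omega(1,1,\mu)+s})$ instead of slice-by-slice; you instead open \Cref{lem:dynamicMatrixInverse} directly and apply \Cref{lem:computeSlice} to its internal SMW factors, which is precisely the ``white-box reduction'' the paper itself mentions as an alternative in a footnote and credits as the reason the update bound coincides with \Cref{lem:dynamicMatrixInverse}. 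Both formulations give the same update time $\tilde O(n^{s+\omega(1,1,\mu)-\mu}+n^{\omega(1,\mu+s,1)})$, and your observation that $\omega(1,\mu+s,1)\ge 1+\mu+s$ correctly justifies dropping the $n^{1+s+\mu}$ term.

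One thing worth stating more explicitly if this were to be written up: the paper's proof of \Cref{lem:dynamicApproximateSlice} constructs the $\hat U,\hat V$ factors from scratch via \Cref{lem:incrementalShermanMorrison}, using \Cref{lem:dynamicMatrixInverse} only as a black-box subroutine for row/column queries, which avoids any assumption about the internal layout of Sankowski's data structure; your argument instead relies on that internal layout being the single-layer SMW correction. This is fine given the paper's own footnote, but the paper's main proof is the more self-contained of the two. Your remarks on the $\tilde O(\min\{Wn^{\omega+s},Wn^3\})$ preprocessing are looser than the paper's (which points to fast degree-$W$ polynomial-matrix inversion \cite{ZhouLS15}), but since the paper itself does not fully pin down the $W$ factor in the $Wn^{\omega+s}$ branch, this is a minor point.
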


\begin{proof}
We run a modification of \Cref{cor:maintainApproximateGraphSlices}.
We first explain how to obtain the diameter,
then we explain how to modify the algorithm to obtain a faster update complexity.

\paragraph{Maintaining the diameter}

We choose $S = \{1,2,...,n^s\}$, so for any $ 1 \le k \le n^s$ we can query for all pairs, if the distance is at most $k$ in $\tilde O(sn^{\omega(1,\mu+s,1)})$ time.
After every edge update, we query this information for $k=n^s$.
If every pair has distance at most $n^s$, then the diameter is smaller.
In that case, we simply binary search for it.
If the diameter is larger than $n^s$, then we also obtain the pairs, for which the distance is larger than $n^s$.

\paragraph{Improving the update time}

Usually the reset of \Cref{cor:maintainApproximateGraphSlices} would cost $\tilde{O}(s|S| n^{\omega(1,1,s+\mu)}) = \tilde{O}(n^{\omega(1,1,s+\mu)+s})$ time.
This is because we compute the coefficients of $X^d$ for $d \in S$ in \Cref{lem:dynamicApproximateSlice} in $\tilde{O}(n^{\omega(1,1,s+\mu)})$ operations each. 
This can be reduced to $\tilde{O}(n^{\omega(1,1,\mu)+s})$ in total for all $d \in S$ together, 
by just computing the entire inverse $(M+UV^\top)^{-1} = \hat{U} \hat{V}^\top$ at once.

By choosing $\nu = \mu$ in \Cref{lem:dynamicMatrixInverse} the update time for \Cref{cor:maintainApproximateGraphSlices} becomes
$\tilde{O}(
	n^{s+\omega(1,1,\mu)-\mu}
	+n^{1+\mu+s}$.
\footnote{Another way to obtain this algorithm would be to apply 
\Cref{lem:computeSlice} to Sankowski's algorithm \Cref{lem:dynamicMatrixInverse} via some white-box reduction. This is why the update time here is identical to \Cref{lem:dynamicMatrixInverse}.
}
As we perform some queries after every update, the update time for \Cref{thm:dynamicSmallDiameterIntegerWeight} is
$\tilde{O}(
	s n^{s+\omega(1,1,\mu)-\mu}
	+s n^{\omega(1,\mu+s,1)})$.

\paragraph{Pre-processing}

If $s$ is large, then the pre-processing of \Cref{lem:dynamicMatrixInverse} and \Cref{lem:dynamicApproximateSlice} become slower than $O(Wn^3)$. The pre-processing of both algorithms consists of computing the inverse of a polynomial matrix. 
One can compute the inverse of a polynomial matrix of degree $W$ in $\tilde{O}(Wn^3)$ operations \cite{ZhouLS15}, which can be faster than $O(n^{\omega+s})$ for large enough $s$.
(This is not useful for speeding up \Cref{lem:dynamicApproximateSlice} in general, 
as the matrix $M$ could have degree upto $n^s \gg W$. 
We can only use it here, because we assume the edge weights are bounded by $W$.)

\end{proof}

\begin{theorem}\label{thm:exactDistanceLargeHop}

Let $G$ be a directed graph with $n$ nodes and non-negative edge weights
and let $P \subset V \times V$ be a set of pairs $u,v$,
such that their shortest path uses at least $d$ hops.

Then for any such $d$ we can compute with high probability
$\dist_G(u,v)$ for every pair $(u,v) \in P$ in $O(n^3/d)$ time.
\end{theorem}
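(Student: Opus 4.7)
\textbf{Proof plan for \Cref{thm:exactDistanceLargeHop}.} The plan is to apply the hitting-set technique (\Cref{lem:allPathsHittingSet}) and then reduce each pairwise distance in $P$ to a minimum over paths through a small ``landmark'' set. Concretely, first sample a random subset $H \subset V$ of size $c (n/d) \ln n$ for a suitable constant $c$. By \Cref{lem:allPathsHittingSet}, with probability at least $1 - n^{2-c}$, every shortest path using at least $d$ hops can be decomposed into segments through nodes in $H$, each segment using at most $d$ hops. In particular, for every $(u,v) \in P$ there exists at least one $h \in H$ lying on a shortest $uv$-path, so
\[
\dist_G(u,v) = \min_{h \in H} \bigl(\dist_G(u,h) + \dist_G(h,v)\bigr)
\]
holds w.h.p.\ (the right-hand side is always an upper bound; the hitting-set event makes it tight).

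Next, for each $h \in H$ run Dijkstra's algorithm twice, once from $h$ in $G$ (giving $\dist_G(h, \cdot)$) and once from $h$ in the reverse graph $G^{\mathrm{rev}}$ (giving $\dist_G(\cdot, h)$). Since the edge weights are non-negative this is well-defined and each Dijkstra call takes $O(m + n \log n) = O(n^2)$ time on a dense graph. The total cost of this step is $|H| \cdot O(n^2) = O((n/d) n^2 \log n) = \tilde{O}(n^3/d)$. Finally, for each pair $(u,v) \in P$, compute the minimum $\min_{h \in H} (\dist_G(u,h) + \dist_G(h,v))$ by scanning all $|H|$ landmarks, taking $O(|H|) = \tilde{O}(n/d)$ time per pair. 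Since $|P| \le n^2$, the total cost of the minimisation step is $\tilde{O}(n^2 \cdot n/d) = \tilde{O}(n^3/d)$.

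Summing the two steps gives the claimed $\tilde O(n^3/d)$ running time, and the correctness probability is $1 - n^{-\Omega(1)}$ by the choice of $c$. There is essentially no difficult step here: the only mild subtlety is that, because $G$ is directed, we must run Dijkstra both forward and backward from each landmark so that the quantities $\dist_G(u,h)$ and $\dist_G(h,v)$ are both available. The polylogarithmic factor hidden in $\tilde O$ comes from the $\ln n$ oversampling of $H$; the statement absorbs this into $O(\cdot)$ in accordance with the paper's convention on hidden polylog factors.
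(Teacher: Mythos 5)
Your proposal is correct and follows essentially the same approach as the paper's proof: sample a hitting set $H$ of size $\tilde O(n/d)$, invoke \Cref{lem:allPathsHittingSet}, compute single-source shortest paths from and to every landmark in $H$ (which the paper phrases as ``shortest-path-trees in both directions''), and recover each $\dist_G(u,v)$ as $\min_{h\in H}(\dist_G(u,h)+\dist_G(h,v))$ (which the paper phrases as the $(\min,+)$-product $D_{u,H}\star D_{H,v}$). The only cosmetic difference is that you make the forward/reverse Dijkstra step and the hidden polylog factor explicit; the theorem as stated in the paper writes $O(n^3/d)$ but its own proof also incurs the $\log n$ oversampling, so your $\tilde O$ reading is consistent with the paper's conventions.
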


\begin{proof}

We first Sample $\tilde{O}(n/d)$ random nodes and call this set $H \subset V$.
Via \Cref{lem:allPathsHittingSet} we know that (w.h.p)
every shortest path, using at least $d$ hops, must visit at least one node from $H$.

We now compute the shortest-path-trees for every node in $H$ (in both directions).
So when $D$ is the distance matrix of $G$,
then we now know $D_{H,V}$ and $D_{V,H}$.
Further, we have $D_{u,v} = D_{u,H} \star D_{H,v}$ w.h.p for every pair $(u,v) \in P$,
because their paths use at least $d$ hops.

The total time for all shortest-path-trees and the $(min,+)$-product
$D_{u,H} \star D_{H,v}$ is at most $O(n^3/d)$ time.
\end{proof}

\begin{proof}[Proof of \Cref{thm:exactDiam_unweighted}]
We maintain the diameter (when bounded by $Wn^s$)
via \Cref{thm:dynamicSmallDiameterIntegerWeight}.

If the diameter is larger than $Wn^s$, then we obtain a set of pairs
$P \subset V \times V$ whose distance is greater than $Wn^s$,
so the shortest paths connecting these pairs use at least $n^s$ hops.
We now use \Cref{thm:exactDistanceLargeHop} to compute the distances
between these pairs in $\tilde{O}(n^{3-s})$ time.
The longest of these distances is the diameter of the graph.

The update time of the algorithm is thus $\tilde{O}(
Wn^{s+\omega(1,1,\mu)-\mu}
+W n^{\omega(1,\mu+s,1)}
+ n^{3-s} )$.

Note that the algorithm works against non-oblivious adversary,
because the output is a single number which is correct with high probability.
\end{proof}

\section{Dynamic Closeness-Centrality}
\label{app:closenessCentrality}

The closeness-centrality $c(s)$ of a node $s$ is defined to be
$$c(s) := (n-1) / \sum_{v \in V} \dist(v,s),$$
which is just the inverse of the average distances to $s$.

\begin{theorem}
\label{thm:dynamicClosenessCentrality}

Let $G$ be an undirected, unweighted graph with $n$ nodes.
Then there exists a Monte Carlo dynamic algorithm with $O(n^{\updateUnwUndAPSPquery} / \varepsilon^{\omega+1})$ update time, that maintains $(1+\varepsilon)$-approximate closeness-centrality for all nodes.
The pre-processing time is $O(n^{\preUnwUndAPSPquery})$.

(For current $\omega$, the pre-processing and update time 
is the same as in \Cref{thm:undirectedApproximateDistancesQuery})

\end{theorem}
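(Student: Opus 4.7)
The plan is to reduce closeness centrality to estimating, for every $s \in V$, the quantity $F(s) := \sum_{v \in V}\dist(v,s)$ within a $(1\pm\varepsilon)$ factor, since then $\tilde c(s) := (n-1)/\hat F(s)$ is a $(1\pm O(\varepsilon))$-approximation of $c(s)$. As backbone, I would run the dynamic $(1+\varepsilon)$-approximate distance oracle of \Cref{thm:undirectedApproximateDistancesQuery}, so that after each edge update we have access to a batch-query primitive within exactly the update budget stated in \Cref{thm:dynamicClosenessCentrality}.

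After each update I would sample a fresh random set $T\subseteq V$ of size $k=\tilde\Theta(n^{2/3}/\varepsilon^2)$ (fresh sampling after every update is what keeps the algorithm non-oblivious, in the same way as for \Cref{thm:undirectedApproximateDistancesQuery} itself). Next I would issue a single batch query for the block $T\times V$ via \Cref{thm:undirectedApproximateDistancesQuery}, receiving $(1+\varepsilon)$-approximate distances $\tilde d(u,s)$, and set
\[
\hat F(s)\ :=\ \frac{n}{|T|}\sum_{u\in T}\tilde d(u,s)\qquad \text{for every }s\in V.
\]
This mimics the Eppstein--Wang \cite{EppsteinW04} estimator. Hoeffding's inequality applied to distances in $[1,\diam(G)]$ gives $|\hat F(s)-F(s)|=\tilde O(n\diam(G)/\sqrt{k})=\tilde O(n\diam(G)/n^{1/3}\cdot\varepsilon)$ with high probability, which is at most $\varepsilon F(s)$ as soon as the per-node average $\mu_s:=F(s)/n$ satisfies $\mu_s\gtrsim \diam(G)/n^{1/3}$.

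The residual case of ``close'' nodes with $\mu_s<\diam(G)/n^{1/3}$ is where the argument has to do real work, since uniform sampling does not concentrate well for them. My plan is to exploit the fact that such an $s$ has at least $n/2$ nodes within distance $2\mu_s\le 2\diam(G)/n^{1/3}$, so it can be certified as ``close'' by checking $\hat F(s)$ against the threshold, and then handled by the hop-truncated query primitive from \Cref{thm:dynamicSmallApproximateDistancesQuery} (already running inside \Cref{thm:undirectedApproximateDistancesQuery}). Concretely, for each $s$ flagged as close I would compute the contribution of the small ball around $s$ exactly via bounded-hop queries and add an upper-bound estimator for the tail whose absolute error fits inside the $(1+\varepsilon)$ budget because the ball already accounts for a constant fraction of the nodes. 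Because the number of such $s$ is small (at most a constant fraction has $\mu_s$ this low, since $\sum_s F(s)=\sum_s\sum_v \dist(v,s)$ is symmetric in $s$ and $v$), the refined pass costs no more than one additional batch query.

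The main obstacle will be the time accounting: I would need to verify that the batch query on $T\times V$ with $|T|=n^{2/3}/\varepsilon^2$, taking $\tilde O(n^{\omega(2/3,s+\mu,1)}/\varepsilon)$ via \Cref{thm:undirectedApproximateDistancesQuery}, is dominated by the $n^{\updateUnwUndAPSPquery}/\varepsilon^{\omega+1}$ update bound for the optimized parameters $s\approx\sUnwUndAPSPquery,\ \mu\approx\muUnwUndAPSPquery$, and similarly for the refined ``close node'' routine. If a regime overshoots, I would rebalance $|T|$ (for instance to $\tilde\Theta(n^{1/2}/\varepsilon^2)$) and accompany it with a case split on a dynamically maintained estimate of $\diam(G)$ via \Cref{thm:1.5approxDiam}, so that for small diameter fewer samples suffice while for large diameter the ``close'' case is ruled out. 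Matching the final complexity cleanly to $O(n^{\updateUnwUndAPSPquery}/\varepsilon^{\omega+1})$ in every regime, and verifying the concentration survives the $(1+\varepsilon)$ approximation in $\tilde d$, is what I expect to take the most effort.
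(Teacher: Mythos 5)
Your overall architecture matches the paper's: run \Cref{thm:undirectedApproximateDistancesQuery} as the dynamic backbone, after each update sample a fresh set of $k=\tilde\Theta(n^{2/3}/\varepsilon^2)$ sources, batch-query $T\times V$, and invert the empirical average. Your time accounting for the batch query is also essentially the paper's. The gap is entirely in the concentration argument, and it is a real one.

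The central problem is the step where you declare that the set of ``close'' nodes (those with $\mu_s < \diam(G)/n^{1/3}$, where direct Hoeffding in $[1,\diam(G)]$ fails) is small, on the grounds that $\sum_s F(s)$ is symmetric in $s$ and $v$. That claim is false, and the failure mode is exactly the hard case for this problem. Take a clique on $n-\sqrt n$ nodes with a pendant path of length $\sqrt n$ hanging off one clique vertex: then $\diam(G)\approx\sqrt n$, but every clique node has $\mu_s=O(1)$, which is far below $\diam(G)/n^{1/3}\approx n^{1/6}$. So a $1-o(1)$ fraction of nodes can land in your ``residual'' case simultaneously. Symmetry of the double sum gives no useful bound here because a few far-away nodes inflate $\diam(G)$ without inflating most $\mu_s$. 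Once the residual case can be almost everything, your refined pass (compute a ball exactly, estimate the tail) has to carry the full burden, and you have not given an argument that it yields a multiplicative $(1+\varepsilon)$ guarantee on the tail for all such $s$ within the stated budget. Your fallback of rebalancing $|T|$ with a diameter estimate also does not help, because the difficulty is not the size of $\diam(G)$ per se but the mismatch between $\diam(G)$ and typical $\mu_s$, which persists at any sample size short of $\tilde\Omega(\diam(G)^2/\varepsilon^2)$ if you insist on a raw Hoeffding bound.

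The missing idea, which the paper's \Cref{lem:closenessSamplingSize} supplies, is to avoid any dependence on $\diam(G)$ by exploiting the unweighted-path structure: for every node $u$ with $\dist(s,u)\in(db^\ell,db^{\ell+1}]$, the shortest $su$-path contributes $\Omega(d\cdot 2^\ell)$ distinct nodes whose distance to $s$ lies in (or below) that dyadic band. This guarantees that each band $(db^\ell,db^{\ell+1}]$ either is empty or contains $\Omega(d\cdot 2^\ell)$ nodes, so the number of samples landing in the band concentrates and Hoeffding can be applied per band with interval width $\Theta(db^\ell)$ proportional to the (lower bound on the) population of the band. The short-distance contribution below $d$ is then controlled separately via the median distance $\operatorname{med}(s)$ with $d = h\cdot\operatorname{med}(s)$, and the two requirements $k=\tilde\Omega((n/d)\varepsilon^{-2})$ and $k=\tilde\Omega(h^2\varepsilon^{-2})$ are balanced at $k=\tilde\Theta(n^{2/3}\varepsilon^{-2})$ by a case split on whether $\operatorname{med}(s)\lessgtr n^{1/3}$. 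This dyadic-band, per-band concentration is what makes the $n^{2/3}$ sample size work on all graphs; a single global Hoeffding bound plus ``close nodes are rare'' does not.

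Two smaller remarks. First, your estimator should be compared against the actual $F(s)$ with the multiplicative $(1+\varepsilon)$ error of $\tilde d$ absorbed by rescaling $\varepsilon$, which you mention and the paper does implicitly; that part is fine. Second, the paper handles $c(s)=0$ (disconnected graph) by running the strong-connectivity detector \Cref{lem:strongConnectivity} in parallel and returning $0$; your write-up does not address this corner case, and inverting $\hat F(s)$ when $F(s)=\infty$ needs an explicit guard.
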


The high-level idea of \Cref{thm:dynamicClosenessCentrality} is as follows:
Intuitively it should be possible to approximate this average by
sampling $k$ nodes $v_1,...,v_k$,
computing single-source distances for these nodes,
and lastly computing $\sum_{i=1}^k \dist(v_i,s) n / (k (n-1))$,
because the expectation matches $1/c(s)$.

The approximate distances for these $k$ sources can be dynamically maintained using our dynamic distance query algorithm \Cref{thm:undirectedApproximateDistancesQuery}.

We will prove that this works, if we choose
$k = \tilde{\Theta}(n^{2/3} / \varepsilon^2)$ samples,
in which case we obtain a $(1+\varepsilon)$ approximation
of the average with high probability.
The inverse of that value is then also a
$(1+\varepsilon)$ approximation of the closeness-centrality.
This is an improvement over \cite{EppsteinW04} for large diameter graphs,
as the algorithm in \cite{EppsteinW04} required $k = \Omega(\Delta^2 / \varepsilon^2)$ samples,
where $\Delta$ is the diameter.
Using our result, one can obtain a static combinatorial 
$\tilde O(mn^{2/3} / \varepsilon^2)$ time algorithm by running BFS from the $k$ sources. 
This would be faster than \cite{EppsteinW04} once $\diam(G) > n^{1/3}$.

\begin{lemma}\label{lem:closenessSamplingSize}
Let $G$ be an unweighted $n$-node graph, $k = \Theta(n^{2/3} / \varepsilon^2)$ and $v_1,...,v_k$ be a uniformly chosen sample of $k$ nodes.

Then for any $s \in V$ the term $\sum_{i=1}^k \dist(v_i,s) n / (k (n-1))$ is a $(1+\varepsilon)$-approximation of $1/c(s) = \sum_{v \in V} \dist(v,s) / (n-1)$ with high probability (assuming $1/c(s)$ is finite).
\end{lemma}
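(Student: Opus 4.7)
The plan is to view the sample mean as an estimator for $\mu := 1/c(s) = \sum_{v}\dist(v,s)/(n-1)$. Setting $X_i := \dist(v_i,s)$, the $X_i$ are i.i.d. with $\mathbb{E}[X_i] = \sum_v \dist(v,s)/n = \frac{n-1}{n}\mu$, and the estimator $\hat\mu := \frac{n}{k(n-1)}\sum_{i=1}^k X_i$ satisfies $\mathbb{E}[\hat\mu] = \mu$. I want to show $|\hat\mu - \mu| \le \varepsilon\mu$ with high probability and then invert to obtain a $(1+O(\varepsilon))$-approximation of $c(s) = 1/\mu$ by rescaling $\varepsilon$.

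The key geometric step, which I expect to be the main obstacle and which crucially uses that the graph is \emph{unweighted} and \emph{undirected}, is to bound the range $R := \max_v \dist(v,s)$ in terms of $\mu$. Fix $v^\ast$ with $\dist(v^\ast,s) = R$ and consider any shortest $sv^\ast$-path; since the graph is unweighted, it visits $R+1$ distinct vertices at distances $0, 1, \ldots, R$ from $s$, each contributing to $\sum_v \dist(v,s) = (n-1)\mu$. Hence
\[
\frac{R(R+1)}{2} \;\le\; (n-1)\mu,
\qquad \text{so}\qquad R = O(\sqrt{n\mu}).
\]
With $X_i \in [0,R]$ one also gets the variance bound $\sigma^2 := \mathrm{Var}(X_i) \le R\,\mathbb{E}[X_i] = O(\sqrt{n\mu^3})$.

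Now I apply Bernstein's inequality with $t = \varepsilon\mu$,
\[
\Pr\!\left[\,|\hat\mu - \mu| > \varepsilon\mu\,\right] \;\le\; 2\exp\!\left(-\Omega\!\left(\frac{k\,\varepsilon^2\mu^2}{\sigma^2 + R\varepsilon\mu}\right)\right).
\]
Plugging in the bounds on $\sigma^2$ and $R$, and using that $\mu \ge (n-1)/n \ge 1/2$ whenever $c(s)$ is finite (every non-$s$ vertex contributes at least $1$), the denominator is $O(\sqrt{n}\,\mu^2)$. Therefore $k = \Omega(\sqrt{n}\log n /\varepsilon^2)$ already makes the failure probability $n^{-\Omega(1)}$, which is comfortably below the stated $k = \tilde\Theta(n^{2/3}/\varepsilon^2)$. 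A union bound over all $s \in V$ then gives the claim for every fixed source simultaneously. The main subtlety, worth flagging, is that naively using Hoeffding alone is too weak in the low-$\mu$ regime: one would get only additive error $R\sqrt{\log n/k}$, which translates to a multiplicative error of $\sqrt{n\log n/(k\mu)}$, and for $\mu = \Theta(1)$ this requires $k = \tilde\Omega(n/\varepsilon^2)$. The variance-aware Bernstein bound combined with the unweighted-undirected range inequality $R = O(\sqrt{n\mu})$ is precisely what circumvents this loss.
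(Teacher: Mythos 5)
Your argument is correct and takes a genuinely different (and cleaner) route than the paper's, in fact yielding a strictly stronger bound. The paper's proof splits the analysis into a ``short distances'' case (controlled by the median distance) and a ``long distances'' case handled via a dyadic decomposition into buckets $(db^l, db^{l+1}]$, applying Hoeffding per bucket after observing that a shortest path to a far vertex forces $\Omega(d2^l)$ vertices into bucket $l$; balancing the two cases yields $k = \tilde\Theta(n^{2/3}/\varepsilon^2)$. You isolate the same geometric observation in one global inequality --- a shortest path of length $R$ passes through a vertex at each distance $0,1,\ldots,R$, so $R(R+1)/2 \le \sum_v \dist(v,s) = (n-1)\mu$ and hence $R = O(\sqrt{n\mu})$ --- and use it once to bound both the range and the second moment $\sigma^2 \le R\,\mathbb{E}[X_i] = O(\sqrt{n}\,\mu^{3/2})$, after which a single variance-aware Bernstein application finishes the job. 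This gives $k = \tilde\Theta(\sqrt{n}/\varepsilon^2)$, which is essentially tight (take $s$ adjacent to $n-\sqrt n$ leaves plus a pendant path of length $\sqrt n$: then $\mu = \Theta(1)$ but $\Omega(\sqrt n)$ samples are needed just to hit the path), and so comfortably subsumes the stated $k = \tilde\Theta(n^{2/3}/\varepsilon^2)$. Two small nits that do not affect the conclusion: the correct lower bound is $\mu \ge 1$, since each of the $n-1$ non-source vertices contributes at least $1$ to $(n-1)\mu$ --- the bound $\mu \ge (n-1)/n$ you wrote is what holds for $\mathbb{E}[X_i] = \frac{n-1}{n}\mu$, not for $\mu$ itself; and your path argument does not actually require undirectedness, only that the graph is unweighted.
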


\begin{proof}[Proof of \Cref{thm:dynamicClosenessCentrality}]
We run the dynamic algorithm of \Cref{thm:undirectedApproximateDistancesQuery}.
After every update we simply sample 
$k = \tilde{\Theta}(n^{2/3} / \varepsilon^2)$ random nodes $v_1,...,v_k$
as in \Cref{lem:closenessSamplingSize}.
Then $(\sum_{i=1}^k \dist(v_i,s) n / (k (n-1)))^{-1}$ will be a $(1+\varepsilon)$-approximation,
if $c(s)$ is non-zero.
The case that $c(s)$ is zero (i.e. there exists a pair $u,v \in V$ with $\dist(u,v) = \infty$) can be handled by running \Cref{lem:strongConnectivity} in parallel.
If we detect via \Cref{lem:strongConnectivity} that the graph is not connected, 
then we simply return $0$ as the closeness centrality.

When querying the distances for $\tilde O(n^{2/3} / \varepsilon^2) \times n$ pairs via \Cref{thm:undirectedApproximateDistancesQuery} (for the same parameter $s$ and $\mu$),
then the query time is 
$\tilde{O}(n^{\omega(2/3, s+\mu, 1)} / \varepsilon^2) 
= \tilde{O}(n^{\omega(2/3, 0.45, 1)} / \varepsilon^2) 
= \tilde O(n^{1.76} / \varepsilon^3)$, 
which is subsumed by the update time.
\end{proof}

An important tool for proof \Cref{lem:closenessSamplingSize} is the Hoeffding-inequality:
\begin{lemma}[Hoeffdings-inequality]
Let $X_1,...,X_N$ be independent random variables bounded by the interval $[a,b]$, then
$$
\mathbb{P}\left[
\left|
\frac{1}{n}\sum_{i=1}^N X_i
-
\mathbb{E}\left[
\sum_{i=1}^N X_i
\right]
\right|
 \ge t \right]
\le
2 \exp\left(
- \frac{2Nt^2}{(b-a)^2}
\right)
$$
\end{lemma}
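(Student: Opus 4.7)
The plan is to prove Hoeffding's inequality by the standard Chernoff-style exponential moment method, with Hoeffding's lemma as the technical core. First I would set $S_N := \sum_{i=1}^N X_i$, $\mu := \mathbb{E}[S_N]$, and $Y_i := X_i - \mathbb{E}[X_i]$, so each $Y_i$ is a centered random variable bounded in an interval of length $b-a$. For any $\lambda > 0$, Markov's inequality applied to $e^{\lambda(S_N - \mu)}$ gives
\[
\Pr[S_N - \mu \ge \tau] \;\le\; e^{-\lambda \tau}\, \mathbb{E}\!\left[e^{\lambda \sum_i Y_i}\right] \;=\; e^{-\lambda \tau} \prod_{i=1}^N \mathbb{E}\!\left[e^{\lambda Y_i}\right],
\]
where the last equality uses independence of the $X_i$.

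The heart of the argument is Hoeffding's lemma: if $Y$ is a mean-zero random variable with $a' \le Y \le b'$, then $\mathbb{E}[e^{\lambda Y}] \le \exp(\lambda^2 (b'-a')^2 / 8)$. I would prove this by writing $Y$ as a convex combination $Y = \frac{b'-Y}{b'-a'} a' + \frac{Y-a'}{b'-a'} b'$ and invoking convexity of $t \mapsto e^{\lambda t}$ to bound $e^{\lambda Y}$ pointwise; taking expectations and using $\mathbb{E}[Y]=0$ reduces the bound to an expression of the form $e^{L(h)}$ where $L$ depends on $h = \lambda(b'-a')$. A second-order Taylor expansion (with remainder) of $L$ around $h=0$, together with the bound $L''(h) \le 1/4$, yields $L(h) \le h^2/8$, proving the lemma.

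Applying Hoeffding's lemma to each $Y_i$ with $a'_i = a - \mathbb{E}[X_i]$ and $b'_i = b - \mathbb{E}[X_i]$ (so $b'_i - a'_i = b-a$), I obtain
\[
\Pr[S_N - \mu \ge \tau] \;\le\; \exp\!\left(-\lambda \tau + \tfrac{\lambda^2 N (b-a)^2}{8}\right).
\]
Optimizing over $\lambda > 0$ by taking $\lambda = 4\tau / (N(b-a)^2)$ gives $\Pr[S_N - \mu \ge \tau] \le \exp\!\bigl(-2\tau^2/(N(b-a)^2)\bigr)$. By symmetry (applying the same argument to $-X_i$) and a union bound on the two tails, $\Pr[|S_N - \mu| \ge \tau] \le 2 \exp\!\bigl(-2\tau^2/(N(b-a)^2)\bigr)$. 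Finally, substituting $\tau = N t$ to match the normalization in the statement yields the advertised bound $2\exp(-2Nt^2/(b-a)^2)$.

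The main obstacle is Hoeffding's lemma itself — specifically, establishing the uniform bound $L''(h) \le 1/4$ on the log moment generating function of a $\{a',b'\}$-valued Bernoulli-type variable, which requires recognizing that $L''(h)$ is the variance of a random variable supported on $\{a',b'\}$ and invoking the fact that the variance of a random variable on an interval of length $\ell$ is at most $\ell^2/4$. Everything else — Markov, independence, tensorization, and optimization in $\lambda$ — is mechanical once this lemma is in hand.
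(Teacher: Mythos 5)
Your proof is correct and is the standard Chernoff--Hoeffding argument (exponential moment method plus Hoeffding's lemma, with the variance bound $L''(h)\le 1/4$ doing the real work); the paper states this inequality as a classical fact without proof, so there is nothing to compare against. One small point: as printed, the paper's statement is internally inconsistent --- the sum is normalized by $1/n$ (presumably $1/N$) while the expectation inside the absolute value is not normalized at all; your substitution $\tau = Nt$ implicitly proves the correctly normalized version, $\Pr\left[\left|\frac{1}{N}\sum_i X_i - \mathbb{E}\left[\frac{1}{N}\sum_i X_i\right]\right|\ge t\right]\le 2\exp\left(-2Nt^2/(b-a)^2\right)$, which is what the paper actually uses.
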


Unfortunately, just applying Hoeffdings-inequality to the sampled distances is not enough.
If for some source node $s$ we sample a few nodes $v_1,...,v_k$ and compute $\dist(s,v_i)$ for $i=1,...,k$,
then $\dist(s,v_i)$ are random variables in the interval $[1,\diam(G)]$.
This causes the additive error to become quite large for large diameter graphs (hence the $\varepsilon \cdot \diam(G)$ error bound in \cite{EppsteinW04}).

We are able to fix this issue by splitting the analysis into two cases: long and short paths.
The idea is as follows:
Let $d$ be some parameter and $u \in V$ be a node with $d \le \dist(s,u) \le 2d$.
Then there exist at least $d/2$ many nodes $v \in V$ (half of the nodes along the path from $s$ to $u$) 
with $d/2 \le \dist(s,v) \le d$.
So for large enough $d$ and $k$, we have that w.h.p there are at least $\Omega(dk/n)$ many sampled nodes $v_i$ with $d/2 \le \dist(s,v_i) \le d$.
Note that for these random variables the interval $[d/2, d]$ has length $d/2$, which is proportional to the number of samples $\Omega(dk/n)$. This allows us to give better bounds on the error probability.
However, this approach only works if $d$ is large enough, 
so we also need to handle the case of small paths separately.

\begin{proof}[Proof of \Cref{lem:closenessSamplingSize}]

Fix some node $s \in V$.
Let $k_{\le d}$ be the number of sampled nodes $v_i$ with $\dist(v_i,s) \le d$
and $k_{>d}$ be the number of nodes with $\dist(v_i,s) \ge d$.
Likewise let $n_{\le d}$ and $n_{>d}$ be the number of nodes $v \in V$
with $\dist(v_i,s) \le d$ or $\dist(v_i,s) > d$ respectively.

Then we can write
\begin{align*}
\sum_{i=1}^k \dist(v_i,s) / k
= \frac{
k_{\le d} \sum_{i : \dist(v_i,s) < d}   \dist(v_i,s) / k_{\le d}
+
k_{> d}   \sum_{i : \dist(v_i,s) \ge d} \dist(v_i,s) / k_{> d}
}{ k }
\end{align*}

We will search for lower bounds on $k$ which imply that
$\sum_{i : \dist(v_i,s) < d} \dist(v_i,s) / k_{\le d}$
has an additive error of at most $\varepsilon k / (c(s) k_{\le d})$
compared to $\sum_{v : \dist(v,s) < d} \dist(v,s) / n_{\le d}$,
and that $\sum_{i : \dist(v_i,s) \ge d} \dist(v_i,s) / k_{> d}$
is a $(1+\varepsilon)$ approximation of
$\sum_{v : \dist(v,s) \ge d} \dist(v,s) / n_{> d}$.

When these conditions are met, then $\sum_{i=1}^k \dist(v_i,s) / k$
is a $(1+2\varepsilon)$ approximation of the average distance.

\paragraph{Long distances}

Let $d_{\max} := \max_{v \in V} \dist(v,s)$ and $j := \lceil \log d_{\max} /d \rceil$
and $b = \sqrt[j]{d_{\max}/d}$, which means we have $j \le \log n$, $d b^{j} = d_{\max}$
and $2 \le b \le 4$.
Further let $k_l$ be the number of sampled nodes $v_i$ such that
$d b^l \le \dist(v_i,s) < d b^{l+1}$ and likewise let $n_l$ be
the number of nodes $v \in V$ with $d b^l \le \dist(v,s) < d b^{l+1}$.

We can split the sum of distances as follows:
\begin{align*}
\sum_{i : \dist(v_i,s) > d} \dist(v_i,s)
=
\sum_{l = 0}^{j-1} \sum_{i : d b^l < \dist(v_i,s) \le d b^{l+1}} \dist(v_i,s)
=
\sum_{l = 0}^{j-1} k_l \sum_{i : d b^l < \dist(v_i,s) \le d b^{l+1}} \dist(v_i,s) / k_l
\end{align*}
Note that in the graph there exist at least
$db^{l+1} - db^l \ge d2^l$
many nodes with
$d b^l < \dist(v_i,s) \le d b^{l+1}$ for $l < j$.
Thus when sampling $k$ nodes, we expect $k_l \ge (k/n) d2^l$.
Also via Chernoff-bound we know for $k = \tilde{\Omega}(n/(d2^l) \varepsilon^{-1})$
that we have $k_l = \Omega((k/n) d2^l)$ with high probability.

Let $\xi$ be a bound on the additive error of
$\sum_{i : d b^l < \dist(v_i,s) \le d b^{l+1}} \dist(v_i,s) / k_l$
compared to the expectation
$\mu = \sum_{v : d b^l < \dist(v,s) \le d b^{l+1}} \dist(v,s) / n_l$.
Then we want this bound to be $\xi = db^{l}\varepsilon$, because
with this additive error bound we obtain a $(1+\varepsilon)$ multiplicative error.
If this holds for all $l$, then this implies that 
$\sum_{i : \dist(v_i,s) > d} \dist(v_i,s) / k_{>d}$
is a $(1+\varepsilon)$ approximation of
$\sum_{v : \dist(v,s) > d} \dist(v_i,s) / n_{>d}$.

The probability of having a larger additive error can be bounded by Hoeffdings-inequality:

\begin{align*}
\mathbb{P}\left[ \left| \frac{1}{k_l}\sum_{i : d b^l < \dist(v_i,s) \le d b^{l+1}} \dist(v_i,s) - \mu \right| > \xi \right]
& \le 2 \exp(-2k_l\xi^2 / (d2^l)^2)
= \exp -\tilde{\Omega} (k/(nd2^l) \xi^2)\\
& = \exp -\tilde{\Omega} (db^l k \varepsilon^2 /n )
= \exp -\tilde{\Omega} (d k \varepsilon^2 /n )
\end{align*}

So to bound the probability by some $n^{-c}$ for constant $c$,
it suffices to choose $k = \tilde{\Omega}(n/d \varepsilon^{-2})$.

\paragraph{Short distances}

For short distances, we want the distance threshold $d$
to be at least the median distance to $s$. We will write $\operatorname{med}(s)$ for the median distance from $s$, i.e. the median of $\dist(s,v)$ for $v \in V$.
For now assume $d = h \cdot \operatorname{med}(s)$ for some $h \ge 1$.
This means that $k_{\le d} \ge 0.5 k$ with high probability for $k = \Omega(\log n)$.
We want to bound the additive error of
$\sum_{i : \dist(v_i,s) < d} \dist(v_i,s) / k_{\le d}$
compared to
$\mu = \sum_{v : \dist(v,s) < d} \dist(v_i,s) / n_{\le d}$
to be at most
$\xi = \frac{\varepsilon k}{c(s) k_{\le d}}$.

Note that $1/c(s) \ge \mu \ge 0.5 n_{\le d} \cdot \operatorname{med}(s)$, so $d \le 2h\mu \le 2h/c(s)$.
The probability that we have a larger additive error can thus be bounded by
\begin{align*}
\exp(-2 k_{\le d} \xi^2 / d^2)
\le
\exp\left( -2 k_{\le d}
	\left(\frac{  \varepsilon k  }{  c(s) k_{\le d}  }\right)^2 /
	\left(\frac{  2h  }{  c(s)  }\right)^2
\right)
\le
\exp(-\Omega\left(k \varepsilon^2 / h^2\right))
\end{align*}
So for $d = h \cdot \operatorname{med}(s)$, $h \ge 1$ and $k = \tilde{\Omega}(h^2 / \varepsilon^2)$, the error does not become too large with high probability.

\paragraph{Balancing $k$}

We now know that $k$ must satisfy $k = \tilde{\Omega}(n/d \varepsilon^{-2})$,
and $k = \tilde{\Omega}(h^2 / \varepsilon^2)$,
where $d = h \cdot \operatorname{med}(s)$ and $h \ge 1$.
We claim that choosing $k = \tilde{\Theta}(n^{2/3} \varepsilon^{-2})$
is enough to satisfy all these conditions:

\begin{itemize}

\item If the median $\operatorname{med}(s)$ is less than $n^{1/3}$,
then we choose $d = n^{1/3}$
so $k = \tilde{\Theta}(n^{2/3} \varepsilon^{-2})$ satisfies all required condition.

\item If the median $\operatorname{med}(s)$ is more than $n^{1/3}$,
then we can choose $h$ to be some constant.
This results in $d = \Omega(n^{1/3})$ and thus the conditions
$k = \tilde{\Omega}(n/d \varepsilon^{-1})$ and $k = \Omega(h^2 / \varepsilon^2)$
are both satisfied when choosing $k = \tilde{\Theta}(n^{2/3} \varepsilon^{-2})$.

\end{itemize}

Via union bound \Cref{lem:closenessSamplingSize} holds for all nodes $s \in V$, if the constant hidden in $k = \tilde{\Theta}(n^{2/3} \varepsilon^{-2})$ is large enough.

\end{proof}

\end{document}